\documentclass[11pt, oneside]{article}  	
\usepackage{geometry}        		
\usepackage{longtable,booktabs,array}
  
\usepackage{titletoc}

\input{preamble}

\title{A variational framework for modal estimation}
\author{Tâm Le Minh$^{1,2}$, Julyan Arbel$^2$, Florence Forbes$^2$ and Hien Duy Nguyen$^{3,4}$}
\date{}

\begin{document}
\maketitle
\begin{center}
\textit{\noindent
$^1$School of Medicine and Dentistry, Griffith University, Nathan, QLD 4101, Australia \\
$^2$Univ. Grenoble Alpes, Inria, CNRS, Grenoble INP, LJK, 38000 Grenoble, France \\
$^3$School of Computing, Engineering and Mathematical Sciences, La Trobe University, Bundoora, VIC 3086, Australia \\
$^4$Institute of Mathematics for Industry, Kyushu Univ., Nishi Ward, Fukuoka 819-0395, Japan
}
\end{center}


\paragraph{Abstract.} \sloppy{We approach multivariate mode estimation through Gibbs distributions and introduce GERVE (Gibbs-measure Entropy-Regularised Variational Estimation), a likelihood-free framework that approximates Gibbs measures directly from samples by maximizing an entropy-regularised variational objective with natural-gradient updates. GERVE brings together kernel density estimation, mean-shift, variational inference, and annealing in a single platform for mode estimation. It fits Gaussian mixtures that concentrate on high-density regions and yields cluster assignments from responsibilities, with reduced sensitivity to the chosen number of components. We provide theory in two regimes: as the Gibbs temperature approaches zero, mixture components converge to population modes; at fixed temperature, maximisers of the empirical objective exist, are consistent, and are asymptotically normal. We also propose a bootstrap procedure for per-mode confidence ellipses and stability scores. Simulation and real-data studies show accurate mode recovery and emergent clustering, robust to mixture overspecification. GERVE is a practical likelihood-free approach when the number of modes or groups is unknown and full density estimation is impractical.}

\paragraph{Keywords.} \sloppy{Mode estimation; Modal clustering; Variational methods; Gibbs measures; Annealing; Bootstrap.}

\section{Introduction}

Understanding the structure of an unknown probability distribution $p$ on $\mathbb{R}^d$, observed through samples, is a central task in statistics. Modes, the local maxima of $p$, reveal data high-density regions. Detecting multiple modes exposes heterogeneity, supports clustering and segmentation, and offers alternative explanations in inverse problems \citep{carreira2006acceleration, chacon2015population}. Recently, it has also emerged as a preliminary requirement for new efficient sampling methods of multimodal high-dimensional distributions \citep{noble2025learned}.

Classical approaches to mode estimation are commonly grouped into three main families. The first comprises geometric methods that detect high-density regions from the sample \citep{chernoff1964estimation, dalenius1965mode, venter1967estimation, sager1978estimation}. These methods are simple and robust but tend to degrade in high dimensions. The second fits a smooth density and then locates its modes \citep{parzen1962estimation, loftsgaarden1965nonparametric, kronmal1968estimation, samworth2018recent, liu2020empirical}. This route is principled but suffers from smoothing bias and high computational cost. The third consists of mean-shift algorithms that ascend the gradient of a kernel density estimate (KDE) without requiring fitting a global model  \citep{fukunaga1975estimation, comaniciu2002mean, carreira2007gaussian, arias2016estimation}. These methods are efficient but bandwidth sensitive and can miss well-separated modes. To our knowledge, no existing method simultaneously targets multiple density modes, provides valid uncertainty quantification, and scales with dimension via simple algorithmic updates, without requiring a full density estimate \citep{genovese2016non, ameijeiras2021multimode}.

We propose a different perspective: we recast the task of mode estimation into that of approximating a Gibbs measure from samples. Let $f : \mathcal{S} \to \mathbb{R}$ be a measurable function and fix a temperature parameter $\omega > 0$. Define  
\begin{equation*}
g_\omega(d\xb) = \frac{\exp(f(\xb)/\omega)}{Z_{\omega}(f)} d\xb, \qquad \text{with~} Z_{\omega}(f) := \int_{\mathcal{S}} \exp(f(\yb)/\omega) d\yb,
\end{equation*}
and assume $Z_{\omega}(f) < \infty$.  We assume that $\mathcal{S} \subset \mathbb{R}^d$ has finite Lebesgue measure. The Gibbs measure $g_\omega$ preserves the modes of $f$ and is the maximiser, among the unconstrained family of distributions, of the entropy-regularised variational objective \citep{donsker1976asymptotic}:
\begin{equation}
\mathcal{L}_\omega(q) = \mathbb E_{q} \left[f(\Xb)\right] + \omega \mathcal H_{\mathcal S}(q),
\qquad 
\text{with~} \mathcal H_{\mathcal S}(q) = - \int_{\mathcal S}q(\yb)\log q(\yb) d \yb.
    \label{eq:population_objective}
\end{equation}

We introduce \textbf{GERVE} (\textit{Gibbs-measure Entropy-Regularised Variational Estimation}), a likelihood-free variational framework that approximates the Gibbs measure associated with a density $f := p$, in a Gaussian-mixture family from i.i.d.\ samples $\Xb_{1:N}\sim p$, where $p$ is nonnegative, supported on a subset of $\mathcal{S}$.  Since $p$ is a density, but not analytically available, we leverage the identity $\mathbb E_{q} \left[p(\Xb)\right] = \mathbb E_{p} \left[q(\Xb)\right]$ to target, with natural-gradient updates, the maximisation of an empirical version of $\mathcal{L}_\omega$ averaging the variational density $q$ over the observed samples:
\begin{equation}
    \widehat{\mathcal{L}}_{\omega,N}(q) = \frac1N\sum_{i=1}^N q(\Xb_i) + \omega \mathcal H_{\mathcal S}(q).
    \label{eq:empirical_objective}
\end{equation}
This estimation problem requires fundamentally different treatment from optimisation settings where $p$ (or $f$) can be evaluated: we develop asymptotic theory for maximisers, establish consistency and asymptotic normality, and introduce bootstrap-based inference for per-mode uncertainty quantification, extending recent computational techniques from variational optimisation \citep{khan2023bayesian, leminh2025natural} to a statistical estimation methodology.

By fitting Gaussian mixtures to the Gibbs density, GERVE unifies ideas from mean-shift, kernel density estimation, entropy annealing, and variational inference within one algorithmic paradigm. The variational formulation is well-suited to multimodal and multivariate settings, where geometric or purely nonparametric mode estimators often struggle. The entropy term induces repulsion between mixture components, enabling exploration of the modal landscape, while annealing ($\omega \rightarrow 0$) concentrates mass at the global modes. This allows overcomplete mixtures to automatically adapt to the effective number of modes without prior specification.

Our main contributions are:
\begin{enumerate}
    \item \textit{Variational formulation for mode estimation.} We cast mode estimation as Gibbs-measure approximation from samples, enabling mode finding and uncertainty quantification when the number of modes is unknown and density estimation is impractical.
    \item \textit{Asymptotic theory for multimodal recovery.} We establish two complementary theoretical regimes: As $\omega\to 0$, the variational mixture concentrates on the population modes. At fixed $\omega>0$, empirical maximisers are consistent and asymptotically normal.
    \item \textit{Bootstrap uncertainty for modes.} We provide per-mode confidence ellipses and stability scores via a matched-mode bootstrap with theoretical guarantees.
    \item \textit{Algorithmic framework.} Natural-gradient updates with annealing recover multiple modes, estimate local curvature, and yield cluster assignments without pre-specifying the number of groups. A fixed-covariance special case recovers the mean-shift algorithm, while learned covariances adapt to anisotropy and reduce bandwidth sensitivity.
    \item \textit{Evidence on simulations and data.} We show accurate mode recovery and competitive clustering, robustness to mild overcompleteness, and a hotspot analysis with uncertainty on UK road-collision data.
\end{enumerate}

Our approach is related to modal clustering \citep{menardi2016review,chacon2020modal}, which defines clusters as basins of attraction of density modes. GERVE uses variational approximation, yielding clusters that locally agree with basins near modes under mild conditions (Suppl.~\ref{app:agreement}), while providing computational efficiency and natural uncertainty quantification.

\noindent\textbf{Outline.} The remainder is organised as follows. Section~\ref{sec:foundations} recalls the variational foundations of Gibbs-measures and shows that Gaussian mixture approximations can be used for mode seeking by specifying their behaviour under annealing (Thm.~\ref{thm:annealing_main}).  Section~\ref{sec:framework} shows that this mechanism can be extended to mode estimation when the density to maximise is only available through samples. The main principles of GERVE is introduced and the asymptotic behaviour of the empirical maximisers  is established (Thm.~\ref{thm:consistency_main}). Section~\ref{sec:algorithms} presents algorithmic details (Alg.~\ref{alg:gerve_general}). Section~\ref{sec:uq} gives the fixed-temperature bootstrap procedure (Alg.~\ref{alg:bootstrap_uuq}) with guarantees (Thm.~\ref{thm:bootstrap-params-max-cor_main} and~\ref{thm:modes-CLT-bootstrap_main}). Sections~\ref{sec:simulations} and \ref{sec:real_datasets} report simulations and a case study on UK road-collision hotspots. Section~\ref{sec:conclusion} discusses limitations and future directions. All proofs and additional details are in Supplementary Material. A Python implementation for GERVE can be found at \texttt{github.com/tam-leminh/gerve}.

\noindent\textbf{Notation.} We consider that $p$ (or $f$) has a finite number $I$ of global modes $\{\xb_i^\star\}_{i=1}^I$. When $f : \mathcal{S} \to \mathbb{R}$ and there is no ambiguity on the domain, $f$ is $\mathcal{C}^k$ means $f \in \mathcal{C}^k(\mathcal{S})$, that is $f$ is continuously differentiable on $\mathcal{S}$ to the order $k$. For a distribution $q$ on ${\mathcal A} \subseteq \mathbb{R}^d$ and a set $
{\mathcal B} \subset {\mathcal A}$, we write $q(
{\mathcal B}) := \int_{
{\mathcal B}} q(\xb) d\xb$ and $q^{
{\mathcal B}} := q \mathds 1_{{\mathcal B}} / q({\mathcal B})$, its truncation (or conditional law) on $
{\mathcal B}$. The set $\{q_\varthetab, \varthetab \in \Theta\}$ denotes a variational family of densities parameterised by $\varthetab$ where $\varthetab=\lambdab$ for single Gaussians $q_{\lambdab}=\mathcal N(\mub,\Sigmab)$, or 
$\varthetab=\Lambdab$ for 
Gaussian mixtures 
$q_{\Lambdab}=\sum_{k=1}^K \pi_k \mathcal N(\mub_k,\Sigmab_k)$.
When there is no ambiguity, we write for the objective $\mathcal{L}_\omega(\varthetab) := \mathcal{L}_\omega(q_\varthetab)$. Unless explicitly specified, $\lVert \cdot \rVert$ denotes the Euclidean norm for vectors and the corresponding operator norm for matrices.

\section{Mode seeking with  annealed Gaussian mixtures}
\label{sec:foundations}
\label{sec:objective}

For a measurable function $f:\mathcal S\to\mathbb R$, the associated Gibbs measure $g_\omega$ preserves its modes. 
The goal of finding the modes of $f$ can then be turned into that of approximating $g_\omega$, which in turn can be achieved through a variational approximation principle.
The classical variational identity
\[
\mathcal{L}_\omega(q)
 =  \omega \log Z_\omega(f) - \omega \KL \left(q \| g_\omega\right),
\]
with $\KL\left(q \| g_\omega\right) := \int_{\mathcal{S}} q(\xb) \log (q(\xb)/g_\omega(\xb)) d\xb$,
implies that $g_\omega$ uniquely maximises the objective $\mathcal{L}_\omega$ when $Z_\omega(f)<\infty$ \citep{donsker1976asymptotic}. Restricting the search for $q$ in a tractable variational family gives an approximation for $g_\omega$ by minimisation of $\KL \left(q \| g_\omega\right)$. 

As a natural choice for approximating multimodal landscapes, we study how Gaussian mixtures approximate $g_\omega$ when $\omega$ decreases to $0$. Since $g_\omega$ is defined only on $\mathcal{S}$, admissible variational distributions are required to have support in $\mathcal{S}$. We therefore work with mixtures of truncated Gaussians. The next result, proved in Supplementary Material~\ref{app:thm_annealing}, describes how optimal mixtures concentrate on the global modes when the number of components $K$ is at least the number of global modes $I$.

\begin{theorem}[Gaussian mixture concentration on global modes]
\label{thm:annealing_main}
Assume $f \in \mathcal{C}^3(\mathcal{S})$ and bounded on $\mathcal{S}$, and for each mode in $\{\xb_i^\star\}_{i=1}^{I}$, $\xb_i^\star \in \mathrm{int}(\mathcal{S})$ and $\Hb_i := -\nabla^2 f(\xb_i^\star)\succ 0$. Consider $\mathcal{Q}$ to be the variational family of Gaussian mixtures truncated on $\mathcal{S}$, with a fixed number of components $K \ge I$ and upper-bounded covariances. Let $q_{\omega}^\star$ denote any maximiser of $\mathcal{L}_\omega$ in $\mathcal{Q}$. Choose disjoint neighborhoods $U_1, \dots, U_I$ of the modes and let $U_0 := S \setminus \bigcup_{i=1}^I U_i$. 

Then, as $\omega\to 0$:
\begin{enumerate}

\item[(i)] We have $\KL(q_{\omega}^\star \Vert  g_\omega) \rightarrow 0$. In particular, $q_{\omega}^\star(U_0) \rightarrow 0$, and, for each neighborhood $U_i$, $i = 1, \dots, I$,

\begin{equation*}
    q_{\omega}^\star (U_i) \rightarrow c_i,  \qquad \text{where}\,\, c_i \propto (\det \Hb_{i})^{-1/2}, \,\sum_{i=1}^I c_i = 1. 
\end{equation*}
\item[(ii)] For all $i = 1, \dots, I$, the conditional laws of $q_{\omega}^\star$ and $g_\omega$ on $U_i$ satisfy $\KL\big(q_{\omega}^{\star,U_i} \Vert  g_{\omega}^{U_i}\big) \rightarrow 0$. In particular, the mean and covariance of $q_{\omega}^{\star,U_i}$ satisfy
\begin{equation*}
\mathbb{E}_{q_{\omega}^{\star,U_i}}[\Xb] = \xb_i^\star + o(1),
\qquad
\mathrm{Cov}_{q_{\omega}^{\star,U_i}}(\Xb) = o(1).
\end{equation*}

\end{enumerate}
\end{theorem}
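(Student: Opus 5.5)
The plan is to work with the identity $\mathcal{L}_\omega(q) = \omega\log Z_\omega(f) - \omega\KL(q\Vert g_\omega)$, so that maximising $\mathcal{L}_\omega$ over $\mathcal{Q}$ is the same as minimising $\KL(q\Vert g_\omega)$ over $\mathcal{Q}$. For (i) it then suffices to exhibit, for every small $\omega$, an explicit competitor $\tilde q_\omega\in\mathcal{Q}$ with $\KL(\tilde q_\omega\Vert g_\omega)\to 0$, because $\KL(q_\omega^\star\Vert g_\omega)\le \KL(\tilde q_\omega\Vert g_\omega)$.

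\textbf{The competitor.} Put one component at each global mode: $\tilde q_\omega = \sum_{i=1}^I c_i^\omega\,\mathcal N(\xb_i^\star,\omega\Hb_i^{-1})$, truncated to $\mathcal S$, with weights $c_i^\omega\propto(\det\Hb_i)^{-1/2}$. If $K>I$, give the extra components weight zero, or duplicate a component with split weight. For small $\omega$ the covariances $\omega\Hb_i^{-1}$ lie below the covariance bound, and since $\xb_i^\star\in\mathrm{int}(\mathcal S)$ truncation changes each component by only an exponentially small amount.

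\textbf{Step 1: $\KL(\tilde q_\omega\Vert g_\omega)\to 0$ by Laplace asymptotics.}
\begin{itemize}
\item[(a)] On $U_i$, Taylor expansion with $f\in\mathcal C^3$ gives $f(\xb)/\omega = f^\star/\omega - \tfrac{1}{2\omega}(\xb-\xb_i^\star)^\top\Hb_i(\xb-\xb_i^\star) + O(\lVert\xb-\xb_i^\star\rVert^3/\omega)$.
\item[(b)] Laplace's method gives $Z_\omega = e^{f^\star/\omega}(2\pi\omega)^{d/2}\sum_i(\det\Hb_i)^{-1/2}(1+o(1))$. This uses that $f$ is bounded and that, being continuous, it stays below $f^\star-\delta$ off $\bigcup_i U_i$; this separation needs the global modes to be isolated with finitely many of them, and if $\mathcal S$ is not compact one must also assume that $f$ does not approach $f^\star$ at infinity.
\item[(c)] Hence $\log(\tilde q_\omega/g_\omega)$ equals the cubic remainder divided by $\omega$ plus $o(1)$ on the region $\lVert\xb-\xb_i^\star\rVert\le\omega^{1/2-\epsilon}$. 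Under $\tilde q_\omega$, typical deviations there are $O(\sqrt\omega)$, so this remainder is $O(\sqrt\omega)$ in expectation.
\item[(d)] The Gaussian tails outside these shrinking balls carry mass $e^{-c\omega^{-2\epsilon}}$, and $\lvert\log(\tilde q_\omega/g_\omega)\rvert$ grows only polynomially in $1/\omega$ and $\lVert\xb\rVert^2/\omega$ on $\mathcal S$ (finite measure, $f$ bounded). So the tail contribution vanishes.
\end{itemize}
This is the main technical obstacle: the uniform control of the log-ratio in the tails and near $\partial\mathcal S$, together with the cross terms between components centred at different modes. The cross terms are exponentially small because the $U_i$ are disjoint.

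\textbf{Step 2: from small KL to the mass statements in (i).} The data-processing inequality on the partition $\{U_0,\dots,U_I\}$ gives $\KL\big((q_\omega^\star(U_j))_j\Vert(g_\omega(U_j))_j\big)\le\KL(q_\omega^\star\Vert g_\omega)\to0$. By the Laplace computation in Step 1, $g_\omega(U_0)\to0$ and $g_\omega(U_i)\to c_i$. Pinsker's inequality then gives $q_\omega^\star(U_0)\to0$ and $q_\omega^\star(U_i)\to c_i$.

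\textbf{Step 3: part (ii).} Use the chain rule
\[
\KL(q\Vert g) = \KL\big((q(U_j))_j\Vert (g(U_j))_j\big) + \sum_j q(U_j)\,\KL\big(q^{U_j}\Vert g^{U_j}\big).
\]
Since $q_\omega^\star(U_i)\to c_i>0$, each conditional divergence satisfies $\KL(q_\omega^{\star,U_i}\Vert g_\omega^{U_i})\le\KL(q_\omega^\star\Vert g_\omega)/q_\omega^\star(U_i)\to 0$.

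For the moments: $g_\omega^{U_i}$ concentrates at $\xb_i^\star$ at scale $\sqrt\omega$. On the bounded set $U_i$, Pinsker gives $\lvert\mathbb E_{q}h-\mathbb E_{g}h\rvert\le \sup_{U_i}|h|\sqrt{2\KL}$ for the bounded functions $h=\xb$ and $h=\xb\xb^\top$ (take $U_i$ bounded, which is allowed). Since $\mathbb E_{g_\omega^{U_i}}\Xb\to\xb_i^\star$ and $\mathrm{Cov}_{g_\omega^{U_i}}(\Xb)=O(\omega)$, it follows that $\mathbb E_{q_\omega^{\star,U_i}}\Xb=\xb_i^\star+o(1)$ and $\mathrm{Cov}_{q_\omega^{\star,U_i}}(\Xb)=o(1)$.
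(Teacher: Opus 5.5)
Your proposal is correct and follows essentially the same route as the paper: the Laplace-weighted competitor $\sum_i c_i\,\mathcal N(\xb_i^\star,\omega\Hb_i^{-1})$ truncated to $\mathcal S$, the KL comparison via the variational identity, data processing on the partition $\{U_0,\dots,U_I\}$ for the masses, the KL chain rule for the conditional laws, and Pinsker on bounded $U_i$ for the moments. Your deviations are refinements rather than a different route. You handle $K>I$ explicitly by duplicating a component. You use shrinking balls of radius $\omega^{1/2-\epsilon}$, where the paper uses radius $R\sqrt\omega$ and lets $R\to\infty$ after $\omega\to0$. You also point out that Laplace's method needs $f\le f^\star-\eta$ off $\bigcup_i U_i$; the paper asserts this from isolation of the modes alone, but it can fail near $\partial\mathcal S$ or at infinity.
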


This result shows that Gaussian mixtures provide accurate approximations of the Gibbs measure and place their mass on all global modes of $g_\omega$, thus of $f$, under annealing whenever $K \ge I$. The local shape of $g_\omega$ at a particular mode $\xb_i^\star$ has been studied in Proposition 3 of~\cite{leminh2025natural}. There, the optimal Gaussian approximation in a neighborhood of $\xb_i^\star$ is of the form $\mathcal{N}(\mub_\omega, \Sigmab_\omega)$ satisfying:
\begin{equation*}
    \mub_\omega = \xb_i^\star + o(\sqrt{\omega}), \qquad \Sigmab_\omega = \omega \Hb_i^{-1} + o(\omega).
\end{equation*}

These expansions clarify how Gibbs measures can be used for mode seeking. As $\omega$ decreases, mixture components sharpen and their means drift toward the global modes of $f$. The temperature parameter governs the trade-off between exploration (through smooth and diffuse mixtures at large $\omega$) and exploitation (concentrated components at small $\omega$). A decreasing  $\omega$ schedule allows starting in an exploration regime and gradually refining mode localisation. We next introduce a statistical formulation in which $f$ is unobserved and only accessible through samples drawn from it.

\section{Sample-based mode estimation and modal clustering}

\label{sec:framework}

We observe i.i.d.\ samples $\Xb_{1:N}\sim p$ supported on a bounded domain $\mathrm{supp}(p) := \{ \xb \in \mathbb{R}^d : p(\xb) > 0\}$ and we choose $\mathcal S$ with finite Lebesgue measure such that $\mathrm{supp}(p) \subset \mathcal S$. Replacing $f$ with $p$, we distinguish the \emph{population} and \emph{empirical} objectives respectively referring to $\mathcal{L}_\omega(q)$ and $\widehat{\mathcal{L}}_{\omega,N}(q)$ defined by~\eqref{eq:population_objective}, with $f := p$, and~\eqref{eq:empirical_objective}.

Instead of performing optimisation on truncated distributions on $\mathcal{S}$, GERVE maximises $\widehat{\mathcal{L}}_{\omega,N}$ over untruncated Gaussian mixture families on $\mathbb{R}^d$,
while all integrals in the objective are kept over $\mathcal S$. Supplementary Material~\ref{app:truncation_approximation} discusses that if $\mathcal{S}$ is taken large enough, then the objective gap between the truncated and untruncated distributions is uniformly small, and the truncated and untruncated maximisers coincide asymptotically under the $M$-estimation theory \citep{van1998asymptotic}. This simplifies optimisation, as it removes parameter-dependent normalisers in the natural-gradient updates (details in Section \ref{sec:setup}). 

For a single Gaussian ${\mathcal N}(\mub,\Sigmab)$, we consider the compact parameter set
\begin{equation}
    \Theta := \Bigl\{\lambdab=(\Sigmab^{-1}\mub, -\frac{1}{2}\Sigmab^{-1}) : \lVert\mub\rVert_\infty \le \mub_{\max}, \sigma_{\min}^2 \Ib \preceq \Sigmab \preceq \sigma_{\max}^2 \Ib \Bigr\}.
\label{eq:compact_gau}
\end{equation}
Proposition~\ref{prop:trunc_gap} shows that if the outside mass $\varepsilon(\lambdab) = 1 - \int_{\mathcal{S}} q_\lambdab(\xb) d\xb$ is uniformly small on $\Theta$, then the objective gap is uniformly small.  The same reasoning applies component-wise to mixtures on the parameter set $\Theta_K$
\begin{equation}
\Theta_K := \Bigl\{\Lambdab=(v_1,\dots,v_{K-1},\lambdab_1,\dots,\lambdab_K): \lvert v_k \rvert \le v_{\max}, \lambdab_k\in\Theta\Bigr\},
\label{eq:compact_mixtures}
\end{equation}
with bounded logits $v_k:=\log(\pi_k/\pi_K)$, which prevent degeneracies while preserving flexibility (Suppl.~\ref{app:truncation_approximation}, eq.~\eqref{eq:bound_v}). In fact, throughout, $\Theta_{K}$ shall be understood as its restriction to the closure $\overline{\ensuremath{\Theta_{K}^{\mathrm{lex}}}}$ of the lexicographically ordered parameter space
$\Theta_{K}^{\mathrm{lex}}$ (defined in Suppl.~\ref{subsec:Lexicographic-order-Theta}), that enforces a canonical labelling and thus provides sufficient
identifiability for our technical results.

In the multimodal optimisation setting, the main source of bias is due to the entropy-induced repulsion in mixtures. We deliberately exploit this repulsion early in training to mitigate mode collapse, a known issue in variational Gaussian mixtures \citep{wu2019solving, jerfel2021variational, soletskyi2025theoretical}. The induced bias progressively diminishes under annealing.

\label{sec:annealing_asymptotics}

The following theorem establishes that GERVE remains statistically consistent at fixed~$\omega = \omega_0 > 0$, as $N \to \infty$. An asymptotic normality result is also derived. These results are stated for Gaussian mixtures in $\Theta_K$ (eq.~\eqref{eq:compact_mixtures}), but can be applied to single Gaussians in $\Theta$ (eq.~\eqref{eq:compact_gau}) as a special case. The proof is provided in Supplementary Material~\ref{app:thm_consistency}. 

\begin{theorem}[Asymptotics of empirical maximisers] \label{thm:consistency_main}
    At fixed $\omega_0 > 0$, if $p$ is bounded and continuous on $\mathcal{S}$, then a population maximiser $\Lambdab^\star_{\omega_0}$ of $\mathcal{L}_{\omega_0}$ exists on $\Theta_K$. If it is unique, then any empirical maximiser $\widehat{\Lambdab}_{\omega_0,N}\in\argmax \widehat{\mathcal{L}}_{\omega_0,N}$ satisfies, as $N \to \infty$:
\begin{equation*}
\widehat{\Lambdab}_{\omega_0,N} \xrightarrow[]{P} \Lambdab^\star_{\omega_0}.
\end{equation*}
Furthermore, if $\Hb^\star_{\omega_0} := -\nabla_{\Lambdab}^2\mathcal{L}_{\omega_0}(\Lambdab^\star_{\omega_0}) \succ 0$, then
\[
\sqrt{N} \big(\widehat\Lambdab_{\omega_0,N}-\Lambdab^\star_{\omega_0}\big)
 \xrightarrow[]{\mathcal{D}} 
\mathcal{N} (0, \Wb_{\omega_0}),
\]
where $\Wb_{\omega_0} := \big(\Hb_{\omega_0}^\star\big)^{-1} \Vb_{\omega_0} \big(\Hb_{\omega_0}^\star\big)^{-1}$, and $\Vb_{\omega_0}:=\operatorname{Var} \big(\nabla_{\Lambdab} q_{\Lambdab}(\Xb)\big)\big|_{\Lambdab=\Lambdab^\star_{\omega_0}}$.
\end{theorem}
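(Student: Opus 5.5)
The plan is to treat $\widehat{\Lambdab}_{\omega_0,N}$ as an $M$-estimator in the sense of \citet{van1998asymptotic}. The criterion is
\[
\widehat{\mathcal{L}}_{\omega_0,N}(\Lambdab) = \frac1N\sum_{i=1}^N q_{\Lambdab}(\Xb_i) + \omega_0 \mathcal H_{\mathcal S}(q_{\Lambdab}),
\]
with population counterpart $\mathcal{L}_{\omega_0}(\Lambdab)=\mathbb E_p[q_{\Lambdab}(\Xb)]+\omega_0\mathcal H_{\mathcal S}(q_{\Lambdab})$, using the identity $\mathbb E_q[p]=\mathbb E_p[q]$. The entropy term is deterministic and identical in both criteria. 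The stochastic part is therefore only the empirical mean of $m_{\Lambdab}(\xb):=q_{\Lambdab}(\xb)$, so
\[
\widehat{\mathcal{L}}_{\omega_0,N}-\mathcal{L}_{\omega_0}=(\mathbb P_N-P)\,m_{\Lambdab}.
\]
Everything reduces to regularity of the class $\{m_{\Lambdab}:\Lambdab\in\Theta_K\}$ on the compact set $\Theta_K$. Here $\Theta_K$ is understood as restricted to $\overline{\Theta_K^{\mathrm{lex}}}$, so it is closed and bounded.

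\textbf{Existence.} I would show that $\Lambdab\mapsto\mathcal{L}_{\omega_0}(\Lambdab)$ is continuous on the compact set $\Theta_K$. Because covariances are bounded below by $\sigma_{\min}^2\Ib$, the densities satisfy $0\le q_{\Lambdab}\le (2\pi\sigma_{\min}^2)^{-d/2}$ uniformly. They are also jointly continuous in $(\Lambdab,\xb)$, since bounded logits and natural parameters map continuously to $(\pi_k,\mub_k,\Sigmab_k)$.

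The term $\mathbb E_p[q_{\Lambdab}]$ is then continuous by dominated convergence. For the entropy, $-q\log q$ is bounded on $\mathcal S$: it is bounded above by $e^{-1}$ and, since $q$ is bounded, bounded below. Since $\mathcal S$ has finite Lebesgue measure, dominated convergence again gives continuity of $\mathcal H_{\mathcal S}(q_{\Lambdab})$. Weierstrass then yields a maximiser $\Lambdab^\star_{\omega_0}$. Boundedness of $p$ is used here only mildly; it matters more in the normality step.

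\textbf{Consistency.} I would apply van der Vaart's Theorem 5.7. This requires two things: uniform convergence $\sup_{\Theta_K}|(\mathbb P_N-P)m_{\Lambdab}|\to0$ in probability, and well-separatedness of the maximiser. Well-separatedness follows automatically from uniqueness, continuity and compactness: the supremum of $\mathcal{L}_{\omega_0}$ outside an open ball around $\Lambdab^\star_{\omega_0}$ is attained and is strictly smaller.

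For uniform convergence, $\{m_{\Lambdab}\}$ is Glivenko--Cantelli. It is indexed by a compact parameter set, $m_{\Lambdab}(\xb)$ is continuous in $\Lambdab$ for every $\xb$, and it has the integrable constant envelope $(2\pi\sigma_{\min}^2)^{-d/2}$. These are the hypotheses of van der Vaart's Example 19.8. Since $\widehat{\Lambdab}_{\omega_0,N}$ is an exact maximiser, the near-maximisation condition holds trivially, and $\widehat{\Lambdab}_{\omega_0,N}\to\Lambdab^\star_{\omega_0}$ in probability follows.

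\textbf{Asymptotic normality.} I would apply van der Vaart's Theorem 5.23, which needs three ingredients.

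1. A Lipschitz bound $|m_{\Lambdab_1}(\xb)-m_{\Lambdab_2}(\xb)|\le \dot m(\xb)\|\Lambdab_1-\Lambdab_2\|$ with $\dot m\in L^2(p)$. Here $\nabla_{\Lambdab}q_{\Lambdab}(\xb)$ is a polynomial in $\xb$ of degree at most 2 times a Gaussian density, with coefficients continuous in $\Lambdab$. It is therefore uniformly bounded over $\Theta_K\times\mathbb R^d$, and a constant $\dot m$ suffices.

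2. A second-order expansion of the population criterion $\Lambdab\mapsto Pm_{\Lambdab}+\omega_0\mathcal H_{\mathcal S}(q_{\Lambdab})$ at $\Lambdab^\star_{\omega_0}$, with nonsingular Hessian $-\Hb^\star_{\omega_0}$. This needs $\mathcal L_{\omega_0}$ to be twice differentiable. I expect this to be the main technical step, specifically differentiating $\int_{\mathcal S}q\log q$ twice under the integral. The derivatives contain terms like $(\nabla q)^2/q$ and $\nabla^2 q\,\log q$. On the bounded set $\mathcal S$ and with covariances in a compact range, $q_{\Lambdab}$ is bounded below by a positive constant. Hence $\log q$ and $1/q$ are bounded there, and dominated convergence applies. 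I would also check that $\Lambdab^\star_{\omega_0}$ lies in the interior of $\Theta_K$, so that the first-order condition holds; if it does not, I would assume this implicitly.

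3. The deterministic entropy term contributes only to the Hessian, not to the fluctuations.

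With these in place, Theorem 5.23 gives
\[
\sqrt N(\widehat{\Lambdab}_{\omega_0,N}-\Lambdab^\star_{\omega_0}) = (\Hb^\star_{\omega_0})^{-1}\frac{1}{\sqrt N}\sum_{i=1}^N\bigl(\nabla_{\Lambdab}q_{\Lambdab^\star_{\omega_0}}(\Xb_i)-\mathbb E_p\nabla_{\Lambdab}q_{\Lambdab^\star_{\omega_0}}\bigr)+o_P(1).
\]
The CLT then yields the sandwich covariance $\Wb_{\omega_0}=(\Hb^\star_{\omega_0})^{-1}\Vb_{\omega_0}(\Hb^\star_{\omega_0})^{-1}$.
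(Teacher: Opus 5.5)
Your proof is correct. For existence and consistency it follows the paper: continuity plus Weierstrass, then a uniform law of large numbers plus van der Vaart's argmax theorem (Thm.~5.7). The only difference there is that you obtain the Glivenko--Cantelli property directly from van der Vaart's Example~19.8 (compact index set, pointwise continuity, integrable envelope) rather than from the paper's bespoke lemmas.

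For asymptotic normality you take a genuinely different route. The paper treats $\widehat\Lambdab_{\omega_0,N}$ as a $Z$-estimator. It Taylor-expands the empirical score $\nabla_{\Lambdab}\widehat{\mathcal L}_{\omega_0,N}$ around $\Lambdab^\star_{\omega_0}$ and uses the empirical first-order condition $\nabla_{\Lambdab}\widehat{\mathcal L}_{\omega_0,N}(\widehat\Lambdab_{\omega_0,N})=0$. It bounds the remainder via uniformly bounded third derivatives, then concludes with the CLT for the Donsker gradient class and Theorem~5.41.

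You instead use the $M$-estimator result, Theorem~5.23. It needs only a square-integrable Lipschitz bound in $\Lambdab$ and a second-order expansion of the \emph{population} criterion at $\Lambdab^\star_{\omega_0}$. This buys you several things:
\begin{itemize}
\item no third derivatives are needed;
\item no empirical first-order condition is needed;
\item you do not have to control the empirical Hessian at $\Lambdab^\star_{\omega_0}$, which the paper silently replaces by $-\Hb^\star_{\omega_0}$;
\item $o_P(N^{-1})$ near-maximisers are tolerated.
\end{itemize}
It also forces you to check that $\Lambdab\mapsto\mathcal H_{\mathcal S}(q_{\Lambdab})$ is twice differentiable, using $q_{\Lambdab}$ bounded away from zero on the bounded $\mathcal S$. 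The paper never checks this, even though its remainder bound implicitly requires three derivatives of the entropy term. In return, the paper's score-equation formulation is what it reuses for the bootstrap, where the solution map of $S(G,\Lambdab)=0$ is shown to be Hadamard differentiable.

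One detail to tidy up: Theorem~5.23 is stated for criteria of the form $\mathbb P_N m_\theta$. You should therefore absorb the deterministic term into
\[
m_{\Lambdab}(\xb):=q_{\Lambdab}(\xb)+\omega_0\mathcal H_{\mathcal S}(q_{\Lambdab}).
\]
The Lipschitz condition then also uses local Lipschitzness of the entropy, which follows from its $\mathcal C^1$ property established in your step 2. The derivative at $\Lambdab^\star_{\omega_0}$ is $\nabla_{\Lambdab} q_{\Lambdab^\star_{\omega_0}}(\xb)+\omega_0\nabla_{\Lambdab}\mathcal H_{\mathcal S}(q_{\Lambdab^\star_{\omega_0}})$. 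By the first-order condition this equals your centred score $\nabla_{\Lambdab} q_{\Lambdab^\star_{\omega_0}}(\xb)-\mathbb E_p\nabla_{\Lambdab} q_{\Lambdab^\star_{\omega_0}}$. That condition requires $\Lambdab^\star_{\omega_0}$ to be interior, which, as you note, must be assumed; the paper does so in (B2).
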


\begin{remark*}
maximiser uniqueness is subtle in mixture models. In particular, unless the parameter space is restricted so as to rule out degeneracies, a component with vanishing weight can have its associated mean and covariance drift without affecting the mixture density, and if several components share the same $(\mub,\Sigmab)$ then redistributing weights within that tied group also leaves the density unchanged. Consequently, convergence at the parameter level cannot be guaranteed in general, even when the induced densities converge to a maximiser.
To formalise consistency without uniqueness, one can replace the uniqueness assumption in Theorem~3.1 by the weaker requirement that
\[
\mathcal{D}_{\omega_0}=\argmax_{\Lambdab\in\Theta_K}\mathcal{L}_{\omega_0}(\Lambdab)
\]
is nonempty and compact. The set-valued argmax theorem \citep[Cor.~5.58]{van1998asymptotic} then yields
\[
\mathrm{dist}\bigl(\widehat{\Lambdab}_{\omega_0,N},\mathcal{D}_{\omega_0}\bigr)\xrightarrow{{P}}0,
\qquad
\mathrm{dist}(\xb,\mathcal A)=\inf_{\ab\in \mathcal A}\|\xb-\ab\|.
\]
The asymptotic normality statement, however, holds only under local identifiability, for example \(\Hb^\star_{\omega_0}\succ 0\), which rules out duplicated components at the maximiser.
In practice, non-uniqueness can be helpful: redundant components may be absorbed as some weights shrink or as component means coalesce near prominent modes, so that an overcomplete mixture adapts to the effective number of modes without the user pre-specifying \(K\).
\end{remark*}

As demonstrated by Theorem~\ref{thm:annealing_main}, optimal variational mixtures concentrate on the modes of $p$ as $\omega\to 0$. With the consistency guaranties of Theorem~\ref{thm:consistency_main}, GERVE therefore defines a mode estimator: Gaussian mixtures $q_{\Lambdab}=\sum_{k=1}^K \pi_k q_{\lambdab_k}$ simultaneously capture multiple modes  under annealing.

Beyond mode estimation, GERVE naturally induces a clustering structure. 
With possibly overcomplete mixtures, components are attracted toward distinct 
modes as entropy decreases. Redundant components either vanish or collapse, so the 
effective number of clusters adapts automatically. Unlike likelihood-based Gaussian mixtures clustering, GERVE targets density modes under annealing.

In practice, once fitted, cluster assignments are straightforward 
via responsibilities (posterior component probabilities).
This connects to \emph{modal clustering} \citep{menardi2016review,chacon2020modal}, where clusters are attraction basins of density modes. Classical approaches rely on nonparametric density estimates and gradient flows, while GERVE uses a parametric variational approximation. The resulting clusters need not coincide exactly with attraction basins, but they agree locally near each mode under mild conditions, see Supplementary Material~\ref{app:agreement} for a statement and a counterexample.
This yields adaptive modal clustering without tuning $K$. A caveat is that annealing emphasises dominant modes: secondary modes may disappear as $\omega\to 0$, so retaining them requires stopping at a positive temperature or imposing weight lower bounds.

\section{GERVE algorithm and variants}
\label{sec:algorithms}

\subsection{Variational optimisation with natural gradients}
\label{sec:setup} 
The empirical objective $\widehat{\mathcal{L}}_{\omega, N}$ (eq.~\eqref{eq:empirical_objective}) can be optimised over $\Theta$ or $\Theta_K$ using natural gradients $
\widetilde{\nabla}_\varthetab \widehat{\mathcal{L}}_{\omega, N}(\varthetab) := \Fb(\varthetab)^{-1} \nabla_\varthetab \widehat{\mathcal{L}}_{\omega, N}(\varthetab)$, which incorporates the geometry of the variational parameter space through the Fisher information matrix $\Fb(\varthetab)$ (\citealp{amari1998natural}).
When $q_\varthetab$ belongs to an exponential family with sufficient statistic $ \Tb(\Xb)$, the natural gradient equals the gradient in expectation parameters $\Mb = \mathbb{E}_{q_\varthetab}[\Tb(\Xb)]$: 
\begin{equation}
\widetilde{\nabla}_\varthetab \widehat{\mathcal{L}}_{\omega, N}(\varthetab) = \nabla_\Mb \widehat{\mathcal{L}}_{\omega, N}(\varthetab).
\label{eq:natural_expectation_duality}
\end{equation}
This identity allows computation of natural gradients without explicit estimation and inversion of the Fisher matrix. It also holds for untruncated Gaussian mixtures via the Minimal Conditional Exponential Family representation (MCEF; \citealp{lin2019fast}).

Algorithm~\ref{alg:gerve_general} describes natural gradient ascent on $\widehat{\mathcal{L}}_{\omega, N}(\varthetab)$ with a temperature schedule $\omega_t$ over iterations $t = 1, \dots, T$. Convergence guarantees for stochastic natural gradient ascent are stated in Supplementary Material~\ref{app:snga_convergence}: at fixed-$\omega$, updates converge to the stationary set under standard Robbins--Monro conditions and bounded iterates (Thm.~\ref{thm:snatgrad}); with $\omega_t \rightarrow 0$, stationary sets can be tracked with slow schedules, i.e. $\lvert \omega_{t+1} - \omega_t \rvert = o(\rho_t)$  where $\rho_t$ is a gradient stepsize (Cor.~\ref{cor:slow-anneal}). In GERVE, a projection step can be added to ensure the updates remain in the compact space $\Theta$ or $\Theta_K$.

\LinesNumbered
\begin{algorithm}[H]
\caption{GERVE (general form)}
\label{alg:gerve_general}
\textsc{Input:} samples $\Xb_{1:N}$.\\
\textsc{Set:} initial $\varthetab_1$, stepsizes $(\rho_t)_t$, schedule $(\omega_t)_t$, (optional) batch size $B$.\\
\For{$t=1 {:} T$}{
(Optional) \textsc{Sample} a mini-batch $\{\Xb^{(t)}_i\}_{i=1}^B$ from $\Xb_{1:N}$ (or use full data).\\
\textsc{Compute} an (unbiased) natural-gradient estimate $\widetilde{\nabla}\widehat{\mathcal L}_{\omega_t,N}(\varthetab_t)$.\\
\textsc{Update} $\varthetab_{t+1} \leftarrow \varthetab_t + \rho_t \widetilde{\nabla}\widehat{\mathcal L}_{\omega_t,N}(\varthetab_t)$.\\
(Optional) \textsc{Project} to a compact set.
}
\textsc{Return:} $\varthetab_{T+1}$.
\end{algorithm}


\subsection{Variant A: Equivalence to Gaussian mean-shift}
\label{sec:fgc_alg}

Consider the family of Gaussian distributions with a fixed isotropic covariance  $ \sigma^2 \Ib$,  $\{q_\mub=\mathcal N(\mub, \sigma^2 \Ib), \mub \in \Rset^d\}$. For a large enough $\mathcal{S}$, we have $\nabla_\mub \mathcal H_{\mathcal{S}}(q_\mub) \approx 0$, so the natural-gradient step can be simplified into: 
\begin{equation}
\mub_{t+1}  =  \mub_t + \frac{\rho_t}{N} \sum_{i=1}^N (\Xb_i-\mub_t) q_{\lambdab_t}(\Xb_i).
\label{eq:GERVE_update_fc_gaussian}
\end{equation}
Using a batch $\{\Xb^{(t)}_i\}_{i=1}^B$ gives an unbiased estimator of the RHS (see Suppl.~\ref{app:fcg_derivation} for the derivation).
Denoting by 
$p_h(\mub)=\frac{1}{N}\sum_{i=1}^N \varphi_h(\Xb_i-\mub)$ the KDE with a Gaussian kernel $\varphi_h$ with bandwidth $h$, the following result states that GERVE recovers the mean-shift algorithm when $h=\sigma^2$.
\begin{proposition}[Equivalence to Gaussian mean-shift]
\label{prop:gerve_meanshift_equivalence}
Update rule~\eqref{eq:GERVE_update_fc_gaussian} performs gradient ascent on $p_h$, and normalising the step $\rho_t = p_h(\mub)^{-1}$ yields the classical mean-shift fixed-point update:
\[
\mub_{t+1} = \frac{\sum_{i=1}^N \varphi_h(\Xb_i-\mub_t) \Xb_i}{\sum_{j=1}^N \varphi_h(\Xb_j-\mub_t)}.
\]
\end{proposition}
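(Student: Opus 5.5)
The plan is a direct computation on the Gaussian kernel. The key observation is that $q_{\mub}(\Xb_i)$ is exactly a Gaussian kernel centred at $\mub$ and evaluated at $\Xb_i$:
\[
q_{\mub}(\Xb_i) = (2\pi\sigma^2)^{-d/2}\exp\bigl(-\lVert \Xb_i-\mub\rVert^2/(2\sigma^2)\bigr) = \varphi_h(\Xb_i-\mub).
\]
This identification uses the convention that $\varphi_h$ is the $\mathcal N(0,h\Ib)$ density, i.e.\ $h$ plays the role of the variance, which is why $h=\sigma^2$. Hence the data-fit term of the empirical objective coincides with the KDE, $\frac1N\sum_i q_{\mub}(\Xb_i)=p_h(\mub)$. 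Since the covariance is fixed, the entropy $\mathcal H_{\mathcal S}(q_\mub)$ is (up to the truncation error, treated as negligible for large $\mathcal S$ as stated before the proposition) independent of $\mub$. So only the KDE term drives the update.

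First, I would differentiate the kernel:
\[
\nabla_\mub \varphi_h(\Xb_i-\mub)=\varphi_h(\Xb_i-\mub)\,(\Xb_i-\mub)/\sigma^2,
\]
which gives $\nabla p_h(\mub)=\frac{1}{N\sigma^2}\sum_i(\Xb_i-\mub)\varphi_h(\Xb_i-\mub)$. Second, I would relate this Euclidean gradient to the natural gradient. For the fixed-covariance Gaussian family the Fisher matrix in $\mub$ is $\sigma^{-2}\Ib$, so the natural gradient is $\sigma^2\nabla p_h(\mub)=\frac1N\sum_i(\Xb_i-\mub)q_{\mub}(\Xb_i)$. This is exactly the increment in \eqref{eq:GERVE_update_fc_gaussian}. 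Thus the update is $\mub_{t+1}=\mub_t+\rho_t\sigma^2\nabla p_h(\mub_t)$, a gradient ascent step on $p_h$ with positive step $\rho_t\sigma^2$. This proves the first claim.

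Third, I would substitute the normalised step $\rho_t=p_h(\mub_t)^{-1}=N/\sum_j\varphi_h(\Xb_j-\mub_t)$. This gives
\[
\mub_{t+1}=\mub_t+\frac{\sum_i(\Xb_i-\mub_t)\varphi_h(\Xb_i-\mub_t)}{\sum_j\varphi_h(\Xb_j-\mub_t)}.
\]
The $-\mub_t$ part of the numerator cancels the leading $\mub_t$ exactly, since the weights $\varphi_h(\Xb_i-\mub_t)/\sum_j\varphi_h(\Xb_j-\mub_t)$ sum to one. What remains is the weighted average of the $\Xb_i$, which is the classical mean-shift fixed-point update.

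There is no real obstacle; the only delicate points are bookkeeping. One is matching the bandwidth convention $h=\sigma^2$, where $h$ is a variance rather than a standard deviation. The other is justifying that the natural-gradient scaling and the entropy gradient $\nabla_\mub\mathcal H_{\mathcal S}\approx0$ are absorbed into \eqref{eq:GERVE_update_fc_gaussian}, which the paper already takes as given.
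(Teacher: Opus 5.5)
Your proof is correct and follows essentially the paper's derivation: you identify $q_{\mub}(\Xb_i)=\varphi_h(\Xb_i-\mub)$ with $h=\sigma^2$, neglect $\nabla_{\mub}\mathcal H_{\mathcal S}$, recognise the increment in \eqref{eq:GERVE_update_fc_gaussian} as a positive multiple of $\nabla p_h(\mub_t)$, and then substitute $\rho_t=p_h(\mub_t)^{-1}$. The only cosmetic difference is how the natural gradient is obtained. You invert the Fisher matrix $\sigma^{-2}\Ib$ in $\mub$-coordinates, whereas the paper updates the natural parameter $\lambdab=s\mub$ with the gradient in the expectation parameter $\mub$; both yield $\mub_{t+1}=\mub_t+\rho_t\sigma^2\nabla_{\mub}\widehat{\mathcal L}_{\omega_t,N}$.
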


Viewing fixed-covariance GERVE as mean-shift puts us under classical KDE theory. Operationally, $\mub_t$ is a particle ascending the smoothed density $p_h$, and the (fixed) covariance $h \Ib$ is the bandwidth: large $h$ leads to smooth and global but biased updates, while small $h$ yields sharp but noisy updates, sensitive to sampling. 

This regime is thus a well-understood nonparametric baseline that grounds richer GERVE variants. This can typically be used to set the covariance bounds in learned-covariance variants of GERVE. For example, if $p$ is $\mathcal{C}^3$, the gradient of the Gaussian KDE has bias $O(h^2)$ and fluctuation $O_p \big((N h^{d+2})^{-1/2}\big)$, yielding the optimal plug-in rate $h\asymp N^{-1/(d+6)}$ \citep{parzen1962estimation,romano1988weak,tsybakov1990recursive,genovese2016non}.

\subsection{Variant B: Gaussian mixture GERVE}
\label{sec:mixtures}
When considering Gaussian mixtures represented by $\Theta_K$, it is convenient to adopt the following parametrisation. Let $q_\Lambdab=\sum_{k=1}^K \pi_k \mathcal N(\mub_k,\Sb_k^{-1})$, where $\Sb_{k} := \Sigmab_{k}^{-1}$ with  logits $v_k=\log(\pi_k/\pi_K)$ and $\Lambdab=(v_{1:K-1},\lambdab_{1:K}) \in \Theta_K$.
Natural-gradient steps lead to component-wise updates coupled via the entropy term (full derivation in Suppl.~\ref{app:mix_derivation}):
\begin{align*}
\Sb_{k,t+1} &= \Sb_{k,t}\Big(1 - \frac{\rho_t}{N}\sum_{i=1}^N \Big((\Xb_i-\mub_{k,t})(\Xb_i-\mub_{k,t})^T \Sb_{k,t}-\Ib\Big) q_{\lambdab_{k,t}}(\Xb_i)\Big) \\
&\quad- \rho_t\omega_t \nabla_{\Sb^{-1}_k} \mathcal{H}_{\mathcal{S}}(q_\Lambdab)|_{\Lambdab = \Lambdab_t},\\
\mub_{k,t+1} &= \mub_{k,t} + \frac{\rho_t}{N} \Sb_{k,t+1}^{-1}\Sb_{k,t} \sum_{i=1}^N (\Xb_i-\mub_{k,t}) q_{\lambdab_{k,t}}(\Xb_i) + \rho_t\omega_t \nabla_{\mub_k} \mathcal{H}_{\mathcal{S}}(q_\Lambdab)|_{\Lambdab = \Lambdab_t},\\
v_{k,t+1} &= v_{k,t} + \frac{\rho_t}{N}\sum_{i=1}^N \big(q_{\lambdab_{k,t}}(\Xb_i)-q_{\lambdab_{K,t}}(\Xb_i)\big) - \rho_t\omega_t \nabla_{\pi_k} \mathcal{H}_{\mathcal{S}}(q_\Lambdab)|_{\Lambdab = \Lambdab_t}.
\end{align*}
Learned covariances adapt to anisotropy and converge to positive limits at fixed temperature. As $\omega \rightarrow 0$, covariances shrink faster. Furthermore, the entropy term induces repulsion between components at large $\omega$ that vanishes as $\omega \to 0$. Therefore, a slower decreasing annealing schedule allows for longer exploration by delaying mode collapse and improving mode capture in rugged and multimodal landscapes. Overcomplete mixtures (when $K$ is larger than the number of modes) naturally share components per mode (see Thm.~\ref{thm:annealing_main}). 

Gradient derivations and Monte Carlo estimators for the entropy terms are in Supplementary Material~\ref{app:mix_derivation}. Diagonal, isotropic, or fixed-covariance variants are obtained by modifying $\Theta_K$, and derivations follow the same pattern.

\subsection{Complexity and practical guidelines}
\label{sec:practical_playbook}

With $T$ iterations, batch size $B$, and a Gaussian mixture with $K$ components, the computational cost is $O(d^3 B K T)$. The $d^3$ factor reflects the matrix multiplications and inversions that arise when updating full precision matrices. Restricting the variational family reduces this cost to $O(d B K T)$ for diagonal or fixed covariances. These restricted families often preserve the qualitative behaviour of the algorithm and they improve scalability, which brings the cost in line with classical mode-seeking and clustering procedures. Further details appear in Supplementary Material~\ref{app:complexity}.

We now summarise practical guidance for hyperparameter selection. For the temperature schedule, the initial value of $\omega$ should be large so that mixture components repel and cover $\mathcal S$. The temperature should then decrease to a target value $\omega^\dagger$  as slowly as possible so that the iterates continue to track the relevant stationary sets, with $\omega^\dagger=0$ for mode estimation and $\omega^\dagger = \omega_0$ with $\omega_0>0$ for clustering, as formalised in Corollary~\ref{cor:slow-anneal}.

For stepsizes, a safe choice satisfies the Robbins--Monro conditions $\sum_t \rho_t = \infty$ and $\sum_t \rho_t^2 < \infty$. In practice, when the temperature $\omega_t$ decreases over time, we find that mean updates are stabilised by the scaling $\rho_t \propto \omega_t^{-\beta}$ with $\beta \in (0,1)$. This choice often accelerates convergence, although it can violate the Robbins--Monro conditions.

Compactness constraints require explicit bounds for the covariance parameters in $\Theta$ and for the mixture parameters in $\Theta_K$. The upper bound $\sigma^2_{\max}$ should be large enough so that a single component can cover the entire dataset. For mode estimation, a suitable rule for the lower bound is $\sigma^2_{\min}$ of the order of magnitude of $N^{-1/(d+6)}$, which follows classical mean-shift bandwidth theory. For clustering, $\sigma^2_{\min}$ should be comparable to intra-class variances. For the logits that define the weights in $\Theta_K$ (eq.~\eqref{eq:compact_mixtures}), setting the upper bound $v_{\max} \ge 6$ allows any subset of weights to be smaller than $10^{-5}$, which makes it possible to identify vanishing components while maintaining compactness (see Suppl.~\ref{app:truncation_approximation}, eq.~\eqref{eq:bound_v}).

\section{Bootstrap uncertainty quantification for mode estimation}
\label{sec:uq}

Mode-level uncertainty can be provided at a fixed stopping temperature $\omega_0>0$, using a bootstrap principle. 
A GERVE baseline fit is performed on the initial sample (constant $\omega_t = \omega_0$). GERVE is then refit $L$ times on resampled data. Each set of resulting modes is matched to the baseline modes, providing some empirical spread as confidence ellipses and a stability score.

\subsection{Bootstrap procedure}
\label{sec:uq-proc}

The bootstrap procedure is summarised in Algorithm~\ref{alg:bootstrap_uuq} for 
Gaussian mixture GERVE. Each fit, $\ell = 0, \dots, L$, uses an overcomplete $K$-component mixture followed by pruning (for some threshold $\epsilon \ll 1$) and merging of near-duplicates. The first fit identifies $\widehat{K}$ ``baseline'' modes. Each bootstrap fit $\ell = 1, \dots, L$ produces its own set of $\widehat K^{(\ell)}$ modes, which are matched to the baseline modes via minimum-cost assignment (Hungarian algorithm, \citealp[]{kuhn1955hungarian}). Then, we form confidence intervals from the empirical distribution of matched modes, and monitor mode recovery with the stability score
\begin{equation}
s_k  =  \frac{1}{L}\sum_{\ell=1}^L \mathds{1}\{\text{mode }k\text{ is matched in replicate }\ell\}.
\label{eq:stability_score_def}
\end{equation}


\LinesNumbered
\begin{algorithm}[H]
  \caption{Bootstrap uncertainty quantification for mode estimation} \label{alg:bootstrap_uuq}
  \textsc{Input:} samples $\Xb_{1:N}$, temperature $\omega_0$, number of replicates $L$. \\
  \textsc{Fit} GERVE at $\omega_0$ on $\Xb_{1:N}$, prune/merge to obtain baseline modes $\{\widehat\mub_k^{
  (0)}\}_{k=1}^{\widehat K}$. \\
  \For{$\ell=1\!:\!L$}{
    \textsc{Sample} $J^{(\ell)}_{1:N} \overset{\text{i.i.d.}}{\sim} \mathrm{Unif}([N])$, set $\Xb^{(\ell)}_i \gets \Xb_{J^{(\ell)}_i}$ for $i=1\!:\!N$. \\
    \textsc{Fit} GERVE at $\omega_0$ on $\Xb^{(\ell)}_{1:N}$, prune/merge to obtain $\{\widehat\mub^{(\ell)}_j\}_{j=1}^{\widehat K^{(\ell)}}$. \\
    \textsc{Match} $\{\widehat\mub^{(\ell)}_j\}_{j=1}^{\widehat K^{(\ell)}}$ to $\{\widehat\mub_k^{(0)}\}_{k=1}^{\widehat K}$ (Hungarian). \\
    \textsc{Record} for each $k$: the matched $\widehat\mub^{(\ell)}_k$. 
  } 
  \textsc{Return:} per-mode confidence ellipses from $(\widehat\mub^{(\ell)}_k)_{\ell=1}^L$ and stability scores $(s_k)_{k=1}^{\widehat K}$. 
\end{algorithm}

\subsection{Statistical validity}
\label{sec:uq-theory}

Let $\widehat{\mathcal L}_{\omega_0,N}$ be the empirical objective at fixed $\omega_0>0$ on a compact parameter set $\Theta_K$ (eq.~\eqref{eq:compact_mixtures}), and let 
$\widehat\Lambdab_{\omega_0,N} \in \argmax_{\Lambdab\in\Theta_K}\widehat{\mathcal L}_{\omega_0,N}(\Lambdab)$ with a unique population maximiser $\Lambdab^\star_{\omega_0}$.
Write $P^\ast$ for the conditional probability given the observed sample $\Xb_{1:N}$.
Bootstrap limits are understood conditionally on $\Xb_{1:N}$, in $P$-probability. Define the bootstrap criterion and maximiser
\[
\widehat{\mathcal L}^\ast_{\omega_0,N}(\Lambdab)=E_{P_N^\ast}[q_{\Lambdab}(\Xb)]+\omega_0 \mathcal H_{\mathcal S}(q_{\Lambdab}),\qquad
\widehat\Lambdab^\ast_{\omega_0,N}\in\argmax_{\Lambdab\in\Theta_K}\widehat{\mathcal L}^\ast_{\omega_0,N}(\Lambdab).
\] 
In addition to unicity, we assume that the population maximiser satisfies $\Hb^\star_{\omega_0} = -\nabla_{\Lambdab}^2\mathcal{L}_{\omega_0}(\Lambdab^\star_{\omega_0}) \succ 0$, so that Theorem~\ref{thm:consistency_main} fully applies.

The following statements guarantee the validity of bootstrap for parameters (Thm.~\ref{thm:bootstrap-params-max-cor_main}), matched modes (Thm.~\ref{thm:modes-CLT-bootstrap_main}), and their robustness to inexact maximisation (Prop.~\ref{prop:inexact_main}). All results are proved in Supplementary Material~\ref{app:proofs_uq}. In Supplementary Material~\ref{app:uq}, we also investigate the consistency of stability scores and propose strategies to handle nonconvex objectives. 

\begin{theorem}[Parameter-level bootstrap validity]
\label{thm:bootstrap-params-max-cor_main}
We have, as $N\to \infty$: 
\[
\sqrt{N}\big(\widehat\Lambdab^\ast_{\omega_0,N}-\widehat\Lambdab_{\omega_0,N}\big) \xrightarrow[]{\mathcal D} \mathcal N(0,\Wb_{\omega_0})
\quad\text{conditionally on }\Xb_{1:N},
\]
where $\Wb_{\omega_0} := \big(\Hb_{\omega_0}^\star\big)^{-1} \Vb_{\omega_0} \big(\Hb_{\omega_0}^\star\big)^{-1}$, and $\Vb_{\omega_0}:=\operatorname{Var} \big(\nabla_{\Lambdab} q_{\Lambdab}(\Xb)\big)\big|_{\Lambdab=\Lambdab^\star_{\omega_0}}$.
\end{theorem}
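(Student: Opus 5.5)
The plan is to treat $\widehat\Lambdab^\ast_{\omega_0,N}$ as a bootstrap $M$-estimator and follow the standard route for bootstrapping $Z$-estimators (van der Vaart and Wellner, Thm.~3.9.11 / 3.9.13; \citealp[Ch.~23]{van1998asymptotic}). The entropy term $\omega_0\mathcal H_{\mathcal S}(q_\Lambdab)$ is deterministic and common to $\mathcal L_{\omega_0}$, $\widehat{\mathcal L}_{\omega_0,N}$ and $\widehat{\mathcal L}^\ast_{\omega_0,N}$. Only the data-dependent part $\Lambdab\mapsto \mathbb P_N q_\Lambdab$ (resp.\ $\mathbb P_N^\ast q_\Lambdab$) is random.

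\textbf{Step 1: function class.} I will first record the regularity of $\mathcal F:=\{q_\Lambdab:\Lambdab\in\Theta_K\}$ and of its gradients $\{\nabla_\Lambdab q_\Lambdab\}$ and Hessians. Compactness of $\Theta_K$, the eigenvalue bounds $\sigma_{\min}^2\Ib\preceq\Sigmab\preceq\sigma_{\max}^2\Ib$ and the bounded logits imply that $q_\Lambdab(\xb)$ is $\mathcal C^\infty$ in $\Lambdab$. Its derivatives up to order three are bounded uniformly in $(\Lambdab,\xb)$. Hence these classes are Lipschitz in the parameter with a bounded envelope, so they are $P$-Donsker and Glivenko--Cantelli. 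This is the same input used in the proof of Theorem~\ref{thm:consistency_main}.

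\textbf{Step 2: bootstrap consistency.} By the bootstrap Glivenko--Cantelli theorem, $\sup_{\Lambdab}|\widehat{\mathcal L}^\ast_{\omega_0,N}-\mathcal L_{\omega_0}|\to 0$ in $P^\ast$-probability, in $P$-probability. Continuity of $\mathcal L_{\omega_0}$ (Theorem~\ref{thm:consistency_main}) and uniqueness of $\Lambdab^\star_{\omega_0}$ on the compact $\Theta_K$ give a well-separated maximum. The argmax argument then yields $\widehat\Lambdab^\ast_{\omega_0,N}\to\Lambdab^\star_{\omega_0}$ conditionally. Theorem~\ref{thm:consistency_main} also gives $\widehat\Lambdab_{\omega_0,N}\to\Lambdab^\star_{\omega_0}$.

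\textbf{Step 3: expansion.} This step needs interiority: I will assume $\Lambdab^\star_{\omega_0}$ is interior to $\Theta_K$, implicitly required already for the normality part of Theorem~\ref{thm:consistency_main}. Then both maximisers satisfy first-order conditions with probability tending to one. Writing $\psi_\Lambdab=\nabla_\Lambdab q_\Lambdab$ and $h(\Lambdab)=\omega_0\nabla\mathcal H_{\mathcal S}(q_\Lambdab)$, we have
\begin{equation*}
0=\mathbb P^\ast_N\psi_{\widehat\Lambdab^\ast}+h(\widehat\Lambdab^\ast),\qquad 0=\mathbb P_N\psi_{\widehat\Lambdab}+h(\widehat\Lambdab).
\end{equation*}
Subtract the two conditions, add and subtract $P\psi$, and use the population identity $P\psi_\Lambdab+h(\Lambdab)=\nabla\mathcal L_{\omega_0}(\Lambdab)$. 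This gives
\begin{equation*}
-\Hb^\star_{\omega_0}(\widehat\Lambdab^\ast-\widehat\Lambdab)+o(\|\widehat\Lambdab^\ast-\widehat\Lambdab\|)=-(\mathbb P^\ast_N-\mathbb P_N)\psi_{\widehat\Lambdab^\ast}-(\mathbb P_N-P)(\psi_{\widehat\Lambdab^\ast}-\psi_{\widehat\Lambdab}).
\end{equation*}
The Taylor remainder is controlled by the bounded third derivatives and the consistency from Step 2. Stochastic equicontinuity of the empirical process and of the bootstrap empirical process (Donsker plus bootstrap Donsker) lets me replace $\psi_{\widehat\Lambdab^\ast}$ by $\psi_{\Lambdab^\star}$ at cost $o_{P^\ast}(N^{-1/2})$. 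The same argument makes the last term $o(N^{-1/2})$. Invertibility of $\Hb^\star_{\omega_0}$ then yields $\sqrt N$-rate and
\begin{equation*}
\sqrt N(\widehat\Lambdab^\ast-\widehat\Lambdab)=(\Hb^\star_{\omega_0})^{-1}\sqrt N(\mathbb P^\ast_N-\mathbb P_N)\psi_{\Lambdab^\star}+o_{P^\ast}(1).
\end{equation*}

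\textbf{Step 4: conditional CLT.} For the bounded vector $\psi_{\Lambdab^\star}(\Xb)$, the multinomial bootstrap CLT (Giné--Zinn, or simply the Lindeberg CLT for triangular arrays with the sample covariance converging to $\Vb_{\omega_0}$) gives $\sqrt N(\mathbb P^\ast_N-\mathbb P_N)\psi_{\Lambdab^\star}\to\mathcal N(0,\Vb_{\omega_0})$ conditionally, in $P$-probability. Slutsky then concludes with covariance $\Wb_{\omega_0}$.

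\textbf{Main obstacle.} The delicate step is Step 3: making the conditional stochastic-equicontinuity argument rigorous under the lexicographic restriction of $\Theta_K$, and ensuring that the maximisers are interior zeros of the gradient. I would handle the boundary by showing that $\Lambdab^\star_{\omega_0}$ lies in the interior of $\overline{\Theta_K^{\mathrm{lex}}}$. Under $\Hb^\star\succ0$ the components are distinct, so the lexicographic ties are strict. Consistency then places both estimators in an open neighbourhood with probability tending to one.
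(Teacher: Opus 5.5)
Your argument is correct, but it does not follow the paper's route. The paper never linearises the two first-order conditions against each other. It works in three steps.
\begin{itemize}
\item It defines the local root map $\Phi:G\mapsto\Lambdab(G)$ of the score equation $S(G,\Lambdab)=G(\nabla_\Lambdab q_\Lambdab)+\omega_0\nabla_\Lambdab\mathcal H_{\mathcal S}(q_\Lambdab)=0$, for $G$ in a $\|\cdot\|_{\mathcal F}$-neighbourhood of $P$. Here $\mathcal F$ collects the coordinate classes of both $\nabla_\Lambdab q_\Lambdab$ and $\nabla^2_\Lambdab q_\Lambdab$.
\item A Banach contraction argument for $T_G(\Lambdab)=\Lambdab-\Ab^{-1}S(G,\Lambdab)$, with $\Ab=-\Hb^\star_{\omega_0}$, gives existence and uniqueness of that root near $\Lambdab^\star_{\omega_0}$. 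It also gives Hadamard differentiability of $\Phi$ at $P$, with derivative $h\mapsto(\Hb^\star_{\omega_0})^{-1}h(\nabla_\Lambdab q_{\Lambdab^\star_{\omega_0}})$.
\item The conditional Donsker theorem for $\sqrt N(P_N^\ast-P_N)$ in $\ell^\infty(\mathcal F)$ and the bootstrap functional delta method (van der Vaart--Wellner, Thm.~3.9.11) then conclude.
\end{itemize}
The delta-method route packages the linearisation once, and the same derivative yields both the sampling CLT and its bootstrap analogue. Its cost is working in $\ell^\infty(\mathcal F)$ with the Hessian class included, because the contraction needs uniform control of $\partial_\Lambdab S(G,\cdot)$.

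Your $Z$-estimator route is more elementary. It needs only three ingredients: asymptotic equicontinuity of the empirical and bootstrap processes over the gradient class, continuity of the population Hessian, and a finite-dimensional conditional Lindeberg CLT. It also makes explicit a step the paper uses only tacitly. The delta method controls the local root $\Phi(P_N^\ast)$, not the global maximiser $\widehat\Lambdab^\ast_{\omega_0,N}$. Identifying the two requires bootstrap consistency plus interiority, which is exactly your Step~2.

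One correction to your last paragraph. $\Hb^\star_{\omega_0}\succ0$ does force distinct components, since a duplicated pair creates a flat weight direction. Distinctness does not, however, place $\Lambdab^\star_{\omega_0}$ in the interior of $\overline{\Theta_K^{\mathrm{lex}}}$. Distinct components sharing the leading coordinate of $\tau$ lie on its boundary, and no Hessian condition can detect this, because the label swap is a discrete symmetry. You do not need that argument anyway: interiority of $\Lambdab^\star_{\omega_0}$ in the restricted $\Theta_K$ is assumed explicitly in (B2).
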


For the theoretical analysis, we fix deterministic target locations $\ub_{1:K_0}$, which act as reference anchors for defining the matching map. In practice, these targets correspond to the subset of baseline mode estimates we intend to track, so $K_0$ is taken no larger than the number $\widehat{K}$ of resolved baseline modes returned by the initial fit. To move from parameters to matched modes, we then define the matching map $\mathcal{M}$ that stacks the $K_0$ component means closest to their respective targets $\ub_{1:K_0}$ (see Suppl.~\ref{app:uq_notation_separation} for more details).
The following guarantees only require a local separation margin.

\begin{assumption}[Local separation for matching]
\label{as:sep-match_main}
There exists $\delta>0$ such that near $\Lambdab^\star_{\omega_0}$, each target $\ub_j$ has a unique nearest component mean and the nearest index is locally constant (see Suppl.~\ref{app:uq_notation_separation} for a formal definition).
\end{assumption}

When separation fails, bootstrap ellipses inflate and the stability score $s_k$ drops, which is informative about ambiguous modes (see Suppl.~\ref{app:uq} for a consistency result for $s_k$).

\begin{theorem}[Matched-modes bootstrap validity and confidence ellipses]
\label{thm:modes-CLT-bootstrap_main}
Under Assumption~\ref{as:sep-match_main}, $\mathcal M$ is $\mathcal C^1$ at $\Lambdab^\star_{\omega_0}$ with Jacobian $\Jb$, and as $N\to \infty$:
\[
\sqrt N\big(\mathcal M(\widehat\Lambdab_{\omega_0,N})-\mathcal M(\Lambdab^\star_{\omega_0})\big)\xrightarrow[]{\mathcal D} \mathcal N(0,\Cb_{\mathcal M}),\qquad
\sqrt N\big(\mathcal M(\widehat\Lambdab^\ast_{\omega_0,N})-\mathcal M(\widehat\Lambdab_{\omega_0,N})\big)\xrightarrow[]{\mathcal D} \mathcal N(0,\Cb_{\mathcal M}),
\]
where $\Cb_{\mathcal M}=\Jb \Wb_{\omega_0} \Jb^T$. For each matched mode $j$, the marginal covariance is $\Cb_j$, so percentile (and, with a consistent $\widehat\Cb_j$, studentised) bootstrap ellipses are asymptotically valid.
\end{theorem}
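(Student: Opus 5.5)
The plan is to obtain Theorem~\ref{thm:modes-CLT-bootstrap_main} from Theorem~\ref{thm:consistency_main} and Theorem~\ref{thm:bootstrap-params-max-cor_main} by the delta method, in its ordinary form and in its bootstrap form. The only new ingredient is local smoothness of the matching map $\mathcal M$ near $\Lambdab^\star_{\omega_0}$.

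The first step is to show that $\mathcal M$ is $\mathcal C^1$ at $\Lambdab^\star_{\omega_0}$. By Assumption~\ref{as:sep-match_main} there is a neighbourhood $B_\delta$ of $\Lambdab^\star_{\omega_0}$ and fixed indices $k(1),\dots,k(K_0)$ such that, for every $\Lambdab\in B_\delta$, the component mean nearest to $\ub_j$ has index $k(j)$. On $B_\delta$ the map is therefore
\[
\mathcal M(\Lambdab)=\bigl(\mub_{k(1)}(\Lambdab),\dots,\mub_{k(K_0)}(\Lambdab)\bigr).
\]
Each mean is $\mub_k=\Sigmab_k\lambdab_{k,1}=-\tfrac12(\lambdab_{k,2})^{-1}\lambdab_{k,1}$ in the natural parametrisation. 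Matrix inversion is smooth on the set $\sigma_{\max}^{-2}\Ib\preceq\Sb_k\preceq\sigma_{\min}^{-2}\Ib$, so $\mathcal M$ is $\mathcal C^\infty$ on $B_\delta$. Its Jacobian $\Jb$ is explicit: it has nonzero blocks only in the coordinates of the selected components, and is zero in the logits and in the other components. The lexicographic restriction to $\overline{\Theta_K^{\mathrm{lex}}}$ ensures the labels, and hence $k(j)$, are well defined.

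The second step is the sample CLT. Theorem~\ref{thm:consistency_main} gives $\widehat\Lambdab_{\omega_0,N}\to\Lambdab^\star_{\omega_0}$ in probability, so $P(\widehat\Lambdab_{\omega_0,N}\in B_\delta)\to1$. On that event $\mathcal M$ equals its smooth local version. A first-order Taylor expansion of $\mathcal M$ around $\Lambdab^\star_{\omega_0}$, combined with $\sqrt N(\widehat\Lambdab_{\omega_0,N}-\Lambdab^\star_{\omega_0})\xrightarrow{\mathcal D}\mathcal N(0,\Wb_{\omega_0})$, then gives convergence to $\mathcal N(0,\Jb\Wb_{\omega_0}\Jb^T)$. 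The remainder is $o_P(1)$ because $\Jb$ is continuous and the estimator is $\sqrt N$-consistent.

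The third step is the bootstrap statement, which I expect to be the main obstacle. It needs the conditional, in-probability version of the delta method (van der Vaart, Thm.~23.5). First, Theorem~\ref{thm:bootstrap-params-max-cor_main} gives $\sqrt N(\widehat\Lambdab^\ast_{\omega_0,N}-\widehat\Lambdab_{\omega_0,N})=O_{P^\ast}(1)$ in $P$-probability. Hence $P^\ast(\widehat\Lambdab^\ast_{\omega_0,N}\in B_\delta)\to1$ in $P$-probability.

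On the event that both $\widehat\Lambdab$ and $\widehat\Lambdab^\ast$ lie in $B_\delta$, the mean value theorem gives
\[
\sqrt N\bigl(\mathcal M(\widehat\Lambdab^\ast)-\mathcal M(\widehat\Lambdab)\bigr)=\Jb\,\sqrt N(\widehat\Lambdab^\ast-\widehat\Lambdab)+R_N^\ast .
\]
The remainder satisfies
\[
\|R_N^\ast\|\le\sup_{\Lambdab\in[\widehat\Lambdab,\widehat\Lambdab^\ast]}\|\nabla\mathcal M(\Lambdab)-\Jb\|\cdot\sqrt N\|\widehat\Lambdab^\ast-\widehat\Lambdab\|,
\]
where the segment lies in $B_\delta$ because $B_\delta$ can be taken convex. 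The supremum tends to zero by continuity of $\nabla\mathcal M$ and consistency of both estimators. So $R_N^\ast=o_{P^\ast}(1)$ in $P$-probability. Slutsky's lemma applied conditionally, together with the fact that $\mathcal N(0,\Cb_{\mathcal M})$ has a continuous limit law, then gives the conditional weak limit.

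The final claims follow directly. Projecting onto the coordinates of mode $j$ gives the marginal covariance $\Cb_j$ for both the sample and bootstrap laws. Since the limit is a continuous Gaussian, quantiles of the bootstrap distribution of the quadratic form (or of the directional spreads) converge to the true ones. This makes percentile ellipses asymptotically valid. Studentised ellipses follow from a consistent $\widehat\Cb_j$ and the continuous mapping theorem.
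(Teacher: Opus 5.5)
Your proposal is correct and follows essentially the paper's route: local constancy of the matched indices makes $\mathcal M$ a fixed smooth selection of component means near $\Lambdab^\star_{\omega_0}$, the ordinary delta method applied to Theorem~\ref{thm:consistency_main} gives the first limit, and a bootstrap delta method applied to Theorem~\ref{thm:bootstrap-params-max-cor_main} gives the second (you check this by hand with a mean-value remainder bound, whereas the paper cites van der Vaart--Wellner, Thm.~3.9.11), with projection and the Portmanteau theorem handling the per-mode ellipses. One point that both you and the paper leave implicit: $\mathcal M$ is formally defined through a cost-minimising injection, so identifying it with the nearest-neighbour selection $(\mub_{k(1)},\dots,\mub_{k(K_0)})$ requires the indices $k(1),\dots,k(K_0)$ to be pairwise distinct, in which case the separation margin $\delta$ makes this injection the unique, hence locally stable, minimiser.
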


Finally, we allow inexact baseline and bootstrap fits. Let $\widetilde\Lambdab_{\omega_0,N}$ and $\widetilde\Lambdab^\ast_{\omega_0,N}$ denote any approximate solutions that are first-order stationary up to $o_P(N^{-1/2})$ and $o_{P^\ast}(N^{-1/2})$, respectively, that is
$\|\nabla_\Lambdab \widehat{\mathcal L}_{\omega_0,N}(\widetilde\Lambdab_{\omega_0,N})\| = o_P(N^{-1/2})$ and
$\|\nabla_\Lambdab \widehat{\mathcal L}^\ast_{\omega_0,N}(\widetilde\Lambdab^\ast_{\omega_0,N})\| = o_{P^\ast}(N^{-1/2})$.

\begin{proposition}[Robustness to inexact maximisation]
\label{prop:inexact_main}
The conclusions of Theorems~\ref{thm:bootstrap-params-max-cor_main} and~\ref{thm:modes-CLT-bootstrap_main} hold with $\widetilde\Lambdab_{\omega_0,N}$ and $\widetilde\Lambdab^\ast_{\omega_0,N}$ in place of $\widehat\Lambdab_{\omega_0,N}$ and $\widehat\Lambdab^\ast_{\omega_0,N}$.
\end{proposition}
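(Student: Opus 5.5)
The plan is to show that approximate stationarity at rate $o_P(N^{-1/2})$ already gives the same first-order linear expansion as exact maximisation. Once that expansion holds for both the baseline and the bootstrap estimators, the rest of the arguments for Theorems~\ref{thm:bootstrap-params-max-cor_main} and~\ref{thm:modes-CLT-bootstrap_main} apply unchanged.

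\textbf{Step 1: consistency.} I first need $\widetilde\Lambdab_{\omega_0,N}\xrightarrow{P}\Lambdab^\star_{\omega_0}$ and $\widetilde\Lambdab^\ast_{\omega_0,N}\to\Lambdab^\star_{\omega_0}$ in conditional probability. Stationarity alone does not imply this, since the objective is nonconvex and stationary points may be spurious. I therefore read the proposition as implicitly requiring the approximate solutions to lie in a fixed neighbourhood $B$ of $\Lambdab^\star_{\omega_0}$ where $\Hb^\star_{\omega_0}\succ0$; this is the natural reading for iterates of Algorithm~\ref{alg:gerve_general} warm-started at the baseline fit. On such a $B$ the argument goes as follows. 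The maps $\Lambdab\mapsto\nabla_\Lambdab q_\Lambdab(\xb)$ and $\nabla^2_\Lambdab q_\Lambdab(\xb)$ are continuous and uniformly bounded on the compact set $\Theta_K$. The entropy term is deterministic and $\mathcal C^2$. Hence the uniform law of large numbers, a Glivenko--Cantelli argument already used in the proof of Theorem~\ref{thm:consistency_main}, gives
\[
\sup_{\Theta_K}\|\nabla\widehat{\mathcal L}_{\omega_0,N}-\nabla\mathcal L_{\omega_0}\|\to0 .
\]
The same holds for the bootstrap criterion by the bootstrap Glivenko--Cantelli theorem. Since $\Lambdab^\star_{\omega_0}$ is the unique zero of $\nabla\mathcal L_{\omega_0}$ in $B$ (by nondegeneracy after shrinking $B$), a standard Z-estimator argument gives consistency.

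\textbf{Step 2: linearisation.} I Taylor expand the gradient around $\Lambdab^\star_{\omega_0}$:
\[
\nabla\widehat{\mathcal L}_{\omega_0,N}(\widetilde\Lambdab)=\nabla\widehat{\mathcal L}_{\omega_0,N}(\Lambdab^\star_{\omega_0})-\widehat\Hb_N(\bar\Lambdab)(\widetilde\Lambdab-\Lambdab^\star_{\omega_0}),
\]
where $\bar\Lambdab$ lies on the segment between the two points. Uniform convergence of the Hessian together with consistency gives $\widehat\Hb_N(\bar\Lambdab)\to\Hb^\star_{\omega_0}$, which is invertible. The left-hand side is $o_P(N^{-1/2})$ by assumption. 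The first term on the right is $N^{-1}\sum_i\{\nabla q_{\Lambdab^\star}(\Xb_i)-\mathbb E\nabla q_{\Lambdab^\star}\}=O_P(N^{-1/2})$, using that $\nabla\mathcal L_{\omega_0}(\Lambdab^\star)=0$. Solving for the increment gives
\[
\sqrt N(\widetilde\Lambdab-\Lambdab^\star)=(\Hb^\star)^{-1}\sqrt N\,\nabla\widehat{\mathcal L}_{\omega_0,N}(\Lambdab^\star)+o_P(1).
\]
This is exactly the expansion satisfied by $\widehat\Lambdab_{\omega_0,N}$ in the proof of Theorem~\ref{thm:consistency_main}. Consequently $\sqrt N(\widetilde\Lambdab-\widehat\Lambdab)=o_P(1)$.

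\textbf{Step 3: bootstrap expansion.} Repeating Step 2 conditionally on $\Xb_{1:N}$ gives
\[
\sqrt N(\widetilde\Lambdab^\ast-\Lambdab^\star)=(\Hb^\star)^{-1}\sqrt N\,\nabla\widehat{\mathcal L}^\ast_{\omega_0,N}(\Lambdab^\star)+o_{P^\ast}(1).
\]
Subtracting the expansion from Step 2 yields
\[
\sqrt N(\widetilde\Lambdab^\ast-\widetilde\Lambdab)=(\Hb^\star)^{-1}\sqrt N(P_N^\ast-P_N)\nabla q_{\Lambdab^\star}+o_{P^\ast}(1).
\]
The bootstrap CLT for bounded functions gives the conditional limit $\mathcal N(0,\Wb_{\omega_0})$ in $P$-probability. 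This proves the analogue of Theorem~\ref{thm:bootstrap-params-max-cor_main}.

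\textbf{Step 4: matched modes.} Under Assumption~\ref{as:sep-match_main}, $\mathcal M$ is $\mathcal C^1$ near $\Lambdab^\star_{\omega_0}$, and both estimators lie in that neighbourhood with probability tending to one. The delta method, and its conditional bootstrap version, then transfers the result to the statement of Theorem~\ref{thm:modes-CLT-bootstrap_main}.

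I expect the main obstacle to be Step 1. Without a localisation condition, the $o_P(N^{-1/2})$ stationarity assumption is compatible with convergence to a different stationary point, so I will make the neighbourhood assumption explicit. A second technical point is the bootstrap order-in-probability bookkeeping: the conditional remainders must be $o_{P^\ast}(1)$ in $P$-probability. I plan to handle this with the standard lemma that $o_{P^\ast}$ terms are negligible for conditional weak convergence, as in \citet{van1998asymptotic}.
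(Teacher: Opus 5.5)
Your proof is correct, and it reaches the conclusion by a somewhat different route from the paper.

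\textbf{Where the two proofs linearise.} The paper expands the score between $\widetilde\Lambdab_{\omega_0,N}$ and the exact maximiser $\widehat\Lambdab_{\omega_0,N}$, where $\nabla_{\Lambdab}\widehat{\mathcal L}_{\omega_0,N}$ vanishes. This gives $\|\widetilde\Lambdab_{\omega_0,N}-\widehat\Lambdab_{\omega_0,N}\|\le c^{-1}\|\nabla_{\Lambdab}\widehat{\mathcal L}_{\omega_0,N}(\widetilde\Lambdab_{\omega_0,N})\|=o_P(N^{-1/2})$. The paper then transfers the limits already proved for $\widehat\Lambdab_{\omega_0,N}$ and $\widehat\Lambdab^\ast_{\omega_0,N}$ by Slutsky, and repeats the argument conditionally for the starred quantities. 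You instead expand around $\Lambdab^\star_{\omega_0}$ and re-derive the influence-function representation for $\widetilde\Lambdab_{\omega_0,N}$ and $\widetilde\Lambdab^\ast_{\omega_0,N}$ directly.

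\textbf{What each route buys.} The paper's version is shorter. However, it depends on the exact maximisers existing and obeying Theorems~\ref{thm:consistency_main} and~\ref{thm:bootstrap-params-max-cor_main}, whose bootstrap part rests on Hadamard differentiability of the argmax map and the functional delta method. Your version is self-contained. Its bootstrap step needs only a bootstrap Glivenko--Cantelli statement for the Hessian class and a finite-dimensional bootstrap CLT for the fixed vector $\nabla_{\Lambdab}q_{\Lambdab}|_{\Lambdab=\Lambdab^\star_{\omega_0}}$. It never requires an exact maximiser to exist or be computed.

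\textbf{Your Step 1 concern.} It is well founded, and it applies equally to the paper's argument. The paper asserts that $\mathrm{eig}_{\min}(\widehat\Hb_{\omega_0,N}(\bar\Lambdab))\ge c>0$ with probability tending to one ``by (B3)''. That assertion silently needs $\widetilde\Lambdab_{\omega_0,N}$, and hence $\bar\Lambdab$, to be consistent. First-order stationarity up to $o_P(N^{-1/2})$ does not give this: an exact spurious critical point of $\widehat{\mathcal L}_{\omega_0,N}$ satisfies the hypothesis but not the conclusion. Your explicit localisation assumption is therefore what either proof actually uses.

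\textbf{A small repair that applies to both proofs.} For a vector-valued gradient, the mean-value theorem does not give a single intermediate point $\bar\Lambdab$. Use the integral form $\int_0^1\nabla^2_{\Lambdab}\widehat{\mathcal L}_{\omega_0,N}\bigl(\Lambdab^\star_{\omega_0}+t(\widetilde\Lambdab_{\omega_0,N}-\Lambdab^\star_{\omega_0})\bigr)\,dt$ instead. It converges in probability to $-\Hb^\star_{\omega_0}$ by the same uniform-convergence-plus-consistency argument, provided you take the neighbourhood $B$ convex (e.g.\ a ball).
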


\section{Simulation studies}
\label{sec:simulations}

We show that GERVE recovers modes and yields useful clusterings without tuning $K$. Furthermore, on simulated data, we illustrate (i) how clustering emerges as a byproduct of variational mode estimation and (ii) the consistency of mode recovery.

\subsection{Clustering example}
\label{sec:simu_clustering}

To illustrate the clustering properties of GERVE, we simulate $N = 6000$ points from a 2D, 3-component Gaussian mixture with means at the nodes of an equilateral triangle, with isotropic covariance $\sigma^2 \Ib$, $\sigma^2 = 0.25$. The data exhibits $J=3$ high-density regions that we aim to recover. Details on the setup and a density plot can be found in Supplementary Material~\ref{app:simulations_cluster}.

We run GERVE with Gaussian mixtures (full covariances) under two regimes: matched $K = 3$ and overcomplete $K = 7$. To showcase GERVE's robustness to component mean initialisation, we initialise means in $[-2,0]^2$. We start with broad covariances and equal weights. We use a continuously decreasing annealing schedule, and compensate with step size $\rho_t \propto \omega_t^{-0.7}$ (full hyperparameters and schedules in Suppl.~\ref{app:simulations_cluster}). After training, assign by highest responsibility and prune weights $<10^{-3}$. Baselines are GMM-EM (Gaussian mixture model fit by the EM algorithm) with full covariances and $k$-means for $K\in\{3,7\}$.

Figure~\ref{fig:clusters_k7} shows the obtained partitions when $K=7$. With this overcomplete mixture, GERVE still yields 3 effective clusters: redundant components coalesce at modes or are pruned, whereas GMM-EM returns 7 clusters, so does $k$-means (Suppl.~\ref{app:simulations_cluster}, Fig.~\ref{fig:clusters_k3}). Because GERVE targets modal structure rather than a clustering likelihood, it is more robust to over-specification. With $K=3$, all methods recover 3 clusters (Suppl.~\ref{app:simulations_cluster}, Fig.~\ref{fig:clusters_k3}). 

\begin{figure}
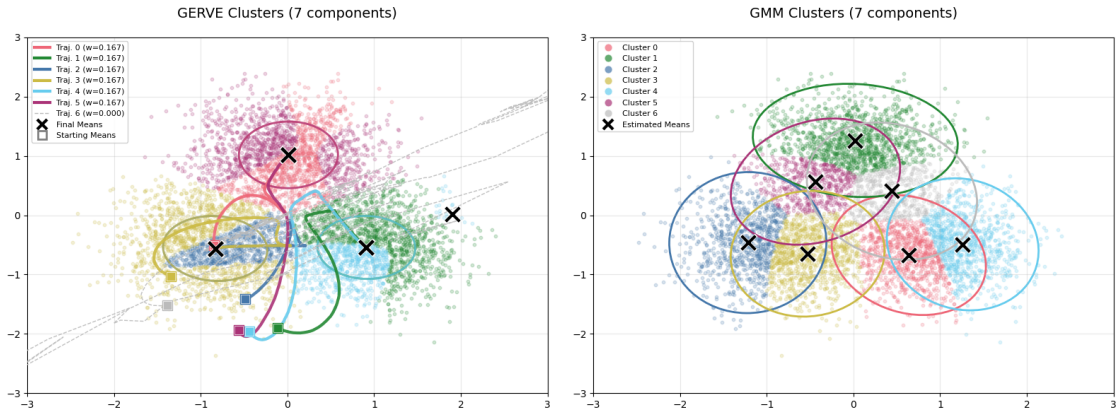

\centering
\includesvg[width=0.49\linewidth]{figures/cluster_gerve_7.svg}
\includesvg[width=0.49\linewidth]{figures/cluster_gmm_7.svg}
\caption{Overspecified clustering ($K=7$) of a triangle Gaussian mixture sample. Data points are colored according to the component with the highest posterior responsibility in the fitted mixture. Ellipses represent the component covariances. Left: GERVE returns 3 effective clusters (the component with dashed trajectory has vanishing weight, the remaining six components form three clusters by grouping equivalent components). Right: GMM-EM returns a partition into 7 clusters. 
}
\label{fig:clusters_k7}
\end{figure}

In Supplementary Material~\ref{app:real_datasets}, we assess robustness against classical methods with a benchmark on standard UCI datasets. We also provide an ablation study and report runtimes.

\subsection{Mode-finding performance}

We use the same triangle setting as previously, with sharper components ($\sigma^2=0.1$). True global modes practically coincide with means.

Over $n_{\text{rep}}=100$ replicates, each method outputs $K$ mode estimates. We report three metrics (see Suppl.~\ref{app:simulations_mode}):
(i) \emph{Mode recovery} ($\mathsf{MR}_\epsilon$), the number of true modes recovered within a tolerance $\epsilon$, we set $\epsilon=10^{-2}$; 
(ii) \emph{Hungarian matching} sum ($\mathsf{HM}$), minimum-assignment cost between between true and estimated modes;
(iii) \emph{Nearest-neighbor} sum ($\mathsf{NN}$), aggregate distance from each estimate to the closest mode.

We run GERVE (Gaussian mixture with full covariances) with $K\in\{3,7\}$ and a vanishing annealing schedule. Baselines are Gaussian mean-shift with fixed-point updates, modes found aggregated from $K$ initialisations, and the Feature Significance method via the \texttt{feature} R package, which detects significant curvature points, then estimates modes as the centroids of $k$-means. For each sample size $N\in\{2^{10},2^{12},\dots,2^{20}\}$ and value of $K$, we perform a light hyperparameter grid search (Suppl.~\ref{app:simulations_mode}). We average $\mathsf{MR}_\epsilon$, $\mathsf{NN}$ and report $\mathsf{HM}$ medians with 95\% confidence intervals.

Fig.~\ref{fig:gerve_ms_fs} summarises performance with respect to sample size $N$ for $K\in\{3,7\}$.
GERVE’s $\mathsf{MR}_\epsilon$ increases towards the number of true modes and $\mathsf{HM}$ decreases as $N$ grows, illustrating the consistency of mode recovery (this matches the conclusions of Thm~\ref{thm:consistency_main}). With $K=3$, GERVE's $\mathsf{NN}$ is comparable to that of mean-shift and Feature Significance, showcasing its mode-location accuracy. With $K=7$, GERVE improves $\mathsf{MR}_\epsilon$ and $\mathsf{HM}$ while keeping $\mathsf{NN}$ competitive. This exhibits GERVE's robustness to overspecification, whereas Feature Significance is degraded. A broader sweep over $K\in\{3,\dots,7\}$ appears in Supplementary Material~\ref{app:simulations_mode}.

\begin{figure}
\centering
\includegraphics[width=\linewidth]{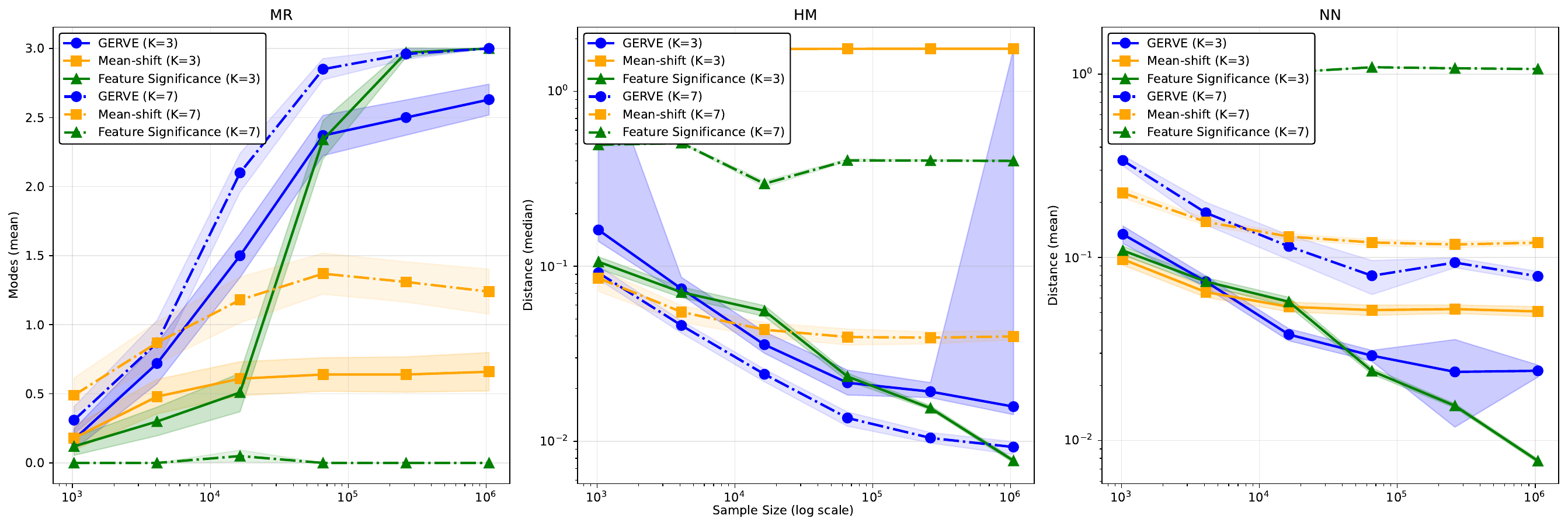}
\caption{Mode estimation vs.\ sample size $N$ for  the triangle mixture ($I=3$) example. Curves: means ($\mathsf{MR}_\epsilon$, \textsf{NN}) or medians (\textsf{HM}) over $n_{\text{rep}}=100$; bands: 95\% confidence intervals. 
}
\label{fig:gerve_ms_fs}
\end{figure}

To summarise the key aspects highlighted in this simulation study: (i) Clustering emerges from mode estimation, with robustness to overspecification; (ii) For mode estimation, GERVE is statistically consistent and competitive in location accuracy, and mild overspecification helps GERVE where it hinders the other baselines.

\section{Application: UK road collision hotspots detection}
\label{sec:real_datasets}

Identifying collision hotspots (spatial intensity modes) and quantifying their uncertainty for intervention planning is crucial in road safety studies. Count-based metrics and KDE are common \citep{xie2008kernel, okabe2009kernel}. Spatial scans add significance testing and Bayesian models add uncertainty, but at the cost of parametric assumptions and heavier computation \citep{kulldorff1997spatial, aguero2006spatial}. GERVE offers a likelihood-free alternative that estimates mode locations with calibrated localisation uncertainty.

We analyse UK \texttt{STATS19} collisions from the Department for Transport Road Safety Data portal (\textit{www.gov.uk/government/statistics/road-safety-data}), restricting to A-roads in Greater London (longitudes $[-0.54,0.33]$, latitudes $[51.28,51.70]$) and severity labelled ``fatal'' or ``serious'' from 2020 to 2024. 

We fit an overcomplete mixture, targeting $10$ hotspots and using $K=20$ components to enable overcompleteness, and resolve modes by pruning and merging (details in Suppl.~\ref{app:uk}). The stopping temperature $\omega^\dagger$ is chosen at the elbow, the largest $\omega$ where the resolved-mode count peaks (Suppl.~\ref{app:uk}, Fig.~\ref{fig:w0_component_counts_plot}). At $\omega^\dagger$, we compute $L=500$ bootstrap resamples, match bootstrap modes to the baseline by assignment with adaptive gates (Suppl.~\ref{app:uk}), and project to a metric system (EPSG:27700). For each mode $k$, we report a stability score $s_k$ and a $95\%$ confidence ellipse from the empirical covariance of matched centres. Under the fixed-temperature theory in Section~\ref{sec:uq}, ellipses have nominal coverage and the stability score $s_k$ (eq.~\eqref{eq:stability_score_def}) estimates the recovery probability. 

Figure~\ref{fig:collision_with_ellipses_2024_zoomed} maps 2020-2024 hotspots with 95\% ellipses and $s_k$ in the zoomed-in window $[-0.15,0.15]\times[-0.075,0.075]$. High-stability locations ($s_k\ge 0.7$, arbitrary threshold) such as Shoreditch (ID: 12) and Elephant \& Castle (ID: 1) have ellipses of order $100$-$150$ m, which supports intersection-level action. Medium-stability locations ($0.4 \le s_k < 0.7$) such as Aldgate (ID: 8) and Clapham HS (ID: 16) show wider ellipses that span multiple junctions, which suggests area-wide measures. Table~\ref{tab:hotspots_2024_full} in Supplementary Material~\ref{app:uk} lists the entries, ranked by stability score.

\begin{figure}
\centering
\includegraphics[width=0.99\linewidth]{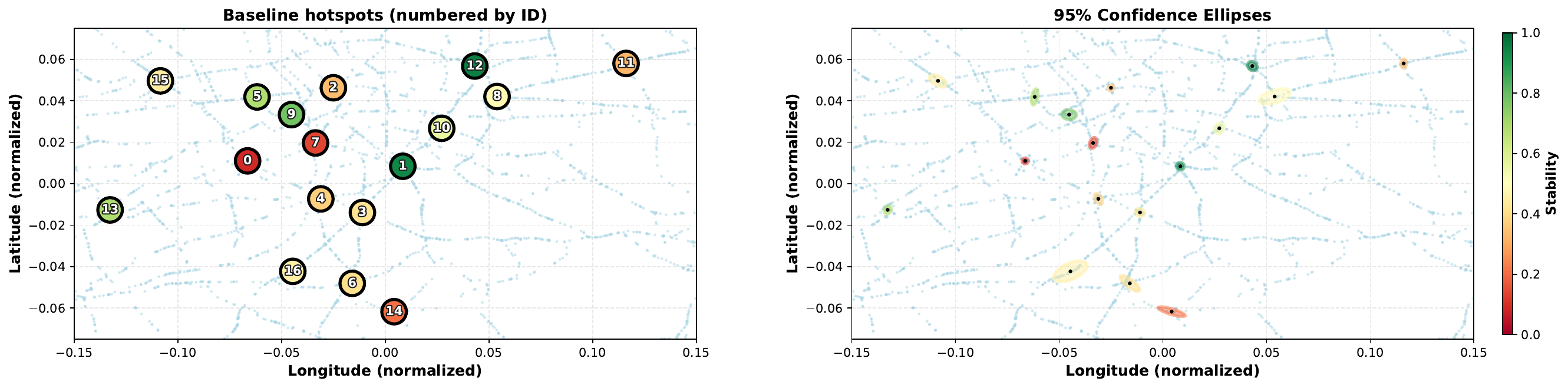}
\caption{Baseline collision hotspots identified by GERVE for the Greater London Area between 2020 and 2024, with stability scores. The normalised coordinates are bounded within $[-0.7,0.7]\times[-0.35,0.35]$ (normalised window) box, but all hotspots are in the $[-0.15,0.15]\times[-0.075,0.075]$ box (zoomed-in window), see Supplementary Material~\ref{app:uk}, Figure~\ref{fig:collision_with_ellipses_2024} for the full study area. Left: with hotspot IDs for reference to Table~\ref{tab:hotspots_2024_full}, Supplementary Material~\ref{app:uk}. Right: 95\% confidence ellipses, in normalised coordinates. }
\label{fig:collision_with_ellipses_2024_zoomed}
\end{figure}

To compare like-for-like, we use mean-shift, operating on the same object as GERVE. With a flat kernel and a bandwidth sweep to a data-driven reference $h_0$ (see Suppl.~\ref{app:uk}), mean-shift enumerates many local maxima at small bandwidths and merges aggressively at larger ones. In the zoomed-in window, mean-shift at $0.05h_0$ yields 122 centres, while GERVE reports 17. The contrast reflects design: entropy-regularised mixtures impose global competition for support, so weaker peaks are absorbed as temperature falls. This yields a coarser and more stable partition that aligns with major junctions. A visual comparison appears in Supplementary Material~\ref{app:uk}, Figure~\ref{fig:meanshift_vs_gerve_combined}.

For temporal stability, we fit GERVE on 2014-2019 and 2020-2024 and declare a hotspot persistent if 95\% ellipses overlap across periods. Eight of the 17 hotspots persist, which indicates structural risk that is stable across years. Non-persistent hotspots (Piccadilly Circus, Gunter Grove, Clapham HS) may reflect temporally varying factors such as pandemic-related traffic shifts, local construction, or policy changes. Supplementary Material~\ref{app:uk}, Figure~\ref{fig:hotspot_comparison_2019_vs_2024} overlays the two periods.

Stability and ellipses can be used to target actions. High $s_k$ and tight ellipses support redesign at a named junction, whereas moderate $s_k$ and wide ellipses reveal corridor-level risks, which are more suitable for monitoring.

\section{Conclusion}
\label{sec:conclusion}

GERVE is a likelihood-free variational approach to mode estimation and modal clustering. It optimises an entropy-regularised objective with natural gradients under annealing, fitting mixtures that concentrate on high-density regions and induce clusters without density estimation. As temperature goes to zero, components align with population modes and inherit local curvature. At fixed temperature, empirical maximisers exist, are consistent, and satisfy a central limit theorem. A matched-mode bootstrap  procedure yields per-mode confidence sets.

Empirically, GERVE recovers modal structure and adapts the effective number of groups as temperature decreases, with redundant components merging or vanishing. Diagonal-covariance variants preserve most of the behaviour at lower cost, and an ablation study shows that annealing, moderate component overspecification, and short burn-in phases are useful levers. 

GERVE is most suitable when uncertainty in mode locations is important, when the cluster count is misspecified, or when modal clustering is preferable to partition-based grouping. For problems where the number of clusters is known and computational speed is paramount, established task-specific clustering algorithms remain preferable. Three limitations merit attention. First, modal clustering may misalign with the group structure that matters in an application. Modes define clusters by attraction basins, but scientific groupings may depend on other criteria, so agreement is not guaranteed. Second, the optimisation procedure can converge to local rather than global maxima. Careful initialisation, temperature schedules, and overcomplete mixtures reduce this risk but do not completely eliminate it. Third, computation can be costly with full covariance matrices in high dimensions. In our experiments, using diagonal covariances allows one to control the cost while preserving mode separation.

The theory and experimental results position GERVE as a principled statistical framework for likelihood-free modal inference. By unifying modal clustering, mean-shift, variational inference, and annealing, GERVE provides both algorithmic tools and theoretical guarantees within one platform. Looking ahead, three directions are natural. First, develop data-driven annealing schedules with principled stopping. Second, establish pathwise guarantees for empirical maximisers and for the validity of the bootstrap procedure under annealing. Third, characterise mode-component correspondence and label stability as temperature varies. Pursuing these directions will link the annealing procedure to stronger statistical guarantees and broaden the method’s practical reach.

\section*{Acknowledgements} This work is supported by the French ANR and JST-CREST Bayes-Duality grant (ANR-21-JSTM-0001). 

\newpage

\bibliographystyle{apalike}
\bibliography{biblio}

\newpage

\appendix

\begin{appendix}

\renewcommand{\thesection}{S\arabic{section}} 

\renewcommand{\thefigure}{S\arabic{figure}}
\setcounter{figure}{0}
\renewcommand{\thetable}{S\arabic{table}}
\setcounter{table}{0}
\renewcommand{\theequation}{S\arabic{equation}}
\setcounter{equation}{0}

\startcontents[appendices]
\printcontents[appendices]{l}{1}{\section*{Supplementary material}\setcounter{tocdepth}{3}}

\section{Theory details}
\label{app:theory}

\subsection{Notation and setup}

We work on a domain $\mathcal S \subset \mathbb R^d$, with finite Lebesgue measure.
Bold symbols denote vectors/matrices (e.g., $\xb,\mub,\Sigmab$).

Given a measurable $f:\mathcal S\to\mathbb R$ and a temperature $\omega>0$, define a normalised Gibbs measure as
\begin{equation*}
g_\omega(d\xb)=\frac{\exp(f(\xb)/\omega)}{Z_\omega(f)} d\xb,
\end{equation*}
with partition function
\begin{equation*}
Z_\omega(f)=\int_{\mathcal S}\exp(f(\yb)/\omega) d\yb
\end{equation*}
assumed finite when invoked.

We approximate $g_\omega$ on $\mathcal{S}$ with a tractable $q$ (restricted to $\mathcal{S}$):
(i) a single Gaussian $q_{\lambdab}=\mathcal N(\mub,\Sigmab)$,
(ii) a $K$-component Gaussian mixture
\begin{equation*}
q_{\Lambdab}(\xb)=\sum_{k=1}^K \pi_k \mathcal N(\xb\mid \mub_k,\Sigmab_k),
\qquad
\sum_{k=1}^K \pi_k=1, \pi_k>0.
\end{equation*}
Mixture weights are parameterised by logits $v_k=\log(\pi_k/\pi_K)$.
Unless stated otherwise, covariance eigenvalues are constrained to $[\sigma_{\min}^2,\sigma_{\max}^2]$.

For any density $q$, $\mathbb E_q[\cdot]$ denotes the expectation under $q$ and $\KL(q\|p)$ the Kullback--Leibler divergence. For $q$ on $\mathcal S$, define the entropy on $\mathcal S$:
\begin{equation*}
\mathcal H_{\mathcal S}(q)=-\int_{\mathcal S} q(\xb)\log q(\xb) d\xb,
\end{equation*}
and the variational objective
\begin{equation*}
\mathcal L_\omega(q) = \mathbb E_q[f(\Xb)] + \omega\mathcal H_{\mathcal S}(q).
\end{equation*}
By the classical variational identity,
\begin{equation*}
\mathcal L_\omega(q) = \omega\log Z_\omega(f)-\omega\KL\left(q || g_\omega\right),
\end{equation*}
so $g_\omega$ uniquely maximises $\mathcal L_\omega$ whenever $Z_\omega(f)<\infty$.

For variational parameters $\varthetab$, let $\Fb(\varthetab)$ be the Fisher information matrix.
The natural gradient is
\begin{equation*}
\widetilde{\nabla}_\varthetab \mathcal L_\omega(\varthetab)
=\Fb(\varthetab)^{-1}\nabla_\varthetab \mathcal L_\omega(\varthetab).
\end{equation*}
If $q_\varthetab$ is in an exponential family with sufficient statistic $\Tb(\Xb)$ (or in an MCEF for mixtures \citep{lin2019fast}),
the natural gradient equals the standard gradient with respect to the expectation parameters  $\Mb=\mathbb E_{q_\varthetab}[\Tb(\Xb)]$:
\begin{equation*}
\widetilde{\nabla}_\varthetab \mathcal L_\omega(\varthetab)=\nabla_\Mb \mathcal L_\omega(\varthetab)] \; . 
\end{equation*}

Let $\{\xb_i^\star\}_{i=1}^I$ denote the set of global maximisers (``modes'') of $f$ (or of $p$ when $f=p$) on $\mathcal S$.
At a nondegenerate mode $\xb_i^\star$ (i.e. $\nabla^2 f(\xb_i^\star) \prec 0$), define the positive-definite curvature
\begin{equation*}
\Hb_i := -\nabla^2 f(\xb_i^\star) \succ 0.
\end{equation*}

For symmetric matrix $A$, we write $\mathrm{eig}_{\min}(A),\mathrm{eig}_{\max}(A)$ for extremal eigenvalues.
Norms are Euclidean for vectors and operator norms for matrices unless noted.

For a set $\mathcal{A} \subset \mathbb{R}^d$, $\mathrm{int}(\mathcal{A})$ denotes its interior, $\partial \mathcal{A}$ its boundary, and $\mathcal{A}^c := \mathbb{R}^d \setminus \mathcal{A}$ its complement. For an element $\xb \in \mathbb{R}^d$, we write $\mathrm{dist}(\xb,{\cal A}):=\inf_{\ab\in {\mathcal A}}\lVert \xb-\ab\rVert_2$. For $r > 0$, let $B(\xb, r)$ denote the ball centred on $\xb$ with radius $r$.

\subsection{Truncation equivalence and compact sets}
\label{app:truncation_approximation}

In the optimisation setting~\citep{leminh2025natural}, the target $f$ is known or directly evaluable, so $\mathbb{E}_{q_\varthetab}[f(\xb)]$ is computable. In GERVE, $f$ is replaced by an unknown density $p$ and we observe only i.i.d.\ samples $\Xb_{1:N}\sim p$. We suppose that $p$ is smooth, bounded, and supported on a bounded domain $\mathrm{supp}(p) \subset \mathcal{S}$. 

Using symmetry,
\begin{equation*}
\mathcal L_\omega(q_\varthetab)
= \int_\mathcal{S} p(\xb) q_\varthetab(\xb) d\xb + \omega \mathcal H_{\mathcal S}(q_\varthetab)
= \mathbb E_{p}\left[q_\varthetab(\Xb)\right] + \omega \mathcal H_{\mathcal S}(q_\varthetab),
\end{equation*} 
so the first term is an expectation under $p$. Let $P_N := \frac{1}{N}\sum_{i=1}^N \delta_{\Xb_i}$ be the empirical measure. A natural plug-in estimator replaces $\mathbb E_{p}[q_\varthetab(\Xb)]$ by $\mathbb E_{P_N}[q_\varthetab(\Xb)]$, yielding the empirical variational objective
\begin{equation}
\widehat{\mathcal{L}}_{\omega,N}(\varthetab)
= \mathbb E_{P_N}\left[q_\varthetab(\Xb)\right] + \omega \mathcal H_{\mathcal S}(q_\varthetab)
= \frac{1}{N}\sum_{i=1}^N q_\varthetab(\Xb_i) + \omega \mathcal{H}_{\mathcal S}(q_\varthetab).
\label{eq:empirical_objective_param}
\end{equation} 
This is the learning objective optimised by GERVE. 

\paragraph{Optimisation on untruncated Gaussians.}
Since $p$ is supported on $\mathcal{S}$ rather than on $\mathbb{R}^d$, the corresponding Gibbs measure is defined only on $\mathcal{S}$, and one might a priori restrict $q$ to $\mathcal{S}$. $q_{\lambdab}$ denotes the untruncated Gaussian on $\mathbb{R}^d$ with natural parameter $\lambdab$. Define
\begin{equation*}
\psi_{\mathcal S}(\lambdab):=\int_{\mathcal S} q_{\lambdab}(\xb) d\xb.
\end{equation*}
The $\mathcal S$-truncated version is $q^{\mathcal S}_{\lambdab}(\xb):=q_{\lambdab}(\xb) \mathds{1}_{\mathcal S}(\xb)/\psi_{\mathcal S}(\lambdab)$.

Optimising directly over $q^{\mathcal S}_{\lambdab}$ is cumbersome because $\psi_{\mathcal S}(\lambdab)$ varies with the parameter. The next result (proven in Section~\ref{app:proof_truncation_prop}) shows that on a large $\mathcal S$, optimising over the untruncated family while integrating over $\mathcal S$ incurs only a controlled error on a compact set.

To define such a compact parameter space for Gaussians, fix a mean bound $\mub_{\max}<\infty$ and eigenvalue bounds $0<\sigma_{\min}^2<\sigma_{\max}^2<\infty$, and consider the compact set of natural parameters
\begin{equation*}
\Theta := \Bigl\{\lambdab=(\Sigmab^{-1}\mub,-\frac{1}{2} \Sigmab^{-1}) : \mub\in\mathbb R^d, \sigma_{\min}^2 \Ib \preceq \Sigmab \preceq \sigma_{\max}^2 \Ib,\lVert\mub\rVert_\infty \le \mub_{\max} \Bigr\}.
\end{equation*}
For $\lambdab \in \Theta$, let $\mub(\lambdab)$ and $\Sigmab(\lambdab)$ be the mean and covariance matrix encoded by $\lambdab$. 

For notational convenience, let 
\begin{equation*}
F_\omega(\lambdab) := \mathcal L_\omega(q_{\lambdab}),
\qquad G_\omega(\lambdab) := \mathcal L_\omega(q^{\mathcal S}_{\lambdab}),
\end{equation*}
for $\lambdab\in\Theta$.
Let $\varepsilon(\lambdab):=1- \psi_{\mathcal S}(\lambdab)$, $\varepsilon_{\max}:=\sup_{\lambdab\in\Theta}\varepsilon(\lambdab) <1$, and $p_{\max} = \sup_{\xb \in \mathcal{S}} p(\xb)$.
In the following proposition, we bound $\left\lvert G_\omega(\lambdab) - F_\omega(\lambdab) \right\rvert$ uniformly on $\Theta$.

\begin{proposition}[Truncation gap]
\label{prop:trunc_gap}
Let $C_{q}=\sup_{\lambdab \in \Theta}\sup_{\xb\in\mathcal S}\big|\log q_\lambdab(\xb)\big|<\infty$.
Define, for $t\in[0,1)$,
\begin{equation*}
\delta_\omega(t):= p_{\max} t +\omega \big\lvert\log(1-t)\big\rvert + \omega C_{q} t.
\end{equation*}

Then for all $\lambdab\in\Theta$,
\begin{equation}
\big\lvert G_\omega(\lambdab)-F_\omega(\lambdab) \big\rvert
\le \delta_\omega\big(\varepsilon(\lambdab)\big),
\label{eq:box_gap}
\end{equation}
and
\begin{equation*}
\sup_{\lambdab\in\Theta}
\big\lvert G_\omega(\lambdab)-F_\omega(\lambdab) \big\rvert
\le \delta_\omega(\varepsilon_{\max}).
\end{equation*}
\end{proposition}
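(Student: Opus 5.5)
The bound follows from a direct comparison of the two terms of the objective, with $f:=p$. On $\mathcal S$ the truncated density is $q^{\mathcal S}_\lambdab = q_\lambdab/\psi$, where $\psi:=\psi_{\mathcal S}(\lambdab)=1-\varepsilon$ and $\varepsilon:=\varepsilon(\lambdab)$. We therefore write
\[
G_\omega(\lambdab)-F_\omega(\lambdab) = \Big(\tfrac1\psi-1\Big)\int_{\mathcal S} p\,q_\lambdab \,d\xb \;+\; \omega\big(\mathcal H_{\mathcal S}(q^{\mathcal S}_\lambdab)-\mathcal H_{\mathcal S}(q_\lambdab)\big)
\]
and bound the two differences separately. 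The uniform statement then follows by taking the supremum over $\Theta$, since $\delta_\omega$ is nondecreasing in $t$ on $[0,1)$: each of its three terms is.

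\emph{Expectation term.} Because $p\le p_{\max}$ on $\mathcal S$ and $\int_{\mathcal S} q_\lambdab = \psi$, we have $\int_{\mathcal S} p\,q_\lambdab \le p_{\max}\psi$. Multiplying by $1/\psi-1=\varepsilon/\psi$ gives a bound of $p_{\max}\varepsilon$, which is the first term of $\delta_\omega$.

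\emph{Entropy term.} Expand using $\log q^{\mathcal S}_\lambdab=\log q_\lambdab-\log\psi$:
\[
\mathcal H_{\mathcal S}(q^{\mathcal S}_\lambdab) = -\frac1\psi\int_{\mathcal S} q_\lambdab\log q_\lambdab \,d\xb + \log\psi .
\]
The difference with $\mathcal H_{\mathcal S}(q_\lambdab)=-\int_{\mathcal S} q_\lambdab\log q_\lambdab\,d\xb$ is therefore
\[
\log\psi \;-\; \Big(\tfrac1\psi-1\Big)\int_{\mathcal S} q_\lambdab\log q_\lambdab\,d\xb .
\]
The first piece has absolute value $|\log(1-\varepsilon)|$. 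For the second, use $|\log q_\lambdab|\le C_q$ on $\mathcal S$, which gives $\big|\int_{\mathcal S} q_\lambdab\log q_\lambdab\big|\le C_q\psi$, so the piece is at most $C_q\varepsilon$. Multiplying by $\omega$ and applying the triangle inequality gives $\omega|\log(1-\varepsilon)|+\omega C_q\varepsilon$. Adding the expectation bound yields \eqref{eq:box_gap}.

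\emph{Main obstacle.} The algebra is routine; the care lies in two finiteness facts. First, $C_q<\infty$: on $\Theta$ the Gaussian densities are bounded above by $(2\pi\sigma_{\min}^2)^{-d/2}$, and bounded below on $\mathcal S$ provided $\mathcal S$ is bounded, because $\|\mub\|_\infty\le\mub_{\max}$ and $\Sigmab\preceq\sigma_{\max}^2\Ib$ keep the quadratic form bounded. Since $C_q<\infty$ is stated as a hypothesis, this only needs a remark. Second, $\varepsilon_{\max}<1$, which keeps the $\log$ term finite. This holds because $\psi_{\mathcal S}$ is continuous and positive on the compact set $\Theta$ (assuming $\mathcal S$ has positive measure), so it attains a positive minimum.
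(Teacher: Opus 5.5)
Your proof is correct and follows the paper's argument essentially line by line. You use the same split into the expectation gap $(1/\psi-1)\int_{\mathcal S}p\,q_\lambdab\le p_{\max}\varepsilon$ and the entropy identity $\mathcal H_{\mathcal S}(q^{\mathcal S}_\lambdab)=\psi^{-1}\mathcal H_{\mathcal S}(q_\lambdab)+\log\psi$, bounded via $|\mathcal H_{\mathcal S}(q_\lambdab)|\le C_q\psi$; your observation that $\delta_\omega$ is nondecreasing on $[0,1)$ makes explicit the step the paper leaves implicit when passing to the supremum over $\Theta$.
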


This bound implies that if $\mub$ stays away from $\partial\mathcal{S}$, the error from using untruncated Gaussians is negligible. Now, we derive a tail bound and, combined with an argmax stability argument, we obtain the next theorem, proven in Section~\ref{app:proof_truncation_thm}.

\begin{theorem}[Truncation equivalence on a bounded support]
\label{thm:trunc_equiv}
Let $\delta_\omega(\cdot)$ be as in Proposition~\ref{prop:trunc_gap}. 
For any fixed $\omega>0$, if there is a margin $M_0>0$ such that
$\mathrm{dist}(\mub(\lambdab),\partial\mathcal S)\ge M_0\sqrt{\mathrm{eig}_{\max}(\Sigmab(\lambdab))}$ for all $\lambdab\in\Theta$, then for some $C > 0$,
\begin{equation*}
    \varepsilon_{\max}\le C e^{-M_0^2/2}, \qquad \sup_{\lambdab\in\Theta} \big|G_\omega(\lambdab)-F_\omega(\lambdab)\big|\le\delta_\omega(\varepsilon_{\max}).
\end{equation*}

Moreover, let $m_\omega(\rho):=F_\omega(\lambdab_\omega^\star)- \sup_{\|\lambdab-\lambdab_\omega^\star\|\ge\rho}F_\omega(\lambdab)$ be the value gap. 
Consequently, if $m_\omega(\rho) > 0$ for some $\rho>0$,
then every maximiser of $G_\omega$ lies in $B(\lambdab_\omega^\star,\rho)$ whenever $\delta_\omega(\varepsilon_{\max})<m(\rho)/3$. In particular, if the maximiser of $F_\omega$ is unique  and in $\mathrm{int}(\Theta)$, then under the same conditions, the maximiser of $G_\omega$ is also unique and $\argmax_{\lambdab\in\Theta} G_\omega=\lambdab^\star_\omega$.
\end{theorem}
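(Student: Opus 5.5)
The proof has two parts: a Gaussian tail estimate giving the bound on $\varepsilon_{\max}$, and a deterministic argmax-stability argument driven by the uniform gap of Proposition~\ref{prop:trunc_gap}.

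\emph{Step 1: the tail bound.} Fix $\lambdab\in\Theta$ and write $\mub=\mub(\lambdab)$, $\Sigmab=\Sigmab(\lambdab)$, $s^2=\mathrm{eig}_{\max}(\Sigmab)$. Let $r=\mathrm{dist}(\mub,\partial\mathcal S)\ge M_0 s$. I will take the margin condition to mean that $\mub\in\mathrm{int}(\mathcal S)$, so that $B(\mub,r)\subset\mathcal S$. Then
\[
\varepsilon(\lambdab)\le P\bigl(\lVert \Xb-\mub\rVert\ge r\bigr)\le P\bigl(\lVert \Sigmab^{-1/2}(\Xb-\mub)\rVert\ge M_0\bigr),
\]
because $\lVert \Xb-\mub\rVert\le s\lVert \Sigmab^{-1/2}(\Xb-\mub)\rVert$. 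The vector $\Sigmab^{-1/2}(\Xb-\mub)$ is standard normal, so this is a $\chi^2_d$ tail probability. A standard Chernoff (or Laurent--Massart) bound gives
\[
P(\chi^2_d\ge M_0^2)\le C_d\, e^{-M_0^2/2},
\]
for example $C_d=\sup_{t}\,(1+t)^{d/2}e^{-t/2}\cdot(\text{const})$ after absorbing the polynomial factor; if needed I will absorb it by using a slightly smaller exponent constant. The bound is uniform in $\lambdab$, so $\varepsilon_{\max}\le Ce^{-M_0^2/2}$. Since $\delta_\omega$ is nondecreasing on $[0,1)$, the second inequality then follows directly from Proposition~\ref{prop:trunc_gap}.

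\emph{Step 2: localisation of the maximisers of $G_\omega$.} Set $\delta:=\delta_\omega(\varepsilon_{\max})$, so that $\sup_{\Theta}|G_\omega-F_\omega|\le\delta$. For any $\lambdab$ with $\lVert\lambdab-\lambdab^\star_\omega\rVert\ge\rho$,
\[
G_\omega(\lambdab)\le F_\omega(\lambdab)+\delta\le F_\omega(\lambdab^\star_\omega)-m_\omega(\rho)+\delta\le G_\omega(\lambdab^\star_\omega)-m_\omega(\rho)+2\delta.
\]
This is strictly less than $G_\omega(\lambdab^\star_\omega)$ once $\delta<m_\omega(\rho)/3$ (indeed $\delta<m/2$ suffices). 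Such a $\lambdab$ therefore cannot be a maximiser. Existence of a maximiser of $G_\omega$ follows from compactness of $\Theta$ and continuity of $G_\omega$: the integrands are continuous in $\lambdab$ and dominated on the bounded set $\mathcal S$, and $\psi_{\mathcal S}\ge 1-\varepsilon_{\max}>0$. Hence every maximiser lies in $B(\lambdab^\star_\omega,\rho)$.

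\emph{Step 3: uniqueness and the main obstacle.} The last claim, that the maximiser of $G_\omega$ is unique and exactly equals $\lambdab^\star_\omega$, is the genuine obstacle. Uniform closeness of the two functions only localises the maximisers; it cannot pin them to the same point unless more structure is used. My plan is twofold. First, uniqueness of the maximiser of $F_\omega$ on compact $\Theta$ gives $m_\omega(\rho)>0$ for every $\rho>0$, so letting $\varepsilon_{\max}\to0$ (i.e.\ $M_0\to\infty$) forces any maximiser of $G_\omega$ to converge to $\lambdab^\star_\omega$. Second, to obtain exact uniqueness I will argue locally near the interior point $\lambdab^\star_\omega$. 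I will show that $G_\omega-F_\omega$ is small in $\mathcal C^2$ on $B(\lambdab^\star_\omega,\rho)$: the derivatives of $\psi_{\mathcal S}$ and of the truncated entropy are controlled by the same Gaussian tails, each multiplied by a polynomial in $\lVert\xb-\mub\rVert$. If $-\nabla^2F_\omega(\lambdab^\star_\omega)\succ0$, then $G_\omega$ is strictly concave on the ball and so has a unique critical point there. Identifying that critical point with $\lambdab^\star_\omega$ literally holds only asymptotically, so I would state the conclusion in the asymptotic sense $\argmax G_\omega\to\lambdab^\star_\omega$ (the sense used for Theorem~\ref{thm:consistency_main}), flagging that exact equality requires $\nabla(G_\omega-F_\omega)(\lambdab^\star_\omega)=0$.
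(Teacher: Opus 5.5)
Your Steps 1 and 2 follow the paper's proof: the $\chi^2_d$ tail estimate of Lemma~\ref{lem:margin_uniform_tail}, then Proposition~\ref{prop:trunc_gap}, then the first half of Lemma~\ref{lem:argmax_stability}. You also make explicit two things the paper uses silently: that $\delta_\omega$ is nondecreasing, and that the margin condition presupposes $\mub(\lambdab)\in\mathrm{int}(\mathcal S)$. One correction concerns the tail constant. For $d\ge 3$ one has $P(\chi^2_d\ge M_0^2)\asymp M_0^{d-2}e^{-M_0^2/2}$, so no constant depending only on $d$ gives $Ce^{-M_0^2/2}$. Giving up part of the exponent is therefore forced, not optional. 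Your constant should be $\sup_t(1+t)^{d/2}e^{-\eta t/2}$, which yields $C_{d,\eta}e^{-(1-\eta)M_0^2/2}$. The paper's lemma glosses over the same point.

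On Step 3 your diagnosis is correct, and it is the paper's argument that fails. The paper deduces $\argmax_{\lambdab\in\Theta} G_\omega=\{\lambdab^\star_\omega\}$ from the second half of Lemma~\ref{lem:argmax_stability}. That step requires $\sup_\Theta\lvert G_\omega-F_\omega\rvert\le m_\omega(\rho)/3$ for \emph{every} $\rho\in(0,\rho_0]$.
\begin{itemize}
\item At an interior maximiser of a continuous function, $m_\omega(\rho)\to0$ as $\rho\to0$, so this requirement forces $G_\omega=F_\omega$.
\item It is also not what the theorem's single-$\rho$ hypothesis provides.
\item The closing appeal to the strict inequality says nothing about points inside the ball.
\end{itemize}
Uniform closeness cannot pin down the maximiser. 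For example, $F(\lambda)=-\lambda^2$ and $G(\lambda)=-(\lambda-\eta)^2$ on $[-1,1]$ satisfy $\sup\lvert G-F\rvert=2\eta+\eta^2<\rho^2/3=m(\rho)/3$ for small $\eta$, yet $\argmax G=\{\eta\}$. In the actual problem there is the exact identity $G_\omega=F_\omega/\psi_{\mathcal S}+\omega\log\psi_{\mathcal S}$. From it, with $\psi_{\mathcal S}$ evaluated at $\lambdab^\star_\omega$,
$\nabla(G_\omega-F_\omega)(\lambdab^\star_\omega)=\psi_{\mathcal S}^{-1}\bigl(F_\omega(\lambdab^\star_\omega)/\psi_{\mathcal S}-\omega\bigr)\nabla\varepsilon(\lambdab^\star_\omega)$. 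This is generically nonzero. So the last sentence of the statement is false as written, and only localisation follows from its hypotheses.

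Your replacement is the right fix, but it proves a modified statement, because it needs the extra hypothesis $-\nabla^2F_\omega(\lambdab^\star_\omega)\succ0$. The $C^2$ control you plan does hold. For $k\ge1$, $\nabla^k_{\lambdab}\psi_{\mathcal S}=-\int_{\mathcal S^c}\nabla^k_{\lambdab}q_{\lambdab}$, and together with the identity above this controls all derivatives of $G_\omega-F_\omega$ up to order two by polynomially weighted Gaussian tails. This gives uniqueness of the maximiser $\widehat{\lambdab}$ of $G_\omega$. If moreover $-\nabla^2G_\omega\succeq c\Ib$ on the ball, you get $\lVert\widehat{\lambdab}-\lambdab^\star_\omega\rVert\le c^{-1}\lVert\nabla(G_\omega-F_\omega)(\lambdab^\star_\omega)\rVert$. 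State the result in this non-asymptotic form rather than via $M_0\to\infty$. The margin condition ties $M_0$ to $\mathcal S$ and $\Theta$, and changing them also moves $F_\omega$ (through $\mathcal H_{\mathcal S}$) and hence $\lambdab^\star_\omega$.
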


\paragraph{Case of mixtures.}
Theorem~\ref{thm:trunc_equiv} is stated for single Gaussians. For mixtures, we have to apply Proposition~\ref{prop:trunc_gap} componentwise under the same spectral bounds and control the entropy term via per-component bounds. Uniform control of the mixture log-density $\log q_\Lambdab$ additionally requires avoiding degenerate behaviours (e.g., vanishing weights with unbounded parameters), which can be done by staying on a compact set of ``natural'' parameters.

Indeed, although mixtures are not an exponential family, they lie in the minimal conditional exponential family (MCEF; \citealp{lin2019fast}), which admits analogous natural representation. We consider the compact set given by equation~\eqref{eq:compact_mixtures}, that we recall here:
\begin{equation*}
\Theta_K := \Bigl\{\Lambdab=(v_1,\dots,v_{K-1},\lambdab_1,\dots,\lambdab_K): \lvert v_k \rvert \le v_{\max}, \lambdab_k\in\Theta\Bigr\},
\end{equation*}
where $v_k=\log(\pi_k/\pi_K)$. To preserve the compactness induced by $\Theta$, we impose additional box constraints $\lvert v_k\rvert\le v_{\max}$ on the log-ratios, which implies a uniform floor on the weights:
\begin{equation*}
\pi_k \ge \frac{1}{1+(K-1)e^{2v_{\max}}} > 0\qquad \text{for all }k \in [K].
\end{equation*}
However, choosing
\begin{equation}
v_{\max} \ge  \frac12 \log \Bigl( \frac{1}{\varepsilon}-(K-1)\Bigr)
\label{eq:bound_v}
\end{equation}
allows a designated subset of components to have weights as small as $\varepsilon$ simultaneously while respecting the box constraints.
As $v_{\max}\to\infty$, the set of feasible weights approaches the simplex, so the constraint becomes non-restrictive in practice. Typically, setting $v_{\max} = 6$ is already sufficient to allow all weights to be smaller than $10^{-5}$.

In brief, for mixtures, we follow the same convention as above: we optimise over untruncated mixtures while integrating over $\mathcal S$. Under the margin condition $\varepsilon(\mub_k,\Sigmab_k)=o(1)$ (e.g., $\mathrm{dist}(\mub_k,\partial\mathcal S)\ge M_0\sqrt{\mathrm{eig}_{\max}(\Sigmab_k)}$ with $M_0$ large), replacing truncated by untruncated components perturbs $\widehat{\mathcal L}_{\omega,N}$ and $\mathcal L_\omega$ by $o(1)$ uniformly, and so leaves maximisers unchanged asymptotically.

\subsection{Lexicographic order on $\Theta$}\label{subsec:Lexicographic-order-Theta}

Fix a one-to-one coordinate map $\tau:\Theta\to\mathbb{R}^{d_{\lambda}}$.
For $a,b\in\mathbb{R}^{d_{\lambda}}$ we write $a\prec_{\mathrm{lex}} b$
if there exists $j\in\{1,\dots,d_{\lambda}\}$ such that $a_{i}=b_{i}$
for all $i<j$ and $a_{j}<b_{j}$.
For $\lambdab,\lambdab'\in\Theta$ we write $\lambdab\prec_{\mathrm{lex}}\lambdab'$
if $\tau(\lambdab)\prec_{\mathrm{lex}}\tau(\lambdab')$.
We define the lexicographically ordered parameter space for $K$-component mixtures by
\[
\Theta_{K}^{\mathrm{lex}}
=\Bigl\{\Lambdab=(v_{1},\dots,v_{K-1},\lambdab_{1},\dots,\lambdab_{K})\in\Theta_{K}:
\lambdab_{1}\prec_{\mathrm{lex}}\lambdab_{2}\prec_{\mathrm{lex}}\cdots\prec_{\mathrm{lex}}\lambdab_{K}\Bigr\}.
\]

Let $\mathfrak{S}_{K}$ denote the permutation group on $\{1,\dots,K\}$.
The mixture model is invariant under relabelling of components, in the sense that for every
$\sigma\in\mathfrak{S}_{K}$ and every $\Lambdab=(\vb,\lambdab_{1},\dots,\lambdab_{K})\in\Theta_{K}$ there exists a relabelled parameter
\[
\sigma\cdot\Lambdab
:=\bigl(\vb^{\sigma},\lambdab_{\sigma(1)},\dots,\lambdab_{\sigma(K)}\bigr)\in\Theta_{K},
\]
such that the mixing weights are permuted accordingly,
$\pi(\vb^{\sigma})=(\pi_{\sigma(1)}(\vb),\dots,\pi_{\sigma(K)}(\vb))$,
and
\[
f(\cdot;\sigma\cdot\Lambdab)=f(\cdot;\Lambdab).
\]

Assume additionally that the component parameters are pairwise distinct, that is, $\lambdab_{i}\neq\lambdab_{j}$ for $i\neq j$.
Then, for every $\Lambdab\in\Theta_{K}$, there exists a unique permutation $\sigma_{\Lambdab}\in\mathfrak{S}_{K}$ such that
\[
\Lambdab^{\mathrm{ord}}:=\sigma_{\Lambdab}\cdot\Lambdab\in\Theta_{K}^{\mathrm{lex}},
\qquad
f(\cdot;\Lambdab^{\mathrm{ord}})=f(\cdot;\Lambdab).
\]
In particular, whenever a maximiser (or stationary point) exists over $\Theta_{K}$ for a permutation-invariant criterion,
there is an equivalent maximiser (or stationary point) lying in $\Theta_{K}^{\mathrm{lex}}$.
Thus, without loss of generality, one may restrict uniqueness arguments to $\Theta_{K}^{\mathrm{lex}}$.

Furthermore, suppose that the model is identifiable up to label switching on $\Theta_{K}$, namely,
\[
f(\cdot;\Lambdab)=f(\cdot;\Lambdab')
\Longrightarrow
\exists\,\sigma\in\mathfrak{S}_{K}:\ \Lambdab'=\sigma\cdot\Lambdab.
\]
Then the parametrisation is identifiable on $\Theta_{K}^{\mathrm{lex}}$, in the sense that
$\Lambdab,\Lambdab'\in\Theta_{K}^{\mathrm{lex}}$ and $f(\cdot;\Lambdab)=f(\cdot;\Lambdab')$ imply $\Lambdab=\Lambdab'$.

To justify the existence and uniqueness of $\sigma_{\Lambdab}$, observe first that because $\tau$ is one-to-one and
$\prec_{\mathrm{lex}}$ is a strict total order on $\mathbb{R}^{d_{\lambda}}$, the induced relation $\prec_{\mathrm{lex}}$
on $\Theta$ is also a strict total order. In particular, for any $\lambdab\neq\lambdab'$ in $\Theta$, exactly one of
$\lambdab\prec_{\mathrm{lex}}\lambdab'$ or $\lambdab'\prec_{\mathrm{lex}}\lambdab$ holds.

Fix $\Lambdab=(\vb,\lambdab_{1},\dots,\lambdab_{K})\in\Theta_{K}$ and assume that $\lambdab_{i}\neq\lambdab_{j}$ for $i\neq j$.
Define the rank of component $i$ by
\[
r(i):=1+\bigl|\{j\in\{1,\dots,K\}:\ \lambdab_{j}\prec_{\mathrm{lex}}\lambdab_{i}\}\bigr|.
\]
Since the $\lambdab_{i}$ are pairwise distinct and the order is total, the ranks $\{r(i)\}_{i=1}^{K}$ are exactly
$\{1,\dots,K\}$. Hence there exists a unique permutation $\sigma_{\Lambdab}\in\mathfrak{S}_{K}$ such that
\[
r\bigl(\sigma_{\Lambdab}(k)\bigr)=k,\qquad k=1,\dots,K,
\]
or equivalently,
\[
\lambdab_{\sigma_{\Lambdab}(1)}\prec_{\mathrm{lex}}\lambdab_{\sigma_{\Lambdab}(2)}\prec_{\mathrm{lex}}\cdots\prec_{\mathrm{lex}}\lambdab_{\sigma_{\Lambdab}(K)}.
\]
Define the ordered representative
\[
\Lambdab^{\mathrm{ord}}
=\sigma_{\Lambdab}\cdot\Lambdab
=\bigl(\vb^{\sigma_{\Lambdab}},\lambdab_{\sigma_{\Lambdab}(1)},\dots,\lambdab_{\sigma_{\Lambdab}(K)}\bigr).
\]
By construction, $\Lambdab^{\mathrm{ord}}\in\Theta_{K}^{\mathrm{lex}}$.
Uniqueness of $\sigma_{\Lambdab}$ follows because the sorting permutation of a list of distinct elements under a strict total order
is unique (see, e.g., \citealp[Sec.~8.1]{cormen2009clrs}).

By the assumed relabelling invariance, for every $\sigma\in\mathfrak{S}_{K}$ we have $f(\cdot;\sigma\cdot\Lambdab)=f(\cdot;\Lambdab)$,
and in particular $f(\cdot;\Lambdab^{\mathrm{ord}})=f(\cdot;\Lambdab)$.
Consequently, any permutation-invariant criterion $Q$ (for instance,
$Q(\Lambdab)=\int \psi(\xb,f(\xb;\Lambdab))\,d\xb$ or $Q(\Lambdab)=\sum_{i=1}^{n}\log f(\Xb_{i};\Lambdab)$, or any other function of
$f(\cdot;\Lambdab)$) satisfies
\[
Q(\Lambdab^{\mathrm{ord}})=Q(\Lambdab).
\]
Hence, if $\Lambdab^{\star}$ is a maximiser of $Q$ over $\Theta_{K}$, then $\bigl(\Lambdab^{\star}\bigr)^{\mathrm{ord}}$ is also a maximiser and lies
in $\Theta_{K}^{\mathrm{lex}}$.

Finally, to prove identifiability on $\Theta_{K}^{\mathrm{lex}}$, let $\Lambdab,\Lambdab'\in\Theta_{K}^{\mathrm{lex}}$ satisfy
$f(\cdot;\Lambdab)=f(\cdot;\Lambdab')$. By identifiability up to label switching, there exists $\sigma\in\mathfrak{S}_{K}$ such that
$\Lambdab'=\sigma\cdot\Lambdab$, and therefore
\[
(\lambdab'_{1},\dots,\lambdab'_{K})=(\lambdab_{\sigma(1)},\dots,\lambdab_{\sigma(K)}).
\]
But both $(\lambdab_{1},\dots,\lambdab_{K})$ and $(\lambdab'_{1},\dots,\lambdab'_{K})$ are strictly increasing under $\prec_{\mathrm{lex}}$.
Since the $\lambdab_{k}$ are distinct, the set $\{\lambdab_{1},\dots,\lambdab_{K}\}$ admits a unique strictly increasing ordering.
Therefore the only permutation $\sigma$ for which $(\lambdab_{\sigma(1)},\dots,\lambdab_{\sigma(K)})$ is strictly increasing is
$\sigma=\mathrm{id}$. It follows that $\lambdab'_{k}=\lambdab_{k}$ for all $k$, and hence $\Lambdab'=\Lambdab$.
This shows that the parametrisation is identifiable on $\Theta_{K}^{\mathrm{lex}}$.

\subsection{Modal basins-responsibility cells agreement}
\label{app:agreement}

To investigate the link between GERVE's induced clustering and modal clustering, we compare two partitions near each mode $\xb_i^\star$ of a $\mathcal{C}^2$ density: the responsibility cells $A_k(\theta)$ of an isotropic Gaussian mixture
$q_\thetab(\xb)=\sum_{k=1}^K \pi_k q_{\mub_k,\sigma^2}(x)$, with $\thetab = (\mub_1, \dots, \mub_K)$, and the modal basins $\{\mathcal B_i\}$ of $p$.
We give simple, local conditions under which they agree on inner neighborhoods of the modes.

Let $\{\xb_i^\star\}_{i=1}^I$ be isolated modes of $p$ in $\mathrm{supp}(p) \subset \mathcal{S}$.
For $r>0$, write $B_i(r):=B(\xb_i^\star,r)$ the ball centred on $\xb_i^\star$ and with radius $r$, and $\tilde r:=r/2$.
Weights satisfy $\pi_k\in[\pi_{\min},\pi_{\max}]$ with $\pi_{\min}>0$, $\sum_k \pi_k=1$.
For all $k \in [K]$, define posterior component responsibilities $\mathsf{resp}_k(\xb;\thetab)=\dfrac{\pi_k q_{\mub_k,\sigma^2 I}(\xb)}{\sum_j \pi_j q_{\mub_j,\sigma^2 I}(\xb)}$ and
cells $\mathcal{A}_k(\thetab)=\{\xb: \mathsf{resp}_k(\xb; \thetab)\ge \mathsf{resp}_j(\xb;\thetab),  \forall j \in [K]\}$. For $i \in [I]$, the modal basin $\mathcal{B}_i$ is the set of all initial points $\xb$ such that the solution $\widetilde{\xb}(t)$ of the gradient-ascent flow $\dot{\xb} = \nabla p(\xb)$ with $\widetilde{\xb}(0) = \xb$ satisfies $\widetilde{\xb}(t) \rightarrow \xb^\star_{i}$ as $t \to \infty$

\begin{assumption}[Local strong log-concavity and separation]\label{as:local}
For each $i$ there exists $r_i>0$ and $\kappa_i>0$ such that
$-\nabla^2 \log p(\xb)\succeq \kappa_i \Ib$ for all $\xb\in B_i(r_i)$, and the closed balls
$\overline{B_i(r_i)} \subset \mathrm{supp}(p)$ are pairwise disjoint.
\end{assumption}

\begin{lemma}[Basin containment]\label{lem:basin}
Under Assumption~\ref{as:local}, $B_i(r_i)\subset \mathcal B_i$ for each $i$.
\end{lemma}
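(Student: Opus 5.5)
The plan is to show that the gradient-ascent flow started anywhere in $B_i(r_i)$ stays in that ball and converges to $\xb_i^\star$. Forward invariance will come from a Lyapunov function built on strong log-concavity, and convergence will follow from the fact that $\xb_i^\star$ is the only stationary point in the ball. We work with $\log p$, which is well defined and $\mathcal C^2$ on $B_i(r_i)$ because $\overline{B_i(r_i)}\subset\mathrm{supp}(p)$ and $p$ is $\mathcal C^2$. On that ball $\nabla p = p\,\nabla\log p$ with $p>0$, so the flow $\dot{\xb}=\nabla p(\xb)$ has the same trajectories as $\dot{\xb}=\nabla\log p(\xb)$, only reparametrised in time by a positive factor.

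\textbf{Step 1 (strong monotonicity).} Since $\xb_i^\star$ is a mode in the interior of the support, $\nabla\log p(\xb_i^\star)=0$. For $\xb\in B_i(r_i)$ the segment $[\xb_i^\star,\xb]$ lies in the convex ball. Integrating the Hessian along this segment and using $-\nabla^2\log p\succeq\kappa_i\Ib$ gives
\[
\langle \nabla\log p(\xb),\,\xb-\xb_i^\star\rangle \le -\kappa_i\lVert \xb-\xb_i^\star\rVert^2 .
\]
In particular $\xb_i^\star$ is the unique critical point of $p$ in $B_i(r_i)$.

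\textbf{Step 2 (invariance and convergence).} Take $V(\xb)=\lVert\xb-\xb_i^\star\rVert^2$. Along the flow,
\[
\dot V = 2p(\xb)\langle\nabla\log p(\xb),\xb-\xb_i^\star\rangle \le -2\kappa_i\,p(\xb)\,V .
\]
Hence $V$ is nonincreasing, and a trajectory starting in the open ball $B_i(r_i)$ stays in $\overline{B(\xb_i^\star,\lVert\xb-\xb_i^\star\rVert)}\subset B_i(r_i)$. Therefore the solution exists for all $t\ge0$: $\nabla p$ is locally Lipschitz, and the trajectory remains in a compact set inside the region where the vector field is defined. By continuity and compactness, $p\ge p_{\min}>0$ on that closed ball, so $\dot V\le-2\kappa_i p_{\min}V$. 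Grönwall's inequality then gives $V(t)\le V(0)e^{-2\kappa_i p_{\min}t}\to0$, hence $\widetilde{\xb}(t)\to\xb_i^\star$ and $\xb\in\mathcal B_i$.

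\textbf{Main obstacle.} The delicate point is getting a \emph{uniform} positive lower bound on $p$ along the trajectory, which is what turns the log-concavity bound into exponential decay of $V$. This is also exactly where the hypothesis $\overline{B_i(r_i)}\subset\mathrm{supp}(p)$ is needed, together with the continuity of $p$ on a compact subset of $\mathrm{supp}(p)$. Working with the closed sub-ball of radius $\lVert\xb-\xb_i^\star\rVert<r_i$ avoids any boundary issue. Pairwise disjointness of the balls is not needed for containment itself; it only guarantees that the basins so identified are distinct, which is consistent with basins being disjoint by uniqueness of limits.
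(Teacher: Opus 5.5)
Your proof is correct and follows essentially the same route as the paper: the Lyapunov function $\lVert \xb-\xb_i^\star\rVert^2$, the strong-monotonicity bound $\langle\nabla\log p(\xb),\xb-\xb_i^\star\rangle\le-\kappa_i\lVert\xb-\xb_i^\star\rVert^2$, and Gr\"onwall's inequality. The differences are cosmetic. You get forward invariance directly from the monotonicity of $V$ and take the minimum of $p$ on the smaller closed ball of radius $\lVert\xb-\xb_i^\star\rVert$, whereas the paper argues by contradiction through an exit time $T_{\mathrm{out}}$ and uses $m_i=\inf_{\overline{B_i(r_i)}}p$. Your version also makes global existence of the flow and the integration-along-the-segment step explicit.
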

\begin{proof}[Proof of Lemma~\ref{lem:basin}]
Let $\widetilde{\xb}(t)$ be a trajectory of $\dot{\xb} = \nabla p (\xb)$ with $\widetilde{\xb}(0) \in B_i(r_i)$. Choose the Lyapunov function $V(\xb) = \lVert \xb - \xb_{i}^\star \rVert^2/2$. 

On $\mathrm{supp}(p)$, using $\dot{\xb} = \nabla p(\xb) = p(\xb) \nabla \log p(\xb)$, we have
\begin{equation*}
    \dot{V}(\xb) = \langle \dot{\xb}, \xb - \xb_i^\star \rangle = p(\xb) \langle \nabla \log p(\xb), \xb - \xb_i^\star \rangle.
\end{equation*}
Since $\nabla \log p(\xb_i^\star) = \nabla p(\xb_i^\star) = 0$, strong log-concavity (Assumption~\ref{as:local}) implies, 
for $\xb \in B_i(r_i) \setminus \{\xb_i^\star\}$,
\begin{equation*}
\langle \nabla\log p(\xb), \xb - \xb_i^\star \rangle \le - \kappa_i \|\xb - \xb_i^\star\|^2 < 0,
\end{equation*}
thus, with $m_i = \inf_{\xb \in \overline{B_i(r_i)}} p(\xb) > 0$,
\begin{equation*}
\dot{V}(\xb) \le -2 \kappa_i p(\xb) V(\xb) \le -2 \kappa_i m_i V(\xb).
\end{equation*}

By Gr\"onwall's inequality, we have
\begin{equation}
    V(\widetilde{\xb}(t)) \le  V(\widetilde{\xb}(0)) \exp(-2 \kappa_i m_i t),
    \label{eq:gronwall}
\end{equation}
as long as $\widetilde{\xb}(t)$ is in $B_i(r_i)$.

Finally, we prove by contradiction that $\widetilde{\xb}(t) \in B_i(r_i)$ for all $t \ge 0$. Suppose the existence of $T_{\mathrm{out}} = \inf_{t > 0} \{t : \widetilde{\xb}(t) \not\in B_i(r_i) \} < \infty$. In this case, by continuity of $\xb$, we have $\lVert \widetilde{\xb}(T_{\mathrm{out}}) - \xb_i^\star \rVert = r_i$.

But since $V(\widetilde{\xb}(0)) < r_i^2/2$, \eqref{eq:gronwall} gives for $t \in [0, T_{\mathrm{out}})$, 
\begin{equation*}
    V(\widetilde{\xb}(t)) < \frac{r_i^2}{2} \exp(-2 \kappa_i m_i t) < \frac{r_i^2}{2}.
\end{equation*}
This implies $\|\widetilde{\xb}(t) - \xb_i^\star\| < r_i$ for all $t \in [0, T_{\mathrm{out}})$, and this contradicts $\lVert \widetilde{\xb}(T_{\mathrm{out}}) - \xb_i^\star \rVert = r_i$.

Thus, we have $\widetilde{\xb}(t) \in B_i(r_i)$ for all $t \ge 0$ and~\eqref{eq:gronwall} ensures $V(\widetilde{\xb}(t)) \rightarrow 0$ as $t \to \infty$, hence $\widetilde{\xb}(t) \to \xb_i^\star$ and we conclude that $\widetilde{\xb}(0) \in \mathcal{B}_i$.
\end{proof}

\begin{assumption}[One-per-mode configuration]\label{as:oneper}
There is a configuration \(\thetab=(\mub_1,\ldots,\mub_I)\) with exactly one centre
\(\mub_i\in B_i(\tilde r_i)\) for each \(i\), and no other centres in \(\bigcup_{i=1}^I B_i(r_i)\).
Therefore, by convention, we relabel components so that centre \(i\) lies in \(B_i(\tilde r_i)\) and write \(\mathcal{A}_i(\thetab)\) for the \(i\)th cell.
\end{assumption}

\begin{assumption}[Local margin]\label{as:margin}
There exists \(\Delta>0\) such that for all \(i\) and all \(\xb\in B_i(\tilde r_i)\),
\begin{equation*}
\|\xb-\mub_j\| \ge \|\xb-\mub_i\|+\Delta\qquad \forall j\neq i.
\end{equation*}
\end{assumption}
\begin{remark}
If \(\mub_i\in B_i(\tilde r_i)\) and
\(\mathrm{dist} \big(B_i(\tilde r_i), \{\mub_j\}_{j\ne i}\big)>0\), then
\(\Delta:=\inf_{\xb\in B_i(\tilde r_i)} \min_{j\ne i}(\|\xb-\mub_j\|-\|\xb-\mub_i\|)>0\).
\end{remark}

\begin{assumption}[Scale separation]\label{as:sigma}
There exists \(\varepsilon\in(0,1)\) such that 
\begin{equation*}
\sigma \le \frac{\Delta}{\sqrt{2 \log \Psi(\varepsilon)}},
\qquad
\text{where } \Psi(\varepsilon) := \frac{(I-1) \pi_{\max}}{\varepsilon \pi_{\min}} > 1.
\end{equation*}
\end{assumption}

\begin{theorem}[Responsibility on balls]\label{thm:dom}
Under Assumptions~\ref{as:oneper}-\ref{as:sigma}, for each \(i\),
\begin{equation*}
B_i(\tilde r_i) \subset \mathcal{A}_i(\thetab),
\qquad
\text{and}\quad r_i(\xb;\thetab) \ge (1+\varepsilon)^{-1}, \quad \forall \xb\in B_i(\tilde r_i).
\end{equation*}
\end{theorem}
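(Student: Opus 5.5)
The plan is to bound the responsibility ratios pointwise. Fix $i$ and $\xb\in B_i(\tilde r_i)$. Since all components share the covariance $\sigma^2\Ib$, the Gaussian normalising constants cancel, so for every $j\neq i$
\begin{equation*}
\frac{\mathsf{resp}_j(\xb;\thetab)}{\mathsf{resp}_i(\xb;\thetab)}
=\frac{\pi_j}{\pi_i}\exp\Bigl(-\frac{\|\xb-\mub_j\|^2-\|\xb-\mub_i\|^2}{2\sigma^2}\Bigr).
\end{equation*}
Next we lower bound the exponent using the local margin (Assumption~\ref{as:margin}). Set $a:=\|\xb-\mub_i\|\ge 0$ and $b:=\|\xb-\mub_j\|\ge a+\Delta$. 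Then $b^2-a^2=(b-a)(b+a)\ge \Delta(2a+\Delta)\ge \Delta^2$. Together with $\pi_j/\pi_i\le \pi_{\max}/\pi_{\min}$, this gives
\begin{equation*}
\frac{\mathsf{resp}_j}{\mathsf{resp}_i}\le \frac{\pi_{\max}}{\pi_{\min}}\,e^{-\Delta^2/(2\sigma^2)}.
\end{equation*}

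Now apply the scale separation (Assumption~\ref{as:sigma}). The condition $\sigma\le \Delta/\sqrt{2\log\Psi(\varepsilon)}$ is equivalent to $e^{-\Delta^2/(2\sigma^2)}\le 1/\Psi(\varepsilon)$. Hence each ratio is at most $\varepsilon/(I-1)$. This bound is $\le \varepsilon<1$, so $\mathsf{resp}_i(\xb)>\mathsf{resp}_j(\xb)$ for all $j\ne i$. That gives $\xb\in\mathcal A_i(\thetab)$ and proves the inclusion $B_i(\tilde r_i)\subset\mathcal A_i(\thetab)$.

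For the quantitative bound, we use that the responsibilities sum to one:
\begin{equation*}
\frac{1}{\mathsf{resp}_i}=1+\sum_{j\ne i}\frac{\mathsf{resp}_j}{\mathsf{resp}_i}.
\end{equation*}
Here we must read the configuration in Assumption~\ref{as:oneper} as having exactly the $I$ centres $\mub_1,\dots,\mub_I$, so that $K=I$. That assumption fixes $\thetab=(\mub_1,\ldots,\mub_I)$, and this is exactly why $\Psi$ carries the factor $I-1$. The sum therefore has $I-1$ terms, each at most $\varepsilon/(I-1)$, so $1/\mathsf{resp}_i\le 1+\varepsilon$. This is the claimed bound $r_i(\xb;\thetab)=\mathsf{resp}_i(\xb;\thetab)\ge(1+\varepsilon)^{-1}$.

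The main subtlety is this counting step. If extra centres existed outside $\bigcup_i B_i(r_i)$, they would add terms beyond $I-1$. Assumption~\ref{as:margin} still applies to them, since it holds for all $j\neq i$, so the inclusion would survive, but the constant would change. I will therefore state explicitly that $K=I$. Everything else is elementary, and the only inequality that needs care is $b^2-a^2\ge\Delta^2$.
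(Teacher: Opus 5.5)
Your proposal is correct and matches the paper's proof essentially step for step. You cancel the common Gaussian normalisers, use the margin to get $\|\xb-\mub_j\|^2-\|\xb-\mub_i\|^2\ge\Delta^2$, invoke scale separation to bound each weighted ratio by $\varepsilon/(I-1)$, and sum over the $I-1$ competitors; the paper also implicitly takes $K=I$ at this step. The only cosmetic difference is that the paper deduces $\xb\in\mathcal{A}_i(\thetab)$ from $\mathsf{resp}_i\ge(1+\varepsilon)^{-1}>1/2$, whereas you read it off the pairwise ratios being below one.
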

\begin{proof}[Proof of Theorem~\ref{thm:dom}]
Fix \(i\) and let \(\xb\in B_i(\tilde r_i)\). By Assumption~\ref{as:margin},
\(\|\xb-\mub_j\|-\|\xb-\mub_i\|\ge\Delta\) and
\(\|\xb-\mub_j\|+\|\xb-\mub_i\|\ge 2\|\xb-\mub_i\|+\Delta\). Combining with Assumption~\ref{as:sigma}, this gives
\begin{equation*}
\frac{q_{\mub_j,\sigma^2\Ib}(\xb)}{q_{\mub_i,\sigma^2\Ib}(\xb)}
=\exp \Big(- \frac{(\|\xb-\mub_j\|-\|\xb-\mub_i\|)(\|\xb-\mub_j\|+\|\xb-\mub_i\|)}{2\sigma^2}\Big)
\le \exp \Big(- \frac{\Delta^2}{2\sigma^2}\Big) \le \frac{1}{\Psi(\varepsilon)}.
\end{equation*}
Summing with weights gives
\(\sum_{j\ne i}  \frac{\pi_j q_{\mub_j}}{\pi_i q_{\mub_i}}\le  \frac{(I-1) \pi_{\max}}{\pi_{\min} \Psi(\varepsilon)} = \varepsilon\), hence
\(\mathsf{resp}_i(\xb;\thetab)\ge (1+\varepsilon)^{-1} > 1/2\), which means $\xb\in \mathcal{A}_i(\thetab)$.
\end{proof}

\begin{remark} Assumption~\ref{as:sigma} is in fact a sufficient condition for \(\mathsf{resp}_i(\xb;\thetab)\ge (1+\varepsilon)^{-1}\), which means competing components contribute a total probability mass $\le \varepsilon$ relative to the dominant component $i$. 
\end{remark}

\begin{corollary}[Local partition agreement]\label{cor:agree}
Under Assumptions~\ref{as:local}-\ref{as:sigma}, for each \(i\),
\begin{equation*}
\mathcal{A}_i(\thetab)\cap B_i(\tilde r_i) = \mathcal B_i\cap B_i(\tilde r_i) = B_i(\tilde r_i).
\end{equation*}
\end{corollary}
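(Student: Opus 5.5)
The corollary follows by intersecting the two inclusions already established, one for the modal basins and one for the responsibility cells. No new estimate is needed.

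First, since $\tilde r_i = r_i/2 < r_i$, we have $B_i(\tilde r_i) \subset B_i(r_i)$. Lemma~\ref{lem:basin} gives $B_i(r_i) \subset \mathcal B_i$ under Assumption~\ref{as:local}. Hence $B_i(\tilde r_i) \subset \mathcal B_i$, and therefore $\mathcal B_i \cap B_i(\tilde r_i) = B_i(\tilde r_i)$.

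Second, under Assumptions~\ref{as:oneper}--\ref{as:sigma}, Theorem~\ref{thm:dom} gives $B_i(\tilde r_i) \subset \mathcal A_i(\thetab)$. Hence $\mathcal A_i(\thetab) \cap B_i(\tilde r_i) = B_i(\tilde r_i)$. Combining the two equalities yields the claimed chain.

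The only point to check is that the hypotheses match. The balls in Assumption~\ref{as:oneper} use the same radii $r_i$ as Assumption~\ref{as:local}, and the relabelling convention makes $\mathcal A_i(\thetab)$ the cell whose centre lies in $B_i(\tilde r_i)$. There is also a mild subtlety in Lemma~\ref{lem:basin}: the bound $m_i > 0$ uses $\overline{B_i(r_i)} \subset \mathrm{supp}(p)$, which Assumption~\ref{as:local} provides. With these checks, the proof is the two-line combination above, and there is no real obstacle.
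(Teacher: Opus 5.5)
Your proof is correct and matches the paper's own argument: combine Theorem~\ref{thm:dom}, which gives $B_i(\tilde r_i)\subset\mathcal{A}_i(\thetab)$, with Lemma~\ref{lem:basin}, which gives $B_i(\tilde r_i)\subset\mathcal B_i$, and intersect with $B_i(\tilde r_i)$. You also state the containment $B_i(\tilde r_i)\subset B_i(r_i)$ explicitly, which the paper leaves implicit when it applies Lemma~\ref{lem:basin}.
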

\begin{proof}[Proof of Corollary~\ref{cor:agree}]
Theorem~\ref{thm:dom} gives \(B_i(\tilde r_i)\subset \mathcal{A}_i(\thetab)\).
Lemma~\ref{lem:basin} gives \(B_i(\tilde r_i)\subset \mathcal B_i\).
\end{proof}

Agreement is guaranteed only on local neighborhoods of the modes $\bigcup_{i=1}^I B_i(\tilde r_i)$. In general, the responsibility cells $\{\mathcal{A}_k\}$ and the modal basins $\{\mathcal{B}_i\}$ globally disagree. Figure~\ref{fig:region_agreement} showcases a counterexample, highlighting the differences between modal clustering and high-responsibility assignments in a Gaussian mixture model. Remarkably, a mode may not even be in its corresponding component responsibility region, notably when components significantly overlap.

\begin{figure}
    \centering
\includegraphics[width=0.60\linewidth]{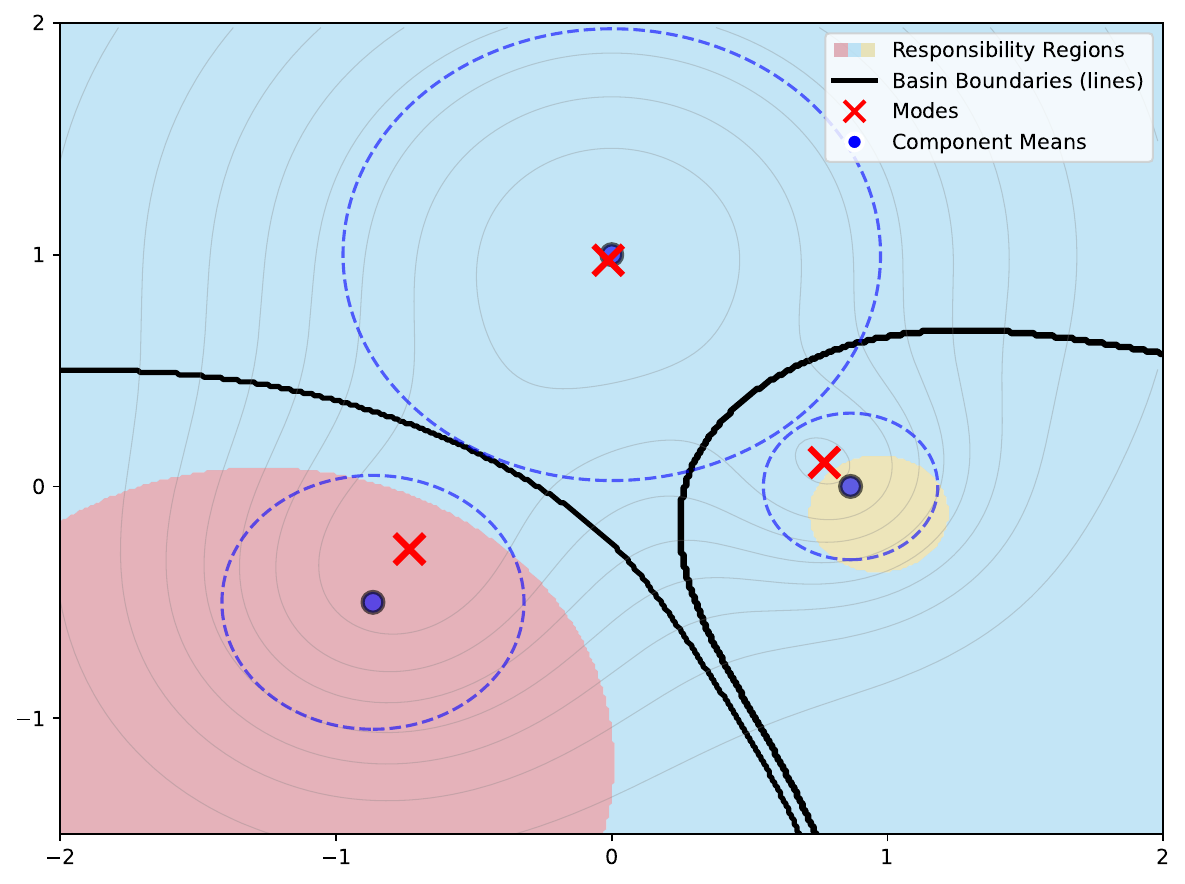}
    \caption{Responsibility regions (in color resp. blue, pink, yellow) and (approximate) modal basins for a Gaussian mixture with 3 components, means at $\mub_1=\big(-\cos(\pi/6),-0.5\big)$, $\mub_2=(0,1)$, $\mub_3=\big(\cos(\pi/6),0\big)$, mixture weights $\boldsymbol{\pi}=(0.16, 0.80, 0.04)$ and isotropic covariances $\Sigmab_k=\sigma_k^2 \Ib$ with variances $(\sigma_1^2,\sigma_2^2,\sigma_3^2)=(0.30, 0.95, 0.10)$. Covariances are represented as circles of radii $\sigma_k$.}
    \label{fig:region_agreement}
\end{figure}

\subsection{Natural-gradient convergence}
\label{app:snga_convergence}

Under the Robbins--Monro and bounded iterate conditions, we can show that stochastic natural-gradient ascent (SNGA) is consistent. Convergence is obtained via a direct application of the ODE method for stochastic approximation \citep[Chp. 2.1]{borkar2008stochastic}.  

\begin{assumption}
\label{ass:snatgrad}
Consider the update
\begin{equation}
\varthetab_{t+1} = \varthetab_t + \rho_t  \psib_t,
\label{eq:snga}
\end{equation}
where $\psib_t$ is an unbiased stochastic estimate of the natural gradient
$\widetilde\nabla \widehat{\mathcal{L}}_{\omega,N}(\varthetab_t)$ (equivalently, the gradient in the expectation-parameter space for exponential families or mixtures thereof). Let $\mathcal{F}_t := \sigma(\varthetab_1, \psib_1, \dots, \psib_{t})$.

Assume:
\begin{enumerate}
\item[(A1)] (\textbf{Robbins--Monro}) $\sum_t \rho_t = \infty$, $\sum_t \rho_t^2 < \infty$.
\item[(A2)] (\textbf{Mean field}) $\widetilde\nabla \widehat{\mathcal{L}}_{\omega,N}$ is locally Lipschitz and of at most linear growth.
\item[(A3)] (\textbf{Noise}) $\mathbb{E}[\psib_{t} \mid \mathcal{F}_{t-1}] = \widetilde\nabla \widehat{\mathcal{L}}_{\omega,N}(\varthetab_{t-1})$ and $\mathbb{E}\|\psib_{t}\|^2 \le C (1+\|\varthetab_{t-1}\|^2)$.
\item[(A4)] (\textbf{Stability}) The iterates $(\varthetab_t)_{t\ge 1}$ are a.s. bounded.
\end{enumerate}
\end{assumption}

\begin{theorem}[Convergence of SNGA at fixed $\omega$]
\label{thm:snatgrad}
Let $\mathcal{S}_\omega = \{\varthetab : \widetilde\nabla \widehat{\mathcal{L}}_{\omega,N}(\varthetab) = 0\}$. Under Assumption~\ref{ass:snatgrad}, we have, as $t\to \infty$:
\begin{equation*}
\mathrm{dist}(\varthetab_t, \mathcal{S}_\omega)  \to  0 \quad\text{a.s.}
\end{equation*}
Equivalently, every limit point of $(\varthetab_t)$ is almost surely stationary.
\end{theorem}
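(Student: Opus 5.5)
The plan is the ODE method for stochastic approximation \citep[Chp.~2.1]{borkar2008stochastic}, applied to the recursion \eqref{eq:snga}. Write $\psib_{t}=\widetilde\nabla \widehat{\mathcal L}_{\omega,N}(\varthetab_{t-1})+\Mb_{t}$ with martingale difference $\Mb_t:=\psib_t-\mathbb E[\psib_t\mid\mathcal F_{t-1}]$ (aligning the index shift in (A3) with \eqref{eq:snga}). The update then becomes a noisy Euler discretisation of the mean-field ODE
\[
\dot{\varthetab}(s)=h(\varthetab(s)),\qquad h:=\widetilde\nabla \widehat{\mathcal L}_{\omega,N}.
\]
By (A2), $h$ is locally Lipschitz, so the ODE is well posed. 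The proof has three steps.

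\emph{Step 1 (noise control).} I will show that $\sum_t \rho_t \Mb_{t+1}$ converges a.s. By (A3) and (A4), on the event $\{\sup_t\|\varthetab_t\|\le R\}$ we have $\mathbb E[\|\Mb_{t+1}\|^2\mid\mathcal F_t]\le C'(1+R^2)$. Combined with $\sum\rho_t^2<\infty$ from (A1), the martingale $\sum_{s\le t}\rho_s\Mb_{s+1}$ is $L^2$-bounded after localisation, using stopping times at level $R$ and letting $R\to\infty$ along a countable sequence. Hence it converges a.s.

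\emph{Step 2 (asymptotic pseudotrajectory).} Define the piecewise-linear interpolation $\bar\varthetab$ on the time scale $s_t=\sum_{u<t}\rho_u$, which tends to $\infty$ by (A1). Using Step 1, local Lipschitzness on the bounded set containing the iterates, and Gronwall's inequality, Borkar's Lemma 2.1 gives that for every $T>0$, $\sup_{s\in[\tau,\tau+T]}\|\bar\varthetab(s)-\varthetab^{\tau}(s)\|\to 0$ a.s. Here $\varthetab^{\tau}$ is the ODE solution started at $\bar\varthetab(\tau)$. Consequently the limit set of $(\varthetab_t)$ is a.s. a compact, connected, internally chain transitive invariant set of the ODE (Borkar, Thm.~2.1).

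\emph{Step 3 (identify the limit set), the main obstacle.} The mean field is a natural gradient, so $V:=-\widehat{\mathcal L}_{\omega,N}$ serves as a Lyapunov function. Along solutions,
\[
\tfrac{d}{ds}\widehat{\mathcal L}_{\omega,N}(\varthetab(s))=\nabla\widehat{\mathcal L}^T\Fb^{-1}\nabla\widehat{\mathcal L}\ge 0,
\]
with equality iff $\nabla\widehat{\mathcal L}=0$, which is equivalent to $h=0$ because $\Fb\succ0$ on the region visited. The difficulty is that Borkar's Corollary 2.3 also needs $V(\mathcal S_\omega)$ to have empty interior. Without that, internally chain transitive sets could in principle wander off $\mathcal S_\omega$. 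I would handle this first by noting that $\widehat{\mathcal L}_{\omega,N}$ is real-analytic in the parameters on the compact region: it is a finite sum of Gaussian densities plus an entropy integral of analytic integrands over a bounded $\mathcal S$. A Sard-type or analytic argument then shows that $\widehat{\mathcal L}$ takes countably many, hence measure-zero, values on its critical set. If that argument proves delicate, the fallback is to note that the positive-definiteness of $\Fb$ is uniform on the compact set from (A4), together with the Lyapunov property. Then any invariant chain transitive set on which $V$ is nonconstant is impossible, since $V$ strictly decreases off $\mathcal S_\omega$. This yields that the limit set lies in $\mathcal S_\omega$, so $\mathrm{dist}(\varthetab_t,\mathcal S_\omega)\to0$ a.s.
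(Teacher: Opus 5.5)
Your proposal is correct and follows the paper's route, which is simply the ODE method of Borkar, Chp.~2.1. Your Step~3 also makes explicit what the paper's one-line citation leaves implicit: the limiting internally chain transitive set lies in $\mathcal S_\omega$ because $V:=-\widehat{\mathcal L}_{\omega,N}$ is a strict Lyapunov function for the natural-gradient flow, and its critical values have empty interior by Sard, since $\widehat{\mathcal L}_{\omega,N}$ is $\mathcal C^\infty$, indeed analytic, in the parameters. Drop the fallback, however: strict decrease of $V$ off $\mathcal S_\omega$ together with uniform positive definiteness of $\Fb$ does not by itself rule out internally chain transitive sets on which $V$ is nonconstant, and that exclusion is exactly what the empty-interior condition supplies.
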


\begin{remark}[Natural-gradient regularity]
For exponential families and mixtures thereof (treated as MCEF models), the ``natural gradient = expectation gradient'' identity (eq.~\eqref{eq:natural_expectation_duality}) ensures that $\widetilde\nabla \widehat{\mathcal{L}}_{\omega,N}$ inherits Lipschitz continuity from $\widehat{\mathcal{L}}_{\omega,N}$ on compact $\Theta$, without explicit inversion of the Fisher information matrix.
\end{remark}

Under annealing, conditions on the schedule are sufficient to guarantee convergence. The following corollary stems from standard two-timescale results for stochastic approximation (\citealp{kushner2003stochastic}, Chp. 8.6; \citealp{borkar2008stochastic}, Chp. 6.1).

\begin{corollary}[Convergence of SNGA under annealing]
\label{cor:slow-anneal}
Suppose that under Assumption~\ref{ass:snatgrad} for each fixed $\omega>0$, and that
$(\varthetab,\omega)\mapsto \widetilde\nabla \widehat{\mathcal{L}}_{\omega,N}(\varthetab)$ is jointly continuous on bounded sets. 
\begin{enumerate}
\item \emph{Stagewise schedule.} Let $\omega_1>\omega_2>\cdots\rightarrow 0$.
If at stage $j$ the run with fixed $\omega_j$ is long enough that
$\mathrm{dist}(\varthetab,\mathcal{S}_{\omega_j})\le \delta_j$ with $\delta_j\rightarrow 0$ (and the next stage is warm-started at that iterate),
then $\mathrm{dist}(\varthetab,\mathcal{S}_{\omega_j})\to 0$ as $j\to\infty$.
\item \emph{Continuous schedule.} If $\omega_t\rightarrow 0$ satisfies
$\lvert\omega_{t+1}-\omega_t\rvert = o(\rho_t)$, then
$\mathrm{dist}(\varthetab_t,\mathcal{S}_{\omega_t})\to 0$ a.s.
\end{enumerate}
\end{corollary}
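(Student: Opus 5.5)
The plan is a two-timescale stochastic approximation argument in which $\omega$ is treated as a slowly varying parameter of the mean field. Throughout, Assumption~\ref{ass:snatgrad} is taken to hold uniformly in $\omega$ on bounded sets, and the joint continuity of $(\varthetab,\omega)\mapsto\widetilde\nabla\widehat{\mathcal L}_{\omega,N}(\varthetab)$ is used everywhere. Two preliminary facts are needed. First, by continuity and boundedness of the iterates (A4), the set-valued map $\omega\mapsto\mathcal S_\omega$ has a closed graph (it is upper semicontinuous). Second, $\widehat{\mathcal L}_{\omega,N}$ is a strict Lyapunov function for the natural-gradient flow $\dot\varthetab=\widetilde\nabla\widehat{\mathcal L}_{\omega,N}(\varthetab)$. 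Indeed,
\[
\tfrac{d}{dt}\widehat{\mathcal L}_{\omega,N}=\nabla\widehat{\mathcal L}^T\Fb^{-1}\nabla\widehat{\mathcal L}\ge 0,
\]
with equality exactly on $\mathcal S_\omega$, since $\Fb$ is positive definite on the compact $\Theta_K$. Hence every internally chain transitive set of the flow lies in $\mathcal S_\omega$, as in Theorem~\ref{thm:snatgrad}.

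\emph{Stagewise schedule.} Here only bookkeeping is needed. At stage $j$, Theorem~\ref{thm:snatgrad} applied with fixed $\omega_j$ guarantees that a finite run reaches $\mathrm{dist}(\varthetab,\mathcal S_{\omega_j})\le\delta_j$ almost surely. Warm-starting does not affect this, since Theorem~\ref{thm:snatgrad} holds from any bounded initial point. Because $\delta_j\to0$, the claim follows directly. I would also note that any limit point $\bar\varthetab$ then lies in $\limsup_j\mathcal S_{\omega_j}$, which is contained in the stationary set at the limiting temperature by the closed-graph property.

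\emph{Continuous schedule.} Write $\omega_{t+1}=\omega_t+\rho_t\,\epsilon_t$ with $\epsilon_t:=(\omega_{t+1}-\omega_t)/\rho_t\to0$, and view $(\varthetab_t,\omega_t)$ as a coupled stochastic approximation. Its fast component has mean field $\widetilde\nabla\widehat{\mathcal L}_{\omega,N}(\varthetab)$, and its slow component has a drift that vanishes asymptotically. Following \citealp[Chp.~6.1]{borkar2008stochastic} (or \citealp[Chp.~8.6]{kushner2003stochastic}), the interpolated trajectory on time windows $[t_n,t_n+T]$ tracks the ODE
\[
\dot\varthetab=\widetilde\nabla\widehat{\mathcal L}_{\omega,N}(\varthetab),\qquad \dot\omega=0 .
\]
Three error terms must vanish: the martingale noise, which is controlled by (A1) and (A3) through Doob's inequality; the discretisation error, controlled by the Lipschitz property (A2); and the temperature drift, which on a window has size $\sum_{t_n\le s\le t_n+T}\rho_s|\epsilon_s|\le T\sup_{s\ge t_n}|\epsilon_s|\to0$. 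By the Benaïm/Borkar theorem, the limit set of $(\varthetab_t,\omega_t)$ is internally chain transitive for the frozen-$\omega$ flow. By the Lyapunov argument above it is therefore contained in $\{(\varthetab,\omega):\varthetab\in\mathcal S_\omega\}$. Combining this with upper semicontinuity yields $\mathrm{dist}(\varthetab_t,\mathcal S_{\omega_t})\to0$ almost surely, by a contradiction argument along subsequences: if the distance stayed above $\eta>0$, compactness would produce a limit pair $(\bar\varthetab,\bar\omega)$ with $\bar\varthetab\in\mathcal S_{\bar\omega}$, which is incompatible with the gap by the closed graph.

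\emph{Main obstacle.} The delicate point is the endpoint $\omega\to0$. The closed-graph and Lyapunov arguments require the mean field and the uniform noise and Lipschitz constants to extend continuously to $\omega=0$, or at least to compact neighbourhoods of the limit. I would handle this by extending $\widetilde\nabla\widehat{\mathcal L}_{\omega,N}$ to $\omega\in[0,\omega_1]$. On the compact set $\Theta_K$ the entropy gradient is bounded, so the $\omega$-term vanishes uniformly and the extended field is jointly continuous. The theorem is then applied on $[0,\omega_1]\times\Theta_K$, with $\mathcal S_0$ interpreted as the stationary set of the data term alone.
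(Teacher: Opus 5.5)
Your stagewise part is fine: the conclusion is immediate from the hypothesis, and your closing remark there uses the closed graph in the correct direction, sending points of $\mathcal S_{\omega_j}$ into $\mathcal S_0$. Treating $(\varthetab_t,\omega_t)$ as a coupled stochastic approximation that tracks the frozen-$\omega$ ODE is also the two-timescale route the paper invokes by citation.

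\textbf{The gap.} The final step of the continuous schedule does not go through. Your limit-set and Lyapunov analysis only shows that the limit points of $(\varthetab_t,\omega_t)$ lie in $\mathcal S_0\times\{0\}$, i.e.\ $\mathrm{dist}(\varthetab_t,\mathcal S_0)\to0$. To upgrade this to $\mathrm{dist}(\varthetab_t,\mathcal S_{\omega_t})\to0$, points of $\mathcal S_0$ must be approximable by points of $\mathcal S_\omega$ as $\omega\downarrow0$. That is \emph{lower} semicontinuity of $\omega\mapsto\mathcal S_\omega$ at $0$. The closed graph gives only the reverse inclusion $\limsup_{\omega\to0}\mathcal S_\omega\subset\mathcal S_0$. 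So the limit pair $(\bar{\varthetab},0)$ with $\bar{\varthetab}\in\mathcal S_0$ in your contradiction argument is fully compatible with $\mathrm{dist}(\bar{\varthetab},\mathcal S_{\omega_{t_k}})\ge\eta/2$ for all $k$. Stationary points of the $\omega=0$ problem need not persist for $\omega>0$, and nothing you use excludes this.

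\textbf{Counterexample.} Take $\vartheta\in\mathbb R$, a noiseless mean field $-\omega\vartheta$ (the gradient of $-\omega\vartheta^2/2$), $\rho_t=t^{-0.9}$, $\omega_t=1/(2t)$ and $\vartheta_1=1$. Assumption~\ref{ass:snatgrad}, joint continuity and $|\omega_{t+1}-\omega_t|=o(\rho_t)$ all hold, and $\mathcal S_\omega=\{0\}$ for every $\omega>0$. Yet $\vartheta_t=\prod_{s<t}\bigl(1-\tfrac12 s^{-1.9}\bigr)$ converges to a positive limit, so $\mathrm{dist}(\vartheta_t,\mathcal S_{\omega_t})\not\to0$.

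\textbf{How to repair it.} An additional hypothesis is unavoidable. It is exactly what the two-timescale theorems the paper cites assume: for each frozen slow variable, the fast ODE has a globally asymptotically stable equilibrium $\lambda(\omega)$ depending continuously (Lipschitz, in Borkar's version) on $\omega$. That is what yields tracking $\|\varthetab_t-\lambda(\omega_t)\|\to0$.

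Within your argument, the most natural repair is nondegeneracy at $\omega=0$. Suppose $\nabla^2_{\varthetab}\widehat{\mathcal L}_{0,N}$ is nonsingular at every point of $\mathcal S_0$, equivalently the Jacobian of the natural-gradient field is invertible there. Then:
\begin{enumerate}
\item $\mathcal S_0$ is finite on the compact parameter set, so the connected limit set is a single point $\bar{\varthetab}$.
\item The field is affine in $\omega$ and smooth in $\varthetab$, so the implicit function theorem gives a continuous branch $\lambda(\omega)\in\mathcal S_\omega$ with $\lambda(\omega)\to\bar{\varthetab}$.
\item Hence $\mathrm{dist}(\varthetab_t,\mathcal S_{\omega_t})\le\|\varthetab_t-\lambda(\omega_t)\|\to0$.
\end{enumerate}
Without such an assumption, the correct conclusion of your argument is the weaker $\mathrm{dist}(\varthetab_t,\mathcal S_0)\to0$.

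A smaller point: passing from internally chain transitive sets to $\mathcal S_0$ via the Lyapunov function also needs $\widehat{\mathcal L}_{0,N}(\mathcal S_0)$ to have empty interior. Sard's theorem supplies this here, since the objective is smooth in the parameters.
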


\begin{remark}[GERVE's convergence]
GERVE performs optimisation in a compact parameter space ($\Theta$ for single Gaussians, $\Theta_K$ for Gaussian mixtures), so the iterates must be projected to ensure that they remain in this space. In this case, the above convergence results are not directly applicable. However, in practice, projection is not always needed, typically when the compact space contains the solution to the unconstrained optimisation problem.
\end{remark}

\subsection{Uncertainty quantification for mode estimation}\label{app:uq}

Here, we develop the theoretical guarantees of the mode-level uncertainty quantification (UQ) procedure proposed in Section~\ref{sec:uq}. In this section, we mathematically translate all the notions (including Assumption~\ref{as:sep-match_main}), defined in Section~\ref{sec:uq-theory}. These will be also used in the proofs in~\ref{app:proofs_uq}. Then, we investigate the consistency stability scores (Prop.~\ref{prop:stability-max}) and propose strategies to handle nonconvex objectives. All the results for this section and Section~\ref{sec:uq} of the main part are proven in Section~\ref{app:proofs_uq}.

\subsubsection{Notations and separation assumption}
\label{app:uq_notation_separation}

We start by mathematically translating all the notions defined in Section~\ref{sec:uq-theory}. These will be used throughout this section and in the proofs in~\ref{app:proofs_uq}. 

\paragraph{Data, measures, and convergence.}
$\Xb_{1:N}$ are i.i.d.\ from the population law $P$ on the sample space $\mathcal S$.
The empirical and bootstrap empirical measures are
\begin{equation*}
P_N= \frac1N\sum_{i=1}^N\delta_{\Xb_i},\qquad
P_N^\ast= \frac1N\sum_{i=1}^N\delta_{\Xb_i^\ast},\quad \Xb_1^\ast,\dots,\Xb_N^\ast \overset{\mathrm{i.i.d.}}{\sim} P_N \text{ given }\Xb_{1:N}.
\end{equation*}
We write $P^\ast(\cdot)=P(\cdot | \Xb_{1:N})$ for conditional (bootstrap) probability.
All bootstrap limits are understood \emph{conditionally on $\Xb_{1:N}$, in $P$-probability}.

\paragraph{Parameter space and objective.}
Fix $\omega_0>0$ and work on the compact parameter set $\Theta_K$ (eq.~\eqref{eq:compact_mixtures}). 
For a finite signed measure $G$ on $\mathcal S$, define the criterion
\begin{equation*}
\mathcal L_{\omega_0}(\Lambdab;G):=\int q_{\Lambdab}(\xb) dG(\xb)+\omega_0 \mathcal H_{\mathcal S}(q_{\Lambdab}),
\end{equation*}
and the population/empirical versions $\mathcal L_{\omega_0}(\Lambdab):=\mathcal L_{\omega_0}(\Lambdab;P)$ and
$\widehat{\mathcal L}_{\omega_0,N}(\Lambdab):=\mathcal L_{\omega_0}(\Lambdab;P_N)$.

\paragraph{Estimators and bootstrap refits.}
Let
\begin{equation*}
\widehat\Lambdab_{\omega_0,N}\in\argmax_{\Lambdab\in\Theta_K}\widehat{\mathcal L}_{\omega_0,N}(\Lambdab),
\qquad
\Lambdab^\star_{\omega_0}\in\argmax_{\Lambdab\in\Theta_K}\mathcal L_{\omega_0}(\Lambdab),
\end{equation*}
and define the bootstrap criterion and maximiser
\begin{equation*}
\widehat{\mathcal L}^\ast_{\omega_0,N}(\Lambdab):=\mathcal L_{\omega_0}(\Lambdab;P_N^\ast),
\qquad
\widehat\Lambdab^\ast_{\omega_0,N}\in\argmax_{\Lambdab\in\Theta_K}\widehat{\mathcal L}^\ast_{\omega_0,N}(\Lambdab).
\end{equation*}
We also allow a (possibly inexact) estimator $\widetilde\Lambdab_{\omega_0,N}$ of $\widehat\Lambdab_{\omega_0,N}$ (e.g., the output of a finite-iteration GERVE fit).

\paragraph{Quadratic expansion matrices.}
Write
\begin{equation*}
\Hb^\star_{\omega_0}:=-\nabla_{\Lambdab}^2\mathcal L_{\omega_0}(\Lambdab^\star_{\omega_0}),\qquad
\Vb_{\omega_0}:=\operatorname{Var} \big(\nabla_{\Lambdab}q_{\Lambdab}(X)\big)\big|_{\Lambdab=\Lambdab^\star_{\omega_0}},
\qquad
\Wb_{\omega_0}:=(\Hb^\star_{\omega_0})^{-1}\Vb_{\omega_0}(\Hb^\star_{\omega_0})^{-1}.
\end{equation*}

\paragraph{Mode means, matching, and projections.}
Here, we consider fixed targets $\ub_1,\dots,\ub_{K_0}\in\mathbb R^d$ ($K_0\le K$).
For $\Lambdab\in\Theta_K$, let $\mub_k(\Lambdab)\in\mathbb R^d$ be the $k$-th component mean and
\begin{equation*}
\mb(\Lambdab):=\big(\mub_1(\Lambdab)^T,\dots,\mub_K(\Lambdab)^T\big)^T\in\mathbb R^{Kd}.
\end{equation*}
Define the matching map $\mathcal{M}(\Lambdab)\in\mathbb R^{dK_0}$ by selecting an injection
$\pi:\{1,\dots,K_0\}\hookrightarrow\{1,\dots,K\}$ minimising $\sum_{j=1}^{K_0}\|\mub_{\pi(j)}(\Lambdab)-\ub_j\|$,
with a fixed lexicographic tie-break:
\begin{equation*}
\mathcal{M}(\Lambdab)=\big(\mub_{\pi(1)}(\Lambdab)^T,\dots,\mub_{\pi(K_0)}(\Lambdab)^T\big)^T.
\end{equation*}
Let $\Eb_j\in\mathbb R^{d\times dK_0}$ extract the $j$-th $d$-block and set $\mathcal{M}_j(\Lambdab):=\Eb_j \mathcal{M}(\Lambdab)\in\mathbb R^d$.

\paragraph{Local separation and linearisation.}
For $j \in [K_0]$ and $\Lambdab\in\Theta_K$, let
$a(j,\Lambdab)\in\arg\min_{k}\|\mub_k(\Lambdab)-\ub_j\|$ (lexicographic tie-break) and define
\begin{equation*}
\mathsf{gap}_j(\Lambdab):=\min_{k\neq a(j,\Lambdab)}\Big(\|\mub_k(\Lambdab)-\ub_j\|-\|\mub_{a(j,\Lambdab)}(\Lambdab)-\ub_j\|\Big),\quad
\Delta(\Lambdab):=\min_{1\le j\le K_0}\mathsf{gap}_j(\Lambdab).
\end{equation*}
For $\delta>0$, set $\mathcal G_\delta:=\{\Lambdab\in\Theta_K:\Delta(\Lambdab)\ge\delta\}$.
Local separation (Assumption~\ref{as:sep-match_main}) holds at $\Lambdab^\star_{\omega_0}$ if there exists $\delta > 0$ such that $\Delta(\Lambdab^\star_{\omega_0}) \ge \delta$. In this case, $\mathcal{M}$ is $\mathcal C^1$ at $\Lambdab^\star_{\omega_0}$ (Prop.~\ref{prop:mode-matching-smoothness}) with Jacobian
$\Jb:=\nabla_{\Lambdab}\mathcal{M}(\Lambdab^\star_{\omega_0})$, and we define
\begin{equation*}
\Cb_M:=\Jb \Wb_{\omega_0} \Jb^T,\qquad \Cb_j:=\Eb_j \Cb_{\mathcal{M}} \Eb_j^T.
\end{equation*}

\subsubsection{Consistency of stability scores}
\label{app:stability_scores}

To investigate the properties of the stability scores, we write their mathematical definition.
Let $\Phi(G)$ be a deterministic measurable selection from the (possibly set-valued) argmax
\begin{equation*}
\Phi(G) \in \argmax_{\Lambdab\in\Theta_K}\Big\{\int q_{\Lambdab} dG + \omega_0 \mathcal H_{\mathcal S}(q_{\Lambdab})\Big\}.
\end{equation*}
Fix disjoint neighborhoods $U_1,\dots,U_{K_0}$ of $\ub_1,\dots,\ub_{K_0}$.
Define $F_j(G)=\mathds{1}\{\Eb_j \mathcal{M}(\Phi(G))\in U_j\}$ and the targets
\begin{equation*}
\pi_{j,N} := P \big(F_j(P_N)=1\big),\qquad
\pi_j := \lim_{N\to\infty}\pi_{j,N}.
\end{equation*}
The bootstrap stability score defined in Section~\ref{sec:uq} can be reformulated
$s_j=\frac1L\sum_{\ell=1}^L F_j(P_N^{\ast(\ell)})$. In other words, 
\begin{equation*}
s_j  =  \frac1L\sum_{\ell=1}^L \mathds{1}\{\exists \text{ a matched component in }U_j \text{ for the }\ell\text{-th bootstrap fit}\}.
\end{equation*}

Let $\mathbb{E}^\ast$ and $\operatorname{Var}^\ast$ denote the conditional expectation and variance given $\Xb_{1:N}$. The following proposition, proven in Section~\ref{app:proof_stability}, establishes statistical guarantees on the stability scores.

\begin{proposition}[Consistency of stability scores]\label{prop:stability-max}
Let us index the stability scores $s_{j, L}$ by $L$. Under Assumptions~\ref{ass:consistency} and \ref{as:sep-match_main} and with $U_j$ chosen to have a positive margin at $\Lambdab^\star_{\omega_0}$ (Lemma~\ref{lem:indicator-continuity}), define
\begin{equation*}
\tau_{j,N} := \mathbb{E}^\ast \big[F_j(P_N^\ast)\big] = \mathbb{E}^\ast[s_{j, L}]\qquad\text{(for all $L$).}
\end{equation*}
Then:
\begin{enumerate}\itemsep3pt
\item[(i)] \textbf{(Conditional LLN).} For each $N$, conditionally on the data,
\begin{equation*}
s_{j,L} \xrightarrow{\text{a.s.}}  \tau_{j,N}\quad\text{as }L\to\infty,
\qquad
\operatorname{Var}^\ast(s_{j,L})=\operatorname{Var}^\ast \big(F_j(P_N^\ast)\big)/L \to 0 .
\end{equation*}
In particular, for any $\epsilon>0$,
\begin{equation*}
P^\ast \big(|s_{j,L}-\tau_{j,N}|>\epsilon\big) \le 2\exp \big(-2L\epsilon^2\big).
\end{equation*}
\item[(ii)] \textbf{(Large-$N$ limit).} As $N\to\infty$,
\begin{equation*}
\tau_{j,N} \xrightarrow{P} \pi_j,
\qquad
\pi_{j,N}:=P \big(F_j(P_N)=1\big) \to \pi_j,
\end{equation*}
where $\pi_j := F_j(P)\in\{0,1\}$ \emph{(equivalently, $\pi_j=P(F_j(P)=1)$)}.
\end{enumerate}
Consequently, for any sequence $L_N\to\infty$ as $N \to \infty$, we have $s_{j,L_N} \xrightarrow{P} \pi_j$.
\end{proposition}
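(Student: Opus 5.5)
Part (i) is a conditional Monte Carlo statement and needs no asymptotics in $N$. Fix the data $\Xb_{1:N}$. The bootstrap replicates $P_N^{\ast(1)},\dots,P_N^{\ast(L)}$ are i.i.d.\ draws from the multinomial resampling law, so the indicators $F_j(P_N^{\ast(\ell)})$ are i.i.d.\ Bernoulli with mean $\tau_{j,N}=\mathbb E^\ast[F_j(P_N^\ast)]$. Measurability of $F_j$ follows because $\Phi$ is a deterministic measurable selection and $\mathcal M$ is Borel, with the lexicographic tie-break making it piecewise continuous. Three consequences follow directly:
\begin{itemize}
\item the strong law of large numbers gives $s_{j,L}\to\tau_{j,N}$ almost surely under $P^\ast$;
\item the variance identity $\operatorname{Var}^\ast(s_{j,L})=\operatorname{Var}^\ast(F_j(P_N^\ast))/L\le 1/(4L)$ is immediate;
\item Hoeffding's inequality for $[0,1]$-valued variables gives $P^\ast(|s_{j,L}-\tau_{j,N}|>\epsilon)\le 2e^{-2L\epsilon^2}$.
\end{itemize}

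For (ii), I will show that $F_j$ is locally constant at $P$. By Lemma~\ref{lem:indicator-continuity}, the margin condition on $U_j$ together with Assumption~\ref{as:sep-match_main} gives a neighbourhood $\mathcal N$ of $\Lambdab^\star_{\omega_0}$ on which two things hold: $\mathcal M$ is continuous (indeed $\mathcal C^1$ by Prop.~\ref{prop:mode-matching-smoothness}), and $\mathds 1\{\Eb_j\mathcal M(\Lambdab)\in U_j\}$ is constant, equal to $\pi_j:=F_j(P)$. Here I use that $\Phi(P)=\Lambdab^\star_{\omega_0}$ by uniqueness.

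Next, Assumption~\ref{ass:consistency} (uniqueness plus the conditions of Theorem~\ref{thm:consistency_main}) yields $\Phi(P_N)\to\Lambdab^\star_{\omega_0}$ in probability. Hence $P(F_j(P_N)\neq\pi_j)\le P(\Phi(P_N)\notin\mathcal N)\to0$, which gives $\pi_{j,N}\to\pi_j$.

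For the bootstrap, I need $\Phi(P_N^\ast)\to\Lambdab^\star_{\omega_0}$ in $P^\ast$-probability, in $P$-probability. This follows from the bootstrap argmax consistency used for Theorem~\ref{thm:bootstrap-params-max-cor_main}. The ingredients are as follows:
\begin{itemize}
\item the class $\{q_\Lambdab:\Lambdab\in\Theta_K\}$ is uniformly bounded and Lipschitz in $\Lambdab$ on the compact set $\Theta_K$, so it is Glivenko--Cantelli;
\item the bootstrap Glivenko--Cantelli theorem then gives $\sup_\Lambdab|(P_N^\ast-P)q_\Lambdab|\to0$;
\item the entropy term is deterministic and cancels in the comparison of criteria;
\item the argmax theorem with a well-separated maximum, which holds by compactness, continuity and uniqueness, completes the step.
\end{itemize}
Then $|\tau_{j,N}-\pi_j|\le P^\ast(\Phi(P_N^\ast)\notin\mathcal N)\xrightarrow{P}0$.

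The final claim follows from the triangle inequality $|s_{j,L_N}-\pi_j|\le|s_{j,L_N}-\tau_{j,N}|+|\tau_{j,N}-\pi_j|$. The first term is controlled by the Hoeffding bound, which holds uniformly in the data, so after integrating it gives an unconditional bound $2e^{-2L_N\epsilon^2}\to0$. The second term is handled by (ii).

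The main obstacle is the local constancy of $F_j$. If $\Eb_j\mathcal M(\Lambdab^\star_{\omega_0})$ lies on $\partial U_j$, or if the matching switches index near $\Lambdab^\star_{\omega_0}$, then the indicator is discontinuous there and the limit need not be $0/1$. My first approach is to use the separation gap $\delta$: it keeps the nearest-component index fixed whenever $\|\mub_k(\Lambdab)-\mub_k(\Lambdab^\star_{\omega_0})\|<\delta/2$. Combined with the positive margin of $U_j$ and the continuity of component means in $\Lambdab$, this makes the indicator constant on $\mathcal N$.
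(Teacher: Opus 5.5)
Your proof is correct. Part (i) matches the paper: given the data the indicators are i.i.d.\ Bernoulli, and the SLLN and Hoeffding finish it. Your final step is tighter than the paper's bare appeal to (i)--(ii). Passing from the fixed-$N$, $L\to\infty$ statement to a sequence $L_N$ needs exactly the data-uniform Hoeffding bound that you integrate.

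For (ii) you take a different route.
\begin{itemize}
\item The paper argues in measure space. $F_j$ is continuous at $P$ under weak convergence (Lemma~\ref{lem:indicator-continuity}), $P_N\xrightarrow{w}P$, and ``standard bootstrap consistency at continuity points'' handles $P_N^\ast$.
\item You argue in parameter space. The criterion converges uniformly over $\Theta_K$, by Glivenko--Cantelli and its bootstrap version, with the entropy term cancelling. Together with a well-separated unique maximum, this gives $\Phi(P_N)\to\Lambdab^\star_{\omega_0}$ and $\Phi(P_N^\ast)\to\Lambdab^\star_{\omega_0}$. Local constancy of the indicator on a neighbourhood $\mathcal N$ then gives the explicit bound $|\tau_{j,N}-\pi_j|\le P^\ast(\Phi(P_N^\ast)\notin\mathcal N)$.
\end{itemize}
The paper's version is shorter, since one continuity property covers both $P_N$ and $P_N^\ast$. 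However, its Lemma~\ref{lem:indicator-continuity} gets continuity of the \emph{global} argmax selector from Lemma~\ref{lem:argmax-hadamard}, which only controls the unique score root inside a small ball around $\Lambdab^\star_{\omega_0}$. Showing that the global maximiser lies in that ball is precisely the uniform-convergence/argmax step you make explicit, so your route is the more self-contained one.

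A minor caveat, shared with the paper, concerns your last paragraph. Fixing the nearest index $a(j,\Lambdab)$ determines $\mathcal M$, which is defined through a cost-minimising injection, only when the indices $a(j,\Lambdab^\star_{\omega_0})$ are distinct across $j$. In general you need uniqueness of the optimal injection at $\Lambdab^\star_{\omega_0}$; that uniqueness then persists nearby because there are finitely many continuous costs. Since you invoke Proposition~\ref{prop:mode-matching-smoothness} for local continuity of $\mathcal M$, this does not change the comparison with the paper.
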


\subsubsection{Handling local optima}

Although $\mathcal{L}_{\omega_0}$ is nonconvex in general, in practice we use the following guardrails to mitigate the issue:
\begin{enumerate}
    \item \emph{Annealing + overcompleteness.} We use an overcomplete model, and we start from a higher smoothing level and continue down to $\omega_0$. Small components are pruned. This reliably enlarges the global basin explored by GERVE.
    \item \emph{Stability scores.} Per-mode stability scores $s_j$ are monitored, with values near 1 indicating robust recovery. For example, as a diagnostic, we can flag $s_j<0.5$ as potentially problematic and $s_j<0.3$ as likely unstable.
    \item \emph{Multiple initialisations.} We can run several random starts at $\omega_0$ and keep the best solution (by objective). 
\end{enumerate}

\section{Proofs}
\label{app:proofs}

\subsection{Preliminaries: Empirical process properties}
\label{app:gc_property}

In this section, we first prove that the classes of Gaussian distributions and GMM with bounded covariances are Glivenko--Cantelli (GC), and admit integrable envelopes. We recall \citep{van1996weak} that a class of functions $\mathcal{F}$ on $\mathbb{R}^d$ is Glivenko--Cantelli (GC) with respect to a probability measure $P$ if, for i.i.d. samples $\Xb_1, \dots, \Xb_N$ of a random variable $\Xb \sim P$, the empirical average of any $f \in {\cal F}$, $P_N f \coloneqq N^{-1}\sum_{i=1}^N f(\Xb_i)$,  converges uniformly to its expectation $Pf = \mathbb{E}_P[f(\Xb)]$, in probability:
\begin{equation*}
    \sup_{f \in \mathcal{F}} \left\lvert P_N f - Pf \right\rvert \xrightarrow[]{P} 0 \text{ as } N\to \infty.
\end{equation*}
We say that $\mathcal{F}$ admits an integrable envelope $F$ if $F(\xb) \ge \lvert f(\xb)\rvert$ for all $f \in {\cal F}$, with $\mathbb{E}_P[F(\Xb)] < \infty$.

\begin{lemma}[GC for fixed-covariance Gaussian distributions]
\label{lem:gc_fixed_cov}
Let $\varphi_{\Sigmab_0}(\xb)=\mathcal N(\xb ; 0,\Sigmab_0)$ with a given  $\Sigmab_0\succ0$, and define the class
\begin{equation*}
\mathcal{F}_0 = \{  f_\mub(\xb)=\varphi_{\Sigmab_0}(\xb-\mub),  \mub\in\mathbb R^d  \}.
\end{equation*}
If $P$ is a probability measure with continuous and bounded density, then $\mathcal{F}_0$ is Glivenko--Cantelli  and has an integrable envelope  with respect to $P$.
\end{lemma}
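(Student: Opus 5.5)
The plan is to prove the Glivenko--Cantelli property through a bracketing argument, with the envelope handled first since it is immediate. Every $f_\mub\in\mathcal F_0$ satisfies $0\le f_\mub(\xb)\le \varphi_{\Sigmab_0}(0)=(2\pi)^{-d/2}\det(\Sigmab_0)^{-1/2}$. The constant function $F\equiv(2\pi)^{-d/2}\det(\Sigmab_0)^{-1/2}$ is therefore an envelope, and it is trivially $P$-integrable because $P$ is a probability measure. By \citet[Thm.~2.4.1]{van1996weak}, it then suffices to show that $N_{[\,]}(\epsilon,\mathcal F_0,L_1(P))<\infty$ for every $\epsilon>0$.

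To build the brackets, I would split $\mathbb R^d$ into a compact part and a tail.

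\emph{Tail.} Fix $\epsilon>0$. For $R$ large, the set $\{\mub:\|\mub\|>R\}$ is covered by the single bracket $[0,u_R]$, where
\[
u_R(\xb)=\sup_{\|\mub\|>R}\varphi_{\Sigmab_0}(\xb-\mub).
\]
For $\|\xb\|\le R/2$ we have $\|\xb-\mub\|\ge R/2$, so $u_R(\xb)\le \varphi_{\Sigmab_0}$ evaluated at distance $R/2$, which tends to $0$. For $\|\xb\|>R/2$ we use $u_R\le F$. Hence
\[
P u_R\le F\,P(\|\Xb\|>R/2)+\sup_{\|\yb\|\ge R/2}\varphi_{\Sigmab_0}(\yb),
\]
and both terms go to $0$ as $R\to\infty$. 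Choose $R$ so that $Pu_R<\epsilon$.

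\emph{Compact part.} On $\{\|\mub\|\le R\}$, $\mub\mapsto f_\mub(\xb)$ is Lipschitz uniformly in $\xb$, with constant $L_0=\sup_{\yb}\|\nabla\varphi_{\Sigmab_0}(\yb)\|<\infty$. Take a $\eta$-net $\{\mub_j\}$ of the ball, of finite size $O((R/\eta)^d)$. The brackets
\[
\bigl[(f_{\mub_j}-L_0\eta)\vee 0,\; f_{\mub_j}+L_0\eta\bigr]
\]
cover the corresponding $f_\mub$ and have $L_1(P)$-width at most $2L_0\eta$. Choosing $\eta=\epsilon/(2L_0)$ gives finitely many $\epsilon$-brackets. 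Together with the tail bracket, this establishes the bracketing condition.

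I expect the only delicate step to be the tail. There the argument uses $P(\|\Xb\|>R/2)\to0$, which holds for any probability measure, together with the decay of the Gaussian kernel. The continuity and boundedness of the density of $P$ are not really needed here; they would matter only if one wanted bounds with respect to Lebesgue measure.

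An alternative route avoids the tail argument. $\mathcal F_0$ is the composition of the monotone map $t\mapsto c\,e^{-t/2}$ with the finite-dimensional vector space of quadratic functions $(\xb-\mub)^T\Sigmab_0^{-1}(\xb-\mub)$. This makes $\mathcal F_0$ a VC-subgraph class. A VC-subgraph class with an integrable envelope is GC by \citet[Thm.~2.6.7 and 2.4.3]{van1996weak}. I would mention this as a one-line backup argument.
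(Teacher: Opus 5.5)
Your argument is correct, but it takes a different route from the paper's, and the difference matters.

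\textbf{The paper's route.} The paper reduces the claim to ``integrable envelope plus total boundedness of $\mathcal F_0$ in $L_1(P)$'', citing \citep[Thm.~2.4.3]{van1996weak}. It then builds an $L_1(P)$-net in two pieces:
\begin{itemize}
\item large $\|\mub\|$ are handled by $\|f_\mub\|_{L_1(P)}=(p*\varphi_{\Sigmab_0})(\mub)\to 0$;
\item bounded $\|\mub\|$ are handled by a Lipschitz bound on a ball in $\xb$, a finite net of means, and a separate tail in $\xb$.
\end{itemize}
That criterion is not sufficient as stated. Theorem~2.4.3 requires control of random covering numbers in $L_1(P_N)$, and total boundedness in $L_1(P)$ plus an envelope does not give GC. For example, indicators of finite subsets of $[0,1]$ under the uniform law form a single point of $L_1(P)$ with envelope $1$. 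Yet taking the set $\{\Xb_1,\dots,\Xb_N\}$ gives $\sup_f |P_N f - Pf| = 1$.

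\textbf{Your route.} Your bracketing argument via \citep[Thm.~2.4.1]{van1996weak} uses the same two ingredients: Gaussian decay for distant means, and Lipschitz dependence on $\mub$ over a finite net. The difference is that you turn them into genuine brackets, which is exactly what a correct GC proof needs.

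The place where this requires more than the paper shows is the tail. A bracket needs one upper function $u_R$ dominating every $f_\mub$ with $\|\mub\|>R$ and having small $P$-integral. That is a bound on $P\sup_{\|\mub\|>R} f_\mub$, not merely on $\sup_{\|\mub\|>R} P f_\mub$. Your split at $\|\xb\|=R/2$ supplies this. Using the global constant $\sup_{\yb}\|\nabla\varphi_{\Sigmab_0}(\yb)\|$ also removes the need for the paper's separate truncation in $\xb$.

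\textbf{Your side remarks.} You are right that continuity and boundedness of the density are superfluous. They are not needed in the paper's argument either, since $\int\varphi_{\Sigmab_0}(\xb-\mub)\,dP(\xb)\to 0$ by dominated convergence for any $P$. Your VC-subgraph backup is also valid. The measurability hypothesis of Theorem~2.4.3 holds there because the class is pointwise separable, with $\mub\mapsto f_\mub$ continuous.
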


\begin{proof}[Proof of Lemma~\ref{lem:gc_fixed_cov}] 
Standard results on empirical processes guarantee that $
{\cal F}_0$ is GC if it admits an envelope and is totally bounded in $L_1(P)$ (see \citealp[Thm. 2.4.3]{van1996weak}). A straightforward envelope is $F(\xb) := (2\pi)^{-d/2}|\Sigmab_0|^{-1/2} \ge f_\mub(\xb) \ge 0$, and $\mathbb{E}_P[F(\Xb)] = F(\xb) < \infty$. It only remains to prove the total boundedness in $L_1(P)$. 
\vspace{.2cm}
\begin{enumerate}
    \item Denote by $p$ the density of $P$. Since 
    $\|f_\mub\|_{L_1(P)} = \int f_
    {\mub}(\xb) p(\xb) d \xb = \int \varphi_
    {\Sigmab_0}(\xb-\mub) p(\xb) d \xb  = p*\varphi_{\Sigmab_0}(\mub)$, then
    $\|f_\mub\|_{L_1(P)} \to 0$ as $\|\mub\|\to\infty$.
    It follows that for any $\varepsilon>0$, there exists some $M_\epsilon$ such that $\sup_{\|\mub\|>
    {M_\varepsilon}} \|f_\mub\|_{L_1(P)} < \varepsilon/2$.
    
    \item On the compact ball $B(0,R)=\{\xb:\|\xb\|\le R\}$, $(\xb,\mub)\mapsto f_\mub(\xb)$ is jointly $\mathcal{C}^1$. 
Hence, there exists $L_R$ with
$\sup_{\|\xb\|\le R} |f_\mub(\xb)-f_\nub(\xb)| \le L_R \|\mub-\nub\|$.
Choose $R$ so that $2P(B(0,R)^c) \|F\|_\infty < \varepsilon/2$. 
    \item Cover $\{\mub \in \Rset^d,  \|\mub\|\le M_\varepsilon\}$ with a finite $\delta$-net\footnote{A \emph{$\delta$-net} for a set $A$ in a metric space $(\mathcal{X},d)$ is a finite subset $N_\delta\subset \mathcal{X}$ such that for every $x\in A$ there exists $y\in N_\delta$ with $d(x,y)\le\delta$. In our case, the metric is $L_1(P)$ distance between densities.} $\{\mub_j, j  \in [J_\delta]\}$, where $\delta=\varepsilon/(2L_R)$. That means for all $\mub$ such that $\|\mub\|\le M_\varepsilon$, there exists $j \in [J_\delta]$ such that $\lVert \mub - \mub_j \rVert \le \delta$. Compute
\begin{equation*}
    \lVert f_\mub - f_{\mub_j} \rVert_{L_1(P)} = \int_{B(0,R)} \lvert f_\mub(\xb) - f_{\mub_j}(\xb) \rvert p(\xb) d\xb + \int_{B(0,R)^c} \lvert f_\mub(\xb) - f_{\mub_j}(\xb) \rvert p(\xb) d\xb.
\end{equation*}
On $B(0,R)$, $\lvert f_\mub(\xb) - f_{\mub_j}(\xb) \rvert \le L_R \lVert \mub - \mub_j \rVert \le \varepsilon/2$, so the first integral is smaller than $\varepsilon/2$. On $B(0,R)^c$, the integral is smaller than $2 \lVert F \rVert_\infty P(B(0,R)^c) < \varepsilon/2$. Thus, $\lVert f_\mub - f_{\mub_j} \rVert_{L_1(P)} \le \varepsilon$.
\end{enumerate}
Finally, if  $\|\mub\| > M_\varepsilon$ then by 1. $  \lVert f_\mub - 0 \rVert_{L_1(P)} = \lVert f_\mub \rVert_{L_1(P)} < \varepsilon/2$, which concludes the proof that that $\mathcal{F}_0$ is totally bounded in $L_1(P)$.

\end{proof}

\begin{lemma}[GC for bounded location-scale Gaussian distributions]
\label{lem:gc_bounded_cov}
Let
\begin{equation*}
\mathcal F = \{  f_{\mub,\Sigmab}(\xb) = \mathcal{N}(\xb ; \mub,\Sigmab) : \mub \in \Rset^d, \sigma_{\min}^2 \Ib \preceq\Sigmab\preceq \sigma_{\max}^2 \Ib  \}.
\end{equation*}

If $P$ is a probability measure with continuous and bounded density, then $\mathcal{F}$ is GC and has an integrable envelope with respect to $P$.
\end{lemma}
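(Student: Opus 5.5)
We follow the template of Lemma~\ref{lem:gc_fixed_cov} and invoke \citep[Thm.~2.4.3]{van1996weak}. It suffices to exhibit an integrable envelope for $\mathcal F$ and to show that $\mathcal F$ is totally bounded in $L_1(P)$.

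\textbf{Envelope.} For every $\sigma_{\min}^2\Ib\preceq\Sigmab$ we have $|\Sigmab|\ge\sigma_{\min}^{2d}$. Hence $0\le f_{\mub,\Sigmab}(\xb)\le F:=(2\pi\sigma_{\min}^2)^{-d/2}$ uniformly over the class. This $F$ is a finite constant, so $\mathbb E_P[F(\Xb)]=F<\infty$.

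\textbf{Total boundedness.} Fix $\varepsilon>0$ and split the parameter set into far means and a compact part.

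\emph{Far means.} We have $\|f_{\mub,\Sigmab}\|_{L_1(P)}=\int \mathcal N(\xb;\mub,\Sigmab)p(\xb)\,d\xb$. Since $p$ is a probability density, we may choose $R$ with $P(B(0,R)^c)<\varepsilon/(4F)$. If $\|\mub\|>M$, then
\[
\int_{B(0,R)} \mathcal N(\xb;\mub,\Sigmab)p\,d\xb \le \|p\|_\infty\,\mathcal N\bigl(B(0,R);\mub,\Sigmab\bigr)\le \|p\|_\infty\,P\bigl(\|\sigma_{\max}\Zb\|\ge M-R\bigr),
\]
with $\Zb$ standard normal. This tends to $0$ as $M\to\infty$, uniformly in $\Sigmab$, because the covariance is bounded above by $\sigma_{\max}^2\Ib$. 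The part of the integral over $B(0,R)^c$ is at most $F\,P(B(0,R)^c)<\varepsilon/4$. So for $M_\varepsilon$ large, every $f_{\mub,\Sigmab}$ with $\|\mub\|>M_\varepsilon$ lies within $\varepsilon/2$ of the zero function.

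\emph{Compact part.} The parameter set $K_\varepsilon=\{\|\mub\|\le M_\varepsilon,\ \sigma_{\min}^2\Ib\preceq\Sigmab\preceq\sigma_{\max}^2\Ib\}$ is compact. On $B(0,R)\times K_\varepsilon$ the map $(\xb,\mub,\Sigmab)\mapsto f_{\mub,\Sigmab}(\xb)$ is $\mathcal C^1$, because $\Sigmab^{-1}$ and $|\Sigmab|^{-1/2}$ are smooth on the set of covariances with eigenvalues bounded below. It is therefore Lipschitz in the parameters with some constant $L_R$, uniformly in $\xb\in B(0,R)$. Take a finite $\delta$-net of $K_\varepsilon$ with $\delta=\varepsilon/(4L_R)$, using the Frobenius norm on $\Sigmab$. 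The same decomposition as in Lemma~\ref{lem:gc_fixed_cov} then gives
\[
\|f_{\mub,\Sigmab}-f_{\mub_j,\Sigmab_j}\|_{L_1(P)}\le L_R\delta+2F\,P\bigl(B(0,R)^c\bigr)<\varepsilon.
\]
Adding the zero function to this net covers the far-mean part as well, so $\mathcal F$ is totally bounded in $L_1(P)$.

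\textbf{Main obstacle.} The delicate step is making the far-mean tail bound uniform over covariances. This relies on the upper eigenvalue bound $\sigma_{\max}^2$. The lower bound $\sigma_{\min}^2$ is needed separately, for the envelope and for the Lipschitz constant. Continuity of $p$ is not essential; boundedness of $p$ suffices.
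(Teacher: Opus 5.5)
\textbf{Gap in the reduction to total boundedness.} Your route matches the paper's proof essentially step for step: the constant envelope $(2\pi\sigma_{\min}^2)^{-d/2}$, the split into far means and a compact parameter set, and the Lipschitz net on a ball. Your far-mean bound is more explicit than the paper's. The genuine problem, which the paper's proof shares, is your opening reduction. \citep[Thm.~2.4.3]{van1996weak} does not say that an integrable envelope plus total boundedness in $L_1(P)$ gives GC. Its hypothesis concerns the \emph{random} covering numbers under the empirical measure, $L_1(P_N)$, of a truncated class, not covering numbers in $L_1(P)$. Total boundedness in $L_1(P)$ is genuinely insufficient. Take $P$ uniform on $[0,1]$ and $\mathcal F$ the indicators of finite subsets. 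Every element is $0$ in $L_1(P)$ and the envelope is $1$, yet $\sup_{f\in\mathcal F}|P_Nf-Pf|=1$ for every $N$ (take $f=\mathds 1_{\{X_1,\dots,X_N\}}$). So the final step, ``totally bounded, hence GC'', does not follow from what you proved. The delicate point is this step, not the uniformity of the tail bound over covariances.

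\textbf{Repair via brackets.} You can fix this with ingredients you already have, made pointwise, together with the bracketing criterion \citep[Thm.~2.4.1]{van1996weak}: finite $L_1(P)$ bracketing numbers for every $\varepsilon$ imply GC.

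\emph{Compact part.} The gradients $\nabla_{\mub}f=\Sigmab^{-1}(\xb-\mub)f$ and $\nabla_{\Sigmab}f=\tfrac12\bigl(\Sigmab^{-1}(\xb-\mub)(\xb-\mub)^T\Sigmab^{-1}-\Sigmab^{-1}\bigr)f$ are bounded uniformly over $\xb\in\mathbb R^d$ and over the convex parameter set. Hence $|f_{\mub,\Sigmab}(\xb)-f_{\mub_j,\Sigmab_j}(\xb)|\le L\delta$ for every $\xb$. Your $\delta$-net then directly gives brackets $[f_{\mub_j,\Sigmab_j}-L\delta,\,f_{\mub_j,\Sigmab_j}+L\delta]$ of $L_1(P)$-size $2L\delta$, with no need to restrict to $B(0,R)$.

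\emph{Far means.} Replace your integrated bound through $\|p\|_\infty$ by a pointwise one. Since $\Sigmab^{-1}\succeq\sigma_{\max}^{-2}\Ib$, for $\|\xb\|\le R<M<\|\mub\|$ we have $f_{\mub,\Sigmab}(\xb)\le F e^{-(M-R)^2/(2\sigma_{\max}^2)}$. So all far-mean functions lie in the single bracket $[0,\,F\mathds 1_{B(0,R)^c}+Fe^{-(M-R)^2/(2\sigma_{\max}^2)}]$. Its $L_1(P)$-size is below $\varepsilon$ once $R$, and then $M$, are taken large.

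This gives finitely many $\varepsilon$-brackets, hence GC. It uses no property of $P$ at all, not even the existence of a density, which sharpens your closing remark that boundedness of $p$ suffices.
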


\begin{proof}[Proof of Lemma~\ref{lem:gc_bounded_cov}]
The proof is analogous to that of Lemma~\ref{lem:gc_fixed_cov}. An admissible envelope is $F(\xb) := (2\pi\sigma_{\min}^2)^{-d/2} \ge \sup_{\mub,\Sigmab} f_{\mub,\Sigmab}(\xb)$, and $\mathbb{E}_P [F(\Xb)]<\infty$.
Analogous arguments can be used to prove the total boundedness in $L_1(P)$:
\begin{enumerate}
    \item For large $\|\mub\|$, $\|f_{\mub,\Sigmab}\|_{L_1(P)} = (p*\varphi_\Sigmab)(\mub)\to 0$ uniformly over $\Sigmab$ in the spectral band $[\sigma_{\min}^2,\sigma_{\max}^2]$, as $\varphi_{\Sigmab}$'s tails are controlled by $\sigma_{\max}$. Define $M_\varepsilon$ such that in $\{ \mub : \|\mub\|>M_\varepsilon \}$, functions are $\varepsilon/2$-close to $0$ in $L_1(P)$. 
    \item On the compact ball $B(0,R)=\{\xb:\|\xb\|\le R\}$, $(\xb,\mub,\Sigmab)\mapsto f_{\mub,\Sigmab}(\xb)$ is jointly $\mathcal{C}^1$. Because $\Sigmab^{-1}$ has eigenvalues in $[\sigma_{\max}^{-2}, \sigma_{\min}^{-2}]$, its derivatives are uniformly bounded, hence it is Lipschitz in $(\mub,\Sigmab)$, i.e. there exists $L_R > 0$ such that
\begin{equation*}
    \sup_{\xb \in B(0,R)} |f_{\mub,\Sigmab}(\xb) - f_{\mub',\Sigmab'}(\xb)| \leq L_R \|(\mub, \Sigmab) - (\mub', \Sigmab')\|.
\end{equation*}
    \item Cover $\{ \mub : \lVert \mub \rVert \le M_\varepsilon \} \times \{\Sigmab : \sigma_{\min}^2 \Ib \preceq \Sigmab \preceq \sigma_{\max}^2 \Ib\}$ with a $\delta$-net for $\delta = \varepsilon / (2 L_R)$. The $L_1(P)$-distance is small on $B(0,R)$, with the tail region $\{\xb : \|\xb\| > R\}$ controlled by choosing $R$ such that $2 P(B(0,R)^c) \|F\|_\infty < \varepsilon/2$.
\end{enumerate}
Similarly as before, we conclude that $\mathcal{F}$ is totally bounded, hence GC.
\end{proof}

\begin{lemma}[GC for finite mixtures]
\label{lem:gc_finite_mix}
Fix $K\in\mathbb N$. Let $\mathcal F$ be totally bounded
with respect to $P$ and with envelope $F\in L_1(P)$. Define the class of $K$-component mixtures with components in ${\cal F}$
\begin{equation*}
\mathcal F^{(K)} = \Big\{  f(\xb) = \sum_{k=1}^K \pi_k f_k(\xb) : \pib \in\Delta_K, f_k\in\mathcal F \Big\},
\end{equation*}
where $\Delta_K$ is the $K-1$-dimensional simplex. Then $\mathcal F^{(K)}$ is GC with envelope $K F\in L_1(P)$.
\end{lemma}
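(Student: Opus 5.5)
The plan is to use the same criterion as in Lemmas~\ref{lem:gc_fixed_cov} and~\ref{lem:gc_bounded_cov}: a class with an integrable envelope that is totally bounded in $L_1(P)$ is Glivenko--Cantelli \citep[Thm.~2.4.3]{van1996weak}. So I need two things for $\mathcal F^{(K)}$: an integrable envelope, and a finite $\varepsilon$-net in $L_1(P)$.

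\emph{Envelope.} For any $f=\sum_k \pi_k f_k$ with $\pib\in\Delta_K$, we have $|f|\le \sum_k \pi_k |f_k|\le \sum_k \pi_k F\le F$. Hence $F$ itself is an envelope, and a fortiori so is $KF\in L_1(P)$. The latter is the cruder bound valid for arbitrary nonnegative coefficients, which matches the statement.

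\emph{Total boundedness.} Fix $\varepsilon>0$ and proceed in three steps.
\begin{enumerate}
\item Take a finite $\varepsilon/2$-net $\{g_1,\dots,g_J\}$ of $\mathcal F$ in $L_1(P)$, which exists by assumption.
\item Take a finite $\eta$-net $\{\pib^{(1)},\dots,\pib^{(M)}\}$ of the compact simplex $\Delta_K$ in the $\ell_1$ norm, with $\eta=\varepsilon/(2\|F\|_{L_1(P)}+1)$.
\item Given $f=\sum_k\pi_k f_k$, choose $g_{j_k}$ with $\|f_k-g_{j_k}\|_{L_1(P)}\le\varepsilon/2$ and $\pib^{(m)}$ with $\|\pib-\pib^{(m)}\|_1\le\eta$.
\end{enumerate}
Then the triangle inequality gives
\[
\Big\|f-\sum_k\pi^{(m)}_k g_{j_k}\Big\|_{L_1(P)}\le \sum_k\pi_k\|f_k-g_{j_k}\|_{L_1(P)}+\sum_k|\pi_k-\pi^{(m)}_k|\,\|g_{j_k}\|_{L_1(P)}.
\]
The first sum is at most $\varepsilon/2$. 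For the second, I can take the net elements $g_j$ inside $\mathcal F$ (at the cost of halving radii), so that $\|g_j\|_{L_1(P)}\le \|F\|_{L_1(P)}$; the second sum is then at most $\eta\|F\|_{L_1(P)}\le\varepsilon/2$. The resulting family $\{\sum_k\pi^{(m)}_k g_{j_k}\}$ has at most $M J^K$ elements, so it is a finite $\varepsilon$-net for $\mathcal F^{(K)}$.

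Applying the cited GC criterion with the envelope $KF$ then concludes the proof. The only mildly delicate point is ensuring that the net elements are themselves dominated by the envelope, handled by choosing the net inside $\mathcal F$ as in step~3. If the cited theorem is formulated with bracketing numbers instead, the same construction with brackets $[\sum\pi_k^{(m)} l_{j_k}-\eta KF,\ \sum\pi_k^{(m)} u_{j_k}+\eta KF]$ gives the bracketing version.
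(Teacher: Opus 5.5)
Your construction is essentially the paper's own proof: the envelope bound, an $L_1(P)$-net of $\mathcal F$ combined with a finite net of $\Delta_K$, and the triangle inequality. Your refinements are all correct: $F$ itself is already an envelope, choosing the net inside $\mathcal F$ gives $\|g_j\|_{L_1(P)}\le\|F\|_{L_1(P)}$, and you use an $\ell_1$-net on the simplex.

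The genuine gap is the final step, and the paper has the same gap. The claim ``integrable envelope plus total boundedness in $L_1(P)$ implies Glivenko--Cantelli'' is not what \citet[Thm.~2.4.3]{van1996weak} says, and it is false.
\begin{itemize}
\item That theorem requires $P$-measurability and the \emph{random} entropy condition $\log N(\varepsilon,\mathcal F_M,L_1(P_N))=o_P^\ast(N)$ for all $\varepsilon,M>0$. Here $\mathcal F_M=\{f\mathds 1\{F\le M\}: f\in\mathcal F\}$ and $P_N$ is the empirical measure.
\item The criterion stated with the fixed measure $P$ (their Thm.~2.4.1) needs finite \emph{bracketing} numbers $N_{[\,]}(\varepsilon,\mathcal F,L_1(P))$.
\end{itemize}
For a counterexample, let $P$ have a density and take $\mathcal F=\{\mathds 1_A: A\subset\mathbb R^d \text{ finite}\}$. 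Every member equals $0$ in $L_1(P)$, so $\mathcal F$ is totally bounded with envelope $1$. Yet taking $A=\{\Xb_1,\dots,\Xb_N\}$ gives $\sup_{f\in\mathcal F}|P_Nf-Pf|=1$ for every $N$. Since $\mathcal F\subseteq\mathcal F^{(K)}$ (take all $f_k$ equal), the lemma as stated fails for every $K$. No argument from its hypothesis alone can succeed.

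The remedy is to strengthen the hypothesis on $\mathcal F$; your construction then does the work, in either of two ways.
\begin{itemize}
\item \emph{Bracketing.} Assume $N_{[\,]}(\varepsilon,\mathcal F,L_1(P))<\infty$ for all $\varepsilon$. Your bracketing sketch goes through once you first replace each bracket $[l_j,u_j]$ by $[\max(l_j,-F),\min(u_j,F)]$, which still brackets the same members of $\mathcal F$ and is no wider. The resulting bound $|l_{j_k}|,|u_{j_k}|\le F$ is exactly what lets the weight perturbation $\sum_k(\pi_k-\pi^{(m)}_k)l_{j_k}$ be absorbed into your $\eta KF$ margins. Thm.~2.4.1 then applies.
\item \emph{Random entropy.} Your net argument uses nothing specific to $P$. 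With $F$ as the common envelope, $\mathcal F^{(K)}_M=(\mathcal F_M)^{(K)}$, and for every probability measure $Q$ the argument yields $\log N(\varepsilon,\mathcal F^{(K)}_M,L_1(Q))\le K\log N(\varepsilon/4,\mathcal F_M,L_1(Q))+C(K,\varepsilon,M)$. Applied with $Q=P_N$, this transfers the entropy condition of Thm.~2.4.3 from $\mathcal F$ to $\mathcal F^{(K)}$, provided $\mathcal F^{(K)}$ is $P$-measurable.
\end{itemize}
In both versions the component class must satisfy more than $L_1(P)$ total boundedness. Lemmas~\ref{lem:gc_fixed_cov} and~\ref{lem:gc_bounded_cov}, whose criterion you are reusing, need the same upgrade; for those Gaussian classes, finite bracketing numbers are available.
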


\begin{proof}[Proof of Lemma~\ref{lem:gc_finite_mix}]
We need to prove that $KF$ is an integrable envelope and that $\mathcal{F}^{(K)}$ is totally bounded in $L_1(P)$. 
First, 
\begin{equation*}
    \lvert f(\xb) \rvert = \left\lvert \sum_{k=1}^K \pi_k f_k(\xb) \right\rvert \leq \sum_{k=1}^K \pi_k \lvert f_k(\xb) \rvert \leq \sum_{k=1}^K \lvert f_k(\xb) \rvert \leq K F(\xb),
\end{equation*}
hence $\mathbb E_P[KF]<\infty$, and $KF$ is an envelope.

The total boundedness of $\mathcal{F}^{(K)}$ comes from the total boundedness of $\mathcal{F}$:
\begin{enumerate}
    \item Let $\varepsilon>0$. Since $\mathcal F$ is totally bounded, there exists a finite $\varepsilon/2$-net $\{g^{(1)},\dots,g^{(J_\varepsilon)} \}$ in $L_1(P)$. For each $f_k$, denote by $g^{(j_k)}$ an $\varepsilon$-approximation.
    For each $g^{(j_k)}$, $\|g^{(j_k)}\|_{L_1(P)} \leq \|g^{(j_k)} - f_k\|_{L_1(P)} + \|f_k\|_{L_1(P)} = \varepsilon/2 + \|F\|_{L_1(P)} = F_\varepsilon  $.
    \item Approximate $\pi \in \Delta_K$ using a finite  grid $\Pi_\eta$ of mesh $\eta = \varepsilon / (2 K F_\varepsilon)$ and  denote  by $\tilde{\pi}_k$ the closest element of $\Pi_\eta$ to $\pi_k$. 
    \item Compute
    \begin{equation*}
        \left\lVert \sum_{k=1}^K \pi_k f_k - \sum_{k=1}^K \widetilde{\pi}_k g^{(j_k)} \right\rVert_{L_1(P)} \le \sum_{k=1}^K \pi_k \lVert f_k - g^{(j_k)} \rVert_{L_1(P)} + \sum_{k=1}^K 
        F_\varepsilon\lvert \pi_k - \widetilde{\pi}_k \rvert.
    \end{equation*}
    The two terms are bounded by $\varepsilon/2$, so the sum is bounded by $\varepsilon$.
\end{enumerate}
Thus $\mathcal{F}^{(K)}$ is totally bounded, hence GC.
\end{proof}

\begin{corollary}[GC for bounded-covariance Gaussian mixtures]
\label{cor:gc_gauss_mix}
Let $\mathcal G$ be the class of Gaussian distributions with covariance eigenvalues in $[\sigma_{\min}^2,\sigma_{\max}^2]$ and unconstrained means.
For fixed $K$, the class of $K$-component Gaussian mixtures with those covariance bounds is GC under $P$ and has an envelope in $L_1(P)$.
\end{corollary}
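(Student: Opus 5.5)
The plan is to combine Lemma~\ref{lem:gc_bounded_cov} with Lemma~\ref{lem:gc_finite_mix}. Let $\mathcal G$ be the class of Gaussian densities $\mathcal N(\cdot;\mub,\Sigmab)$ with $\mub\in\mathbb R^d$ unconstrained and $\sigma_{\min}^2\Ib\preceq\Sigmab\preceq\sigma_{\max}^2\Ib$. This is exactly the class $\mathcal F$ of Lemma~\ref{lem:gc_bounded_cov}. Since $P$ has a continuous and bounded density (the standing assumption on $p$), that lemma applies directly. It gives the constant envelope $F(\xb)=(2\pi\sigma_{\min}^2)^{-d/2}$, which is in $L_1(P)$ trivially since $P$ is a probability measure.

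The one point needing care is that Lemma~\ref{lem:gc_finite_mix} assumes $\mathcal F$ is \emph{totally bounded} in $L_1(P)$, not merely GC. So I will invoke what was actually established inside the proof of Lemma~\ref{lem:gc_bounded_cov}. There, GC was deduced from total boundedness in $L_1(P)$ together with an integrable envelope (via \citealp[Thm.~2.4.3]{van1996weak}). Steps 1--3 of that proof show that for every $\varepsilon>0$ the class admits a finite $\varepsilon$-net in $L_1(P)$. The net consists of the finitely many Gaussians indexed by a $\delta$-net of the compact set $\{\|\mub\|\le M_\varepsilon\}\times\{\sigma_{\min}^2\Ib\preceq\Sigmab\preceq\sigma_{\max}^2\Ib\}$, plus the zero function for the far-away means. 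I will state this explicitly as the hypothesis fed into Lemma~\ref{lem:gc_finite_mix}.

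With $\mathcal F=\mathcal G$ totally bounded and enveloped by $F\in L_1(P)$, Lemma~\ref{lem:gc_finite_mix} then yields the conclusion. The mixture class $\mathcal G^{(K)}=\{\sum_k\pi_k f_k:\pib\in\Delta_K,f_k\in\mathcal G\}$ is GC with envelope $KF=K(2\pi\sigma_{\min}^2)^{-d/2}\in L_1(P)$. In fact $F$ itself is already an envelope, since $\sum_k\pi_k f_k\le\max_k f_k\le F$, but $KF$ suffices.

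Finally, I will remark that any subclass of a GC class with an integrable envelope is again GC with the same envelope, because the uniform deviation $\sup|P_Nf-Pf|$ over a subclass is dominated by the supremum over the larger class. This covers the parameterisation actually used in $\Theta_K$, with bounded logits and bounded means, which restricts the weights to a sub-simplex and the means to a box. I expect no real obstacle. The only delicate point is the one above: the GC conclusion of Lemma~\ref{lem:gc_bounded_cov} must be upgraded to the total-boundedness hypothesis required by Lemma~\ref{lem:gc_finite_mix}, and this is supplied by the proof of Lemma~\ref{lem:gc_bounded_cov} rather than its statement.
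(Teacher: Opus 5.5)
Your proposal is correct and is exactly the paper's argument. The paper's proof also just observes that the proof of Lemma~\ref{lem:gc_bounded_cov} supplies total boundedness of $\mathcal G$ in $L_1(P)$ together with an integrable envelope, i.e.\ the hypotheses of Lemma~\ref{lem:gc_finite_mix}, and your extra remarks (that $F$ itself already envelopes the mixtures, and that the subclass induced by $\Theta_K$ remains GC) are valid.

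One caveat you inherit from the paper rather than introduce: total boundedness in $L_1(P)$ plus an integrable envelope does not by itself imply GC (\citealp[Thm.~2.4.3]{van1996weak} is about random $L_1(P_N)$ covering numbers). For full rigour, note that the construction in Steps~1--3 of Lemma~\ref{lem:gc_bounded_cov} in fact yields finite $L_1(P)$ bracketing numbers, using a Lipschitz bound uniform over $\xb\in B(0,R)$, the envelope off the ball, and a far-mean tail bounded with the supremum over $\|\mub\|>M_\varepsilon$ taken inside the integral; these brackets transfer to the mixtures via a weight grid, and you can then apply the bracketing GC theorem \citep[Thm.~2.4.1]{van1996weak}.
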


\begin{proof}[Proof of Corollary~\ref{cor:gc_gauss_mix}]
In the proof of Lemma~\ref{lem:gc_bounded_cov}, we proved the sufficient conditions to apply Lemma~\ref{lem:gc_finite_mix} with $\mathcal F=\mathcal G$, which proves in turn the result above.
\end{proof}

Finally, we present the $P$-Donsker property. It upgrades GC to a uniform CLT, which yields rates for empirical averages over $\mathcal F$ and underpins our asymptotic normality results. Consider a probability measure $P$ and some i.i.d. samples $\Xb_1, \dots, \Xb_N$ of a random variable $\Xb \sim P$. A class of functions $\mathcal F$ on $\mathbb{R}^d$ is $P$-Donsker if  the empirical process
$G_N$ defined as $G_N f \coloneqq \sqrt{N} (P_N - P)f$ for $f \in {\cal F}$, 
viewed as a random element in $\ell_\infty(\mathcal F)$, converges in distribution to a tight mean-zero Gaussian process $G_P$ with covariance
\begin{equation*}
\mathrm{Cov}\big(G_P f, G_P g\big) = P\left[(f-Pf)(g-Pg)\right],\qquad f,g\in\mathcal F .
\end{equation*}
Equivalently, $\mathcal F$ is $P$-Donsker if a functional CLT holds uniformly over $f\in\mathcal F$. A sufficient condition is that $\mathcal F\subset L_2(P)$ admits an $L_2(P)$ envelope and has a finite bracketing entropy integral, for example
\begin{equation*}
\int_0^{\delta}\sqrt{\log N_{[\,]}\big(\varepsilon,\mathcal F,L_2(P)\big)} d\varepsilon < \infty,
\end{equation*}
where $\log N_{[\,]}\big(\varepsilon,\mathcal F,L_2(P)\big)$ denotes the $\varepsilon$-bracketing entropy for ${\cal F}$ in $L_2(P)$, see~\cite{van1996weak}, Chapter~2.5.

\begin{lemma}[GC and Donsker properties for Gaussian mixtures and their gradients]\label{lem:gc-donsker-mixture}
For fixed $K$ and compact $\Theta_K$, the classes 
$\mathcal{Q}=\{q_{\Lambdab}:\Lambdab\in\Theta_K\}$, 
$\mathcal{Q}^{(u)}_\nabla=\{[\nabla_{\Lambdab} q_{\Lambdab}]_u:\Lambdab\in\Theta_K\}$ for $u = 1, \dots, \dim(\Theta_K)$ and $\mathcal{Q}^{(u, v)}_{\nabla^2}=\{[\nabla^2_{\Lambdab} q_{\Lambdab}]_{uv}:\Lambdab\in\Theta_K\}$ for $u, v = 1, \dots, \dim(\Theta_K)$ 
are $P$-Donsker (hence Glivenko--Cantelli) with envelopes in $L_2(P)$. 
\end{lemma}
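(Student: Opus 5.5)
The plan is to show that each of the three classes is Lipschitz in the parameter, with a square-integrable Lipschitz constant, on the compact index set $\Theta_K$. Once that is in place, the bracketing bound for parametric Lipschitz classes gives the conclusion: \citep[Thm.~2.7.11]{van1996weak} yields $N_{[\,]}(2\varepsilon\|M\|_{L_2(P)},\mathcal F,L_2(P))\le N(\varepsilon,\Theta_K,\|\cdot\|)\lesssim \varepsilon^{-\dim(\Theta_K)}$. The bracketing entropy is then of order $\log(1/\varepsilon)$, so the bracketing integral is finite and each class is $P$-Donsker, hence Glivenko--Cantelli. This is the same strategy as in Lemmas~\ref{lem:gc_bounded_cov} and~\ref{lem:gc_finite_mix}, upgraded from total boundedness in $L_1(P)$ to bracketing in $L_2(P)$.

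The first step is to write everything explicitly in the parametrisation $\Lambdab=(v_{1:K-1},\lambdab_{1:K})$. We have $q_\Lambdab(\xb)=\sum_k \pi_k(\vb)\,\mathcal N(\xb;\mub_k,\Sigmab_k)$, with $\pi_k(\vb)$ a softmax of the logits and $(\mub_k,\Sigmab_k)$ smooth functions of $\lambdab_k$ on $\Theta$. On $\Theta$ the map $\lambdab\mapsto(\mub,\Sigmab)$ is $\mathcal C^\infty$, since $\Sigmab^{-1}$ has eigenvalues in $[\sigma_{\max}^{-2},\sigma_{\min}^{-2}]$. Hence $(\xb,\Lambdab)\mapsto q_\Lambdab(\xb)$ is $\mathcal C^\infty$ on $\mathbb R^d\times\Theta_K$. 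Every partial derivative of order up to three is a finite sum of terms of the form (smooth bounded function of $\Lambdab$) $\times$ (polynomial in $\xb-\mub_k$ of degree at most $6$) $\times\,\mathcal N(\xb;\mub_k,\Sigmab_k)$.

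The second step is to bound these quantities uniformly. Because the means are bounded by $\mub_{\max}$ and the covariances lie in the band $[\sigma_{\min}^2,\sigma_{\max}^2]$, any polynomial times a Gaussian density is bounded uniformly in $\xb$: $\sup_{\xb}\|\xb-\mub\|^m e^{-\|\xb-\mub\|^2/(2\sigma_{\max}^2)}<\infty$. It follows that $\sup_{\Lambdab\in\Theta_K}\sup_{\xb}|\partial^\alpha_\Lambdab q_\Lambdab(\xb)|\le C_\alpha$ for all $|\alpha|\le 3$. The constants $C_0,C_1,C_2$ serve as envelopes for $\mathcal Q$, $\mathcal Q^{(u)}_\nabla$ and $\mathcal Q^{(u,v)}_{\nabla^2}$, and they lie in $L_2(P)$ trivially since $P$ is a probability measure. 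Even without the sup bound, any polynomial envelope would do because $\mathrm{supp}(p)$ is bounded.

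The third step is the Lipschitz property. By the mean value theorem, convexity of $\Theta$ in natural coordinates, and the bound on derivatives of one order higher, $|f_\Lambdab(\xb)-f_{\Lambdab'}(\xb)|\le M\|\Lambdab-\Lambdab'\|$ with a constant $M$ built from $C_1,C_2,C_3$. Convexity of $\Theta_K$ needs care only because the lex-ordering restriction to $\overline{\Theta_K^{\mathrm{lex}}}$ may break it. This is harmless: the derivative bounds hold on the convex superset $\Theta_K$, and subclasses of Donsker classes are Donsker.

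The main obstacle I expect is this domain-geometry point, namely checking that $\Theta$ is convex in natural coordinates so that the mean value theorem applies. I would verify it as follows. The precision constraint $\sigma_{\max}^{-2}\Ib\preceq\Sigmab^{-1}\preceq\sigma_{\min}^{-2}\Ib$ is convex, but the constraint $\|\Sigmab\,\boldsymbol\eta\|_\infty\le\mub_{\max}$, written in terms of $\boldsymbol\eta=\Sigmab^{-1}\mub$, need not be. If convexity fails, I would instead apply the Lipschitz bound on the compact convex hull of $\Theta_K$, which remains inside a slightly larger band where the same derivative bounds hold. Alternatively, I would use a finite cover of $\Theta_K$ by small convex balls and take the Lipschitz constant as the maximum over the cover.
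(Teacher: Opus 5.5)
Your proposal is correct and follows the paper's proof essentially step for step. The shared ingredients are bounded envelopes (polynomial envelopes would also do, since the support is bounded), a mean-value-theorem Lipschitz bound in $\Lambdab$ obtained from derivatives of one order higher, and the parametric bracketing bound \citep[Thm.~2.7.11]{van1996weak}, which gives a finite bracketing integral. Your extra care about convexity is a legitimate refinement: the paper tacitly assumes the segment $\Lambdab_t$ stays in $\Theta_K$, and your convex-hull fix works because the precision band is convex and taking the hull only enlarges the bound on the means.
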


\begin{proof}[Proof of Lemma~\ref{lem:gc-donsker-mixture}] 
First, we prove that $\mathcal{Q}$ and $\mathcal{Q}_\nabla$ admit envelopes in  $L_2(P)$, then we use the fact that they are Lipschitz to exhibit their Donsker property.

\textit{Step 1. $L_2$ envelopes.}
On set $\Theta_K$, by continuity of Gaussian densities and compactness of $\Theta_K$, there exist constants $C_0,C_1,C_2<\infty$ (depending only on $\Theta_K$) such that, for all $\xb \in \mathbb{R}^d$ and $\Lambdab\in\Theta_K$, for all $u, v = 1, \dots, \dim(\Theta_K)$,
\begin{equation*}
\begin{split}
&0\le q_{\Lambdab}(\xb)\le C_0,\qquad 
\lvert [\nabla_{\Lambdab} q_{\Lambdab}(\xb)]_{u} \rvert \le C_1 (1+\|\xb\|+\|\xb\|^2), \\
&\lvert [\nabla^2_{\Lambdab} q_{\Lambdab}(\xb)]_{uv} \rvert \le C_2 (1+\|\xb\|+\|\xb\|^2+\|\xb\|^3+\|\xb\|^4).
\end{split}
\end{equation*}
Hence an envelope for ${\cal Q}$ is $Q(\xb)\equiv C_0\in L_2(P)$. For $u = 1, \dots, \dim(\Theta_K)$, since $P$ is supported on a bounded set $\mathcal S\subset\mathbb R^d$, there is $R<\infty$ with $\|\Xb\|\le R$ a.s., and thus an envelope for ${\cal Q}^{(u)}_\nabla$ is
$Q_\nabla(\Xb) \equiv C_1(1+R+R^2)$ a.s. In particular $Q_\nabla\in L_2(P)$, therefore
${\mathcal Q}^{(u)}_\nabla$ admits an $L_2(P)$ envelope. Similarly, for $u, v = 1, \dots, \dim(\Theta_K)$, an envelope for ${\cal Q}^{(u,v)}_{\nabla^2}$ is
$Q_{\nabla^2}(\Xb) \equiv C_2(1+R+R^2+R^3+R^4)$ a.s. and ${\mathcal Q}^{(u,v)}_{\nabla^2}$ also admits an $L_2(P)$ envelope.

\medskip
\textit{Step 2. Lipschitz into $L_2(P)$.}
Fix $\Lambdab,\widetilde\Lambdab\in\Theta_K$ and set $\Lambdab_t=\widetilde\Lambdab+t(\Lambdab-\widetilde\Lambdab)$, $t\in[0,1]$.
By the mean value theorem in Banach spaces,
\begin{equation*}
q_{\Lambdab}-q_{\widetilde\Lambdab}
=\int_0^1 \langle \nabla_{\Lambdab} q_{\Lambdab_t}, \Lambdab-\widetilde\Lambdab\rangle dt,
\end{equation*}
so
\begin{equation*}
\|q_{\Lambdab}-q_{\widetilde\Lambdab}\|_{L_2(P)}
\le \|\Lambdab-\widetilde\Lambdab\|\sup_{\Lambdab'\in\Theta_K}\|\nabla_{\Lambdab}q_{\Lambdab'}\|_{L_2(P)}
\le \|Q_\nabla\|_{L_2(P)}\|\Lambdab-\widetilde\Lambdab\|.
\end{equation*}
By the same argument applied to $\nabla_{\Lambdab}q_{\Lambdab}$ and $\nabla^2_{\Lambdab}q_{\Lambdab}$ and using that all second derivatives
$\nabla^2_{\Lambdab}q_{\Lambdab}(\xb)$ and third derivatives $\nabla^3_{\Lambdab}q_{\Lambdab}(\xb)$ are uniformly bounded on $\mathcal S\times\Theta_K$, since Gaussian mixture densities have continuous second and third derivatives and $\Theta_K$ ensures bounded means and eigenvalues bounded away from $0$ and $\infty$,
there exists $C'_1, C'_2 <\infty$ with
\begin{equation*}
\|[\nabla_{\Lambdab}q_{\Lambdab}]_u-[\nabla_{\Lambdab}q_{\widetilde\Lambdab}]_u\|_{L_2(P)}
\le C'_1 \|\Lambdab-\widetilde\Lambdab\|, \quad \|[\nabla^2_{\Lambdab}q_{\Lambdab}]_{uv}-[\nabla^2_{\Lambdab}q_{\widetilde\Lambdab}]_{uv} \|_{L_2(P)}
\le C'_2 \|\Lambdab-\widetilde\Lambdab\|,
\end{equation*}
for all $u, v = 1, \dots, \dim(\Theta_K)$.
Thus all maps $\Lambdab\mapsto q_{\Lambdab}$, $\Lambdab\mapsto [\nabla_{\Lambdab}q_{\Lambdab}]_u$ and $\Lambdab\mapsto [\nabla^2_{\Lambdab}q_{\Lambdab}]_{uv}$ are Lipschitz into $L_2(P)$ with $L_2$ envelopes.

\medskip
\textit{Step 3. Donsker.}
By the Euclidean parametric-Lipschitz bracketing bound \citep[Thm.~2.7.11]{van1996weak}, the $L_2(P)$ bracketing numbers of $\mathcal Q$, ${\mathcal Q}^{(u)}_\nabla$ and ${\mathcal Q}^{(u,v)}_{\nabla^2}$ grow at most polynomially on compact $\Theta_K$.
Therefore, the bracketing integral is finite, and the bracketing Donsker theorem \citep[Chp.~2.5]{van1996weak} yields that $\mathcal Q$, ${\mathcal Q}^{(u)}_\nabla$ and ${\mathcal Q}^{(u,v)}_{\nabla^2}$ are $P$-Donsker, for all $u, v = 1, \dots, \dim(\Theta_K)$.
\end{proof}

\subsection{Proof of Theorem~\ref{thm:annealing_main}}
\label{app:thm_annealing}

We prove Theorem~\ref{thm:annealing_main}, establishing the consistency of the optimal variational solution $q_\omega^\star := \argmax_\mathcal{Q} \mathcal{L}_\omega(q)$, where $\mathcal{Q}$ is the family of truncated Gaussian mixtures of $K \ge I$ components with upper-bounded covariances, as $\omega \rightarrow 0$. We proceed in three steps. First, we identify the asymptotics of the Gibbs measure $g_\omega$ via a Laplace expansion near each mode of $f$ (Prop.~\ref{prop:laplace} and~\ref{prop:gibbs_mean_cov}). Then, we construct a truncated mixture $\widetilde{q}^\mathcal{S}_\omega$ on $\mathcal{S}$ such that $\KL(\widetilde{q}^\mathcal{S}_\omega || g_\omega) \rightarrow 0$ (Prop.~\ref{prop:kl_rate}). This implies $\KL(q_\omega^\star || g_\omega) \rightarrow 0$, so finally, combining with the Laplace expansions of $g_\omega$, we deduce the claims of Theorem~\ref{thm:annealing_main}.

\subsubsection*{Setup}

Let $f\in \mathcal{C}^3(\mathcal{S})$ be bounded with (finitely many) nondegenerate global modes $\{\xb_i^\star\}_{i=1}^I$, and let $f^\star := \sup_{\xb \in {\cal S}} f(\xb) = f(\xb_i^\star)$. Nondegeneracy means $\Hb_i = -\nabla^2 f(\xb_i^\star) \succ 0$, for all $i \in [I]$. These modes are isolated and in the interior of $\mathcal{S}$, so there exist disjoint open, bounded neighborhoods $U_1, \dots, U_I \subset \mathcal{S}$ of the $\xb_i^\star$ such that each $U_i$ contains no other critical point of $f$. Without loss of generality, assume that $f$ is strictly positive on $\bigcup_{i=1}^I U_i$. Define $U_0 :=\mathcal{S} \setminus \bigcup_{i=1}^I U_i$. 
The restricted objective over a family $\mathcal{Q}$ is
\begin{equation*}
\mathcal{L}_\omega(q) = \mathbb{E}_{q}[f(\Xb)] + \omega \mathcal{H}_\mathcal{S}(q), \qquad q \in \mathcal Q.
\end{equation*}
Here, the family $\mathcal{Q}$ is the $\mathcal{S}$-truncated Gaussian mixture family with $K \ge I$ components, and all eigenvalues of covariance matrices in $(0,\sigma_{\max}^2]$. Recall that for any Gaussian mixture $q_\Lambdab$ on $\mathbb R^d$, we write its $\mathcal{S}$-truncation
\begin{equation*}
q_\Lambdab^\mathcal{S}(\xb)
:= \frac{q_\Lambdab(\xb) \mathds{1}_\mathcal{S}(\xb)}
        {\int_\mathcal{S} q_\Lambdab(\yb) d\yb}.
\end{equation*} 

Let $q_\omega^\star$ denote any choice of maximiser of $\mathcal{L}_\omega(q)$ in $\mathcal{Q}$, for a fixed $\omega>0$. To simplify the notation, we denote the conditional distributions of $q_\omega^\star$ and $g_\omega$ on the regions $U_i$, $i = 0, \dots, I$, respectively by $\varphi_{i,\omega}$ and $\psi_{i,\omega}$, defined as follows:
\begin{equation}
\varphi_{i,\omega}(\xb)
:= q_\omega^{\star,U_i}(\xb) = \frac{q_\omega^\star(\xb)\mathds 1_{U_i}(\xb)}{\alpha_{i,\omega}},
\qquad
\psi_{i,\omega}(\xb)
:= g_\omega^{U_i}(\xb) = \frac{g_\omega(\xb)\mathds 1_{U_i}(\xb)}{\gamma_{i,\omega}},
\label{eq:varphi_psi}
\end{equation}
where
\begin{equation}
\alpha_{i,\omega} := \int_{U_i} q_\omega^\star(\xb) d\xb,
\qquad
\gamma_{i,\omega} := \int_{U_i} g_\omega(\xb) d\xb.
\label{eq:alpha_gamma}
\end{equation}

\subsubsection*{Step 1: Laplace expansion and local Gibbs measure properties}

We start by applying Laplace’s method around each isolated mode $\xb_i^\star$ of $f$ to the unrestricted Gibbs optimiser $g_\omega$. 
The two following propositions characterise the shape of the Gibbs measure under annealing. 

\begin{proposition}[Laplace expansion near a mode]
\label{prop:laplace}
Fix $i\in[I]$ and consider an open neighborhood $U_i$ of mode $\xb_i$. Then,  as $\omega\rightarrow 0$,
\begin{equation*}
\int_{U_i} \exp\{f(\xb)/\omega\}d\xb 
= e^{f^\star/\omega} (2\pi\omega)^{d/2} (\det \Hb_i)^{-1/2} \big(1+o(1)\big),
\end{equation*}
and for any bounded continuous function $h$,
\begin{equation*}
\int_{U_i} h(\xb) g_\omega(d\xb) 
= h(\xb_i^\star) c_i(\omega) + \frac{\omega}{2}\tr\big( \Hb_i^{-1}\nabla^2 h(\xb_i^\star)\big) c_i(\omega) + o(\omega),
\end{equation*}
where
\begin{equation*}
c_i(\omega) = c_i  + o(1),
\qquad c_i := \frac{ (\det \Hb_i)^{-1/2} }{\sum_{j=1}^I (\det \Hb_j)^{-1/2} }.
\end{equation*}
Moreover, $g_\omega \rightarrow \sum_{i=1}^I c_i \delta_{\xb_i^\star}$ weakly, and there exists $C, \eta > 0$ such that $g_{\omega}(U_0) \le C e^{-\eta/\omega}$.
\end{proposition}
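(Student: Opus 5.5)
The plan is to carry out the classical Laplace method on each $U_i$, after shrinking $U_i$ if needed to a small ball where a Taylor expansion of $f$ is valid. The region outside these balls is controlled by a uniform gap below $f^\star$.

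The first step is to fix a small radius $r>0$ with $B(\xb_i^\star,r)\subset U_i$ and $\Hb_i/2 \preceq -\nabla^2 f \preceq 2\Hb_i$ on this ball; this is possible since $f\in\mathcal C^2$ and $\Hb_i\succ 0$. Since the $\xb_i^\star$ are the only global maximisers and $f$ is continuous and bounded, I claim there is $\eta>0$ with $f\le f^\star-\eta$ on $\mathcal S\setminus\bigcup_i B(\xb_i^\star,r)$. This needs a compactness argument: take $\overline{\mathcal S}$ bounded and assume $f$ extends continuously with no maximiser on $\partial\mathcal S$. I would state this as the working interpretation of ``finitely many global modes in $\mathrm{int}(\mathcal S)$''.

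This gap is the main obstacle, since a supremum approached at the boundary would break it. With the gap in hand, the contribution of $U_i\setminus B(\xb_i^\star,r)$ and of $U_0$ to $\int e^{f/\omega}$ is at most $|\mathcal S|e^{(f^\star-\eta)/\omega}$. This is exponentially negligible relative to $e^{f^\star/\omega}\omega^{d/2}$.

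The second step handles the ball itself. There I substitute $\xb=\xb_i^\star+\sqrt\omega\,\yb$ and write
\[
(f(\xb)-f^\star)/\omega=-\tfrac12\yb^T\Hb_i\yb+\sqrt\omega\,R(\yb),\qquad |R(\yb)|\le C\|\yb\|^3,
\]
using the $\mathcal C^3$ hypothesis. The quadratic lower bound on $-\nabla^2 f$ gives the domination $e^{(f-f^\star)/\omega}\le e^{-\yb^T\Hb_i\yb/4}$. Dominated convergence then yields $\int_{B}e^{f/\omega}=e^{f^\star/\omega}\omega^{d/2}(2\pi)^{d/2}(\det\Hb_i)^{-1/2}(1+o(1))$, which is the first claim.

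Summing over $i$ gives $Z_\omega(f)=e^{f^\star/\omega}(2\pi\omega)^{d/2}\sum_j(\det\Hb_j)^{-1/2}(1+o(1))$. Hence $\gamma_{i,\omega}=c_i(\omega)\to c_i$, and $g_\omega(U_0)\le |\mathcal S|e^{-\eta/\omega}/(Z_\omega e^{-f^\star/\omega})\le Ce^{-\eta'/\omega}$ after absorbing the polynomial factor $\omega^{-d/2}$ into a slightly smaller $\eta'$.

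The third step is the expansion for $h$. For this I need $h$ to be $\mathcal C^2$ near $\xb_i^\star$; I would note that the statement implicitly requires this, since $\nabla^2 h$ appears. I would Taylor-expand $h(\xb_i^\star+\sqrt\omega\yb)=h(\xb_i^\star)+\sqrt\omega\nabla h^T\yb+\tfrac\omega2\yb^T\nabla^2h\,\yb+o(\omega\|\yb\|^2)$, and likewise expand $e^{\sqrt\omega R}=1+\sqrt\omega R_3(\yb)+O(\omega\|\yb\|^6)$, with $R_3$ the cubic term. I then integrate against the Gaussian weight $e^{-\yb^T\Hb_i\yb/2}$ and normalise by $Z_\omega$.

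Odd moments vanish, so the $\sqrt\omega$ terms cancel. The order-$\omega$ term $\tfrac\omega2\mathrm{tr}(\Hb_i^{-1}\nabla^2h)$ comes from the Gaussian second moment. Cross terms such as $\nabla h^T\yb\,R_3(\yb)$, the quartic correction of $f$, and the second-order correction to the normalising constant are all $O(\omega)$, multiplied by $h(\xb_i^\star)$ or $\nabla h(\xb_i^\star)$. These I absorb into $c_i(\omega)$, which is only specified up to $c_i+o(1)$ and may be taken as $\int_{U_i}g_\omega$ plus its $O(\omega)$ correction. I would check that this bookkeeping is consistent with the displayed formula, reading $c_i(\omega)$ as the exact local mass so that the identity holds with $o(\omega)$ remainder.

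Finally, weak convergence follows by applying the $h$-expansion at order $o(1)$, which only needs continuity of $h$ plus the dominated convergence of step two. Summing over $i$, and using $g_\omega(U_0)\to0$ together with boundedness of $h$, gives $\int h\,dg_\omega\to\sum_i c_ih(\xb_i^\star)$.
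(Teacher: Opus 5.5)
Your treatment of the first display, of the bound $g_\omega(U_0)\le Ce^{-\eta'/\omega}$ and of the weak limit is sound. You are also right that the uniform gap $f\le f^\star-\eta$ away from the modes needs a compactness and no-maximum-at-$\partial\mathcal S$ assumption, which the paper simply asserts.

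The genuine gap is in your third step. Write $\int_{U_i}h\,dg_\omega=h(\xb_i^\star)\,g_\omega(U_i)+\int_{U_i}\big(h-h(\xb_i^\star)\big)\,dg_\omega$. Reading $c_i(\omega)$ as the exact local mass then legitimately absorbs every $O(\omega)$ correction proportional to $h(\xb_i^\star)$. However, the cross term $\omega\,\nabla h(\xb_i^\star)^T\mathbb E[\yb R_3(\yb)]$, with $\yb\sim\mathcal N(0,\Hb_i^{-1})$, is proportional to $\nabla h(\xb_i^\star)$, so it cannot be absorbed into any $c_i(\omega)$. By Isserlis' formula it equals $\omega\,\nabla h(\xb_i^\star)^T\mathbf m_i$, where $\mathbf m_i=\tfrac12\Hb_i^{-1}\mathbf t_i$ and $(\mathbf t_i)_a=\sum_{b,c}\partial^3_{abc}f(\xb_i^\star)\,(\Hb_i^{-1})_{bc}$. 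In other words, the local mean of $g_\omega$ drifts by $\omega\mathbf m_i+o(\omega)$ whenever $f$ is asymmetric at the mode.

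The consistency check you defer would therefore fail, and the second display is in fact false as stated. Take $d=1$, $\mathcal S=(-1,1)$, $f(x)=-x^2/2+ax^3/6$ with $0<|a|<3$, which has a unique nondegenerate global mode at $0$ with $H=1$, and take $h=\sin$. Since $h(0)=h''(0)=0$, the display predicts $\int_U h\,dg_\omega=o(\omega)$ whatever $c(\omega)$ is. A direct computation instead gives $\tfrac a2\,\omega+o(\omega)$.

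The paper's own sketch (``Gaussian moment calculations'') skips the same term, so there is no missing trick to find. If you keep the drift, your Gaussian-coordinate argument proves the corrected statement, with $f\in\mathcal C^3$ and $h\in\mathcal C^2$ near $\xb_i^\star$:
\[
\int_{U_i}h\,dg_\omega=g_\omega(U_i)\Big[h(\xb_i^\star)+\omega\,\nabla h(\xb_i^\star)^T\mathbf m_i+\frac{\omega}{2}\mathrm{tr}\big(\Hb_i^{-1}\nabla^2 h(\xb_i^\star)\big)\Big]+o(\omega).
\]
The displayed form holds only under an extra hypothesis such as $\nabla h(\xb_i^\star)^T\mathbf m_i=0$ (e.g.\ $\mathbf t_i=0$).

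The same omission makes the claim $\mathbb E_{\psi_{i,\omega}}[\Xb]=\xb_i^\star+O(e^{-\eta/\omega})$ in Proposition~\ref{prop:gibbs_mean_cov} incorrect; the true error is $\omega\mathbf m_i+o(\omega)$. Theorem~\ref{thm:annealing_main}, however, only uses the $o(1)$ versions of these statements, and your first-order argument does deliver those.
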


\begin{proof}[Proof of Proposition~\ref{prop:laplace}]
The proof uses the classical Laplace method (\citealp{hwang1980laplace}; also see \citealp[Proof of Thm. 2]{leminh2025natural}): in $U_i$, write the Taylor expansion of $f(\xb)$ at $\xb_i^\star$,
\begin{equation*}
f(\xb) = f(\xb_i^\star) - \frac{1}{2} (\xb-\xb_i^\star)^T \Hb_i (\xb-\xb_i^\star) + R_i(\xb),
\end{equation*}
with $R_i(\xb) \le L_i \|\xb-\xb_i^\star\|^3$, where $L_i > 0$ depends on $U_i$. Then
\begin{equation*}
\int_{U_i} \exp(f(\xb)/\omega) d\xb 
= \exp(f(\xb_i^\star)/\omega) \int_{U_i} 
\exp \Big\{ -\frac{1}{2\omega}(\xb-\xb_i^\star)^T \Hb_i (\xb-\xb_i^\star) +  \frac1\omega R_i(\xb)\Big\} d\xb,
\end{equation*}
and standard dominated convergence in Gaussian coordinates yields the first display. Summing over $i$ and normalising gives the weights $c_i(\omega)\to c_i$ and the weak limit. The second display follows from the same expansion applied to $\psi$ (up to second order) and Gaussian moment calculations. The last statement can be established since the modes are isolated: there exists $\eta > 0$ such that
\begin{equation*}
    \sup_{\xb \in U_0} f(\xb) \le f^\star - \eta,
\end{equation*}
so there exists $C > 0$ such that
\begin{equation*}
    g_\omega(U_0) \le C e^{-\eta/\omega}.
\end{equation*}
\end{proof}

\begin{proposition}
    For $i \in [I]$ and  $\psi_{i,\omega}$ defined in equation~\eqref{eq:varphi_psi}, there exists $\eta > 0$ such that
    \begin{equation*}
        \mathbb{E}_{\psi_{i,\omega}} [\Xb] = \xb_i^\star + O(e^{-\eta/\omega}), \qquad \mathrm{Cov}_{\psi_{i,\omega}} (\Xb) = \omega \Hb_{i}^{-1} + O(e^{-2\eta/\omega}).
    \end{equation*}
    \label{prop:gibbs_mean_cov}
\end{proposition}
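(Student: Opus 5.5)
The plan is to reduce $\psi_{i,\omega}$ to a truncated Gaussian whose moments differ from those of the untruncated Gaussian only by tail integrals, which are exponentially small in $1/\omega$. Write $\psi_{i,\omega}(\xb)\propto \exp\{(f(\xb)-f^\star)/\omega\}\mathds 1_{U_i}(\xb)$. I would first shrink $U_i$ if necessary so that it sits in a Morse chart around $\xb_i^\star$. By the Morse lemma (using $f\in\mathcal C^3$ and $\Hb_i\succ0$), there is a $\mathcal C^1$ diffeomorphism $\yb=\phi(\xb)$ with $\phi(\xb_i^\star)=0$, $D\phi(\xb_i^\star)=\Hb_i^{1/2}$, and
\begin{equation*}
f(\xb)=f^\star-\tfrac12\|\phi(\xb)\|^2 \quad\text{on } U_i .
\end{equation*}
In these coordinates the pushforward of $\psi_{i,\omega}$ is $\propto e^{-\|\yb\|^2/(2\omega)}\,J(\yb)\,\mathds 1_{\phi(U_i)}(\yb)$, where $J=|\det D\phi^{-1}|$. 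The part of $U_i$ lying outside a fixed ball $B(\xb_i^\star,r)$ is handled exactly as in Proposition~\ref{prop:laplace}. There $f\le f^\star-\eta$, so its contribution to the normaliser, the first moment and the second moment is $O(e^{-\eta/\omega})$ relative to the main term, the main term being of order $\omega^{d/2}$.

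Next, on the symmetric ball $\{\|\yb\|\le r'\}\subset\phi(U_i)$, I would compute the moments of $\xb=\phi^{-1}(\yb)$ under the weight $e^{-\|\yb\|^2/(2\omega)}J(\yb)$. Replacing the ball by all of $\mathbb R^d$ costs only Gaussian tails, of size $O(e^{-r'^2/(2\omega)})$. The mean and covariance then reduce to the full-space Gaussian integrals
\begin{equation*}
\mathbb E[\phi^{-1}(\sqrt\omega\,\Zb)\,J(\sqrt\omega\,\Zb)]\Big/\mathbb E[J(\sqrt\omega\,\Zb)],\qquad \Zb\sim\mathcal N(0,\Ib),
\end{equation*}
together with the analogous second-moment ratio. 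When $\phi^{-1}$ is affine, i.e.\ $\phi^{-1}(\yb)=\xb_i^\star+\Hb_i^{-1/2}\yb$ so that $J$ is constant, these ratios are exactly $\xb_i^\star$ and $\omega\Hb_i^{-1}$. In that case the whole error comes from the tail terms: $O(e^{-\eta/\omega})$ for the mean. For the covariance, the centred second moment loses the cross term, which is of the order of the square of the mean correction, hence $O(e^{-2\eta/\omega})$ once the mean correction is bounded.

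The main obstacle is the step that expands $\phi^{-1}$ and $J$ around $0$ when they are not affine. The odd-order terms in $\yb$ vanish by the symmetry of the Gaussian integrals over the symmetric ball, since odd Gaussian moments are zero. However, the quadratic part of $\phi^{-1}$ paired with the linear part of $J$ contributes at order $\omega$ to the mean, and this is generically nonzero for a $\mathcal C^3$ function $f$ with a nonvanishing third derivative at the mode. I would first try to show that these order-$\omega$ contributions cancel, using the identity that links $J$ to the third derivatives of $f$ through the Morse change of variables. If they do not cancel, then the exponential rates in the statement hold exactly when $f$ is quadratic on a neighbourhood of the mode, where $\phi$ is linear. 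In that situation I would record the general-case conclusion, mean $\xb_i^\star+O(\omega)$ and covariance $\omega\Hb_i^{-1}+o(\omega)$, and note explicitly that this weaker form is all that the later steps of Theorem~\ref{thm:annealing_main} use, since they only need $o(1)$ control of the mean and covariance.
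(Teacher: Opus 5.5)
Your suspicion in the last paragraph is correct, and you should settle it rather than hedge: the order-$\omega$ terms do not cancel. The exponential rates are therefore false under the stated hypotheses, and no argument can deliver them. A one-dimensional example suffices. Take $\mathcal S=(-1,1)$ and $f(x)=-\tfrac12x^2+\tfrac16x^3$. Then $f<0=f(0)$ on $\mathcal S\setminus\{0\}$, the unique mode is $0$, and $\Hb_1=1$. Writing $x=\sqrt\omega\,z$ gives $e^{f/\omega}=e^{-z^2/2}\exp(\sqrt\omega\,z^3/6)$. Expanding the last factor (dominated convergence on $|x|\le\delta$, exponentially small mass elsewhere) yields
\[
\mathbb E_{\psi_{1,\omega}}[X]=\frac{\omega}{2}+o(\omega),\qquad \operatorname{Var}_{\psi_{1,\omega}}(X)=\omega+\omega^{2}+o(\omega^{2}).
\]
In general, $\mathbb E_{\psi_{i,\omega}}[\Xb]=\xb_i^\star+\frac{\omega}{2}\Hb_i^{-1}t_i+o(\omega)$ with $t_{i,a}:=\sum_{b,c}\partial_{abc}f(\xb_i^\star)(\Hb_i^{-1})_{bc}$, which is exactly the term you isolated.

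The paper's proof reaches the exponential rate only through an invalid step. It applies the second display of Proposition~\ref{prop:laplace} to $h=x_j$ and to $h=(x_j-x^\star_{i,j})(x_k-x^\star_{i,k})$, obtaining numerators $x^\star_{i,j}c_i(\omega)+o(\omega)$ and $\omega(\Hb_i^{-1})_{jk}c_i(\omega)+o(\omega)$. It then divides by $\gamma_{i,\omega}$ and records the error as $O(e^{-\eta/\omega})$, although an $o(\omega)$ remainder stays $o(\omega)$ after division. That display is also incomplete: it has no term coupling $\nabla h(\xb_i^\star)$ with third derivatives of $f$. The correct expansion contains $\frac{\omega}{2}c_i(\omega)\nabla h(\xb_i^\star)^\top\Hb_i^{-1}t_i$, which is nonzero for $h=x_j$ as soon as $(\Hb_i^{-1}t_i)_j\neq0$. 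The paper's covariance step additionally discards the $O(e^{-\eta/\omega})$ error in the second moment about $\xb_i^\star$ and keeps only the squared mean correction. Your sentence deriving $e^{-2\eta/\omega}$ from ``the cross term'' makes the same slip; it is harmless only because $\eta$ is existential and can be halved.

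So the provable statement is your fallback: mean $\xb_i^\star+O(\omega)$ and covariance $\omega\Hb_i^{-1}+o(\omega)$. You are right that this is all the proof of Theorem~\ref{thm:annealing_main}(ii) uses. Two corrections to how you would get there.

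First, the Morse route does not work at this regularity. You yourself only have a $\mathcal C^1$ chart, so $J$ is merely continuous, and the second-order expansion of $\phi^{-1}$ and first-order expansion of $J$ that you invoke are unavailable. Argue directly instead:
\begin{itemize}
\item choose $\delta$ so that the Taylor remainder of $f$ at $\xb_i^\star$ is at most $\frac14\mathrm{eig}_{\min}(\Hb_i)\|\xb-\xb_i^\star\|^2$ on $B(\xb_i^\star,\delta)$;
\item after the substitution $\xb=\xb_i^\star+\sqrt\omega\,\Yb$, this gives a Gaussian envelope, and the bound $|e^{(f-f^\star)/\omega}e^{\Yb^\top\Hb_i\Yb/2}-1|=O(\sqrt\omega\,\|\Yb\|^3)$ times that envelope yields both conclusions;
\item bound the mass of $U_i\setminus B(\xb_i^\star,\delta)$ as exponentially small relative to the main term.
\end{itemize}

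Second, local quadraticity of $f$ is a sufficient condition for the exponential rates, since the restriction of $\psi_{i,\omega}$ to a small $\Hb_i$-ellipsoid is then a centred truncated Gaussian. But ``exactly when'' asserts a converse you have not established.
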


\begin{proof}[Proof of Proposition~\ref{prop:gibbs_mean_cov}]
\textit{(i) Expectation.} Consider the coordinate functions $h_j(\xb) = x_j$, for all $j \in [d]$. These functions are $\mathcal{C}^2$ with $h_j(\xb_i^\star) = x_{i,j}^\star$ and $\nabla^2 h_j(\xb_i^*) \equiv 0$.
Apply Proposition~\ref{prop:laplace} to $h = h_j$:
    \begin{equation*}
        \int_{U_i} x_j g_\omega(\xb) d\xb = x_{i,j}^\star c_i(\omega) + o(\omega).
    \end{equation*}
    Next, Proposition~\ref{prop:laplace} also implies $\gamma_{i,\omega} = g_\omega(U_i) = c_i(\omega) + o(e^{-\eta/\omega})$. So
\begin{equation*}
    \mathbb{E}_{\psi_{i,\omega}} [X_j] = \frac{\int_{U_i} x_j g_\omega(\xb) d\xb}{\gamma_{i,\omega}} = \frac{x_{i,j}^\star c_i(\omega) + o(\omega)}{c_i(\omega) + o(e^{-\eta/\omega})}.
\end{equation*}
Since $c_i(\omega) \rightarrow c_i > 0$, the denominator is bounded away from $0$ and we can divide: for all $j \in [d]$,
\begin{equation*}
    \mathbb{E}_{\psi_{i,\omega}} [X_j] = x_{i,j}^\star+ O(e^{-\eta/\omega}).
\end{equation*}
Therefore, 
\begin{equation*}
    \mathbb{E}_{\psi_{i,\omega}} [\Xb] = \xb_{i}^\star + O(e^{-\eta/\omega}).
\end{equation*}

\textit{(ii) Covariance.} Next, consider the centred second moment functions $h_{jk}(\xb) := (x_j - x_{i,j}^\star) (x_k - x_{i,k}^\star)$, for all $(j,k) \in [d]^2$. These functions are $\mathcal{C}^2$ with $h_{jk}(\xb_{i}^\star) = 0$ and 
\begin{equation*}
    \left(\nabla^2 h_{jk}(\xb_i^\star)\right)_{\ell m} = \frac{\partial^2}{\partial x_\ell \partial x_m} \left( (x_j - x_{i,j}^\star) (x_k - x_{i,k}^\star) \right)|_{\xb=\xb_{i}^\star} = \delta_{j\ell} \delta_{k m} + \delta_{j m } \delta_{k \ell},
\end{equation*}
where $\delta_{\cdot\cdot}$ is the Kronecker symbol, i.e. $\nabla^2 h_{jk}(\xb_i^\star)$ is the matrix with entries $(j,k)$ and $(k,j)$ set to  one  and all other entries set to zero. 
Hence, by symmetry of $\Hb_{i}^{-1}$,
\begin{equation*}
    \tr\left(\Hb_{i}^{-1} \nabla^2 h_{jk}(\xb_i^\star)\right) = \left( \Hb_i^{-1} \right)_{jk} + \left( \Hb_i^{-1} \right)_{kj} = 2 \left( \Hb_i^{-1} \right)_{jk}.
\end{equation*}
Apply Proposition~\ref{prop:laplace} to $h_{jk}$:
\begin{equation*}
    \int_{U_i} h_{jk}(\xb) g_{\omega}(\xb) d\xb = \omega \left( \Hb_i^{-1} \right)_{jk} c_i(\omega) + o(\omega).
\end{equation*}
Again, we can divide by $\gamma_{i,\omega} = c_i(\omega) + o(e^{-\eta/\omega})$:
\begin{equation*}
    \mathbb{E}_{\psi_{i,\omega}}[(X_j -x_{i,j}^\star) (X_k - x_{i,k}^\star)^T] = \frac{\int_{U_i} h_{jk}(\xb) g_{\omega}(\xb) d\xb}{\gamma_{i,\omega}} = \omega \left( \Hb_i^{-1} \right)_{jk} + O(e^{-\eta/\omega}).
\end{equation*}
Therefore,
\begin{equation*}
    \mathbb{E}_{\psi_{i,\omega}}[(\Xb -\xb_{i}^\star) (\Xb - \xb_{i}^\star)^T] = \omega \Hb_i^{-1} + O(e^{-\eta/\omega}).
\end{equation*}
Finally,
\begin{align*}
    \mathrm{Cov}_{\psi_{i,\omega}}(\Xb) &= \mathbb{E}_{\psi_{i,\omega}}[(\Xb - \mathbb{E}_{\psi_{i,\omega}} [\Xb]) (\Xb - \mathbb{E}_{\psi_{i,\omega}} [\Xb])^T] \\
    &= \mathbb{E}_{\psi_{i,\omega}}[(\Xb -\xb_{i}^\star) (\Xb - \xb_{i}^\star)^T] - (\mathbb{E}_{\psi_{i,\omega}} [\Xb] -\xb_{i}^\star) (\mathbb{E}_{\psi_{i,\omega}} [\Xb] - \xb_{i}^\star)^T,
\end{align*}
where we have already shown that $\mathbb{E}_{\psi_{i,\omega}} [\Xb] - \xb_{i}^\star = O(e^{-\eta/\omega})$, so
\begin{equation*}
    \mathrm{Cov}_{\psi_{i,\omega}}(\Xb) = \omega \Hb_i^{-1} + O(e^{-2\eta/\omega}).
\end{equation*}

\end{proof}

\subsubsection*{Step 2: Competitor mixture and KL convergence}

For $\omega > 0$, define the Gaussian distributions, for $i \in [I]$: 
\begin{equation*}
    \widetilde{q}_{i,\omega}(\xb) = \mathcal{N}(\xb; \xb_i^\star, \omega \Hb_i^{-1}).
\end{equation*}
These Gaussian dsitributions approximate locally the Gibbs measure at each mode. 
We define the associated mixture with Laplace weights
\begin{equation*}
    \widetilde{q}_\omega(\xb) = \sum_{i=1}^I
 c_i \widetilde{q}_{i,\omega}(\xb),
\end{equation*}
where $c_i \propto (\det \Hb_i)^{-1/2}$ and $\sum_i c_i = 1$.
We use the truncated mixture $\widetilde{q}^{\mathcal{S}}_\omega := \widetilde{q}_\omega \mathds{1}_\mathcal{S} / \int_{\mathcal{S}} \widetilde{q}_\omega$ as a competitor for $q_\omega^\star$ in $\mathcal{Q}$. In this step, we prove that $\widetilde{q}^\mathcal{S}_\omega$ converges in KL to $g_\omega$. 

\begin{proposition}
\label{prop:kl_rate}
Let $\widetilde{q}^{\mathcal{S}}_\omega := \widetilde{q}_\omega \mathds{1}_\mathcal{S} / \int_{\mathcal{S}} \widetilde{q}_\omega$ be the truncation of $\widetilde{q}_\omega$ on $\mathcal{S}$. Then, $\widetilde{q}^{\mathcal{S}}_\omega \in \mathcal{Q}$ and, as $\omega \to 0$,
\begin{equation*}
    \KL(\widetilde{q}^\mathcal{S}_\omega  \Vert  g_\omega) \longrightarrow 0.
\end{equation*}
\end{proposition}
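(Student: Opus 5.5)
Membership in $\mathcal Q$ is immediate. $\widetilde q_\omega$ is an untruncated Gaussian mixture with $I\le K$ components; we pad it to $K$ components by duplicating components and splitting their weights. Its covariances $\omega\Hb_i^{-1}$ have eigenvalues at most $\omega\,\mathrm{eig}_{\min}(\Hb_i)^{-1}\le\sigma_{\max}^2$ for $\omega$ small. Its truncation to $\mathcal S$ therefore lies in $\mathcal Q$.

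For the KL convergence, write
\[
\KL(\widetilde q^{\mathcal S}_\omega\Vert g_\omega)=\int_{\mathcal S}\widetilde q^{\mathcal S}_\omega\log\widetilde q^{\mathcal S}_\omega-\frac1\omega\int_{\mathcal S}\widetilde q^{\mathcal S}_\omega f+\log Z_\omega(f).
\]
Let $\beta_\omega:=\int_{\mathcal S}\widetilde q_\omega=1-O(e^{-\kappa/\omega})$, which holds because the modes are interior and the component scale is $\sqrt\omega$. We treat the three terms in turn.

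\textbf{Partition function.} Proposition~\ref{prop:laplace} gives $Z_\omega(f)=e^{f^\star/\omega}(2\pi\omega)^{d/2}\sum_j(\det\Hb_j)^{-1/2}(1+o(1))$, since $U_0$ contributes an exponentially smaller amount. Hence
\[
\log Z_\omega=\frac{f^\star}{\omega}+\frac d2\log(2\pi\omega)+\log\sum_j(\det\Hb_j)^{-1/2}+o(1).
\]

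\textbf{Energy term.} Near each mode we Taylor expand $f$ as in the proof of Proposition~\ref{prop:laplace}. Under $\widetilde q_{i,\omega}$, the quadratic part has expectation $-\tfrac12\tr(\Hb_i\cdot\omega\Hb_i^{-1})=-\omega d/2$. The cubic remainder contributes $O(\omega^{3/2})$ via Gaussian third absolute moments. The mass of $\widetilde q_{i,\omega}$ outside $U_i$ is $O(e^{-c/\omega})$, and there $f$ is bounded. Summing over $i$ with weights $c_i$ and dividing by $\omega$ gives
\[
\frac1\omega\mathbb E_{\widetilde q^{\mathcal S}_\omega}[f]=\frac{f^\star}{\omega}-\frac d2+o(1).
\]

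\textbf{Negative entropy.} This is the main obstacle, because the entropy of a mixture is not explicit. The plan is to show
\[
-\mathcal H(\widetilde q_\omega)=\sum_i c_i\log c_i-\sum_i c_i\mathcal H(\widetilde q_{i,\omega})+o(1),
\]
with $\mathcal H(\widetilde q_{i,\omega})=\tfrac d2\log(2\pi e\omega)-\tfrac12\log\det\Hb_i$. The standard sandwich $\sum_ic_i\mathcal H(\widetilde q_{i,\omega})\le\mathcal H(\widetilde q_\omega)\le\sum_ic_i\mathcal H(\widetilde q_{i,\omega})-\sum_ic_i\log c_i$ gives the bounds, and we need the upper bound to be asymptotically attained. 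For this, on $U_i$ the cross terms $c_j\widetilde q_{j,\omega}$ with $j\neq i$ are $O(e^{-c/\omega})$ relative to $c_i\widetilde q_{i,\omega}$, except in a region of $\widetilde q_{i,\omega}$-mass $O(e^{-c/\omega})$. So $\log\widetilde q_\omega=\log(c_i\widetilde q_{i,\omega})+o(1)$ on the bulk of $U_i$. The tail regions are controlled by the bound $|\log\widetilde q_\omega|\lesssim 1/\omega+\|\xb\|^2/\omega$ times exponentially small mass, hence $o(1)$. Truncation to $\mathcal S$ changes entropy by $O(\log\beta_\omega)+o(1)=o(1)$ by the same tail argument, in the spirit of Proposition~\ref{prop:trunc_gap}.

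\textbf{Assembly.} Summing the three terms:
\[
\sum_ic_i\log c_i-\frac d2\log(2\pi e\omega)+\frac12\sum_ic_i\log\det\Hb_i-\frac{f^\star}{\omega}+\frac d2+\frac{f^\star}{\omega}+\frac d2\log(2\pi\omega)+\log\sum_j(\det\Hb_j)^{-1/2}+o(1).
\]
The $1/\omega$ and $\log\omega$ terms cancel, and $-\tfrac d2\log e+\tfrac d2=0$. Using $\log c_i=-\tfrac12\log\det\Hb_i-\log\sum_j(\det\Hb_j)^{-1/2}$, the remaining constants cancel exactly. This leaves $\KL(\widetilde q^{\mathcal S}_\omega\Vert g_\omega)=o(1)$, and since KL is nonnegative, $\KL(\widetilde q^{\mathcal S}_\omega\Vert g_\omega)\to0$.
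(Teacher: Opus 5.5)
Your proof is correct, but it organises the computation differently from the paper.

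\emph{The paper's route.} Via Lemma~\ref{lem:kl_rate_untruncated}, the paper never splits the divergence into pieces. It expands the integrand $\log(\widetilde q_\omega/g_\omega)$ directly. On the shrinking balls $\|\xb-\xb_i^\star\|\le R\sqrt{\omega}$ one component dominates, which gives $\log(\widetilde q_\omega/g_\omega)=-\Delta_i/\omega+o(1)$ there. The quadratic terms and the $\log(2\pi\omega)$ terms cancel pointwise, and the mixture weight's $\log c_i$ cancels the $-\log c_i$ left over by the normalising constants. Tails are then controlled through an $R\to\infty$, $\omega\to 0$ double limit. Truncation is handled afterwards as an $O(e^{-c/\omega})$ perturbation of $\KL_{\mathcal S}(\widetilde q_\omega\Vert g_\omega)$.

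\emph{Your route.} You expand the integrated entropy, energy and log-partition terms separately, using the closed-form Gaussian entropy, the moment identity $\mathbb E[(\Xb-\xb_i^\star)^T\Hb_i(\Xb-\xb_i^\star)]=\omega d$, and the Laplace expansion of $Z_\omega$. The cancellations therefore happen only after integration: the $-\tfrac d2\log e$ against $+\tfrac d2$, and $\sum_i c_i\log c_i$ against the determinant terms.

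\emph{What each buys.} Your version is more modular and explicit. Each term can be checked on its own, and fixed-radius balls make the complementary mass exponentially small, which removes the double limit. Truncation is absorbed inline, and a quantitative Laplace remainder would directly give a rate. The paper's version needs no mixture-entropy or second-moment computation, because everything cancels inside the integrand.

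You also justify $\widetilde q^{\mathcal S}_\omega\in\mathcal Q$, by padding to $K$ components through duplication and noting $\omega\,\mathrm{eig}_{\min}(\Hb_i)^{-1}\le\sigma_{\max}^2$ for small $\omega$. The paper only asserts this.
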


To prove Proposition~\ref{prop:kl_rate}, we derive an intermediate lemma.

\begin{lemma}
\label{lem:kl_rate_untruncated} Let $\KL_{\mathcal{S}}(\widetilde{q}_\omega  \Vert  g_\omega) := \int_{\mathcal{S}} \widetilde{q}_\omega \log \widetilde{q}_\omega/g_\omega$. As $\omega \to 0$,
\begin{equation*}
    \KL_{\mathcal{S}}(\widetilde{q}_\omega  \Vert  g_\omega) \longrightarrow 0.
\end{equation*}
\end{lemma}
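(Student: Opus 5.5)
We need to show $\int_{\mathcal S}\widetilde q_\omega\log(\widetilde q_\omega/g_\omega)\to 0$. Writing $\log g_\omega(\xb)=f(\xb)/\omega-\log Z_\omega(f)$, we decompose
\begin{equation*}
\KL_{\mathcal S}(\widetilde q_\omega\Vert g_\omega)=\int_{\mathcal S}\widetilde q_\omega\log\widetilde q_\omega-\frac1\omega\int_{\mathcal S}\widetilde q_\omega f+\log Z_\omega(f)\,\widetilde q_\omega(\mathcal S).
\end{equation*}
Each term is then expanded to order $o(1)$ using Gaussian calculus and Proposition~\ref{prop:laplace}.

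\emph{Partition function.} Summing the first display of Proposition~\ref{prop:laplace} over $i$, and adding the exponentially small contribution of $U_0$ (bounded by $|\mathcal S|e^{(f^\star-\eta)/\omega}$), gives
\begin{equation*}
\log Z_\omega(f)=\frac{f^\star}{\omega}+\frac d2\log(2\pi\omega)+\log\sum_j(\det\Hb_j)^{-1/2}+o(1).
\end{equation*}
Since each $\widetilde q_{i,\omega}$ has covariance of order $\omega$ centred at an interior point, we have $1-\widetilde q_\omega(\mathcal S)=O(e^{-c/\omega})$. This exponentially small factor kills the $\log Z_\omega=O(1/\omega)$ prefactor, so the third term equals $\log Z_\omega(f)+o(1)$.

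\emph{Energy term.} We split $\int_{\mathcal S}\widetilde q_\omega f=\sum_i c_i\int_{\mathcal S}\widetilde q_{i,\omega}f$. For each $i$, we Taylor-expand $f$ at $\xb_i^\star$ on a ball $B(\xb_i^\star,r)\subset U_i$. Because $f$ is bounded and the Gaussian tails outside the ball are $O(e^{-c/\omega})$, we obtain
\begin{equation*}
\int\widetilde q_{i,\omega}f=f^\star-\frac12\,\mathbb E\big[(\Xb-\xb_i^\star)^T\Hb_i(\Xb-\xb_i^\star)\big]+O\big(\mathbb E\|\Xb-\xb_i^\star\|^3\big)=f^\star-\frac{\omega d}{2}+O(\omega^{3/2}),
\end{equation*}
using the $\mathcal C^3$ remainder bound $|R_i|\le L_i\|\xb-\xb_i^\star\|^3$. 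Dividing by $\omega$ yields
\begin{equation*}
\frac1\omega\int_{\mathcal S}\widetilde q_\omega f=\frac{f^\star}{\omega}-\frac d2+o(1).
\end{equation*}

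\emph{Entropy term (main obstacle).} The difficulty is that $\log\widetilde q_\omega$ is the log of a mixture, not of a single component. On $B(\xb_i^\star,r)$ the other components are exponentially smaller relative to component $i$ whenever $\|\xb-\xb_i^\star\|$ is at most a fixed fraction of the inter-mode separation. So we use the sandwich
\begin{equation*}
\log(c_i\widetilde q_{i,\omega})\le\log\widetilde q_\omega\le\log(c_i\widetilde q_{i,\omega})+\log\Bigl(1+\sum_{j\ne i}\frac{c_j\widetilde q_{j,\omega}}{c_i\widetilde q_{i,\omega}}\Bigr).
\end{equation*}
The correction term is $O(e^{-c/\omega})$ on a smaller ball $B(\xb_i^\star,r')$ and at most polynomial in $1/\omega$ times quadratic growth elsewhere. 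Elsewhere it is integrated against mass $e^{-c/\omega}$, hence contributes $o(1)$. This gives
\begin{equation*}
\int_{\mathcal S}\widetilde q_\omega\log\widetilde q_\omega=\sum_ic_i\log c_i-\sum_i c_i\Bigl(\frac d2\log(2\pi e\omega)-\frac12\log\det\Hb_i\Bigr)+o(1).
\end{equation*}

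\emph{Conclusion.} Summing the three contributions, the $f^\star/\omega$ terms cancel, as do the $\frac d2\log(2\pi\omega)$ terms and the $\pm d/2$ constants. What remains is
\begin{equation*}
\sum_ic_i\log c_i+\frac12\sum_ic_i\log\det\Hb_i+\log\sum_j(\det\Hb_j)^{-1/2}.
\end{equation*}
This vanishes exactly because $\log c_i=-\frac12\log\det\Hb_i-\log\sum_j(\det\Hb_j)^{-1/2}$. Hence $\KL_{\mathcal S}(\widetilde q_\omega\Vert g_\omega)\to0$.
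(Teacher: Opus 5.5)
Your proof is correct, but it follows a different route from the paper's. The paper works with the pointwise log-ratio. It expands $\log(\widetilde q_{i,\omega}/g_\omega)=-\Delta_i/\omega-\log c_i+o(1)$ on $U_i$. It then shows that on shrinking balls of radius $R\sqrt\omega$ the mixture is dominated by its $i$-th term, so $\log(\widetilde q_\omega/g_\omega)=-\Delta_i/\omega+o(1)$ there. The rest of the space is controlled by Gaussian tails and polynomial growth of the log-ratio in the rescaled variable, and the argument closes with an $R$-then-$\omega$ double limit.

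You instead split the divergence into three separately divergent pieces: the negative entropy of $\widetilde q_\omega$, the energy $\omega^{-1}\int_{\mathcal S}\widetilde q_\omega f$, and $\log Z_\omega(f)$. You expand each to $o(1)$ on fixed-radius balls using exact Gaussian identities: the entropy of $\mathcal N(\xb_i^\star,\omega\Hb_i^{-1})$, $\mathbb E[(\Xb-\xb_i^\star)^T\Hb_i(\Xb-\xb_i^\star)]=\omega d$, and the cubic remainder. The cancellation then happens at the integrated level, where it reduces exactly to the defining identity of the Laplace weights.

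What your route buys:
\begin{itemize}
\item Exponentially small tails absorb every polynomial-in-$1/\omega$ factor, so no double limit is needed.
\item The mixture structure enters only through the exact identity $\log\widetilde q_\omega=\log(c_i\widetilde q_{i,\omega})+\log\bigl(1+\sum_{j\ne i}c_j\widetilde q_{j,\omega}/(c_i\widetilde q_{i,\omega})\bigr)$. Its correction term is nonnegative and bounded by $C+C\omega^{-1}\|\xb-\xb_i^\star\|^2$, which is cleaner than the paper's rather terse tail step.
\item Running the same bookkeeping with arbitrary weights $w_i$ gives the limit $\sum_i w_i\log(w_i/c_i)$. This shows transparently why the Laplace weights are the right choice, and recovers the paper's $-\log c_i$ for a single component as the case $w=e_i$.
\item Quantifying the Laplace expansion would give an explicit rate.
\end{itemize}

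What the paper's route buys: pointwise information, namely that the two densities nearly coincide on the $\sqrt\omega$-scale where $\widetilde q_{i,\omega}$ lives. This is closer in spirit to the conditional-law arguments used afterwards.

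One small precision: the radius $r'$ on which the cross-ratios $\widetilde q_{j,\omega}/\widetilde q_{i,\omega}$ are exponentially small must depend on the spectra of $\Hb_i$ and $\Hb_j$, not only on the mode separation $\delta$. You need $\mathrm{eig}_{\min}(\Hb_j)(\delta-r')^2>\mathrm{eig}_{\max}(\Hb_i)\,r'^2$, but any fixed $r'$ satisfying this works.
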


\begin{proof}[Proof of Lemma~\ref{lem:kl_rate_untruncated}]
We first compute the KL divergence for a single Gaussian $\widetilde{q}_{i,\omega}$ and show
that it converges to a finite constant. We then use the disjoint-mode structure
to prove that the mixture $\widetilde q_\omega$ is asymptotically indistinguishable
from $g_\omega$ in KL.

\emph{Step 1. Log-ratio for a single component.}
Fix $i$. On $U_i$ we write
\begin{equation*}
f(\xb)
= f(\xb_i^\star)
  - \frac{1}{2} (\xb-\xb_i^\star)^T \Hb_i (\xb-\xb_i^\star)
  + \Delta_i(\xb),
\qquad
|\Delta_i(\xb)| \le C\|\xb-\xb_i^\star\|^3.
\end{equation*}
By definition,
\begin{equation*}
g_\omega(\xb)
= \frac{\exp\{f(\xb)/\omega\}}{Z_\omega},
\qquad
\widetilde{q}_{i,\omega}(\xb)
= (2\pi\omega)^{-d/2}(\det \Hb_i)^{1/2}
  \exp\Bigl(-\frac{1}{2\omega}(\xb-\xb_i^\star)^T \Hb_i (\xb-\xb_i^\star)\Bigr).
\end{equation*}

From Proposition~\ref{prop:laplace},
\begin{equation*}
Z_\omega
= e^{f(\xb_i^\star)/\omega}(2\pi\omega)^{d/2}
  \sum_{j=1}^I (\det \Hb_j)^{-1/2}\bigl(1+o(1)\bigr),
\end{equation*}
so, for $\xb\in U_i$,
\begin{align*}
\log g_\omega(\xb)
&= \frac{f(\xb)}{\omega} - \log Z_\omega \\
&= \frac{f(\xb_i^\star)}{\omega}
   - \frac{1}{2\omega}(\xb-\xb_i^\star)^T \Hb_i (\xb-\xb_i^\star)
   + \frac{\Delta_i(\xb)}{\omega} \\
&\qquad
   - \frac{f(\xb_i^\star)}{\omega}
   - \frac{d}{2}\log(2\pi\omega)
   - \log\Bigl(\sum_{j=1}^I(\det \Hb_j)^{-1/2}\Bigr)
   + o(1) \\
&= -\frac{1}{2\omega}(\xb-\xb_i^\star)^T \Hb_i (\xb-\xb_i^\star)
   + \frac{\Delta_i(\xb)}{\omega}
   - \frac{d}{2}\log(2\pi\omega)
   - \log\Bigl(\sum_{j=1}^I(\det \Hb_j)^{-1/2}\Bigr)
   + o(1).
\end{align*}
On the other hand,
\begin{equation*}
\log \widetilde{q}_{i,\omega}(\xb)
= -\frac{1}{2\omega}(\xb-\xb_i^\star)^T \Hb_i (\xb-\xb_i^\star)
  - \frac{d}{2}\log(2\pi\omega)
  + \frac{1}{2}\log\det \Hb_i.
\end{equation*}
Subtracting gives
\begin{align*}
\log\frac{\widetilde{q}_{i,\omega}(\xb)}{g_\omega(\xb)}
&= - \frac{\Delta_i(\xb)}{\omega}
   + \frac{1}{2}\log\det \Hb_i
   + \log\Bigl(\sum_{j=1}^I(\det \Hb_j)^{-1/2}\Bigr)
   + o(1) \\
&= - \frac{\Delta_i(\xb)}{\omega} - \log c_i + o(1),
\end{align*}
where
\begin{equation*}
c_i
= \frac{(\det \Hb_i)^{-1/2}}{\sum_{j=1}^I(\det \Hb_j)^{-1/2}}.
\end{equation*}

For $\Xb\sim \widetilde{q}_{i,\omega}$, write $\Xb = \xb_i^\star + \sqrt{\omega}\Yb$
with $\Yb\sim\mathcal{N}(0,\Hb_i^{-1})$. The remainder bound implies
\begin{equation*}
|\Delta_i(\Xb)|
\le C\|\Xb-\xb_i^\star\|^3
= C\omega^{3/2}\|\Yb\|^3,
\end{equation*}
so
\begin{equation*}
\frac{\Delta_i(\Xb)}{\omega}
= O\bigl(\sqrt{\omega} \|\Yb\|^3\bigr).
\end{equation*}
Since $\Yb$ has finite moments of all orders,
$\mathbb{E}[\Delta_i(\Xb)/\omega]\to0$ as $\omega\to0$.
Dominated convergence on $U_i$ yields, as $\omega\to0$,
\begin{equation*}
\KL_{\mathcal{S}}(\widetilde{q}_{i,\omega}\Vert g_\omega)
= \int_{\mathcal{S}} \widetilde{q}_{i,\omega}(\xb)\log\frac{\widetilde{q}_{i,\omega}(\xb)}{g_\omega(\xb)} d\xb
\longrightarrow -\log c_i.
\end{equation*}

\emph{Step 2. KL for the mixture $\widetilde q_\omega$ near the modes.}
Define
\begin{equation*}
\widetilde q_\omega(\xb)
=\sum_{i=1}^I c_i \widetilde{q}_{i,\omega}(\xb).
\end{equation*}
We show that the limiting constant $-\log c_i$ cancels mode by mode
inside the mixture, so that
$\KL_{\mathcal{S}}(\widetilde q_\omega\Vert g_\omega)\to0$.

Fix $R>0$ and for each $i$ define the shrinking ball
$B_i(R,\omega):=\{\xb:\|\xb-\xb_i^\star\|\le R\sqrt{\omega}\}\subset U_i$
for all small $\omega$. Two modes $\xb_i^\star \neq \xb_j^\star$ are separated, so there exists $\delta > 0$ such that $\lVert \xb_i^\star - \xb_j^\star \rVert \ge \delta$. But for $\xb \in B_i(R,\omega)$, by the triangle inequality,
\begin{equation*}
    \lVert \xb - \xb_{j}^\star \rVert \ge \delta - R \sqrt{\omega} > \delta/2,
\end{equation*}
for sufficiently small $\omega$.
Thus, since all covariances
are $O(\omega)$, there exist a constant $a>0$ such that for $j\neq i$ and
$\xb\in B_i(R,\omega)$,
\begin{equation*}
\frac{\widetilde{q}_{j,\omega}(\xb)}{\widetilde{q}_{i,\omega}(\xb)}
\le e^{-a/\omega}.
\end{equation*}
Therefore, on $B_i(R,\omega)$,
\begin{equation*}
\widetilde q_\omega(\xb)
= c_i \widetilde{q}_{i,\omega}(\xb)
  \Bigl(1 + O\bigl(e^{-a/\omega}\bigr)\Bigr),
\end{equation*}
so
\begin{align*}
\log\frac{\widetilde q_\omega(\xb)}{g_\omega(\xb)}
&= \log\Bigl(\frac{c_i \widetilde{q}_{i,\omega}(\xb)}{g_\omega(\xb)}\Bigr)
   + \log\Bigl(1 + O\bigl(e^{-a/\omega}\bigr)\Bigr) \\
&= \log c_i + \log\frac{\widetilde{q}_{i,\omega}(\xb)}{g_\omega(\xb)} + o(1)
\end{align*}
with an $o(1)$ term uniform in $\xb\in B_i(R,\omega)$ as $\omega\to0$.

Substituting the expansion from Step~1,
\begin{equation*}
\log\frac{\widetilde{q}_{i,\omega}(\xb)}{g_\omega(\xb)}
= - \frac{\Delta_i(\xb)}{\omega} - \log c_i + o(1),
\end{equation*}
we obtain, on $B_i(R,\omega)$,
\begin{equation}
\label{eq:local_logratio_mixture}
\log\frac{\widetilde q_\omega(\xb)}{g_\omega(\xb)}
= - \frac{\Delta_i(\xb)}{\omega} + o(1).
\end{equation}

Let $\Xb\sim \widetilde{q}_{i,\omega}$. As above, $\Xb=\xb_i^\star+\sqrt{\omega}\Yb$
with $\Yb\sim\mathcal{N}(0,\Hb_i^{-1})$, so
\begin{equation*}
\biggl|\frac{\Delta_i(\Xb)}{\omega}\biggr|
\le C\sqrt{\omega} \|\Yb\|^3,
\qquad
\mathbb{E}\biggl[\biggl|\frac{\Delta_i(\Xb)}{\omega}\biggr|\biggr] \to 0.
\end{equation*}
Using \eqref{eq:local_logratio_mixture} and dominated convergence on
$B_i(R,\omega)$ yields
\begin{equation*}
\int_{B_i(R,\omega)} \widetilde{q}_{i,\omega}(\xb)
  \log\frac{\widetilde q_\omega(\xb)}{g_\omega(\xb)} d\xb
\longrightarrow 0
\qquad\text{as }\omega\to0,
\end{equation*}
for each fixed $R$.

\emph{Step 3. KL for the mixture $\widetilde q_\omega$ in the tails.}
By Gaussian tail bounds,
\begin{equation*}
\widetilde{q}_{i,\omega}\bigl(B_i(R,\omega)^c\bigr)
\le C_1 e^{-d_1 R^2}
\end{equation*}
for some constants $C_1,d_1>0$, uniformly in $\omega$. On $B_i(R,\omega)^c\cap U_i$,
both $\widetilde q_\omega$ and $g_\omega$ are exponentially small in $R^2$
and $|\log(\widetilde q_\omega/g_\omega)|$ grows at most polynomially in
$\|\xb-\xb_i^\star\|$, so
\begin{equation*}
\int_{B_i(R,\omega)^c\cap U_i} \widetilde{q}_{i,\omega}(\xb)
  \Bigl|\log\frac{\widetilde q_\omega(\xb)}{g_\omega(\xb)}\Bigr|
   d\xb
\le C_2 e^{-d_2 R^2}
\end{equation*}
for suitable constants $C_2,d_2>0$.

On $U_0$, Proposition~\ref{prop:laplace} gives $g_\omega(U_0)\le Ce^{-\eta/\omega}$, and
the same Gaussian tail arguments yield $\widetilde{q}_{i,\omega}(U_0)\le C'_1e^{-a_i/\omega}$,
so the contribution of $U_0$ to the KL integral is bounded by $C'_2e^{-a'_i/\omega}$.

\emph{Step 4. Conclusion.} Putting everything together and summing over $i$ with weights $c_i$,
\begin{align*}
\KL_{\mathcal{S}}(\widetilde q_\omega\Vert g_\omega)
&= \int \widetilde q_\omega(\xb)
      \log\frac{\widetilde q_\omega(\xb)}{g_\omega(\xb)} d\xb \\
&= \sum_{i=1}^I c_i \int \widetilde{q}_{i,\omega}(\xb)
      \log\frac{\widetilde q_\omega(\xb)}{g_\omega(\xb)} d\xb \\
&= \sum_{i=1}^I c_i\Biggl[
      \int_{B_i(R,\omega)} \widetilde{q}_{i,\omega}(\xb)\log\frac{\widetilde q_\omega(\xb)}{g_\omega(\xb)} d\xb
      + \int_{B_i(R,\omega)^c\cup U_0} \widetilde{q}_{i,\omega}(\xb)
        \log\frac{\widetilde q_\omega(\xb)}{g_\omega(\xb)} d\xb
    \Biggr].
\end{align*}
For each fixed $R$, the first term in brackets tends to $0$ as $\omega\to0$,
by the local analysis above. The second term is bounded in absolute value by
$C e^{-c R^2}+C'e^{-a/\omega}$, independently of $\omega$.

Given $\varepsilon>0$, pick $R$ large enough that the tail bound is $<\varepsilon/2$,
then let $\omega\to0$ to make the local terms $<\varepsilon/2$. Thus,
$\KL_{\mathcal{S}}(\widetilde q_\omega\Vert g_\omega)\to0$.
\end{proof}

\begin{proof}[Proof of Proposition~\ref{prop:kl_rate}]
Note that bounding the Gaussian tails in $\widetilde{q}_\omega$ gives $\beta_\omega := \int_{\mathbb{R}^d \setminus \mathcal{S}} \widetilde{q}_\omega (\xb) d\xb = O(e^{-c/\omega})$, for some $c > 0$. From Lemma~\ref{lem:kl_rate_untruncated}, $\KL_{\mathcal{S}}(\widetilde{q}_\omega || g_\omega) = o(1)$. Since $\widetilde{q}^\mathcal{S}_\omega = \widetilde{q}_\omega/(1-\beta)$ on $\mathcal{S}$, we have
\begin{align*}
    \KL(\widetilde{q}^\mathcal{S}_\omega || g_\omega) - \KL_{\mathcal{S}}(\widetilde{q}_\omega || g_\omega) &= \left(\frac{1}{1 - \beta_\omega} - 1\right) \KL_{\mathcal{S}}(\widetilde{q}_\omega || g_\omega) - \log (1 - \beta_\omega) \\
    &= O(\beta_\omega) \KL_{\mathcal{S}}(\widetilde{q}_\omega || g_\omega) + O(\beta_\omega) \\
    &= O(e^{-c/\omega}).
\end{align*}
Therefore, we can conclude that $\left\lvert \KL(\widetilde{q}^\mathcal{S}_\omega || g_\omega) - \KL_{\mathcal{S}}(\widetilde{q}_\omega || g_\omega) \right\rvert \rightarrow 0$, which proves the proposition.
\end{proof}

\subsubsection*{Step 3: Conclusion, proof of Theorem~\ref{thm:annealing_main}}

We are now going to prove all the statements of Theorem~\ref{thm:annealing_main}. Recall the conditionals of $g_\omega$ and $q_\omega^\star$ on $U_i$, $i = 0, \dots, I$:
\begin{equation*}
\varphi_{i,\omega}(\xb)
= \frac{q_\omega^\star(\xb)\mathds 1_{U_i}(\xb)}{\alpha_{i,\omega}},
\qquad
\psi_{i,\omega}(\xb)
= \frac{g_\omega(\xb)\mathds 1_{U_i}(\xb)}{\gamma_{i,\omega}},
\end{equation*}
where
\begin{equation*}
\alpha_{i,\omega} = \int_{U_i} q_\omega^\star(\xb) d\xb,
\qquad
\gamma_{i,\omega} = \int_{U_i} g_\omega(\xb) d\xb,
\end{equation*}
and the competitor mixture $\widetilde{q}_{\omega}^\mathcal{S} = \widetilde{q}_\omega \mathds{1}_\mathcal{S} / \int_{\mathcal{S}} \widetilde{q}_\omega(\xb) d\xb$, with 
\begin{equation*}
    \widetilde{q}_\omega(\xb) = \sum_{i=1}^I
 c_i \mathcal{N}(\xb; \xb_i^\star, \omega \Hb_i^{-1}).
\end{equation*}

\begin{proof}[Proof of Theorem~\ref{thm:annealing_main}]
The principle of the proof is as follows: because the competitor $\widetilde{q}_{\omega}^\mathcal{S}$ of
Proposition~\ref{prop:kl_rate} converges in KL to $g_\omega$, the same holds for $q_\omega^\star$, and therefore $q_\omega^\star$ inherits the asymptotics of the Gibbs measure $g_\omega$. Steps~1-2 prove Theorem~\ref{thm:annealing_main}(i), and Step~3 proceeds for Theorem~\ref{thm:annealing_main}(ii).

\textit{Step 1. Convergence of $q_\omega^\star$ in KL.}
For any density $q$ on $\mathcal{S}$, we have the variational identity,
\begin{equation*}
\mathcal L_\omega(q)
= \omega \log Z_\omega - \omega \KL(q \Vert g_\omega).
\end{equation*}
The competitor mixture $\widetilde q_\omega^\mathcal{S} \in \mathcal{Q}$. Since $q_\omega^\star$ maximises $\mathcal L_\omega$ over $\mathcal Q$, we have, as $\omega\to0$,
\begin{equation*}
\KL(q_\omega^\star || g_\omega)
\le \KL(\widetilde q_\omega^\mathcal{S} || g_\omega)
\xrightarrow[]{}0,
\end{equation*}
where the limit follows from Proposition~\ref{prop:kl_rate}. Thus, the optimal truncated mixture satisfies
\begin{equation}
\KL(q_\omega^\star \Vert g_\omega)\to0.
\label{eq:mixture_KL_goes_to_zero}
\end{equation}

\textit{Step 2. Asymptotic mass in mode neighborhoods.}
For $i = 0, \dots, I$, consider $\alpha_{i,\omega}$ and $\gamma_{i,\omega}$ of~\eqref{eq:alpha_gamma}. Let $T:\mathcal S\to\{0,1,\dots,I\}$ send $x\in U_j$ to $j$.
Let $\alphab_\omega$ and $\gammab_\omega$ be the induced discrete laws.
Theorem 4.1 of \citealp{kullback1951information} gives
\begin{equation*}
\KL(\alphab_\omega \Vert \gammab_\omega)
\le
\KL(q_\omega^\star \Vert g_\omega)\xrightarrow[]{}0.
\end{equation*}
From Proposition~\ref{prop:laplace},  
$\gamma_{i,\omega}\to c_i$ for $i \in [I]$, and $\gamma_{0,\omega}\to0$.
Hence~\eqref{eq:mixture_KL_goes_to_zero} implies
\begin{equation*}
\alpha_{i,\omega}\to c_i,\qquad i \in [I],
\qquad
\alpha_{0,\omega}\to0.
\end{equation*}

\textit{Step 3. Local conditional densities.}
For $i \in [I]$, consider the conditional densities $\varphi_{i,\omega}$ and $\psi_{i,\omega}$ from~\eqref{eq:varphi_psi}. The chain rule for the KL divergence over the partition
$\{U_0,U_1,\dots,U_I\}$ gives
\begin{equation*}
\KL(q_\omega^\star \Vert g_\omega)
=
\KL(\alphab_\omega\Vert \gammab_\omega)
+
\sum_{i=0}^I
\alpha_{i,\omega} \KL(\varphi_{i,\omega}\Vert \psi_{i,\omega}).
\end{equation*}
Using \eqref{eq:mixture_KL_goes_to_zero} and
$\alpha_{i,\omega}\to c_i>0$, we obtain for each $i \in [I]$,
\begin{equation*}
\KL(\varphi_{i,\omega}\Vert \psi_{i,\omega})
 \xrightarrow[]{} 0.
\end{equation*}

Furthermore, since all densities live on the bounded set $U_i$, Pinsker's inequality implies convergence in total variation.  
From Proposition~\ref{prop:gibbs_mean_cov}, the conditional law $\psi_{i,\omega}$ has
mean $\xb_i^\star+o(1)$ and covariance $o(1)$.
Hence, the KL convergence forces the same limits for the mixture’s local conditional moments:
\begin{equation*}
\mathbb E_{\varphi_{i,\omega}}[\Xb]
= \xb_i^\star + o(1),
\qquad
\mathrm{Cov}_{\varphi_{i,\omega}}(\Xb)
= o(1).
\end{equation*}
\end{proof}

\subsection{Proof of Theorem~\ref{thm:consistency_main}}
\label{app:thm_consistency}

In this section, we prove Theorem~\ref{thm:consistency_main}, establishing fixed-temperature asymptotics for the empirical maximisers when the sample size grows to infinity. 

For convenience, we recall the assumptions of Theorem~\ref{thm:consistency_main}.
\begin{assumption} \label{ass:consistency}
Define the following set of assumptions:
    \begin{enumerate}
        \item[(B1)] (\textbf{Data}) $X_1,\dots,X_N \stackrel{i.i.d.}{\sim} p$ supported on a bounded $\mathcal S\subset\mathbb R^d$, with $p$ bounded and continuous on $\mathcal S$.
        \item[(B2)] (\textbf{Unicity of the population maximiser}) $\mathcal{L}_{\omega_0}$ admits a unique maximiser $\Lambdab^\star_{\omega_0} \in \operatorname{int}(\Theta_K)$.
        \item[(B3)] (\textbf{Nondegeneracy of the population maximiser}) At the maximiser $\Lambdab^\star_{\omega_0}$, $\Hb^\star_{\omega_0} := \Hb_{\omega_0}(\Lambdab^\star_{\omega_0})$ is positive definite.
    \end{enumerate}
\end{assumption}

Each of the following subsections proves one of the three statements in    Theorem~\ref{thm:consistency_main}:
    \begin{enumerate}
        \item[(i)] Under (B1), the population objective $\mathcal{L}_{\omega_0}$ attains its maximum on $\Theta_K$.
        \item[(ii)] Under (B1)-(B2), any empirical maximiser $\widehat{\Lambdab}_{\omega_0,N}\in\argmax \widehat{\mathcal{L}}_{\omega_0,N}$ satisfies
\begin{equation*}
\widehat{\Lambdab}_{\omega_0,N} \xrightarrow[]{P} \Lambdab^\star_{\omega_0}.
\end{equation*}
        \item[(iii)] Under (B1)-(B2)-(B3), any empirical maximiser $\widehat{\Lambdab}_{\omega_0,N}\in\argmax \widehat{\mathcal{L}}_{\omega_0,N}$ satisfies \begin{equation*}
\sqrt{N} \big(\widehat\Lambdab_{\omega_0,N}-\Lambdab^\star_{\omega_0}\big)
 \xrightarrow[]{\mathcal{D}} 
\mathcal{N} (0,\Wb_{\omega_0}),
\end{equation*}
where $\Wb_{\omega_0} := \big(\Hb_{\omega_0}^\star\big)^{-1} \Vb_{\omega_0} \big(\Hb_{\omega_0}^\star\big)^{-1}$, and $\Vb_{\omega_0}:=\operatorname{Var} \big(\nabla_{\Lambdab} q_{\Lambdab}(\Xb)\big)\big|_{\Lambdab=\Lambdab^\star_{\omega_0}}$.
    \end{enumerate}

\subsubsection{Existence of a population maximiser}

\begin{proof}[Proof of (i)]
On the compact $\Theta_K$, $(\Lambdab, \xb) \mapsto q_\Lambdab(\xb)$ is continuous and is uniformly bounded on the bounded set $\mathcal{S}$. For each $x\in\mathcal S$, $\Lambdab\mapsto q_\Lambdab(\xb)$ is continuous. By dominated convergence (bounded envelope on $\mathcal S$),
$\Lambdab\mapsto \mathbb E[q_\Lambdab(\Xb)]$ is continuous. The entropy term
$\Lambdab\mapsto \mathcal H_{\mathcal S}(q_\Lambdab)$ is continuous on $\Theta_K$ as well.
Thus, by Weierstrass, $\mathcal L_{\omega_\ast}$ is continuous on compact $\Theta_K$ and attains its maximum.
\end{proof}

\subsubsection{Consistency of the empirical maximisers}

The proof of this result consists of two steps: the derivation of a uniform law of large numbers, due to the Glivenko--Cantelli (GC) property of the Gaussian mixture family in $\Theta_K$ (see App.~\ref{app:gc_property}, Cor.~\ref{cor:gc_gauss_mix}), and the application of the Argmax Theorem. For more details on the GC property, we refer to Section~\ref{app:gc_property}.

\begin{proof}[Proof of (ii)]
The class $\{q_\Lambdab:\Lambdab\in\Theta_K\}$ is Glivenko--Cantelli with an integrable envelope on $\mathcal S$, so as $N \to \infty$:
\begin{equation*}
\sup_{\Lambdab\in\Theta_K} \left\lvert \frac{1}{N} \sum_{i=1}^N q_\Lambdab(\Xb_i) - \mathbb{ E}_p[q_\Lambdab(\Xb)]\right\rvert \xrightarrow[]{P} 0.
\end{equation*}
Adding the deterministic term $\omega_\ast \mathcal{H}_\mathcal{S}(q_\Lambdab)$ yields 
\begin{equation*}
\sup_{\Lambdab\in\Theta_K}\big|\widehat{\mathcal L}_{\omega_\ast,N}(\Lambdab)-\mathcal L_{\omega_\ast}(\Lambdab)\big|\xrightarrow[]{P} 0.
\end{equation*}
Since $\mathcal L_{\omega_\ast}$ has a unique maximiser $\Lambdab^\star_{\omega_\ast}$ (Assumption (B2)), the Argmax Theorem \citep[Thm.~5.7]{van1998asymptotic} yields
$\widehat\Lambdab_{\omega_\ast,N}\xrightarrow{P}\Lambdab^\star_{\omega_\ast}$, as $N \to \infty$.
\end{proof}

\subsubsection{Asymptotic normality of the empirical maximisers}

The asymptotic normality result is derived from a Taylor expansion and the Donsker property on the Gaussian mixture family but also on the gradients. Lemma~\ref{lem:gc-donsker-mixture} gives these properties.

\begin{proof}[Proof of (iii)]
By Lemma~\ref{lem:gc-donsker-mixture}, $\mathcal{Q}$ is GC, so
$\sup_{\Lambdab\in\Theta_K}\big|\widehat{\mathcal{L}}_{\omega_\ast,N}(\Lambdab)-\mathcal{L}_{\omega_\ast}(\Lambdab)\big|\xrightarrow{P}0$,
hence $\widehat\Lambdab_{\omega_\ast,N}\xrightarrow{P}\Lambdab^\star_{\omega_\ast}$ by the standard M-estimator consistency theorem \citep[Thm.~5.7]{van1998asymptotic}.
Write
\begin{align*}
&\nabla_{\Lambdab} \widehat{\mathcal{L}}_{\omega_\ast,N}(\Lambdab)
=\frac{1}{N}\sum_{i=1}^N \nabla_{\Lambdab} q_{\Lambdab}(X_i)
+ \omega_\ast \nabla_{\Lambdab} \mathcal{H}_{\mathcal{S}}(q_{\Lambdab}),
\\
&\nabla_{\Lambdab} \mathcal{L}_{\omega_\ast}(\Lambdab)
=\mathbb{E} [\nabla_{\Lambdab} q_{\Lambdab}(X)]+\omega_\ast \nabla_{\Lambdab} \mathcal{H}_{\mathcal{S}}(q_{\Lambdab}).
\end{align*}
Thus 
$\nabla_{\Lambdab} \widehat{\mathcal{L}}_{\omega_\ast,N}(\Lambdab^\star_{\omega_\ast})-\nabla_{\Lambdab} \mathcal{L}_{\omega_\ast}(\Lambdab^\star_{\omega_\ast})
=(P_N-P)\nabla_{\Lambdab} q_{\Lambdab}(X)\big|_{\Lambdab=\Lambdab^\star_{\omega_\ast}}$
has mean zero and variance $\Vb_{\omega_\ast}/N$.

Furthermore, a Taylor expansion around $\Lambdab^\star_{\omega_\ast}$ gives
\begin{equation*}
0=\nabla_{\Lambdab} \widehat{\mathcal{L}}_{\omega_\ast,N}(\widehat\Lambdab_{\omega_\ast,N})
=\nabla_{\Lambdab} \widehat{\mathcal{L}}_{\omega_\ast,N}(\Lambdab^\star_{\omega_\ast})
- \Hb_{\omega_\ast}^\star(\widehat\Lambdab_{\omega_\ast,N}-\Lambdab^\star_{\omega_\ast})
+ r_N,
\end{equation*}
where, by the integral form of Taylor’s theorem,
\begin{equation*}
r_N= \frac12(\widehat\Lambdab_{\omega_\ast,N}-\Lambdab^\star_{\omega_\ast})^T
\Big[\nabla^3_{\Lambdab}\widehat{\mathcal{L}}_{\omega_\ast,N}(\Lambdab^\star_{\omega_\ast}+\tau(\widehat\Lambdab_{\omega_\ast,N}-\Lambdab^\star_{\omega_\ast}))\Big]
(\widehat\Lambdab_{\omega_\ast,N}-\Lambdab^\star_{\omega_\ast})
\end{equation*}
for some $\tau\in(0,1)$. Since $\Theta_K$ is compact and $\mathcal{S}$ (the support of $P$) is bounded, the parametric map
$\Lambdab\mapsto q_{\Lambdab}(\xb)$ is $C^\infty$ and all third derivatives are uniformly bounded on $\mathcal{S}\times\Theta_K$. Hence, $\nabla^3_\Lambdab \widehat{\mathcal{L}}_{\omega_\ast,N}$ is uniformly bounded, giving
$|r_N|\le C\|\widehat\Lambdab_{\omega_\ast,N}-\Lambdab^\star_{\omega_\ast}\|^2
=o_P(\|\widehat\Lambdab_{\omega_\ast,N}-\Lambdab^\star_{\omega_\ast}\|)$.

Therefore, by the CLT for the $P$-Donsker gradient class (Lemma~\ref{lem:gc-donsker-mixture}),
$\sqrt{N} \nabla_{\Lambdab} \widehat{\mathcal{L}}_{\omega_\ast,N}(\Lambdab^\star_{\omega_\ast})
\xrightarrow[]{\mathcal{D}} \mathcal{N}(0,\Vb_{\omega_\ast})$.
Since $\Hb_{\omega_\ast}^\star$ is positive definite, rearranging yields the usual Z-estimator linearisation \citep[Thm.~5.41]{van1998asymptotic}:
\begin{equation*}
\sqrt{N} (\widehat\Lambdab_{\omega_\ast,N}-\Lambdab^\star_{\omega_\ast})
= (\Hb_{\omega_\ast}^\star)^{-1} \sqrt{N} \nabla_{\Lambdab} \widehat{\mathcal{L}}_{\omega_\ast,N}(\Lambdab^\star_{\omega_\ast})
+ o_P(1),
\end{equation*}
and therefore
$\sqrt{N} (\widehat\Lambdab_{\omega_\ast,N}-\Lambdab^\star_{\omega_\ast})
\xrightarrow[]{\mathcal{D}} \mathcal{N} \big(0, (\Hb_{\omega_\ast}^\star)^{-1}\Vb_{\omega_\ast}(\Hb_{\omega_\ast}^\star)^{-1}\big)$, as $N \to \infty$.
\end{proof}

\subsection{Proofs for Sections~\ref{sec:uq} and~\ref{app:uq}}
\label{app:proofs_uq}

\subsubsection{Preliminaries: Bounded-Lipschitz metric}

Let $(\mathcal S,d)$ be a metric space. For $f:\mathcal S\to\mathbb R$ define
\begin{equation*}
\|f\|_{\infty}=\sup_{\xb\in\mathcal S}|f(\xb)|,\qquad
\mathrm{Lip}(f)=\sup_{\xb\neq \yb}\frac{|f(\xb)-f(\yb)|}{d(\xb,\yb)}.
\end{equation*}
For two measures $\nu, \nu'$ on $\mathcal{S}$, the bounded-Lipschitz (BL) metric is
\begin{equation*}
d_{\mathrm{BL}}(\nu,\nu') := \sup_{\|f\|_{\mathrm{BL}}\le 1} \Big|\textstyle\int f d\nu - \int f d\nu'\Big|,
\qquad 
\|f\|_{\mathrm{BL}}:=\|f\|_\infty+\mathrm{Lip}(f).
\end{equation*}
Convergence in $d_{\mathrm{BL}}$ implies weak convergence \citep{dudley2002real}, denoted by $\xrightarrow[]{w}$. We use it to state bootstrap results compactly.

\subsubsection{Technical lemmas}

We state these lemmas, with proofs in Section~\ref{app:technical_lemmas_proofs}.

\begin{lemma}[Regularity for Gaussian mixtures]\label{lem:mixture-regularity}
For a bounded space $\mathcal{S}$ and on the compact $\Theta_K$,
there exists a neighborhood $\mathcal V\subset \Theta_K$ of $\Lambdab^\star_{\omega_0}$ such that
\begin{enumerate}
    \item[(i)] $\Lambdab\mapsto q_\Lambdab(\xb)$ is $\mathcal{C}^2$ on $\mathcal V$ for every $\xb\in\mathcal S$;
    \item[(ii)] $\sup_{\Lambdab\in\mathcal V}\|\nabla_\Lambdab q_\Lambdab(\xb)\|\le F_1(\xb)$ for a bounded-Lipschitz function $F_1$ on $\mathcal S$;
    \item[(iii)] $\sup_{\Lambdab\in\mathcal V}\|\nabla^2_\Lambdab q_\Lambdab(\xb)\|\le F_2(\xb)$ for a bounded-Lipschitz function $F_2$ on $\mathcal S$.
\end{enumerate}
\end{lemma}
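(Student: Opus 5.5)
The plan is to exploit the fact that, on the compact set $\Theta_K$, every mixture parameter stays in a bounded region. Means satisfy $\lVert\mub_k\rVert_\infty\le\mub_{\max}$, and precisions have eigenvalues in $[\sigma_{\max}^{-2},\sigma_{\min}^{-2}]$. The logits satisfy $\lvert v_k\rvert\le v_{\max}$, so all weights are bounded below by $(1+(K-1)e^{2v_{\max}})^{-1}$. We can therefore take $\mathcal V$ to be all of $\Theta_K$, or any open neighbourhood of $\Lambdab^\star_{\omega_0}$ whose closure lies in a slight enlargement of $\Theta_K$ where the same bounds hold up to a factor of $2$. The bounds we prove will be uniform on this closure.

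\textbf{Smoothness (i).} Write $q_\Lambdab(\xb)=\sum_{k}\pi_k(\vb)\,\mathcal N(\xb;\mub_k,\Sigmab_k)$. Expressed in natural parameters, each component reads $\exp\big(\lambdab_{k,1}^T\xb+\xb^T\lambdab_{k,2}\xb-A(\lambdab_k)\big)$. The log-partition $A(\lambdab)=-\tfrac14\lambdab_1^T\lambdab_2^{-1}\lambdab_1-\tfrac12\log\det(-2\lambdab_2)+\mathrm{const}$ is real-analytic on the open cone where $\lambdab_2\prec0$. The softmax map $\vb\mapsto\pi(\vb)$ is also analytic. Hence $(\Lambdab,\xb)\mapsto q_\Lambdab(\xb)$ is $\mathcal C^\infty$ jointly on a neighbourhood of $\overline{\mathcal V}\times\overline{\mathcal S}$, which gives (i), and more.

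\textbf{Envelopes (ii)--(iii).} Differentiating the exponential-family form gives, for each component,
\[
\nabla_{\lambdab_k}\mathcal N(\xb;\lambdab_k)=\big(\Tb(\xb)-\nabla A(\lambdab_k)\big)\mathcal N(\xb;\lambdab_k).
\]
Here $\Tb(\xb)=(\xb,\xb\xb^T)$. Second derivatives produce products of such terms together with $\nabla^2A$. The $\vb$-derivatives produce $\pi_k(\delta_{kj}-\pi_j)$ factors and their derivatives, all bounded by constants. The derivatives of $A$ are continuous on the compact $\overline{\mathcal V}$, hence bounded. This yields
\[
\|\nabla_\Lambdab q_\Lambdab(\xb)\|\le C_1(1+\lVert\xb\rVert+\lVert\xb\rVert^2),\qquad \|\nabla^2_\Lambdab q_\Lambdab(\xb)\|\le C_2(1+\lVert\xb\rVert^4),
\]
uniformly in $\Lambdab\in\mathcal V$, using $\mathcal N\le(2\pi\sigma_{\min}^2)^{-d/2}$; this matches Step 1 of Lemma~\ref{lem:gc-donsker-mixture}. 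Since $\mathcal S$ is bounded, say $\mathcal S\subset B(0,R)$, we simply take $F_1$ and $F_2$ to be these polynomials, or even the constants $C_1(1+R+R^2)$ and $C_2(1+R^4)$. Constants are trivially bounded-Lipschitz. If one wants a genuinely $\xb$-dependent envelope, the polynomials are Lipschitz on the bounded $\mathcal S$ with constant $C(1+R^3)$.

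\textbf{Main obstacle.} The delicate point is ensuring that the sup over $\mathcal V$ is taken over a set on which $\lambdab_2$ stays uniformly negative definite and the weights stay bounded away from zero. Otherwise $\nabla A$ and $\nabla^2A$, which involve $\lambdab_2^{-1}$, could blow up. I would handle this by choosing $\mathcal V$ explicitly: take the intersection of $\Theta_K$ with a closed ball around $\Lambdab^\star_{\omega_0}$, or all of $\Theta_K$. On that set the spectral constraints $\sigma^2_{\min}\Ib\preceq\Sigmab\preceq\sigma^2_{\max}\Ib$ and the bound $\lvert v_k\rvert\le v_{\max}$ hold by definition. Continuity on a compact set then gives finite suprema. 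The only remaining care is interpreting $\mathcal C^2$ at boundary points of $\Theta_K$, which is resolved by analytic extension to the open enlargement described above.
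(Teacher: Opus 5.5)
Your proposal is correct and follows the same route as the paper's sketch. Both establish smoothness of the mixture density in the parameters on the compact $\Theta_K$, where covariances stay uniformly positive definite, and then use polynomial-in-$\xb$ envelopes $C_1(1+\|\xb\|+\|\xb\|^2)$ and $C_2(1+\dots+\|\xb\|^4)$, which are bounded-Lipschitz on the bounded $\mathcal S$. Your explicit exponential-family derivative formula, your remark that constant envelopes already suffice, and your handling of boundary points are useful precision but do not change the argument.
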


\begin{lemma}[Empirical-process CLT and bootstrap]\label{lem:boot-clt}
Let $\mathcal Q^{(u)}_\nabla := \bigl\{ \xb\mapsto [\nabla_\Lambdab q_\Lambdab(\xb)]_u: \Lambdab\in\Theta_K \bigr\}$ and $\mathcal Q^{(u,v)}_{\nabla^2} := \bigl\{ \xb\mapsto [\nabla^2_\Lambdab q_\Lambdab(\xb)]_{uv}: \Lambdab\in\Theta_K \bigr\}$ for all $u, v = 1, \dots, \dim(\Theta_K)$, and define the function class
\begin{equation*}
    \mathcal{F} = \bigcup_{u=1}^{\dim(\Theta_K)} Q^{(u)}_\nabla \cup \bigcup_{u,v=1}^{\dim(\Theta_K)} Q^{(u,v)}_{\nabla^2}.
\end{equation*}
We have, as $N \to \infty$:
\begin{equation*}
\sqrt{N} (P_N-P) \xrightarrow[]{w} \mathbb{G}_P\quad\text{in }\ell^\infty(\mathcal F), 
\qquad 
\sqrt{N} (P_N^\ast-P_N) \xrightarrow[]{w} \mathbb{G}_P\quad\text{in }\ell^\infty(\mathcal F),
\end{equation*}
conditionally on $\Xb_{1:N}$, in $P$-probability, where $\mathbb{G}_P$ denote the $P$-Brownian bridge indexed by $\mathcal F$.
\end{lemma}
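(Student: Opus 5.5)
The statement to prove is Lemma~\ref{lem:boot-clt}. The plan is to reduce both claims to standard empirical-process results: the Donsker theorem for the first display, and Giné--Zinn's bootstrap Donsker theorem \citep[Thm.~3.6.1]{van1996weak} for the second. Both require only that the indexing class $\mathcal F$ be $P$-Donsker with an envelope in $L_2(P)$ (square-integrable envelope, $P^\ast F^2<\infty$). The work therefore consists in checking that $\mathcal F$, a finite union of classes, inherits these properties from its pieces.

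\emph{Step 1 (pieces are Donsker).} By Lemma~\ref{lem:gc-donsker-mixture}, each $\mathcal Q^{(u)}_\nabla$ and each $\mathcal Q^{(u,v)}_{\nabla^2}$ is $P$-Donsker. The envelopes are the constants $C_1(1+R+R^2)$ and $C_2(1+\dots+R^4)$, which lie in $L_2(P)$ because $\mathcal S$ is bounded. Lemma~\ref{lem:mixture-regularity} gives the same conclusion locally. The argument there was polynomial bracketing numbers obtained from Lipschitz dependence on $\Lambdab$ over the compact set $\Theta_K$; this is the Euclidean parametric bound \citep[Thm.~2.7.11]{van1996weak}.

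\emph{Step 2 (finite union).} A finite union of Donsker classes is Donsker \citep[Ex.~2.10.7]{van1996weak}. One way to see it is that the bracketing numbers of $\mathcal F$ are bounded by the sum of those of its $\dim(\Theta_K)+\dim(\Theta_K)^2$ pieces, so the bracketing integral stays finite. The maximum of the piecewise envelopes is again a bounded, hence square-integrable, envelope. The first display then follows directly from the definition of the $P$-Donsker property, with limit the $P$-Brownian bridge $\mathbb G_P$ on $\ell^\infty(\mathcal F)$.

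\emph{Step 3 (bootstrap).} Giné--Zinn's bootstrap Donsker theorem \citep[Thm.~3.6.1]{van1996weak} states that if $\mathcal F$ is Donsker and $P^\ast\|f-Pf\|^2_{\mathcal F}<\infty$, then
\[
\sup_{h\in \mathrm{BL}_1}\bigl|\mathbb E^\ast h(\sqrt N(P_N^\ast-P_N))-\mathbb E h(\mathbb G_P)\bigr|\to 0
\]
in outer probability. This is exactly weak convergence conditionally on $\Xb_{1:N}$, in $P$-probability, expressed in the bounded-Lipschitz metric introduced above. The moment condition holds trivially because the envelope is bounded.

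\emph{Main obstacle: measurability.} I expect the hardest step to be the measurability requirements behind ``in outer probability'' and the conditional statement. I would handle them by noting that every $f\in\mathcal F$ is continuous in $\Lambdab$. The supremum over $\mathcal F$ can then be taken over a countable dense subset of $\Theta_K$, so each piece is pointwise measurable. Outer expectations coincide with ordinary ones, and the theorem applies without modification.
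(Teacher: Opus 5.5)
Your proposal is correct and follows essentially the same route as the paper: each piece is $P$-Donsker with a bounded envelope by Lemma~\ref{lem:gc-donsker-mixture}, finite unions preserve the Donsker property \citep[Ex.~2.10.7]{van1996weak}, and the Gin\'e--Zinn bootstrap CLT \citep[Sec.~3.6]{van1996weak} then gives the conditional statement. Your pointwise-measurability remark is a useful addition that the paper omits; note also that the in-probability bootstrap theorem needs only the Donsker property with a finite envelope, since the second-moment condition you invoke is required only for the almost-sure version.
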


\begin{lemma}[Hadamard differentiability of the argmax map]\label{lem:argmax-hadamard}
Let $\mathcal V \subset \Theta_K$ be the set in which Lemma~\ref{lem:mixture-regularity} holds and $\mathcal{F}$ be the function class of Lemma~\ref{lem:boot-clt}. Let
\begin{equation*}
    \ell_{\mathrm{lin}}^\infty(\mathcal F) := \{h \in \ell^\infty(\mathcal F): h \text{ is linear} \}.
\end{equation*}
For $h \in \ell_{\mathrm{lin}}^\infty(\mathcal F)$, let us write $h(\nabla_\Lambdab q_\Lambdab) := \left(h([\nabla_\Lambdab q_\Lambdab]_u)\right)_{1\le u\le \dim(\Theta_K)}$ and $h(\nabla^2_\Lambdab q_\Lambdab) := \left(h([\nabla^2_\Lambdab q_\Lambdab]_{u,v})\right)_{1\le u,v \le \dim(\Theta_K)}$.
Define $S:\mathcal \ell_{\mathrm{lin}}^\infty(\mathcal F) \times\mathcal{V}\to\mathbb R^{\mathrm{dim}(\Theta_K)}$ by
\begin{equation*}
S(G,\Lambdab)= G(\nabla_{\Lambdab}q_{\Lambdab}) +\omega_0 \nabla_{\Lambdab}\mathcal H_{\mathcal S}(q_{\Lambdab}).
\end{equation*}
Under Assumption~\ref{ass:consistency}, we have the following
\begin{itemize}
    \item[(i)] There exists a neighborhood $\mathcal{U}$ of $P$ such that for every $G \in \mathcal{U}$, the equation over $\Lambdab$ 
    \begin{equation*}
        S(G, \Lambdab) = 0
    \end{equation*}
    admits a unique solution $\Lambdab(G) \in \mathcal{V}$.
    \item[(ii)] The map $\Phi : G \mapsto \Lambdab(G)$ is Hadamard differentiable at $G=P$, tangentially to 
$\ell_{\mathrm{lin}}^\infty(\mathcal{F})$, with
\begin{equation*}
D\Phi_P(h)=(\Hb_{\omega_0}^\ast)^{-1} h(\nabla_{\Lambdab}q_{\Lambdab^\ast_{\omega_0}}).
\end{equation*}
\end{itemize}
\end{lemma}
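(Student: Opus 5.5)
The argument is an implicit function theorem in the Banach space $\ell^\infty(\mathcal F)$, followed by a chain-rule computation of the derivative. I treat $S$ as a map on $\ell_{\mathrm{lin}}^\infty(\mathcal F)\times\mathcal V$. In its second argument it is $\mathcal C^1$, using the regularity of Lemma~\ref{lem:mixture-regularity} together with smoothness of the entropy term on the compact $\Theta_K$. In its first argument it is affine and bounded: $|G(\nabla_\Lambdab q_\Lambdab)-G'(\nabla_\Lambdab q_\Lambdab)|\le \sqrt{\dim\Theta_K}\,\|G-G'\|_{\mathcal F}$, since each coordinate of $\nabla_\Lambdab q_\Lambdab$ lies in $\mathcal F$ for $\Lambdab\in\mathcal V$.

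\emph{Anchor and local uniqueness.} At $G=P$, $S(P,\Lambdab^\star_{\omega_0})=\nabla_\Lambdab\mathcal L_{\omega_0}(\Lambdab^\star_{\omega_0})=0$, because (B2) places the maximiser in $\operatorname{int}(\Theta_K)$. Differentiation under the integral is justified by the envelopes $F_1,F_2$. Also $\partial_\Lambdab S(P,\Lambdab^\star_{\omega_0})=-\Hb^\star_{\omega_0}$, which is invertible by (B3).

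For (i), I would not cite an abstract Banach-space implicit function theorem. Instead I run a contraction argument: define
\[
T_G(\Lambdab)=\Lambdab+(\Hb^\star_{\omega_0})^{-1}S(G,\Lambdab)
\]
on a closed ball $\overline B(\Lambdab^\star_{\omega_0},r)\subset\mathcal V$. Its Jacobian is $\Ib+(\Hb^\star_{\omega_0})^{-1}\partial_\Lambdab S(G,\Lambdab)$. This is small in norm because:
\begin{itemize}
\item $\partial_\Lambdab S(G,\Lambdab)-\partial_\Lambdab S(P,\Lambdab)=(G-P)(\nabla^2_\Lambdab q_\Lambdab)$ is bounded by $\dim(\Theta_K)\|G-P\|_{\mathcal F}$, since the Hessian entries are in $\mathcal F$;
\item $\partial_\Lambdab S(P,\cdot)$ is continuous at $\Lambdab^\star_{\omega_0}$.
\end{itemize}
So $T_G$ is a contraction for $r$ small and $\|G-P\|_{\mathcal F}$ small. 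It maps the ball into itself once $\|T_G(\Lambdab^\star_{\omega_0})-\Lambdab^\star_{\omega_0}\|=\|(\Hb^\star_{\omega_0})^{-1}(G-P)(\nabla_\Lambdab q_{\Lambdab^\star_{\omega_0}})\|$ is below $r/2$. Banach's fixed point theorem then gives the unique solution $\Lambdab(G)$, and $\mathcal U$ is the resulting $\|\cdot\|_{\mathcal F}$-ball around $P$. Throughout, $\mathcal V$ is shrunk if necessary so that it is a ball around $\Lambdab^\star_{\omega_0}$.

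\emph{Hadamard differentiability.} For (ii), take $t_n\downarrow0$ and $h_n\to h$ in $\ell^\infty_{\mathrm{lin}}(\mathcal F)$, and set $G_n=P+t_nh_n$ and $\Lambdab_n=\Lambdab(G_n)$.
\begin{enumerate}
\item By the contraction estimate, $\|\Lambdab_n-\Lambdab^\star_{\omega_0}\|\le 2\|(\Hb^\star_{\omega_0})^{-1}\|\,t_n\|h_n\|_{\mathcal F}\sqrt{\dim\Theta_K}=O(t_n)$.
\item Expand
\[
0=S(G_n,\Lambdab_n)=S(P,\Lambdab_n)+t_nh_n(\nabla_\Lambdab q_{\Lambdab_n}).
\]
A mean-value expansion gives $S(P,\Lambdab_n)=-\Hb^\star_{\omega_0}(\Lambdab_n-\Lambdab^\star_{\omega_0})+o(t_n)$.
\item Next, $h_n(\nabla q_{\Lambdab_n})-h(\nabla q_{\Lambdab^\star_{\omega_0}})\to0$. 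The piece $(h_n-h)(\nabla q_{\Lambdab_n})$ is bounded by $\|h_n-h\|_{\mathcal F}$. The piece $h(\nabla q_{\Lambdab_n})-h(\nabla q_{\Lambdab^\star_{\omega_0}})$ is the delicate one.
\end{enumerate}
Dividing by $t_n$ then yields $(\Lambdab_n-\Lambdab^\star_{\omega_0})/t_n\to(\Hb^\star_{\omega_0})^{-1}h(\nabla_\Lambdab q_{\Lambdab^\star_{\omega_0}})$.

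\emph{Main obstacle.} The delicate step is exactly $h(\nabla q_{\Lambdab_n})\to h(\nabla q_{\Lambdab^\star_{\omega_0}})$. An element $h\in\ell^\infty(\mathcal F)$ need not be continuous in the index $\Lambdab$. I would handle this with the linearity of $h$ plus the Hessian class. Write
\[
[\nabla q_{\Lambdab_n}]_u-[\nabla q_{\Lambdab^\star_{\omega_0}}]_u=\sum_v(\Lambdab_n-\Lambdab^\star_{\omega_0})_v\int_0^1[\nabla^2 q_{\Lambdab_s}]_{uv}\,ds.
\]
Linearity, applied to Riemann-sum approximations of this integral with the sup-norm controlled by the Lipschitz bounds of Lemma~\ref{lem:mixture-regularity}, bounds the difference by $\|h\|_{\mathcal F}\,\dim(\Theta_K)\,\|\Lambdab_n-\Lambdab^\star_{\omega_0}\|\to0$.

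If that route fails, the fallback is to restrict tangentially to the set of $h$ that are uniformly continuous with respect to the $L_2(P)$ semimetric on $\mathcal F$. This still suffices for the delta method, because the Brownian bridge $\mathbb G_P$ of Lemma~\ref{lem:boot-clt} has such sample paths almost surely, and $\Lambdab\mapsto\nabla q_\Lambdab$ is Lipschitz into $L_2(P)$ by Lemma~\ref{lem:gc-donsker-mixture}.
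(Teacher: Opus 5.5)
Your part (i) is the paper's argument. It uses the same map $T_G(\Lambdab)=\Lambdab+(\Hb^\star_{\omega_0})^{-1}S(G,\Lambdab)$, gets the contraction from $\partial_\Lambdab S(G,\cdot)-\partial_\Lambdab S(P,\cdot)=(G-P)(\nabla^2_\Lambdab q_\Lambdab)$ plus continuity of $\partial_\Lambdab S(P,\cdot)$, and gets self-mapping from the size of $(G-P)(\nabla_\Lambdab q_{\Lambdab^\star_{\omega_0}})$.

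The divergence is in (ii), and there the step you flagged as delicate is not proved. Because you write $0=S(P,\Lambdab_n)+t_nh_n(\nabla_\Lambdab q_{\Lambdab_n})$, $h_n$ is evaluated at the moving index $\Lambdab_n$, so you need $h(\nabla_\Lambdab q_{\Lambdab_n})\to h(\nabla_\Lambdab q_{\Lambdab^\star_{\omega_0}})$. The Riemann-sum argument cannot give this:
\begin{itemize}
\item An element of $\ell^\infty_{\mathrm{lin}}(\mathcal F)$ is controlled only on $\mathcal F$ itself. Linearity plus $\sup_{f\in\mathcal F}|h(f)|<\infty$ does not make $h$ continuous for the sup-norm on the span of $\mathcal F$.
\item So $h$ of the second-order remainders in your Riemann or telescoping sums is uncontrolled: they are small in sup-norm, but they are not elements of $\mathcal F$.
\item The integral $\int_0^1[\nabla^2_\Lambdab q_{\Lambdab_s}]_{uv}\,ds$ need not even lie in the domain of $h$.
\end{itemize}
This is a real obstruction. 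The functions $\xb^{\alpha}\mathcal N(\xb;\mub,\Sigmab)$ with distinct $(\mub,\Sigmab)$ are linearly independent. So letting $h$ return the coefficient of the basis element $\mathcal N(\xb;\mub^\star_1,\Sigmab^\star_1)$ (the first component of $\Lambdab^\star_{\omega_0}$, with $\alpha=0$) defines a linear $h$ that is bounded on $\mathcal F$. Yet $\Lambdab\mapsto h(\partial_{v_1}q_\Lambdab)$, where $\partial_{v_1}q_\Lambdab=\pi_1(q_{\lambdab_1}-q_\Lambdab)$, equals $\pi_1^\star(1-\pi_1^\star)\neq0$ at $\Lambdab^\star_{\omega_0}$ and $0$ at every nearby $\Lambdab$ with $\lambdab_1\neq\lambdab_1^\star$.

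The paper sidesteps this by expanding in $\Lambdab$ at the perturbed measure,
\[
0=\bigl[S(G_n,\Lambdab_n)-S(G_n,\Lambdab^\star_{\omega_0})\bigr]+t_nh_n(\nabla_\Lambdab q_{\Lambdab^\star_{\omega_0}}).
\]
Here $h_n$ is only evaluated at the fixed index $\Lambdab^\star_{\omega_0}$, where $h_n\to h$ in $\|\cdot\|_{\mathcal F}$ suffices. The $h_n$-part of $\partial_\Lambdab S(G_n,\widetilde\Lambdab_n)$ is $t_nh_n(\nabla^2_\Lambdab q_{\widetilde\Lambdab_n})=O(t_n\|h_n\|_{\mathcal F})$, which needs no continuity in the index.

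That expansion relies on the premise you also used in (i): that $\Lambdab\mapsto G(\nabla_\Lambdab q_\Lambdab)$ is $\mathcal C^1$ with derivative $G(\nabla^2_\Lambdab q_\Lambdab)$ for $G$ near $P$. Granting it, your own step closes at once. Apply the mean-value theorem to $G=P+th$ and you get exactly the bound $\dim(\Theta_K)\|h\|_{\mathcal F}\|\Lambdab_n-\Lambdab^\star_{\omega_0}\|$ you were after, with no Riemann sums.

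Be aware that the example above violates this premise too; the paper asserts it for all $G\in\ell^\infty(\mathcal F)$ without proof. The robust repair is in the spirit of your fallback, but it must also touch (i):
\begin{itemize}
\item Restrict the domain of $\Phi$ to finite signed measures near $P$. This domain contains $P_N$ and $P_N^\ast$, and on it differentiation under the integral, hence the contraction, is legitimate.
\item With that domain, your expansion goes through: $h_n(\nabla_\Lambdab q_\Lambdab)$ is continuous in $\Lambdab$, so its uniform limit $h(\nabla_\Lambdab q_\Lambdab)$ is too. The paper's expansion goes through as well.
\item Tangential differentiability along $\rho_P$-uniformly continuous $h$, which carry $\mathbb G_P$, is what the bootstrap delta method uses.
\end{itemize}
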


\begin{lemma}[Measurability of the matching functional]\label{lem:measurability}
$\Lambdab\mapsto \mathcal{M}(\Lambdab)$ is Borel-measurable on $\Theta_K$. In fact, $\mathcal{M}$ is continuous on a finite Borel partition of $\Theta_K$.
\end{lemma}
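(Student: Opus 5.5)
The plan is to make the map $\Lambdab\mapsto\mathcal M(\Lambdab)$ explicit as a finite composition of continuous coordinate maps and indicator functions of Borel sets. Continuity on each piece of a finite Borel partition then gives Borel measurability directly.

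\emph{Step 1: continuity of the means.} For each injection $\pi:\{1,\dots,K_0\}\hookrightarrow\{1,\dots,K\}$ (there are finitely many, forming a set $\Pi$), consider the cost
\[
c_\pi(\Lambdab):=\sum_{j=1}^{K_0}\|\mub_{\pi(j)}(\Lambdab)-\ub_j\|.
\]
The mean map $\Lambdab\mapsto\mub_k(\Lambdab)=-\tfrac12(\lambdab_k^{(2)})^{-1}\lambdab_k^{(1)}$ is continuous on $\Theta_K$, since the precision matrices are uniformly invertible by the bounds $\sigma_{\min}^2\Ib\preceq\Sigmab\preceq\sigma_{\max}^2\Ib$. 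Hence each $c_\pi$ is continuous, and so is each candidate output
\[
\mathcal M^\pi(\Lambdab):=\big(\mub_{\pi(1)}(\Lambdab)^T,\dots,\mub_{\pi(K_0)}(\Lambdab)^T\big)^T.
\]

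\emph{Step 2: the selection sets.} Fix the total (lexicographic) order $\pi_1\prec\pi_2\prec\cdots\prec\pi_m$ on $\Pi$ that defines the tie-break. Define
\[
A_r:=\{\Lambdab\in\Theta_K:\ c_{\pi_r}(\Lambdab)\le c_{\pi_s}(\Lambdab)\ \forall s,\ \ c_{\pi_r}(\Lambdab)<c_{\pi_s}(\Lambdab)\ \forall s<r\}.
\]
Each condition $c_{\pi_r}\le c_{\pi_s}$ defines a closed set, and each condition $c_{\pi_r}<c_{\pi_s}$ an open set. So $A_r$ is a finite intersection of closed and open sets, hence Borel. By construction the sets $A_r$ are pairwise disjoint: $A_r$ is exactly the set where $\pi_r$ is the first minimiser in the tie-break order. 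They also cover $\Theta_K$, because a minimiser always exists over the finite set $\Pi$.

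\emph{Step 3: conclusion.} We can now write
\[
\mathcal M(\Lambdab)=\sum_{r=1}^m \mathds 1_{A_r}(\Lambdab)\,\mathcal M^{\pi_r}(\Lambdab).
\]
Each summand is the product of a Borel indicator with a continuous function, so $\mathcal M$ is Borel-measurable. On each $A_r$ it coincides with the continuous map $\mathcal M^{\pi_r}$, which gives the claimed finite Borel partition.

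The only delicate point is Step 2: one must check that the tie-break really selects a unique $\pi$ and that the resulting sets are Borel rather than merely analytic. Writing the tie-break as "first minimiser in a fixed total order on the finite set $\Pi$" settles both. If the paper's tie-break is instead defined through the nearest-index map $a(j,\Lambdab)$, the same construction applies with $\Pi$ replaced by the finite set of index maps.
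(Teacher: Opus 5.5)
Your proof is correct and follows essentially the same route as the paper: the costs $c_\pi$ and candidate outputs are continuous, and your sets $A_r$ are exactly the paper's tie-break regions $\mathcal T_\pi$, on which $\mathcal M$ agrees with a continuous map. Your version is slightly tidier, because the paper also lists the open uniqueness regions $\mathcal U_\pi$ in its ``partition'' even though $\mathcal U_\pi\subseteq\mathcal T_\pi$, whereas you check that the $A_r$ are disjoint and cover $\Theta_K$.
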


\begin{lemma}[Continuity of the stability indicator $F_j$]\label{lem:indicator-continuity}
Fix $j\in [K_0]$ and let $U_j=B(\ub_j,r_j)$ be an open ball.
Assume (by shrinking $r_j$ if needed) that $\Eb_j \mathcal{M}(\Lambdab^\star_{\omega_0})$ lies at positive distance from $\partial U_j$.
Then, under Assumption~\ref{ass:consistency}, for any deterministic measurable selector $\Phi$ from the argmax,
$F_j(G)=\mathds{1}\{\Eb_j \mathcal{M}(\Phi(G))\in U_j\}$ is continuous at $G=P$ with respect to weak convergence:
if $G_n\xrightarrow{w}P$, then $F_j(G_n)\to F_j(P)$.
\end{lemma}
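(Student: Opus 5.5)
The plan is to reduce the claim to the argmax consistency already available at $G=P$, plus the local separation margin of Assumption~\ref{as:sep-match_main}. First I record that $F_j(P)$ is well defined and stable. By Assumption~\ref{ass:consistency}(B2) the argmax at $G=P$ is the singleton $\{\Lambdab^\star_{\omega_0}\}$, so $\Phi(P)=\Lambdab^\star_{\omega_0}$ for any selector. Under Assumption~\ref{as:sep-match_main}, $\mathcal M$ is continuous in a neighbourhood of $\Lambdab^\star_{\omega_0}$: the nearest-index assignment is locally constant there, so $\mathcal M$ coincides locally with a fixed coordinate projection $\Lambdab\mapsto(\mub_{a(1)}(\Lambdab),\dots,\mub_{a(K_0)}(\Lambdab))$, and $\Lambdab\mapsto\mub_k(\Lambdab)$ is smooth on $\Theta_K$ (compare Lemma~\ref{lem:measurability}). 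Since $\Eb_j\mathcal M(\Lambdab^\star_{\omega_0})$ is at distance $\eta>0$ from $\partial U_j$, there is a neighbourhood $\mathcal N$ of $\Lambdab^\star_{\omega_0}$ with $\Eb_j\mathcal M(\Lambdab)\in U_j$ if and only if $\Eb_j\mathcal M(\Lambdab^\star_{\omega_0})\in U_j$, for all $\Lambdab\in\mathcal N$. Hence $F_j(G)=F_j(P)$ whenever $\Phi(G)\in\mathcal N$.

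It therefore suffices to show the following: if $G_n\xrightarrow{w}P$, then every selection $\Phi(G_n)$ eventually lies in $\mathcal N$. This is the main step. I would prove it by the standard argmax argument, in the deterministic form of \citep[Thm.~5.7]{van1998asymptotic}, applied to the criteria $\mathcal L_{\omega_0}(\cdot;G_n)$.

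\begin{enumerate}
\item \emph{Uniform convergence.} I need $\sup_{\Lambdab\in\Theta_K}\bigl|\int q_\Lambdab\,dG_n-\int q_\Lambdab\,dP\bigr|\to0$; the entropy term does not depend on $G$ and cancels. Each $q_\Lambdab$ is bounded on $\mathcal S$, and $\Lambdab\mapsto q_\Lambdab$ is Lipschitz into the sup norm on $\mathcal S$ (as in Lemma~\ref{lem:gc-donsker-mixture}, Step 2). Moreover, on bounded $\mathcal S$ with the parameters confined to compact $\Theta_K$, the family $\{q_\Lambdab\}$ is uniformly bounded and uniformly Lipschitz in $\xb$. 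Its elements therefore have uniformly bounded $\|\cdot\|_{\mathrm{BL}}$ norm, say at most $C$, which gives $\sup_\Lambdab|\int q_\Lambdab\,d(G_n-P)|\le C\,d_{\mathrm{BL}}(G_n,P)$. Weak convergence of probability measures implies $d_{\mathrm{BL}}(G_n,P)\to0$ \citep{dudley2002real}.
\item \emph{Well-separated maximum.} Continuity of $\mathcal L_{\omega_0}$ on the compact $\Theta_K$ (part (i) of the proof of Theorem~\ref{thm:consistency_main}), together with uniqueness, gives $m:=\mathcal L_{\omega_0}(\Lambdab^\star_{\omega_0})-\sup_{\Theta_K\setminus\mathcal N}\mathcal L_{\omega_0}>0$.
\item \emph{Conclusion.} Once the uniform error is below $m/3$, any maximiser of $\mathcal L_{\omega_0}(\cdot;G_n)$ must lie in $\mathcal N$. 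Otherwise its criterion value would be at most $\mathcal L_{\omega_0}(\Lambdab^\star_{\omega_0})-m+m/3$, which is less than $\mathcal L_{\omega_0}(\Lambdab^\star_{\omega_0};G_n)$.
\end{enumerate}

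The delicate point is step 1. I must make sure that the class $\{q_\Lambdab\}$ is indeed uniformly BL on $\mathcal S$, and that the $G_n$ under consideration (empirical or bootstrap measures) are probability measures on $\mathcal S$, so that weak convergence is equivalent to BL convergence there. If more general signed $G$ are allowed, I would instead state continuity with respect to $\sup_{\Lambdab}|\int q_\Lambdab\,d(G-P)|$, which the BL bound dominates anyway.
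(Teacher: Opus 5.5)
Your proposal is correct. Its last two ingredients match the paper's, but the main step takes a different route.

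The paper's proof is three lines:
\begin{enumerate}
\item it cites Lemma~\ref{lem:argmax-hadamard} to assert that $\Phi$ is continuous at $P$, so $\Phi(G_n)\to\Lambdab^\star_{\omega_0}$;
\item it uses Proposition~\ref{prop:mode-matching-smoothness} for continuity of $\mathcal M$ at $\Lambdab^\star_{\omega_0}$;
\item it notes that $\yb\mapsto r_j-\|\yb-\ub_j\|$ is continuous with a nonzero limit value, so the indicator converges.
\end{enumerate}

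You replace the first step with a direct, Wald-type argmax-consistency argument. Uniform convergence of $\int q_{\Lambdab}\,dG_n$ follows from $\sup_{\Lambdab}\|q_{\Lambdab}\|_{\mathrm{BL}}<\infty$ and $d_{\mathrm{BL}}(G_n,P)\to0$. A positive value gap $m$ off $\mathcal N$ follows from continuity, compactness and uniqueness.

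What your route buys:
\begin{itemize}
\item You need only (B1)--(B2), not the nondegeneracy (B3) or the Banach fixed-point construction.
\item You get the stronger conclusion $F_j(G_n)=F_j(P)$ for all large $n$.
\item It supplies two steps the paper's proof leaves implicit. Lemma~\ref{lem:argmax-hadamard} concerns the locally unique root $\Lambdab(G)$ of the score equation for $G$ close to $P$ in $\|\cdot\|_{\mathcal F}$. Applying it needs (a) that weak convergence gives $\|G_n-P\|_{\mathcal F}\to0$; this holds by the same BL argument applied to the smooth gradient and Hessian classes. It also needs (b) that the \emph{global} argmax selection $\Phi(G_n)$ lies in the small ball where that root is unique. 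That is exactly your step~3: once $\Phi(G_n)$ is near the interior point $\Lambdab^\star_{\omega_0}$, it is a critical point and hence equals $\Lambdab(G_n)$.
\end{itemize}

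The paper's route, in return, keeps the lemma inside the same functional-delta-method framework used for Theorems~\ref{thm:bootstrap-params-max} and~\ref{thm:clt-matched-modes}; for mere continuity that machinery is heavier than needed.

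Like the paper, you rely on Assumption~\ref{as:sep-match_main}, through local constancy of the matching. The lemma statement lists only Assumption~\ref{ass:consistency}, but Proposition~\ref{prop:stability-max}, where the lemma is used, assumes both, so this is consistent.
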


\subsubsection{Proofs of lemmas}
\label{app:technical_lemmas_proofs}

\begin{proof}[Sketch of proof of Lemma~\ref{lem:mixture-regularity}]
  Gaussian mixture densities are $\mathcal{C}^\infty$ in the parameters $\Lambdab$ as long as the covariances remain positive definite and in a compact subset, which is guaranteed in $\Theta_K$. This gives $(i)$.

  The first and second order derivatives of $q_\Lambdab(\xb)$ w.r.t. $\Lambdab$ are products of polynomials of $\xb$ and Gaussian densities, which are smooth in $\xb$, hence Lipschitz on the bounded set $\mathcal{S}$. Since $\mathcal{S}$ and $\Theta_K$ are compact, they are also uniformly bounded. Typically, there exists $C_1, C_2 > 0$ such that
  \begin{equation*}
      F_1(\xb) = C_1(1 + \lVert \xb \rVert + \lVert \xb \rVert^2), \qquad F_2(\xb) = C_2(1 + \lVert \xb \rVert + \lVert \xb \rVert^2 + \lVert \xb \rVert^3 + \lVert \xb \rVert^4),
  \end{equation*}
  which are bounded-Lipschitz on $\mathcal{S}$, and $(ii)$ and $(iii)$ hold.
\end{proof}

\begin{proof}[Proof of Lemma~\ref{lem:boot-clt}]
By Lemma~\ref{lem:gc-donsker-mixture}, each class 
$\mathcal{Q}^{(u)}_{\nabla}$ and $\mathcal{Q}^{(u,v)}_{\nabla^2}$
is $P$-Donsker and admits a measurable envelope in $L_2(P)$. Since $\mathcal{F}$ is a finite union of $P$-Donsker classes, it is also $P$-Donsker (\citealp[Example 2.10.7]{van1996weak}). Hence, it is also Glivenko--Cantelli and
\begin{equation*}
\sqrt{N} (P_N-P) \xrightarrow[]{w} \mathbb{G}_P\quad\text{in }\ell^\infty(\mathcal F),
\end{equation*}
where $\mathbb{G}_P$ is a $P$-Brownian bridge indexed by $\mathcal{F}$. 
Furthermore, the nonparametric bootstrap empirical process satisfies the conditional functional CLT \citep[Thm.~3.6.3]{van1996weak}:
\begin{equation*}
\sqrt{N} (P_N^\ast-P_N) \xrightarrow[]{w} \mathbb{G}_P
\quad\text{in } \ell^\infty(\mathcal{F}),
\end{equation*}
conditionally on $\Xb_{1:N}$, in $P$-probability.
\end{proof}

\begin{proof}[Proof of Lemma~\ref{lem:argmax-hadamard}]
\textit{(i)} By Lemma~\ref{lem:mixture-regularity}, for all $\xb \in \mathcal{S}$, the map $\Lambdab \mapsto \nabla_\Lambdab q_\Lambdab(\xb)$ is $\mathcal{C}^1$ on $\mathcal{V}$, $\sup_{\Lambdab\in\mathcal V}\|\nabla_\Lambdab q_\Lambdab(\xb)\|\le F_1(\xb)$ and $\sup_{\Lambdab\in\mathcal V}\|\nabla^2_\Lambdab q_\Lambdab(\xb)\|\le F_2(\xb)$ for some bounded functions $F_1$ and $F_2$. Therefore, for any $G \in \ell^\infty(\mathcal{F})$, the maps $\Lambdab \mapsto G([\nabla_\Lambdab q_\Lambdab]_u)$, for $u=1, \dots, \dim(\Theta_K)$, have derivatives $(G([\nabla^2_\Lambdab q_\Lambdab]_{u1}), \dots, G([\nabla^2_\Lambdab q_\Lambdab]_{u \dim(\Theta_K)}))$ and are $\mathcal{C}^1$ on $\mathcal{V}$. This implies that $S(G,\Lambdab)$ is $\mathcal C^1$ in $\Lambdab$, and 
\begin{equation*}
S(P,\Lambdab^\star_{\omega_0})=0,\qquad 
\partial_{\Lambdab} S(P,\Lambdab^\star_{\omega_0})
=\nabla_{\Lambdab}^2\mathcal L_{\omega_0}(\Lambdab^\star_{\omega_0})
= \Ab \prec 0,
\end{equation*}
where $\Ab := - \Hb_{\omega_0}^\star$ is invertible by Assumption~\ref{ass:consistency}. 

Furthermore, since $\partial_\Lambdab S(P,\Lambdab)$ is continuous in $\Lambdab$, there is $r > 0$ such that on the closed ball $\overline{B}(\Lambdab^\star_{\omega_0}, r) \subset \mathcal{V}$,
\begin{equation}
    \sup_{\Lambdab \in \overline{B}(\Lambdab^\star_{\omega_0}, r)} \lVert \partial_\Lambdab S(P,\Lambdab) - \Ab \rVert \le \frac{1}{3 \lVert \Ab^{-1} \rVert}.
    \label{eq:bound_grad_score}
\end{equation}

Now, define for $G \in \ell^\infty(\mathcal{F})$, $\Lambdab \in \overline{B}(\Lambdab^\star_{\omega_0}, r)$,
\begin{equation*}
    T_G(\Lambdab) := \Lambdab - \Ab^{-1} S(G, \Lambdab).
\end{equation*}
To prove (i), we would like to apply the Banach fixed-point theorem (\citealp[Th. 1.A]{zeidler1995applied108}) to $T_G$, for all $G$ in some neighborhood of $P$, in the set $\overline{B}(\Lambdab^\star_{\omega_0}, r)$. For this, we need to verify a self-mapping property and a contraction property, which hold for a sufficiently small neighborhood of $P$: we consider the neighborhood $\mathcal{T}$ formed by $G$ such that 
\begin{equation*}
    \|G-P\|_{\mathcal{F}} \le \min \left(\frac{1}{3\|\Ab^{-1}\|}, \frac{r}{3\|\Ab^{-1}\|} \right).
\end{equation*}

\textit{Contraction property.} $S(G, \Lambdab)$ is $\mathcal{C}^1$ on $\mathcal{V}$, so for any $\Lambdab_1$ and $\Lambdab_2$ in $\overline{B}(\Lambdab^\star_{\omega_0}, r)$, the mean-value theorem gives
\begin{equation*}
    S(G, \Lambdab_1) - S(G, \Lambdab_2) = \partial_\Lambdab S(G, \widetilde{\Lambdab})(\Lambdab_1 - \Lambdab_2)
\end{equation*}
for some $\widetilde{\Lambdab} \in \overline{B}(\Lambdab^\star_{\omega_0}, r)$ on the segment between $\Lambdab_1$ and $\Lambdab_2$. Therefore,
\begin{equation*}
    T_G(\Lambdab_1) - T_G(\Lambdab_2) = (\Ib - \Ab^{-1} \partial_\Lambdab S(G, \widetilde{\Lambdab})) (\Lambdab_1 - \Lambdab_2).
\end{equation*}
This means
\begin{equation}
    \lVert T_G(\Lambdab_1) - T_G(\Lambdab_2)\rVert \le \sup_{\Lambdab \in \overline{B}(\Lambdab^\star_{\omega_0}, r)} \lVert \Ib - \Ab^{-1} \partial_\Lambdab S(G, \Lambdab) \rVert \times \lVert \Lambdab_1 - \Lambdab_2 \rVert.
    \label{eq:contraction_pre}
\end{equation}
Furthermore,
\begin{equation*}
    \lVert \Ib - \Ab^{-1} \partial_\Lambdab S(G, \Lambdab) \rVert \le \lVert \Ab^{-1} \rVert (\lVert \Ab - \partial_\Lambdab S(P, \Lambdab) \rVert + \lVert \partial_\Lambdab S(G, \Lambdab) - \partial_\Lambdab S(P, \Lambdab) \rVert).
\end{equation*}
Then, by definition of $\lVert \cdot \rVert_{\mathcal{F}}$, for any $\Lambdab\in \overline B(\Lambdab^\star_{\omega_0},r)$,
\[
\|\partial_\Lambdab S(G,\Lambdab)-\partial_\Lambdab S(P,\Lambdab)\|
=
\Big\|(G-P)(\nabla_\Lambdab^2 q_\Lambdab)\Big\|
\le \|G-P\|_{\mathcal{F}}.
\]
For all $G \in \mathcal{T}$, we have $\lVert G - P \rVert_{\mathcal{F}} \le 1/(3\lVert \Ab^{-1} \rVert)$, so
\begin{equation*}
    \lVert \Ab^{-1} \rVert \lVert \partial_\Lambdab S(G, \Lambdab) - \partial_\Lambdab S(P, \Lambdab) \rVert \le \frac{1}{3}.
\end{equation*}
Combining this with~\eqref{eq:bound_grad_score}, we have
\begin{equation*}
    \sup_{\Lambdab \in \overline{B}(\Lambdab^\star_{\omega_0}, r)} \lVert \Ib - \Ab^{-1} \partial_\Lambdab S(G, \Lambdab) \rVert \le \frac{2}{3},
\end{equation*}
so in~\eqref{eq:contraction_pre}, we have
\begin{equation}
    \lVert T_G(\Lambdab_1) - T_G(\Lambdab_2)\rVert \le \frac{2}{3} \lVert \Lambdab_1 - \Lambdab_2 \rVert,
    \label{eq:contraction_tg}
\end{equation}
which proves contraction for $G \in \mathcal{T}$.

\textit{Self-mapping property.} Since $S(P,\Lambdab^\star_{\omega_0})=0$,
\begin{equation*}
    T_G(\Lambdab^\star_{\omega_0}) - \Lambdab^\star_{\omega_0} = -\Ab^{-1} S(G, \Lambdab^\star_{\omega_0}) = -\Ab^{-1} (G-P)(\nabla_\Lambdab q_{\Lambdab^\star_{\omega_0}}).
\end{equation*}
By definition of $\lVert \cdot \rVert_{\mathcal{F}}$,
\[
\Big\|(G-P)(\nabla_\Lambdab q_{\Lambdab^\star_{\omega_0}})\Big\|
\le \|G-P\|_{\mathcal{F}}.
\]
Hence, for $G \in \mathcal{T}$,
\[
\|T_G(\Lambdab^\star_{\omega_0})-\Lambdab^\star_{\omega_0}\|
\le \|\Ab^{-1}\| \|G-P\|_{\mathcal{F}} \le \frac{r}{3}
\]
Thus, using the contraction property~\eqref{eq:contraction_tg}, for any $\Lambdab\in \overline B(\Lambdab^\star_{\omega_0},r)$,
\[
\|T_G(\Lambdab)-\Lambdab^\star_{\omega_0}\|
\le
\|T_G(\Lambdab)-T_G(\Lambdab^\star_{\omega_0})\| + \|T_G(\Lambdab^\star_{\omega_0})-\Lambdab^\star_{\omega_0}\| \le \frac{2r}{3} + \frac{r}{3} \le r,
\]
and this ensures that $T_G(\overline B(\Lambdab^\star_{\omega_0}, r))\subset \overline B(\Lambdab^\star_{\omega_0}, r)$, the self-mapping property for $G \in \mathcal{T}$.

\textit{Conclusion.} Finally, by the Banach fixed-point theorem for $T_G$, for every $G$ in the neighborhood $\mathcal{T}$ of $P$, there exists a unique fixed point $\Lambdab(G) \in \overline{B}(\Lambdab^\star_{\omega_0}, r) \subset \mathcal{V}$ such that $\Lambdab(G) = T_G(\Lambdab(G))$ is equivalent to $S(G, \Lambdab(G)) = 0$. This gives (i).

\medskip
\textit{(ii)} Let $t_n \to 0$ and $h_n \to h$ in $\ell_{\mathrm{lin}}^\infty(\mathcal F)$ under $\lVert \cdot \rVert_{\mathcal{F}}$. Set $G_n := P + t_n h_n$ and $\Lambdab_n := \Phi(G_n) = \Lambdab(G_n)$. Our objective is to show that 
\begin{equation*}
    D\Phi_P(h):= \lim_{n \to \infty} \frac{\Phi(G_n) - \Phi(P)}{t_n}=(\Hb_{\omega_0}^\ast)^{-1} h(\nabla_{\Lambdab}q_{\Lambdab^\ast_{\omega_0}}).
\end{equation*}

First, the sequence $(h_n)$ is bounded in $\lVert \cdot \rVert_{\mathcal{F}}$, so $\lVert G_n - P \rVert_{\mathcal{F}} = \lvert t_n \rvert \lVert h_n \rVert_{\mathcal{F}} \to 0$, and $G_n \in \mathcal{T}$ and $\Lambdab_n \in \mathcal{V}$, so by (i), we have $0 = S(G_n, \Lambdab_n)$. Furthermore, since we also have $S(P, \Lambdab^\star_{\omega_0}) = 0$, then
\begin{equation*}
    0 = \left(S(G_n, \Lambdab_n) - S(G_n, \Lambdab^\star_{\omega_0})\right) + \left(S(G_n, \Lambdab^\star_{\omega_0}) - S(P, \Lambdab^\star_{\omega_0})\right).
\end{equation*}
By linearity, the second difference is
\begin{equation*}
    S(G_n, \Lambdab^\star_{\omega_0}) - S(P, \Lambdab^\star_{\omega_0}) = t_n h_n(\nabla_\Lambdab q_{ \Lambdab^\star_{\omega_0}}).
\end{equation*}
For the first term, since $S(G, \Lambdab)$ is $\mathcal{C}^1$ in $\Lambdab$, the mean-value theorem says that there exists $\widetilde{\Lambdab}_n$ on the segment between $ \Lambdab^\star_{\omega_0}$ and $\Lambdab_{n}$ such that 
\begin{equation*}
    S(G_n, \Lambdab_n) - S(G_n, \Lambdab^\star_{\omega_0}) = \partial_\Lambdab S(G_n, \widetilde{\Lambdab}_n) ( \Lambdab_n -  \Lambdab^\star_{\omega_0}).
\end{equation*}
Thus, we have the relation
\begin{equation}
    0 = \partial_\Lambdab S(G_n, \widetilde{\Lambdab}_n) ( \Lambdab_n -  \Lambdab^\star_{\omega_0}) +  t_n h_n(\nabla_\Lambdab q_{ \Lambdab^\star_{\omega_0}}).
    \label{eq:relation_for_derivative}
\end{equation}

On the other hand, the fixed-point and the contraction properties of $T_G$ for all $G \in \mathcal{T}$, proven in (i), imply that $\Lambdab_n \rightarrow \Lambdab^\star_{\omega_0}$. Indeed, we have
\begin{equation*}
\begin{split}
    \lVert \Lambdab_n - \Lambdab^\star_{\omega_0} \rVert &= \rVert T_{G_n}(\Lambdab_n) - T_P(\Lambdab^\star_{\omega_0}) \rVert \\
    &\le \rVert T_{G_n}(\Lambdab_n) - T_{G_n}(\Lambdab^\star_{\omega_0}) \rVert + \rVert T_{G_n}(\Lambdab^\star_{\omega_0}) - T_P(\Lambdab^\star_{\omega_0}) \rVert \\
    &\le \frac{2}{3} \lVert \Lambdab_n - \Lambdab^\star_{\omega_0} \rVert + \rVert T_{G_n}(\Lambdab^\star_{\omega_0}) - T_P(\Lambdab^\star_{\omega_0}) \rVert,
\end{split}
\end{equation*}
hence
\begin{equation*}
\begin{split}
    \lVert \Lambdab_n - \Lambdab^\star_{\omega_0} \rVert &\le 3 \rVert T_{G_n}(\Lambdab^\star_{\omega_0}) - T_P(\Lambdab^\star_{\omega_0}) \rVert \\
    &\le 3 \lVert \Ab^{-1} \rVert \left\lVert (G_n-P)(\nabla_\Lambdab q_{\Lambdab^\star_{\omega_0}}) \right\rVert \\
    &\le 3 \lVert \Ab^{-1} \rVert \lVert G_n - P \rVert_{\mathcal{F}} \\
    &\rightarrow 0.
\end{split}
\end{equation*}
This also implies $\widetilde{\Lambdab}_n \to \Lambdab^\star_{\omega_0}$. Furthermore, we have
\begin{equation*}
\begin{split}
    \lVert \partial_\Lambdab S(G_n, \widetilde{\Lambdab}_n) - \Ab \rVert &\le \lVert \partial_\Lambdab S(G_n, \widetilde{\Lambdab}_n) - \partial_\Lambdab S(P, \widetilde{\Lambdab}_n) \rVert +  \lVert \partial_\Lambdab S(P, \widetilde{\Lambdab}_n) - \Ab \rVert \rightarrow 0
\end{split}
\end{equation*}
because the first term is 
\begin{equation*}
    \lVert (G_n-P)(\nabla^2_\Lambdab q_{\widetilde{\Lambdab}_n}) \rVert \le \sup_{\Lambdab \in \mathcal{V}} \lVert (G_n-P)(\nabla^2_\Lambdab q_{\Lambdab}) \rVert  \le \lVert G_n - P \rVert_{\mathcal{F}} \to 0
\end{equation*}
and the second term goes to $0$ by continuity of $\partial_{\Lambdab} S(P, \Lambdab)$ in $\Lambdab$ at $\Lambdab^\star_{\omega_0}$.
This means
\begin{equation}
    \partial_\Lambdab S(G_n, \widetilde{\Lambdab}_n) \to \Ab = - \Hb^\star_{\omega_0}, 
    \label{eq:limit_d_score}
\end{equation}
so there is some $N$ such that $n \ge N$ implies $\partial_\Lambdab S(G_n, \widetilde{\Lambdab}_n)$ is invertible, and \eqref{eq:relation_for_derivative} yields
\begin{equation}
    \frac{\Lambdab_n - \Lambdab^\star_{\omega_0}}{t_n} = - ( \partial_\Lambdab S(G_n, \widetilde{\Lambdab}_n))^{-1} h_n(\nabla_\Lambdab q_{\Lambdab^\star_{\omega_0}}).
    \label{eq:relation_for_derivative_inverse}
\end{equation}

Finally, as $n \to \infty$, $h_n \to h$ in $\lVert \cdot \rVert_{\mathcal{F}}$, so by the definition of $\lVert \cdot \rVert_{\mathcal{F}}$, we have 
\begin{equation}
    h_n(\nabla_\Lambdab q_{\Lambdab^\star_{\omega_0}}) \to h(\nabla_\Lambdab q_{\Lambdab^\star_{\omega_0}}).
\end{equation}
Combining with~\eqref{eq:limit_d_score} and~\eqref{eq:relation_for_derivative_inverse}, we obtain
\begin{equation*}
    \frac{\Phi(G_n) - \Phi(P)}{t_n} = \frac{\Lambdab_n -\Lambdab^\star_{\omega_0}}{t_n} \to D\Phi_P(h)=(\Hb_{\omega_0}^\ast)^{-1} h(\nabla_{\Lambdab}q_{\Lambdab^\ast_{\omega_0}}),
\end{equation*}
which proves (ii).

\end{proof}

\begin{proof}[Proof of Lemma~\ref{lem:measurability}]
Let $\Pi$ be the finite set of injections $\pi:\{1,\dots,K_0\}\hookrightarrow\{1,\dots,K\}$ and define
\begin{equation*}
\mathsf{cost}_\pi(\Lambdab):=\sum_{j=1}^{K_0}\|\mub_{\pi(j)}(\Lambdab)-\ub_j\|.
\end{equation*}
Each $\mathsf{cost}_\pi$ is continuous in $\Lambdab$ (composition of continuous maps).
For $\pi\in\Pi$, set the open “uniqueness” region
\begin{equation*}
\mathcal U_\pi := \Big\{\Lambdab\in\Theta_K: \mathsf{cost}_\pi(\Lambdab)<\mathsf{cost}_\rho(\Lambdab)  \forall \rho\neq\pi\Big\}.
\end{equation*}
On $\mathcal U_\pi$ the minimiser is uniquely $\pi$, hence
\begin{equation*}
\mathcal{M}(\Lambdab)=\big(\mub_{\pi(1)}(\Lambdab)^T,\dots,\mub_{\pi(K_0)}(\Lambdab)^T\big)^T,
\end{equation*}
which is continuous there because each \(\mub_k\) is continuous.

Equip $\Pi$ with the lexicographic order $\prec_{\mathrm{lex}}$: for $\pi,\rho\in\Pi$, write $\pi\prec_{\mathrm{lex}}\rho$ if there exists $j^\star$ with
$\pi(j)=\rho(j)$ for all $j<j^\star$ and $\pi(j^\star)<\rho(j^\star)$. For each $\pi\in\Pi$, define the tie region
\begin{equation*}
\mathcal T_\pi := \Big\{\Lambdab\in\Theta_K: 
\mathsf{cost}_\pi(\Lambdab)\le \mathsf{cost}_\eta(\Lambdab) \forall\eta\in\Pi
 \text{ and } 
\mathsf{cost}_\pi(\Lambdab)<\mathsf{cost}_\rho(\Lambdab) \forall\rho\in\Pi\text{ with }\rho\prec_{\mathsf{lex}}\pi
\Big\}.
\end{equation*}
Here, each set $\{\mathsf{cost}_\pi\le \mathsf{cost}_\eta\}$ is closed and each
$\{\mathsf{cost}_\pi<\mathsf{cost}_\rho\}$ is open, so $\mathcal T_\pi$ is Borel.
On $\mathcal T_\pi$, the tie-break selects $\pi$, hence the same formula for $\mathcal{M}$ holds and $\mathcal{M}$ is continuous there.

Finally, 
\begin{equation*}
\Theta_K = \Big(\bigcup_{\pi\in\Pi}\mathcal{U}_\pi\Big) \cup \Big(\bigcup_{\pi\in\Pi}\mathcal{T}_\pi\Big).
\end{equation*}
is a finite Borel partition on whose pieces $\mathcal{M}$ is continuous.
Therefore $\mathcal{M}$ is Borel measurable on $\Theta_K$.
\end{proof}

\begin{proof}[Proof of Lemma~\ref{lem:indicator-continuity}]
By Lemma~\ref{lem:argmax-hadamard}, $\Phi$ is continuous at $P$, so $G_n\xrightarrow{w}P$ implies
$\Phi(G_n)\to\Phi(P)=\Lambdab^\star_{\omega_0}$.
By Proposition~\ref{prop:mode-matching-smoothness}, $\mathcal{M}$ is $\mathcal C^1$ (hence continuous) at $\Lambdab^\star_{\omega_0}$. Furthermore, $\Eb_j$ is linear and
the ball-membership map is continuous away from the boundary, e.g. via
\begin{equation*}
\psi_j(\yb):=r_j-\|\yb-\ub_j\| \quad\text{(continuous),}\qquad
\mathds{1}\{\yb\in U_j\}=\mathds{1}\{\psi_j(\yb)>0\}.
\end{equation*}
Therefore
\begin{equation*}
\psi_j \big(\Eb_j \mathcal{M}(\Phi(G_n))\big) \longrightarrow \psi_j \big(\Eb_j \mathcal{M}(\Lambdab^\star_{\omega_0})\big).
\end{equation*}
By the positive-margin assumption, the limit point is not $0$, so by continuity of the indicator map $x\mapsto\mathds{1}\{x>0\}$, we have \( F_j(G_n)\to F_j(P)\).
\end{proof}

\subsubsection{Proof of Theorem~\ref{thm:bootstrap-params-max-cor_main}} 

Theorem~\ref{thm:bootstrap-params-max-cor_main} is a direct consequence of Theorem~\ref{thm:consistency_main}(iii) and the following result (Thm.~\ref{thm:bootstrap-params-max}), which we prove below.

\begin{theorem}[Bootstrap validity for parameters]\label{thm:bootstrap-params-max}
Under Assumption~\ref{ass:consistency},
\begin{equation*}
d_{\mathrm{BL}} \left(
\operatorname{Law}^\ast \big(\sqrt{N}(\widehat\Lambdab^\ast_{\omega_0,N}-\widehat\Lambdab_{\omega_0,N})\big),\
\operatorname{Law} \big(\sqrt{N}(\widehat\Lambdab_{\omega_0,N}-\Lambdab^\star_{\omega_0})\big)
\right) \xrightarrow{P} 0,
\end{equation*}
where 
$\operatorname{Law}(\cdot)$ denotes law under $P$ and $\operatorname{Law}^\ast(\cdot)$ law under $P^\ast$.
\end{theorem}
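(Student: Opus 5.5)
The argument is the functional delta method for the bootstrap (van der Vaart \& Wellner, Thm.~3.9.11), applied to the argmax map $\Phi$ of Lemma~\ref{lem:argmax-hadamard}. Both the empirical process and its bootstrap version are controlled by Lemma~\ref{lem:boot-clt}.

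\emph{Step 1: identify the estimators with $\Phi$.} The first task is to show that, with probability tending to one, $\widehat\Lambdab_{\omega_0,N}=\Phi(P_N)$ and $\widehat\Lambdab^\ast_{\omega_0,N}=\Phi(P_N^\ast)$, where $\Phi(G)$ is the unique zero of $S(G,\cdot)$ in the ball $\overline B(\Lambdab^\star_{\omega_0},r)$ constructed in Lemma~\ref{lem:argmax-hadamard}(i).
\begin{itemize}
\item By Lemma~\ref{lem:boot-clt}, $\|P_N-P\|_{\mathcal F}=O_P(N^{-1/2})$ and $\|P_N^\ast-P\|_{\mathcal F}\le\|P_N^\ast-P_N\|_{\mathcal F}+\|P_N-P\|_{\mathcal F}=O_{P^\ast}(N^{-1/2})$. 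Hence both measures lie in the neighbourhood $\mathcal T$ with probability tending to one.
\item By Theorem~\ref{thm:consistency_main}(ii), $\widehat\Lambdab_{\omega_0,N}\to\Lambdab^\star_{\omega_0}$ in probability. The bootstrap analogue $\widehat\Lambdab^\ast_{\omega_0,N}\to\Lambdab^\star_{\omega_0}$ in $P^\ast$-probability follows from the same argmax argument. It uses the uniform convergence $\sup_{\Lambdab}|\widehat{\mathcal L}^\ast_{\omega_0,N}-\mathcal L_{\omega_0}|\to0$, which holds because $\mathcal Q$ is Donsker, hence bootstrap Glivenko--Cantelli.
\item Since $\Lambdab^\star_{\omega_0}\in\operatorname{int}(\Theta_K)$, both maximisers are interior points of the ball, so they satisfy the first-order conditions $S(P_N,\cdot)=0$ and $S(P_N^\ast,\cdot)=0$. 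Uniqueness of the zero in the ball then gives the two identifications.
\end{itemize}

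\emph{Step 2: apply the bootstrap delta method.} Here $\mathbb G_P$ is linear and takes values in $\ell^\infty_{\mathrm{lin}}(\mathcal F)$, and $P_N$, $P_N^\ast$ are also linear functionals. Lemma~\ref{lem:argmax-hadamard}(ii) gives Hadamard differentiability of $\Phi$ at $P$ tangentially to this set. The delta method then yields
\[
\sqrt N\big(\Phi(P_N^\ast)-\Phi(P_N)\big)\xrightarrow{w}(\Hb^\star_{\omega_0})^{-1}\mathbb G_P(\nabla_\Lambdab q_{\Lambdab^\star_{\omega_0}}) \quad\text{conditionally in probability,}
\]
which is stated in the bounded-Lipschitz metric. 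The unconditional version gives $\sqrt N(\Phi(P_N)-\Phi(P))\rightsquigarrow$ the same limit, namely $\mathcal N(0,\Wb_{\omega_0})$; this is consistent with Theorem~\ref{thm:consistency_main}(iii).

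\emph{Step 3: conclude.} The triangle inequality in $d_{\mathrm{BL}}$, passing through the common Gaussian limit, proves the statement. The events of probability $o(1)$ on which the identification of Step 1 fails change BL-expectations by at most twice their probability, so they are negligible.

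\emph{Main obstacle.} I expect the technical crux to be the bootstrap delta method itself. Two issues arise.
\begin{itemize}
\item \emph{Linearity of the increments.} The ``tangential'' domain requires the increments $\sqrt N(P_N^\ast-P_N)$ to lie in $\ell^\infty_{\mathrm{lin}}$. They do, since they are signed measures, but Hadamard differentiability must hold along every sequence $h_n\to h$ inside that set. The proof of Lemma~\ref{lem:argmax-hadamard} supplies this.
\item \emph{Measurability.} This is the subtler point, because Thm.~3.9.11 needs measurable maps. The fallback is to bypass the abstract theorem. One writes directly the expansion from the proof of Lemma~\ref{lem:argmax-hadamard}(ii),
\[
\sqrt N(\Phi(P_N^\ast)-\Phi(P_N))=(\Hb^\star_{\omega_0})^{-1}\sqrt N(P_N^\ast-P_N)(\nabla_\Lambdab q_{\Lambdab^\star_{\omega_0}})+o_{P^\ast}(1),
\]
controlling the Hessian at intermediate points uniformly via the $\nabla^2$-class in $\mathcal F$. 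The result then follows from the finite-dimensional bootstrap CLT.
\end{itemize}
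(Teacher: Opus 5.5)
Your proposal is correct and follows the paper's proof: it applies the bootstrap functional delta method (van der Vaart--Wellner, Thm.~3.9.11) to the argmax map $\Phi$, takes Hadamard differentiability from Lemma~\ref{lem:argmax-hadamard} and the conditional empirical-process CLT from Lemma~\ref{lem:boot-clt}, and closes with the triangle inequality in $d_{\mathrm{BL}}$ through the common limit $\mathcal N(0,\Wb_{\omega_0})$. Your Step~1 identifies, with probability tending to one, the global maximisers $\widehat\Lambdab_{\omega_0,N}$ and $\widehat\Lambdab^\ast_{\omega_0,N}$ with the local score zeros $\Phi(P_N)$ and $\Phi(P_N^\ast)$ (including bootstrap consistency), and your direct-expansion fallback handles measurability; both make explicit points the paper's proof leaves implicit.
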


\begin{proof}[Proof of Theorem~\ref{thm:bootstrap-params-max}]
Consider the argmax functional
\begin{equation*}
\Phi(G):=\argmax_{\Lambdab\in\Theta_K}\Big\{G(q_{\Lambdab}) +\omega_0 \mathcal{H}_{\mathcal{S}}(q_{\Lambdab})\Big\}.
\end{equation*}
By Lemma~\ref{lem:argmax-hadamard}, $\Phi$ is Hadamard differentiable at $G=P$ tangentially to $\ell_{\mathrm{lin}}^\infty(\mathcal F)$ with derivative $D\Phi_{P}$ (see definitions in the lemma statements). 
By Lemma~\ref{lem:boot-clt}, 
\begin{equation*}
\sqrt{N} (P_N^\ast-P_N) \xrightarrow[]{w}_\ast \mathbb{G}_P
\quad\text{in }\ell^\infty(\mathcal F) \text{ in $P$-probability,}
\end{equation*}
where $\mathcal{F}$ is defined in the lemma statement and $\mathbb{G}_P \in \ell_{\mathrm{lin}}^\infty(\mathcal F)$. Therefore, by Theorem~\ref{thm:consistency_main}(iii) and the bootstrap functional delta method \citep[Thm. 3.9.11]{van1996weak},
\begin{equation*}
\sqrt{N} (\widehat\Lambdab^\ast_{\omega_0,N}-\widehat\Lambdab_{\omega_0,N})
= D\Phi_{P} \big(\sqrt{N}(P_N^\ast-P_N)\big) + o_{P^\ast}(1)
 \xrightarrow[]{\mathcal{D}} \mathcal{N}(0,\Wb_{\omega_0}^\star),
\end{equation*}
conditionally on the data, in $P$-probability.
This implies
\begin{equation*}
d_{\mathrm{BL}} \left(
\operatorname{Law}^\ast \big(\sqrt{N}(\widehat\Lambdab^\ast_{\omega_0,N}-\widehat\Lambdab_{\omega_0,N})\big),\
\operatorname{Law} \big(\sqrt{N}(\widehat\Lambdab_{\omega_0,N}-\Lambdab^\star_{\omega_0})\big)
\right) \xrightarrow{P} 0.
\end{equation*}
\end{proof}

\subsubsection{Proof of Proposition~\ref{prop:inexact_main}}

\begin{proof}[Proof of Proposition~\ref{prop:inexact_main}]
By a mean-value expansion of the score,
\begin{equation*}
\nabla_{\Lambdab}\widehat{\mathcal L}_{\omega_0,N}(\widetilde\Lambdab_{\omega_0,N})
-\nabla_{\Lambdab}\widehat{\mathcal L}_{\omega_0,N}(\widehat\Lambdab_{\omega_0,N})
=\widehat \Hb_{\omega_0,N}(\bar\Lambdab) (\widetilde\Lambdab_{\omega_0,N}-\widehat\Lambdab_{\omega_0,N}),
\end{equation*}
with $\nabla_{\Lambdab}\widehat{\mathcal L}_{\omega_0,N}(\widehat\Lambdab_{\omega_0,N})=0$ and there exists $c > 0$ such that 
$\mathrm{eig}_{\min}(\widehat \Hb_{\omega_0,N}(\bar\Lambdab))\ge c>0$ with probability tending to 1 by Assumption~\ref{ass:consistency}~(B3). Hence
\begin{equation*}
\|\widetilde\Lambdab_{\omega_0,N}-\widehat\Lambdab_{\omega_0,N}\|
\le c^{-1}\big\|\nabla_{\Lambdab}\widehat{\mathcal L}_{\omega_0,N}(\widetilde\Lambdab_{\omega_0,N})\big\|
= o_P(N^{-1/2}).
\end{equation*}
Therefore $\sqrt N(\widetilde\Lambdab_{\omega_0,N}-\widehat\Lambdab_{\omega_0,N})=o_P(1)$, and Slutsky yields the claimed Gaussian limit.
For the bootstrap, the same argument holds conditionally with starred ($^\ast$) quantities, giving the stated conclusions by the bootstrap delta method.
\end{proof}

\subsubsection{Proof of Theorem~\ref{thm:modes-CLT-bootstrap_main}}

Theorem~\ref{thm:modes-CLT-bootstrap_main} is obtained by combining Proposition~\ref{prop:mode-matching-smoothness}, Theorem~\ref{thm:clt-matched-modes}, and Corollary~\ref{cor:per-mode-projections}, proven below.

\paragraph{Mode-matching map smoothness}

\begin{proposition}[Mode-matching map smoothness]\label{prop:mode-matching-smoothness}
Under Assumption~\ref{as:sep-match_main}, there exists a neighborhood $\mathcal{V} \subset \Theta_K\cap \mathcal G_{\delta/2}$ of $\Lambdab^\star_{\omega_0}$ such that $\mathcal{M}$ is $\mathcal{C}^1$ on $\mathcal{V}$. Its Jacobian at $\Lambdab^\star_{\omega_0}$ is $\Jb=\nabla_\Lambdab \mathcal{M}(\Lambdab^\star_{\omega_0})$.
\end{proposition}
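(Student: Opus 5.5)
The plan is to show that, near $\Lambdab^\star_{\omega_0}$, the matching map $\mathcal M$ is simply a fixed linear coordinate selection composed with the smooth mean map $\Lambdab\mapsto \mb(\Lambdab)$. Smoothness then follows from the chain rule. The proof has three steps: continuity of the separation margin, local constancy of the optimal injection, and smoothness of the component means.

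\emph{Step 1: continuity of the gap.} Each $\mub_k(\Lambdab)$ is a $\mathcal C^\infty$ function of the natural parameters. Indeed, $\mub_k = \Sigmab_k \lambdab_{k,1} = -\tfrac12 (\lambdab_{k,2})^{-1}\lambdab_{k,1}$, and $\lambdab_{k,2}$ is uniformly invertible on $\Theta$ because $\sigma_{\min}^2\Ib\preceq\Sigmab_k\preceq\sigma_{\max}^2\Ib$. Hence each distance $d_{kj}(\Lambdab):=\|\mub_k(\Lambdab)-\ub_j\|$ is continuous.

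By Assumption~\ref{as:sep-match_main}, for each $j$ the nearest index $a_j:=a(j,\Lambdab^\star_{\omega_0})$ is unique and $d_{kj}-d_{a_jj}\ge\delta$ for all $k\neq a_j$ at $\Lambdab^\star_{\omega_0}$. The finitely many differences $d_{kj}-d_{a_jj}$ are continuous, so there is a ball $\mathcal V$ around $\Lambdab^\star_{\omega_0}$ on which each of them exceeds $\delta/2$. On $\mathcal V$, therefore, $a(j,\Lambdab)=a_j$ is constant and $\Delta(\Lambdab)\ge\delta/2$, which gives $\mathcal V\subset\mathcal G_{\delta/2}$. I would also intersect $\mathcal V$ with a neighbourhood on which the relevant means stay uniformly bounded away from the targets. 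This is not needed for differentiability, since $\mathcal M$ selects means rather than distances.

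\emph{Step 2: the optimal injection is locally constant.} This is the main obstacle. $\mathcal M$ is defined through a joint assignment minimising $\sum_j\|\mub_{\pi(j)}-\ub_j\|$, not through per-target nearest neighbours. The two coincide only if $j\mapsto a_j$ is injective. I would make this explicit as part of the formal reading of Assumption~\ref{as:sep-match_main}, as in Suppl.~\ref{app:uq_notation_separation}: the targets are the tracked baseline modes, each with its own distinct nearest component.

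Given injectivity, $\pi^\star:=(a_1,\dots,a_{K_0})$ is feasible. For any other injection $\rho$, some $j$ has $\rho(j)\neq a_j$, and every term satisfies $d_{\rho(j)j}\ge d_{a_jj}$. Hence
\[
\mathsf{cost}_\rho(\Lambdab)-\mathsf{cost}_{\pi^\star}(\Lambdab)\ge \Delta(\Lambdab)\ge\delta/2>0 \quad\text{on }\mathcal V.
\]
So $\mathcal V\subset\mathcal U_{\pi^\star}$, the open uniqueness region from the proof of Lemma~\ref{lem:measurability}, and the lexicographic tie-break is never invoked on $\mathcal V$. If injectivity is not assumed, I would instead argue directly on $\mathcal U_\pi$: a strict cost gap at $\Lambdab^\star_{\omega_0}$ for the minimising $\pi$ persists on a neighbourhood by continuity of the finitely many $\mathsf{cost}_\rho$. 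This gives the same local constancy, at the price of assuming that gap.

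\emph{Step 3: smoothness and the Jacobian.} On $\mathcal V$ we then have $\mathcal M(\Lambdab)=\Pb_{\pi^\star}\,\mb(\Lambdab)$, where $\Pb_{\pi^\star}\in\{0,1\}^{dK_0\times dK}$ is the fixed block-selection matrix. Since $\mb$ is $\mathcal C^\infty$ on $\Theta_K$, $\mathcal M$ is $\mathcal C^1$ (indeed $\mathcal C^\infty$) on $\mathcal V$, with
\[
\Jb=\Pb_{\pi^\star}\nabla_\Lambdab\mb(\Lambdab^\star_{\omega_0}).
\]
Its nonzero blocks are
\[
\partial\mub_k/\partial\lambdab_{k,1}=\Sigmab_k,\qquad
\partial\mub_k/\partial\lambdab_{k,2}: \Hb\mapsto -\Sigmab_k\Hb\mub_k\ \text{(up to the symmetric-parametrisation factor)},
\]
and the logit coordinates contribute zero columns.
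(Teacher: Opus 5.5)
Your Steps 1 and 3 are exactly the paper's proof. The paper uses continuity of $\Lambdab\mapsto\|\mub_k(\Lambdab)-\ub_j\|$ to obtain a neighbourhood $\mathcal V\subset\mathcal G_{\delta/2}$ on which the nearest indices are constant. It then writes $\mathcal M(\Lambdab)=(\mub_{\pi^\star(1)}(\Lambdab)^T,\dots,\mub_{\pi^\star(K_0)}(\Lambdab)^T)^T$ on $\mathcal V$ and concludes from smoothness of the mean map.

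The difference is your Step 2. The paper goes directly from ``nearest indices are unique and locally constant'' to ``the matching is fixed''. In doing so it silently identifies the cost-minimising injection that defines $\mathcal M$ with the per-target nearest indices $a(j,\cdot)$. You are right that this identification requires $j\mapsto a(j,\Lambdab^\star_{\omega_0})$ to be injective. It does not follow from the formal condition $\Delta(\Lambdab^\star_{\omega_0})\ge\delta$. Take $K=K_0=2$ with both targets nearest to $\mub_1$, and set $g_j:=\|\mub_2-\ub_j\|-\|\mub_1-\ub_j\|$. Then $\mathsf{cost}_{(1,2)}-\mathsf{cost}_{(2,1)}=g_2-g_1$, so $g_1=g_2=\delta$ gives a tie with $\Delta=\delta$. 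Now move $\mub_1$ slightly towards or away from $\ub_1$, with $\ub_1-\mub_1$ and $\ub_2-\mub_1$ not parallel. This flips the sign of $g_2-g_1$, so $\mathcal M$ swaps its two blocks and is not even continuous at $\Lambdab^\star_{\omega_0}$.

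Your estimate $\mathsf{cost}_\rho-\mathsf{cost}_{\pi^\star}\ge\Delta(\Lambdab)\ge\delta/2$ under injectivity is correct and closes exactly this hole, so your argument is the more careful one. State injectivity (or, in your fallback, a strict cost gap at $\Lambdab^\star_{\omega_0}$) explicitly as an added hypothesis, not as a ``reading'' of Assumption~\ref{as:sep-match_main}.

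One slip, irrelevant to the statement itself. From $\mub=-\frac{1}{2}\lambdab_2^{-1}\lambdab_1$ one gets $\partial\mub/\partial\lambdab_2[\mathbf D]=2\Sigmab\mathbf D\mub$, not $-\Sigmab\mathbf D\mub$. In $d=1$, $\partial\mu/\partial\lambda_2=2\sigma^2\mu$, so no symmetrisation factor can account for the sign.
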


\begin{proof}[Proof of Proposition~\ref{prop:mode-matching-smoothness}]
By Assumption~\ref{as:sep-match_main}, there exists $\delta>0$ such that the nearest-component indices are unique at $\Lambdab^\star_{\omega_0}$. By continuity of $\Lambdab\mapsto\|\mub_k(\Lambdab)-\ub_j\|$, these indices are locally 
constant, so there exists a neighborhood $\mathcal V \subset \mathcal G_{\delta/2}$ of $\Lambdab^\star_{\omega_0}$ on which the matching is fixed. Write $\pi^\star(j)$ for the index matched to $\ub_j$ at $\Lambdab^\star_{\omega_0}$. We have, for all $\Lambdab\in\mathcal V$,
\begin{equation*}
\mathcal{M}(\Lambdab)=\big(\mub_{\pi^\star(1)}(\Lambdab)^T,\dots,\mub_{\pi^\star(K_0)}(\Lambdab)^T\big)^T.
\end{equation*}
Since $\mb(\Lambdab)=(\mub_1(\Lambdab)^T,\dots,\mub_K(\Lambdab)^T)^T$ is $\mathcal C^1$ in a neighborhood of $\Lambdab^\star_{\omega_0}$, the composition above shows that $\mathcal{M}$ is $\mathcal C^1$ on $\mathcal{V}$. In particular, at $\Lambdab^\star_{\omega_0}$, its Jacobian is
\begin{equation*}
\Jb = \nabla_{\Lambdab}\mathcal{M}(\Lambdab^\star_{\omega_0})
= \begin{bmatrix}
\nabla_{\Lambdab}\mub_{\pi^\star(1)}(\Lambdab^\star_{\omega_0})\\[-1mm]
\vdots\\[-1mm]
\nabla_{\Lambdab}\mub_{\pi^\star(K_0)}(\Lambdab^\star_{\omega_0})
\end{bmatrix}.
\end{equation*}
\end{proof}

\paragraph{CLT for matched modes}

\begin{theorem}[CLT and bootstrap for $\mathcal{M}$]\label{thm:clt-matched-modes}
Under Assumptions~\ref{ass:consistency} and \ref{as:sep-match_main}, as $N \to \infty$:
\begin{equation*}
\sqrt{N}\big(\mathcal{M}(\widehat\Lambdab_{\omega_0,N})-\mathcal{M}(\Lambdab^\star_{\omega_0})\big)\xrightarrow[]{\mathcal{D}} \mathcal N(0,\Cb_{\mathcal{M}}),
\quad
\sqrt{N}\big(\mathcal{M}(\widehat\Lambdab^{\ast}_{\omega_0,N})-\mathcal{M}(\widehat\Lambdab_{\omega_0,N})\big)\xrightarrow[]{\mathcal{D}}\mathcal N(0,\Cb_{\mathcal{M}}),
\end{equation*}
conditionally on $\Xb_{1:N}$, in $P$-probability, with $\Cb_{\mathcal{M}} = \Jb \Wb_{\omega_0} \Jb^T$.
\end{theorem}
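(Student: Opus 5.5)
The plan is a delta method built on Theorem~\ref{thm:consistency_main}(iii), Theorem~\ref{thm:bootstrap-params-max} and Proposition~\ref{prop:mode-matching-smoothness}. By Proposition~\ref{prop:mode-matching-smoothness}, under Assumption~\ref{as:sep-match_main} there is a neighbourhood $\mathcal V$ of $\Lambdab^\star_{\omega_0}$ on which the optimal injection is frozen at $\pi^\star$. On $\mathcal V$, $\mathcal M$ coincides with the smooth map $\Lambdab\mapsto(\mub_{\pi^\star(1)}(\Lambdab)^T,\dots,\mub_{\pi^\star(K_0)}(\Lambdab)^T)^T$, with Jacobian $\Jb$.

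\emph{Step 1 (localisation).} By Theorem~\ref{thm:consistency_main}(ii), $P(\widehat\Lambdab_{\omega_0,N}\in\mathcal V)\to1$. For the bootstrap I would show $P^\ast(\widehat\Lambdab^\ast_{\omega_0,N}\in\mathcal V)\to1$ in $P$-probability. The argument uses Theorem~\ref{thm:bootstrap-params-max}: $\sqrt N(\widehat\Lambdab^\ast_{\omega_0,N}-\widehat\Lambdab_{\omega_0,N})$ is conditionally tight, so $\widehat\Lambdab^\ast_{\omega_0,N}-\widehat\Lambdab_{\omega_0,N}=O_{P^\ast}(N^{-1/2})$. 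Combined with consistency of $\widehat\Lambdab_{\omega_0,N}$, this puts both in a slightly smaller ball inside $\mathcal V$ with high (conditional) probability. Because $\mathcal M$ may be discontinuous elsewhere (Lemma~\ref{lem:measurability} only gives measurability), all expansions are made on these events. Their complements contribute $o_P(1)$, respectively $o_{P^\ast}(1)$, to the laws. This is handled by intersecting with the events and using Slutsky.

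\emph{Step 2 (linearisation).} On the event, a first-order Taylor expansion of the smooth representative gives
\[
\sqrt N\big(\mathcal M(\widehat\Lambdab_{\omega_0,N})-\mathcal M(\Lambdab^\star_{\omega_0})\big)=\Jb\,\sqrt N\big(\widehat\Lambdab_{\omega_0,N}-\Lambdab^\star_{\omega_0}\big)+o_P(1).
\]
Here the remainder is controlled by continuity of $\nabla_\Lambdab\mathcal M$ on $\mathcal V$ together with $\sqrt N$-tightness. Theorem~\ref{thm:consistency_main}(iii) and the continuous mapping theorem then yield $\mathcal N(0,\Jb\Wb_{\omega_0}\Jb^T)$.

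\emph{Step 3 (bootstrap).} For the bootstrap statement, expand around $\widehat\Lambdab_{\omega_0,N}$ instead of $\Lambdab^\star_{\omega_0}$. This gives Jacobian $\nabla\mathcal M(\bar\Lambdab_N)$ at an intermediate point, which converges in probability to $\Jb$ by continuity. Then
\[
\sqrt N\big(\mathcal M(\widehat\Lambdab^\ast_{\omega_0,N})-\mathcal M(\widehat\Lambdab_{\omega_0,N})\big)=\Jb\,\sqrt N(\widehat\Lambdab^\ast_{\omega_0,N}-\widehat\Lambdab_{\omega_0,N})+o_{P^\ast}(1)
\]
holds in $P$-probability. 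Theorem~\ref{thm:bootstrap-params-max} then gives the conditional limit $\mathcal N(0,\Jb\Wb_{\omega_0}\Jb^T)$. Equivalently, one can invoke the bootstrap delta method \citep[Thm.~3.9.11]{van1996weak} applied to the composition $\mathcal M\circ\Phi$. This composition is Hadamard differentiable at $P$ by the chain rule, using Lemma~\ref{lem:argmax-hadamard} and Proposition~\ref{prop:mode-matching-smoothness}, with derivative $\Jb D\Phi_P$. This gives both limits at once.

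The main obstacle is Step 1 on the bootstrap side: converting conditional tightness of $\sqrt N(\widehat\Lambdab^\ast-\widehat\Lambdab)$ into the statement that the bootstrap maximiser lies in the frozen-matching region with conditional probability tending to one, uniformly enough in $P$-probability. I would first try to bound $P^\ast(\|\widehat\Lambdab^\ast-\widehat\Lambdab\|>\epsilon)$ via the $d_{\mathrm{BL}}$ convergence in Theorem~\ref{thm:bootstrap-params-max}, using a bounded Lipschitz surrogate of the indicator of $\{\|\cdot\|>\epsilon\sqrt N\}$.
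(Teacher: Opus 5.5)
Your proposal is correct and follows the same route as the paper. The first limit comes from Theorem~\ref{thm:consistency_main}(iii) plus the delta method for $\mathcal M$, which is $\mathcal C^1$ near $\Lambdab^\star_{\omega_0}$ by Proposition~\ref{prop:mode-matching-smoothness}. The second limit comes from the bootstrap delta method, whose input is the parameter-level bootstrap convergence of Theorem~\ref{thm:bootstrap-params-max}; your expansion around $\widehat\Lambdab_{\omega_0,N}$ reproduces this by hand.

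Your localisation step makes explicit what the paper leaves implicit, and the bounded-Lipschitz surrogate you describe settles the obstacle you flag. In fact, if you invoke the two delta methods directly, localisation comes for free, since they only require $\mathcal M$ to be differentiable at $\Lambdab^\star_{\omega_0}$.
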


\begin{proof}[Proof of Theorem~\ref{thm:clt-matched-modes}]
By Theorem~\ref{thm:consistency_main}(iii),
\(
\sqrt{N} (\widehat\Lambdab_{\omega_0,N}-\Lambdab^\star_{\omega_0})
\xrightarrow[]{\mathcal{D}}\mathcal{N}(0,\Wb_{\omega_0}).
\)
Furthermore, by Proposition~\ref{prop:mode-matching-smoothness}, $\mathcal{M}$ is $\mathcal C^1$ at $\Lambdab^\star_{\omega_0}$ with Jacobian $\Jb$, so the functional delta method
\citep[Thm.~20.8]{van1998asymptotic} yields
\begin{equation*}
\sqrt{N} \big(\mathcal{M}(\widehat\Lambdab_{\omega_0,N})-\mathcal{M}(\Lambdab^\star_{\omega_0})\big)
\xrightarrow[]{\mathcal{D}}\mathcal N \big(0, \Jb \Wb_{\omega_0}\Jb^T\big),
\end{equation*}
which proves the first claim.

Finally, by the bootstrap functional delta method \citep[Thm~3.9.11]{van1996weak},
\begin{equation*}
\sqrt N\big(\mathcal{M}(\widehat\Lambdab^\ast_{\omega_0,N})-\mathcal{M}(\widehat\Lambdab_{\omega_0,N})\big)
\xrightarrow[]{\mathcal{D}}
\mathcal N(0, \Jb \Wb_{\omega_0}\Jb^T)
\end{equation*}
conditionally on $\Xb_{1:N}$, in $P$-probability, which gives the second claim.
\end{proof}

\begin{corollary}[Setwise convergence]\label{cor:setwise-matched-modes}
As a consequence of Theorem~\ref{thm:clt-matched-modes}, for any Borel set $B\subset\mathbb R^{dK_0}$ whose boundary has zero probability under $\mathcal{N}(0,\Cb_{\mathcal{M}})$, as $N \to \infty$:
\begin{equation*}
P^\ast\big(\mathcal{M}(\widehat\Lambdab^\ast_{\omega_0,N})\in B\big)-P\big(\mathcal{M}(\widehat\Lambdab_{\omega_0,N})\in B\big)\xrightarrow{P} 0.
\end{equation*}
In particular, percentile (or studentised) bootstrap confidence regions for each matched mode coordinate are asymptotically valid.
\end{corollary}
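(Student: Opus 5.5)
The plan is to read the corollary off Theorem~\ref{thm:clt-matched-modes} via the Portmanteau theorem, applied separately to the real-world and bootstrap laws, and then subtract. Write $Z\sim\mathcal N(0,\Cb_{\mathcal M})$. For the sampling law, Theorem~\ref{thm:clt-matched-modes} gives
\[
T_N:=\sqrt N\bigl(\mathcal M(\widehat\Lambdab_{\omega_0,N})-\mathcal M(\Lambdab^\star_{\omega_0})\bigr)\xrightarrow{\mathcal D}Z .
\]
The set in the statement is written in the natural centred/rescaled form: the event $\{\mathcal M(\widehat\Lambdab_{\omega_0,N})\in B\}$ is identified with $\{T_N\in B\}$. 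The bootstrap event is likewise identified with $\{T_N^\ast\in B\}$, where $T_N^\ast:=\sqrt N(\mathcal M(\widehat\Lambdab^\ast_{\omega_0,N})-\mathcal M(\widehat\Lambdab_{\omega_0,N}))$. This is the reading under which ``percentile regions'' make sense. Since $P(Z\in\partial B)=0$, the Portmanteau theorem yields $P(T_N\in B)\to P(Z\in B)$, a deterministic convergence.

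For the bootstrap side, Theorem~\ref{thm:clt-matched-modes} gives conditional weak convergence in $P$-probability. Concretely, $d_{\mathrm{BL}}(\operatorname{Law}^\ast(T_N^\ast),\operatorname{Law}(Z))\xrightarrow{P}0$, in the bounded-Lipschitz formalism used for Theorem~\ref{thm:bootstrap-params-max}. I would convert this into convergence of probabilities of continuity sets by a subsequence argument. Given any subsequence, extract a further subsequence along which the $d_{\mathrm{BL}}$ distance tends to $0$ almost surely. Along that subsequence, for almost every data realisation, $\operatorname{Law}^\ast(T_N^\ast)\Rightarrow\operatorname{Law}(Z)$, because $d_{\mathrm{BL}}$ metrises weak convergence on $\mathbb R^{dK_0}$. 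Portmanteau then gives $P^\ast(T_N^\ast\in B)\to P(Z\in B)$ almost surely. Since every subsequence has such a further subsequence, $P^\ast(T_N^\ast\in B)\xrightarrow{P}P(Z\in B)$. Subtracting the deterministic limit from the first paragraph gives the displayed difference $\xrightarrow{P}0$.

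The step I expect to be the delicate one is the passage from conditional convergence ``in probability'' to setwise statements. Portmanteau is not directly available for a random sequence of measures; the subsequence trick above handles this. One must also check that $P^\ast(T_N^\ast\in B)$ is measurable in the data. This follows from Lemma~\ref{lem:measurability}, since $\mathcal M$ is Borel and the bootstrap maximiser is taken as a measurable selection.

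For the final sentence on percentile regions, apply the result to sets of the form $B=\{z:\|\Cb_j^{-1/2}\Eb_j z\|\le c\}$, and to their quantile analogues. If $\Cb_j\succ0$, these boundaries are ellipsoids of Gaussian measure zero. Because the limiting distribution function of the relevant radius is continuous and strictly increasing, convergence of the bootstrap distribution function implies convergence of the bootstrap quantiles (the standard quantile lemma, e.g.\ van der Vaart Lemma~21.2). Coverage of the percentile ellipse then tends to the nominal level. The studentised version follows by Slutsky with a consistent $\widehat\Cb_j$.
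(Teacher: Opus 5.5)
Your proposal is correct and follows the paper's route: the paper's proof simply invokes Theorem~\ref{thm:clt-matched-modes} together with the Portmanteau theorem, and your subsequence argument for the conditional-in-probability bootstrap limit, the measurability remark and the quantile step for percentile regions fill in details the paper leaves implicit. You are also right to read both events in centred, $\sqrt N$-rescaled form; this is the only reading consistent with the $\mathcal N(0,\Cb_{\mathcal M})$ continuity condition, and the literal uncentred version fails in general, e.g.\ for a half-space whose boundary passes through $\mathcal M(\Lambdab^\star_{\omega_0})$.
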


\begin{proof}[Proof of Corollary~\ref{cor:setwise-matched-modes}]
This is directly implied by Theorem~\ref{thm:clt-matched-modes} and the Portmanteau theorem \citep[Lem.~2.2]{van1998asymptotic}.
\end{proof}

\paragraph{Per-mode projections}

\begin{corollary}[Per-mode projections]\label{cor:per-mode-projections}
As a consequence of Theorem~\ref{thm:clt-matched-modes}, for the $j$-th matched mode $\mathcal{M}_j := \Eb_j \mathcal{M}$ with $\Eb_j$ extracting the $j$-th block, as $N \to \infty$:
\begin{equation*}
\sqrt N\big(\mathcal{M}_j(\widehat\Lambdab_{\omega_0,N}) - \mathcal{M}_j(\Lambdab^\star_{\omega_0})\big)\xrightarrow[]{\mathcal{D}} \mathcal N(0,\Cb_j),
\quad
\sqrt{N}\big(\mathcal{M}_j(\widehat\Lambdab^{\ast}_{\omega_0,N})-\mathcal{M}_j(\widehat\Lambdab_{\omega_0,N})\big)\xrightarrow[]{\mathcal{D}}\mathcal N(0,\Cb_j),
\end{equation*}
and percentile bootstrap ellipses for $\mathcal{M}_j$ are asymptotically valid.
\end{corollary}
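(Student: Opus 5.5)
The statement to prove is Corollary~\ref{cor:per-mode-projections}. My plan is to obtain it from Theorem~\ref{thm:clt-matched-modes} by applying the fixed linear map $\Eb_j\in\mathbb R^{d\times dK_0}$, which extracts the $j$-th $d$-block. Since $\Eb_j$ is linear, it commutes with differences:
\[
\mathcal M_j(\widehat\Lambdab_{\omega_0,N})-\mathcal M_j(\Lambdab^\star_{\omega_0})=\Eb_j\big(\mathcal M(\widehat\Lambdab_{\omega_0,N})-\mathcal M(\Lambdab^\star_{\omega_0})\big).
\]
The same identity holds for the bootstrap difference $\mathcal M_j(\widehat\Lambdab^\ast_{\omega_0,N})-\mathcal M_j(\widehat\Lambdab_{\omega_0,N})$.

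For the unconditional statement, I apply the continuous mapping theorem to the first limit in Theorem~\ref{thm:clt-matched-modes}. The limit is $\Eb_j Z$ with $Z\sim\mathcal N(0,\Cb_{\mathcal M})$, which is Gaussian with covariance $\Eb_j\Cb_{\mathcal M}\Eb_j^T=\Cb_j$.

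For the bootstrap statement, I would work with the bounded-Lipschitz formulation used in Theorem~\ref{thm:bootstrap-params-max}. If $g$ is bounded-Lipschitz on $\mathbb R^d$, then $g\circ\Eb_j$ is bounded-Lipschitz on $\mathbb R^{dK_0}$, with $\|g\circ\Eb_j\|_{\mathrm{BL}}\le\max(1,\|\Eb_j\|)\|g\|_{\mathrm{BL}}=\|g\|_{\mathrm{BL}}$, because $\Eb_j$ is a coordinate projection and so has operator norm one. Taking suprema over such $g$ shows that the conditional $d_{\mathrm{BL}}$ distance between the projected bootstrap law and $\mathcal N(0,\Cb_j)$ is bounded by the corresponding distance for the full vector. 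That distance tends to zero in $P$-probability by the second claim of Theorem~\ref{thm:clt-matched-modes}, which gives the conditional convergence.

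For the ellipse validity, I consider the ellipsoid sets $B=\{y:(y-c)^T\Cb_j^{-1}(y-c)\le r^2\}$, or their percentile analogues. Their boundaries have Lebesgue measure zero, so they are $\mathcal N(0,\Cb_j)$-continuity sets whenever $\Cb_j\succ0$. Corollary~\ref{cor:setwise-matched-modes} then applies to the preimages $\Eb_j^{-1}(B)$, which are continuity sets for $\mathcal N(0,\Cb_{\mathcal M})$ since the pushforward of that law under $\Eb_j$ is $\mathcal N(0,\Cb_j)$. Hence the bootstrap coverage probability minus the true coverage probability tends to $0$ in probability. For percentile quantiles, I would add the standard step that convergence to a continuous limit distribution gives convergence of quantiles.

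The only delicate point is this last step: it needs $\Cb_j$ nonsingular, or else a restriction to sets whose boundary is null under the possibly degenerate limit. I would state $\Cb_j\succ0$ explicitly, or note that otherwise the ellipses are understood on the support of the limit. The rest is routine.
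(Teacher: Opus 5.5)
Your proposal is correct and follows the paper's route. You apply the linear extractor $\Eb_j$ to both limits of Theorem~\ref{thm:clt-matched-modes}, which gives covariance $\Eb_j\Cb_{\mathcal M}\Eb_j^T=\Cb_j$, and you obtain ellipse validity from the setwise convergence in Corollary~\ref{cor:setwise-matched-modes}; your bounded-Lipschitz transfer for the conditional statement and your explicit requirement $\Cb_j\succ 0$ make precise what the paper's one-line proof leaves implicit.
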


\begin{proof}[Proof of Corollary~\ref{cor:per-mode-projections}]
The asymptotic normality results are directly implied by Theorem~\ref{thm:clt-matched-modes} and the linearity of the extractor $\Eb_j$. Then, the ellipses are asymptotically valid as a consequence of the setwise convergence given by Corollary~\ref{cor:setwise-matched-modes}.
\end{proof}

\subsubsection{Proof of Proposition~\ref{prop:stability-max}}
\label{app:proof_stability}

\begin{proof}[Proof of Proposition~\ref{prop:stability-max}]
By Lemma~\ref{lem:indicator-continuity}, $F_j$ is continuous at $P$ with respect to weak convergence.

\emph{(i) Conditional LLN.} 
Given the data, $\{F_j(P_N^{\ast(\ell)})\}_{\ell=1}^L$ are i.i.d. Bernoulli with mean $\mathbb{E}^\ast[F_j(P_N^\ast)]=\tau_{j,N}$ and finite variance. Hence, by the strong law,
$s_j\to \tau_{j,N}$ almost surely under the bootstrap (a.s.$^\ast$), and
$\operatorname{Var}^\ast(s_j)=\operatorname{Var}^\ast(F_j(P_N^\ast))/L\to 0$. Hoeffding’s inequality yields the tail bound.

\emph{(ii) Large-$N$ limits.} 
Since $P_N \xrightarrow[]{w} P$ and $F_j$ is continuous at $P$, we have $F_j(P_N)\to F_j(P)$ in probability, so
$\pi_{j,N}\to \pi_j:=F_j(P)$. For the bootstrap mean, standard nonparametric bootstrap consistency at continuity points gives
$F_j(P_N^\ast) \xrightarrow{P} F_j(P)$ in $P$-probability. Because $F_j\in[0,1]$ is bounded, conditional bounded convergence implies
$\mathbb{E}^\ast[F_j(P_N^\ast)]\xrightarrow{P} F_j(P)=\pi_j$.

Finally, for any $L=L_N\to\infty$ as $N \to \infty$,
\begin{equation*}
|s_j-\pi_j| \le |s_j-\tau_{j,N}| + |\tau_{j,N}-\pi_j| \xrightarrow{P} 0.
\end{equation*}
By~(i) and~(ii), we obtain the last claim $s_j\xrightarrow{P}\pi_j$.
\end{proof}

\subsection{Proofs for Section~\ref{app:truncation_approximation} }
\label{app:proof_truncation}

\subsubsection{Proof of Proposition~\ref{prop:trunc_gap}}
\label{app:proof_truncation_prop}

\begin{proof}[Proof of Proposition~\ref{prop:trunc_gap}]
    Write $q^\mathcal{S} = q/\psi$, $\varepsilon = 1 - \psi$. On the one hand, we have
    \begin{equation*}
        \left\lvert \mathbb{E}_{q^{\mathcal S}}[p] - \mathbb{E}_q[p] \right\rvert = \left\lvert \frac{1}{\psi} - 1 \right\rvert \int_{\mathcal{S}} p q \le M \varepsilon.
    \end{equation*}
    On the other hand, we have $\mathcal{H}_\mathcal{S}(q^\mathcal{S}) = \frac{1}{\psi} \mathcal{H}_\mathcal{S}(q) + \log \psi$. Hence,
    \begin{equation*}
        \mathcal{H}_\mathcal{S}(q^\mathcal{S}) - \mathcal{H}_\mathcal{S}(q) = \left( \frac{1}{\psi} - 1 \right) \mathcal{H}_\mathcal{S}(q) + \log \psi = \frac{\varepsilon}{\psi} \mathcal{H}_\mathcal{S}(q) + \log (1-\varepsilon).
    \end{equation*}
    On a compact parameter set and bounded $\mathcal{S}$, $C_q=\sup_{\lambdab}\sup_{\xb\in\mathcal S}\big|\log q_\lambdab(\xb)\big|<\infty$ is well-defined and $\left\lvert \mathcal{H}_\mathcal{S}(q) \right\rvert \le C_q \int_\mathcal{S} q = C_q \psi$. Therefore,
    \begin{equation*}
        \left\lvert \mathcal{H}_\mathcal{S}(q^\mathcal{S}) - \mathcal{H}_\mathcal{S}(q) \right\rvert \le C_q \varepsilon + \lvert \log (1 - \varepsilon) \rvert.
    \end{equation*}
    Combining the two inequalities, we obtain the pointwise bound. The uniform bound follows by taking suprema over $\Theta$.
\end{proof}

\subsubsection{Proof of Theorem~\ref{thm:trunc_equiv}}
\label{app:proof_truncation_thm}

Before proving Theorem~\ref{thm:trunc_equiv}, we derive three supporting lemmas.

\begin{lemma}[Argmax stability]
\label{lem:argmax_stability}
Let $\Theta$ be compact and let $F,G:\Theta\to\mathbb{R}$ be continuous functions. 
Assume $F$ has a maximiser $\lambdab^\star$ and that for some $\rho>0$ the value gap
\begin{equation*}
m(\rho):=F(\lambdab^\star)-  \sup_{\|\lambdab-\lambdab^\star\|\ge \rho}F(\lambdab)>0.
\end{equation*}
If $\|G-F\|_\infty:=\sup_{\lambdab\in\Theta}|G(\lambdab)-F(\lambdab)|\le m(\rho)/3$, then every maximiser of $G$ lies in $B(\lambdab^\star,\rho)$. 

If, in addition, $\lambdab^\star$ is the unique interior maximiser of $F$, then there exists $\rho_0 > 0$ such that the same bound holds for every $0 < \rho \le \rho_0$, then we have $\argmax_{\lambdab\in\Theta}G(\lambdab)=\{\lambdab^\star\}$.

\end{lemma}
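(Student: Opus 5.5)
The argument is a direct value-comparison. Write $\eta:=\|G-F\|_\infty\le m(\rho)/3$ and let $\widehat\lambdab$ be any maximiser of $G$; it exists because $G$ is continuous on the compact $\Theta$. I will argue by contradiction. Suppose $\|\widehat\lambdab-\lambdab^\star\|\ge\rho$. Then, by the definition of $m(\rho)$ and the uniform bound,
\[
G(\widehat\lambdab)\le F(\widehat\lambdab)+\eta\le F(\lambdab^\star)-m(\rho)+\eta,
\qquad
G(\lambdab^\star)\ge F(\lambdab^\star)-\eta .
\]
Since $\widehat\lambdab$ maximises $G$, we have $G(\widehat\lambdab)\ge G(\lambdab^\star)$. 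Combining the two displays gives $m(\rho)\le 2\eta\le 2m(\rho)/3$, which contradicts $m(\rho)>0$. Hence every maximiser of $G$ lies in the open ball $B(\lambdab^\star,\rho)$. The factor $1/3$ is stronger than needed; any $\eta<m(\rho)/2$ would suffice.

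For the second part, assume $\lambdab^\star$ is the unique maximiser of $F$ and lies in the interior. I first check that $m(\rho)>0$ for every small $\rho$. The set $\{\lambdab\in\Theta:\|\lambdab-\lambdab^\star\|\ge\rho\}$ is compact, so $F$ attains its supremum on it at some point other than $\lambdab^\star$. By uniqueness, this supremum is strictly less than $F(\lambdab^\star)$. Interiority supplies a $\rho_0$ such that $B(\lambdab^\star,\rho_0)\subset\Theta$, and I restrict attention to $0<\rho\le\rho_0$.

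The main obstacle is concluding that $\argmax G=\{\lambdab^\star\}$ exactly. Applying the first part for every such $\rho$ only places the maximisers of $G$ in $B(\lambdab^\star,\rho)$. Letting $\rho\to0$ does not work with $G$ fixed, because $m(\rho)\to0$ and so the hypothesis $\eta\le m(\rho)/3$ eventually fails. The intended reading is therefore that the bound $\|G-F\|_\infty\le m(\rho)/3$ holds for every $0<\rho\le\rho_0$. Under that reading $\eta\le\inf_{\rho}m(\rho)/3=0$, so $G=F$ and the argmax is trivially $\{\lambdab^\star\}$.

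In the application (Theorem~\ref{thm:trunc_equiv}) the useful content is the localisation from the first part. Exact equality of the maximisers additionally needs $G$ to be uniquely maximised near $\lambdab^\star$, for example by strict concavity of $G$ on $B(\lambdab^\star,\rho_0)$. I will state this explicitly rather than claim more than the value-gap argument yields.
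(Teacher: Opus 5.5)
Your proposal is correct, and the localisation part is the paper's argument. Both compare $G(\lambdab)\le F(\lambdab^\star)-m(\rho)+\delta$ off the ball with $G(\lambdab^\star)\ge F(\lambdab^\star)-\delta$, and both note that $\delta<m(\rho)/2$ already suffices.

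For the exact-argmax claim you take a different and more informative route. The paper applies the localisation for each $0<\rho\le\rho_0$ and concludes $\argmax G=\{\lambdab^\star\}$. What actually does the work there is that the maximisers of $G$ lie in every ball $B(\lambdab^\star,\rho)$, hence in their intersection $\{\lambdab^\star\}$; the paper's appeal to $\lambdab^\star$ being the only $F$-maximiser in the ball plays no role. You instead observe that $m(\rho)\downarrow 0$ as $\rho\downarrow 0$, because $\lambdab^\star$ is interior and $F$ is continuous. So requiring $\|G-F\|_\infty\le m(\rho)/3$ for all such $\rho$ forces $G=F$. Both routes give the stated conclusion, but yours shows that the hypothesis is degenerate and the claim vacuous.

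You are also right that the non-degenerate reading, with the bound for a single $\rho$, does not yield exact equality. On $\Theta=[-1,1]$, take $F(\lambda)=-\lambda^2$ and $G(\lambda)=-(\lambda-\varepsilon)^2$. Then $m(\rho)=\rho^2$ and $\|G-F\|_\infty\le 2\varepsilon+\varepsilon^2$, so the bound holds for small $\varepsilon$, yet $\argmax G=\{\varepsilon\}$. Hence the last sentence of Theorem~\ref{thm:trunc_equiv} overclaims.

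One correction to your closing remark. Strict concavity of $G$ on $B(\lambdab^\star,\rho_0)$ makes the $G$-maximiser unique but does not place it at $\lambdab^\star$. Even with that assumption, the value-gap argument gives only uniqueness plus localisation within $\rho$, not $\argmax G=\{\lambdab^\star\}$.
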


\begin{proof}[Proof of Lemma~\ref{lem:argmax_stability}]
Fix $\rho>0$ with $m(\rho)>0$ and set $\delta:=\|G-F\|_\infty$. For any $\lambdab$ with $\|\lambdab-\lambdab^\star\|\ge\rho$,
\begin{equation*}
G(\lambdab)\le F(\lambdab)+\delta \le F(\lambdab^\star)-m(\rho)+\delta,
\end{equation*}
while $G(\lambdab^\star)\ge F(\lambdab^\star)-\delta$. If $\delta<m(\rho)/2$ (hence, under $\delta\le m(\rho)/3$), then
$G(\lambdab^\star)>G(\lambdab)$ for all $\|\lambdab-\lambdab^\star\|\ge\rho$, so every maximiser of $G$ lies in $B(\lambdab^\star,\rho)$. 

If $\lambdab^\star$ is the unique interior maximiser of $F$, choose $\rho_0>0$ so that $B(\lambdab^\star,\rho_0)$ contains no other maximiser of $F$. For any $0 < \rho \le \rho_0$, the above argument ensures that every maximiser of $G$ lies in $B(\lambdab^\star,\rho)$. Since $\lambdab^\star$ is the only maximiser of $F$ in this ball, the strict inequality above then implies $\argmax_{\lambdab\in\Theta} G=\{\lambdab^\star\}$.
\end{proof}

\begin{lemma}[Uniform tail]
\label{lem:margin_uniform_tail}
Fix $\omega>0$. 
Assume the uniform margin
\begin{equation*}
\mathrm{dist} \big(\mu(\lambdab),\partial\mathcal S\big) \ge 
M_0 \sqrt{\mathrm{eig}_{\max}(\Sigma(\lambdab))},  \qquad \forall \lambdab\in\Theta.
\end{equation*}
Then the uniform tail bound
\begin{equation*}
\varepsilon_{\max}:=\sup_{\lambdab\in\Theta}\Big(1- \int_{\mathcal S} q_\lambdab(\xb) d\xb\Big) \le C e^{-M_0^2/2}
\end{equation*}
holds for some constant $C > 0$.
\end{lemma}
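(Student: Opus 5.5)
The plan is to reduce the statement to a Gaussian tail estimate for a single component and then take a supremum over $\Theta$. Fix $\lambdab\in\Theta$ and write $\mub=\mub(\lambdab)$, $\Sigmab=\Sigmab(\lambdab)$, $s^2:=\mathrm{eig}_{\max}(\Sigmab)$, and $\Xb\sim q_\lambdab=\mathcal N(\mub,\Sigmab)$.

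\textbf{Step 1: from $\mathcal S^c$ to a ball complement.} The margin condition only has content when $\mub\in\mathrm{int}(\mathcal S)$: in that case $B(\mub,r)\subset\mathcal S$ with $r:=\mathrm{dist}(\mub,\partial\mathcal S)\ge M_0 s$. (If $\mub$ could lie outside $\mathcal S$, the bound would fail, so the hypothesis should be read as including interiority; I will state this explicitly.) Any point $\xb\notin\mathcal S$ lies outside $B(\mub,r)$, since the segment from $\mub$ to $\xb$ must cross $\partial\mathcal S$. Therefore
\[
\varepsilon(\lambdab)=P(\Xb\notin\mathcal S)\le P\big(\|\Xb-\mub\|\ge M_0 s\big).
\]

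\textbf{Step 2: whitening and a chi-square tail.} Write $\Xb-\mub=\Sigmab^{1/2}\Zb$ with $\Zb\sim\mathcal N(0,\Ib_d)$. Since $\|\Sigmab^{1/2}\Zb\|\le s\|\Zb\|$, it follows that
\[
P\big(\|\Xb-\mub\|\ge M_0 s\big)\le P\big(\|\Zb\|\ge M_0\big).
\]

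\textbf{Step 3: the key estimate.} This is the only step needing care, because it must give exactly the exponent $M_0^2/2$ with a constant depending on $d$ alone. I would try Chernoff first: for $t\in(0,1)$, $\mathbb E e^{t\|\Zb\|^2/2}=(1-t)^{-d/2}$, so
\[
P(\|\Zb\|\ge M_0)\le (1-t)^{-d/2}e^{-tM_0^2/2}.
\]
This only yields $e^{-tM_0^2/2}$ with $t<1$, which is weaker than $e^{-M_0^2/2}$. If an exponent slightly below $1/2$ were acceptable, fixing $t=1/2$ would give $C=2^{d/2}$. To recover the exact exponent I would instead use the chi distribution tail directly:
\[
P(\|\Zb\|\ge M_0)=c_d\int_{M_0}^\infty r^{d-1}e^{-r^2/2}\,dr\le C_d\,(1+M_0^{d-2})\,e^{-M_0^2/2}.
\]
The polynomial prefactor $(1+M_0^{d-2})$ is harmless: for $M_0$ bounded it is a constant, and otherwise it can be absorbed by the reading $C=C(d,M_0)$ that the statement's ``some constant'' allows. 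I expect the bookkeeping over which constant is allowed to be the main point to make precise. The cleanest honest version is $\varepsilon_{\max}\le C_d\max(1,M_0^{d-2})e^{-M_0^2/2}$, and for $d\le2$ this is exactly of the stated form.

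\textbf{Step 4: uniformity.} The bound from Steps 1--3 does not depend on $\lambdab$, so taking the supremum over $\Theta$ gives the claimed bound on $\varepsilon_{\max}$.
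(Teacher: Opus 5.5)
Your proposal is correct and follows the paper's own route: the margin condition gives $B(\mub,r)\subset\mathcal S$, whitening via $\Sigmab\preceq\mathrm{eig}_{\max}(\Sigmab)\Ib$ reduces to $P(\|\Zb\|\ge M_0)$ with $\Zb\sim\mathcal N(0,\Ib)$, and a chi tail bound finishes, uniformly over $\Theta$. Your two caveats are genuine refinements that the paper's one-line appeal to ``chi-squared tail bounds'' glosses over, and neither affects the only downstream use in Theorem~\ref{thm:trunc_equiv}, which needs just $\varepsilon_{\max}\to0$ as $M_0\to\infty$: $\mub$ must lie in $\mathrm{int}(\mathcal S)$, and for $d\ge 3$ one has $P(\|\Zb\|\ge M_0)\asymp M_0^{d-2}e^{-M_0^2/2}$, so no $M_0$-free constant yields exactly $e^{-M_0^2/2}$. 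Your $C_d\max(1,M_0^{d-2})e^{-M_0^2/2}$ is therefore the right statement, whereas letting $C$ depend on $M_0$ would make the lemma vacuous, since $\varepsilon_{\max}\le 1$.
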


\begin{proof}[Proof of Lemma~\ref{lem:margin_uniform_tail}]
Let $\lambdab \in \Theta$, $\Xb \sim \mathcal{N}(\mub, \Sigmab)$, and $r := M_0 \sqrt{\mathrm{eig}_{\max}(\Sigmab)}$. The margin condition $\text{dist}(\mub(\lambdab), \partial \mathcal{S}) \geq r$ implies $\overline{B(\mub, r)} \subset \mathcal{S}$, so $\varepsilon(\mub, \Sigmab) = \mathbb{P}(\Xb \notin \mathcal{S}) \leq \mathbb{P}(\|\Xb - \mub\| \geq r)$. Since $\Sigmab \preceq \mathrm{eig}_{\max}(\Sigmab) \Ib$, for $\Zb \sim \mathcal{N}(0, \Ib)$, there exists $C > 0$ such that
\begin{equation*}
    \mathbb{P}(\|\Xb - \mub\| \geq r) \leq \mathbb{P}(\|\Zb\| \geq r / \sqrt{\mathrm{eig}_{\max}(\Sigmab)}) = \mathbb{P}(\|\Zb\| \geq M_0) \leq C e^{-M_0^2/2},
\end{equation*}
by chi-squared tail bounds for $\|\Zb\|^2$. Thus, $\varepsilon_{\max} = \sup_{\lambdab \in \Theta} \varepsilon(\lambdab) \leq C e^{-M_0^2/2}$.
\end{proof}

\begin{proof}[Proof of Theorem~\ref{thm:trunc_equiv}]
By Lemma~\ref{lem:margin_uniform_tail}, $\varepsilon_{\max}\le C e^{-M_0^2/2}$. Proposition~\ref{prop:trunc_gap} then yields the uniform bound
\begin{equation*}
\sup_{\lambdab\in\Theta}\big|G_\omega(\lambdab)-F_\omega(\lambdab)\big|
 \le \delta_\omega(\varepsilon_{\max}).
\end{equation*}
Fix $\rho>0$ and set $m(\rho):=F_\omega(\lambdab_\omega^\star)-\sup_{\|\lambdab-\lambdab_\omega^\star\|\ge\rho}F_\omega(\lambdab)>0$. Choose $M_0$, hence $\varepsilon_{\max}$, large enough that $\delta_\omega(\varepsilon_{\max})<m(\rho)/3$. By Lemma~\ref{lem:argmax_stability}, every maximiser of $G_\omega$ lies in $B(\lambdab_\omega^\star,\rho)$. If $\lambdab_\omega^\star$ is a unique interior maximiser of $F_\omega$, the argmaxes coincide.
\end{proof}

\section{Algorithms and variants}
\label{app:gerve_derivations}

This section details the update rules for GERVE for different variational families and provides pseudo-codes for implementation. For each case, we:
\begin{itemize}
    \item Recall the parameterisation and its link to natural gradients.
    \item Present the specific update rules and their mini-batch version.
    \item Give a pseudo-code for practical use.
    \item Provide derivation notes for readers interested in the intermediate steps.
\end{itemize}

\subsection{Natural-gradient update rule}

For distributions in an exponential family or MCEF, we can use (\ref{eq:natural_expectation_duality}) \citep{lin2019fast, khan2023bayesian} so that 
the natural-gradient update rule writes:
\begin{equation*}
\lambdab_{t + 1} = \lambdab_t + \rho_t \nabla_\Mb \widehat{\mathcal{L}}_{\omega_t, N}(\lambdab)|_{\lambdab=\lambdab_t},
\end{equation*}
which will be the starting point for all the following derivations.

\subsection{Fixed-covariance Gaussian case}
\label{app:fcg_derivation}

\paragraph{Parameterisation.} We consider $q_\lambdab(\xb) = \mathcal{N}(\xb; \mub, s^{-1} \Ib)$ with fixed precision $s > 0$.
\begin{itemize}
    \item Natural parameter: $\lambdab = s \mub$;
    \item Sufficient statistic: $\xb$;
    \item Expectation parameter: $\mub = \mathbb{E}_{q_{\lambdab}}[\Xb]$.
\end{itemize}
Only $\mub$ is optimised. 

\paragraph{GERVE update rule.} Considering $\nabla_\mub \mathcal H_{\mathcal{S}}(q_\lambdab) \approx \nabla_\mub \mathcal H(q_\lambdab) =  0$, the fixed-covariance Gaussian GERVE update rule is:
\begin{empheq}[box=\fbox]{equation*} 
     \mub_{t+1} = \mub_{t} + \frac{\rho_t}{N} \sum_{i = 1}^N (\Xb_i - \mub_t) q_{\lambdab_t}(\Xb_i).
\end{empheq}
This is equivalent to gradient ascent on the kernel density estimate
\begin{equation*}
    p_h(\mub) = \frac{1}{N} \sum_{i=1}^N \varphi_h(\Xb_i - \mub),
\end{equation*}
where $\varphi_h$ is the Gaussian kernel with bandwidth $h = s^{-1}$. This reduces to the classic Gaussian mean-shift update when the step sizes are chosen adaptively 
$\rho_t = p_h(\mub_t)^{-1} $.

\paragraph{Mini-batch implementation.} For a batch $\Xb^{(t)}_{1:B}$ sampled uniformly with replacement from the dataset, i.e. $J^{(t)}_1,\dots,J^{(t)}_B \overset{\text{i.i.d.}}{\sim} \mathrm{Unif}([N]), \Xb^{(t)}_{1:B} := (\Xb_{J^{(t)}_1}, \dots, \Xb_{J^{(t)}_B})$, define:
\begin{equation*}
    \widehat{\gb}^{(t)}_B = \frac{s}{B} \sum_{i=1}^B (\Xb^{(t)}_i - \mub_t) q_{\lambdab_t}(\Xb^{(t)}_i).
\end{equation*}
Conditioned on $\Xb_{1:N}$, $\widehat{\gb}^{(t)}_B$ is an unbiased estimator of $\nabla_{\mub} \widehat{\mathcal{L}}_{\omega_t, N}(\lambdab)|_{\lambdab = \lambdab_t}$.

The mini-batch version of the update rule is:
\begin{empheq}[box=\fbox]{equation*} 
    \mub_{t+1} = \mub_{t} +  \rho_t s^{-1} \widehat{\gb}^{(t)}_B.
\end{empheq}

\paragraph{Algorithm.} Algorithm~\ref{alg:fcg_nva} presents the pseudo-code corresponding to this implementation of GERVE.

%

\LinesNumbered
\begin{algorithm}[H]
	\caption{Fixed-covariance Gaussian GERVE (with mini-batches)}  \label{alg:fcg_nva}
 \textsc{Given} samples $\Xb_{1:N}$. \\
		 \textsc{Set} $T$, $B$, $s$, $\lambdab_1$, $\rho_{1:T}$. \\
		 \For{$t=1\!:\!T$}{
        \textsc{Sample} $J^{(t)}_{1:B} \overset{\text{i.i.d.}}{\sim} \mathrm{Unif}([N])$, set $\Xb^{(t)}_i \gets \Xb_{J^{(t)}_i}$ for $i=1\!:\!B$. \\
      \textsc{Compute} $\widehat{\gb}^{(t)}_B$ from $\Xb^{(t)}_{1:B}, \lambdab_t$. \\
 \textsc{Update} $\mub_{t+1} = \mub_{t} + \rho_t s^{-1} \widehat{\gb}^{(t)}_B$.
}
		\Return $\lambdab_{T+1} (= \mub_{T+1}$). 
\end{algorithm}

\paragraph{Derivation details.}
From the natural-gradient rule for exponential families with $\lambdab = s \mub$, we have:
\begin{equation*}
     s \mub_{t+1} = s \mub_{t} + \rho_t \nabla_{\mub} \widehat{\mathcal{L}}_{\omega_t, N}(\lambdab)|_{\lambdab = \lambdab_t}.
\end{equation*} 
The gradient of the empirical objective is:
\begin{equation*}
    \nabla_{\mub} \widehat{\mathcal{L}}_{\omega_t, N}(\lambdab) = \frac{s}{N} \sum_{i = 1}^N (\Xb_i - \mub) q_\lambdab(\Xb_i) + \omega \nabla_\mub \mathcal H_{\mathcal{S}}(q_\lambdab),
\end{equation*}
which yields the update rule above.

\subsection{Gaussian mixture case}
\label{app:mix_derivation}

\paragraph{Parameterisation.} For $K$ components:
\begin{equation*}
    q_\Lambdab(\xb) = \sum_{k=1}^K \pi_k q_{\lambdab_k}(\xb),
\end{equation*}
where the weights $\pi_1, \dots, \pi_K$ sum to 1, and each $q_{\lambdab_k}(\xb)$ is a Gaussian density with natural parameters $\lambdab_k$. 
\begin{itemize}
    \item MCEF natural parameters:
\begin{equation*}
    \Lambdab = (v_1, \dots, v_{K-1}, \lambdab_1, \dots, \lambdab_K),
\end{equation*}
where $v_k := \log(\pi_k/\pi_K)$;
    \item MCEF expectation parameters:
\begin{equation*}
    \Mb = (\pi_1, \dots, \pi_{K-1}, \Mb_1, \dots, \Mb_K),
\end{equation*}
where $\Mb_k := (\mb_k^{(1)}, \mb_k^{(2)}) = (\pi_k \mub_k, \pi_k (\Sb_k^{-1} + \mub_k \mub_k^T))$.
\end{itemize}

\paragraph{GERVE update rules.} The Gaussian mixture GERVE update rules are:
\begin{empheq}[box=\fbox]{align*} 
    \Sb_{k,t+1} &= \Sb_{k,t} - \frac{\rho_t}{N} \Sb_{k,t} \sum_{i=1}^N ( (\Xb_i - \mub_{k,t}) (\Xb_i - \mub_{k,t})^T \Sb_{k,t} - \Ib) q_{\lambdab_{k,t}}(\Xb_i) \\
    & \quad - \rho_t \omega_t \nabla_{\Sb^{-1}_k} \mathcal{H}_\mathcal{S}(q_\Lambdab)|_{\Lambdab=\Lambdab_t}, \\
    \mub_{k,t+1} &= \mub_{k,t} +  \frac{\rho_t}{N} \Sb^{-1}_{k,t+1} \Sb_{k,t}  \sum_{i=1}^N  (\Xb_i - \mub_{k,t}) q_{\lambdab_{k,t}}(\Xb_i) + \rho_t \omega_t \nabla_{\mub_k} \mathcal{H}_\mathcal{S}(q_\Lambdab)|_{\Lambdab=\Lambdab_t}, \\
    v_{k,t+1} &= v_{k,t} +  \frac{\rho_t}{N} \sum_{i=1}^N (q_{\lambdab_{k,t}}(\Xb_i) -q_{\lambdab_{K,t}}(\Xb_i) ) - \rho_t \omega_t \nabla_{\pi_k} \mathcal{H}_\mathcal{S}(q_\Lambdab)|_{\Lambdab=\Lambdab_t}.
\end{empheq}

\paragraph{Mini-batch and Monte-Carlo implementation.} For a batch $\Xb^{(t)}_{1:B}$ sampled uniformly with replacement from the dataset, i.e. $J^{(t)}_1,\dots,J^{(t)}_B \overset{\text{i.i.d.}}{\sim} \mathrm{Unif}([N]), \Xb^{(t)}_{1:B} := (\Xb_{J^{(t)}_1}, \dots, \Xb_{J^{(t)}_B})$, define:
\begin{align*}
    \widehat{f}^{(t)}_{k, B} &= \frac{1}{B}\sum_{i = 1}^B (q_{\lambdab_{k,t}}(\Xb^{(t)}_i) - q_{\lambdab_{K,t}}(\Xb^{(t)}_i)), \\
    \widehat{\gb}^{(t)}_{k, B} &= \frac{\pi_{k,t}}{B} \Sb_{k,t} \sum_{i = 1}^B (\Xb^{(t)}_i - \mub_{k,t}) q_{\lambdab_{k,t}}(\Xb^{(t)}_i), \\
    \widehat{\Hb}^{(t)}_{k, B} &= \frac{\pi_{k,t}}{2 B} \Sb_{k,t} \sum_{i = 1}^B ((\Xb^{(t)}_i - \mub_{k,t}) (\Xb^{(t)}_i - \mub_{k,t})^T \Sb_{k,t} - \Ib) q_{\lambdab_{k,t}}(\Xb^{(t)}_i).
\end{align*}
Conditioned on $\Xb_{1:N}$, $\widehat{\gb}^{(t)}_{k,B}$, $\widehat{\Hb}^{(t)}_{k,B}$, and $\widehat{f}^{(t)}_{k,B}$ are unbiased estimators of their full-data counterparts.

Entropy derivatives must also be estimated. For some size $B_e$, draw additional Monte-Carlo samples $\Zb^{(k,t)}_{1:B_{e}} \sim q_{\lambdab_{k,t}}$ to compute $\widehat{\etab}^{(t)}_{k, B_{e}}$, $\widehat{\gammab}^{(t)}_{k, B_{e}}$, and $\widehat{\varphi}^{(t)}_{k, B_{e}}$ (see details below). 

We obtain the practical updates:
\begin{empheq}[box=\fbox]{align*} 
    \Sb_{k,t+1} &= \Sb_{k,t} - \frac{2 \rho_t}{\pi_{k,t}} (\widehat{\Hb}^{(t)}_{k, B} + \omega_t \widehat{\etab}^{(t)}_{k, B_e}), \\
    \mub_{k,t+1} &= \mub_{k,t} + \rho_t \Sb^{-1}_{k,t+1} (\widehat{\gb}^{(t)}_{k, B} + \omega_t \widehat{\gammab}^{(t)}_{k, B_e}), \\
    v_{k,t+1} &= v_{k,t} + \rho_t (\widehat{f}^{(t)}_{k, B} + \omega_t \widehat{\varphi}^{(t)}_{k, B_{e}}).
\end{empheq}

\paragraph{Algorithm.}  Algorithm~\ref{alg:nva_gm} depicts our proposed implementation of Gaussian mixture GERVE.


\LinesNumbered
\begin{algorithm}[H]
	\caption{Gaussian mixture GERVE (with mini-batches)}  \label{alg:nva_gm}
    \textsc{Given} samples $\Xb_{1:N}$. \\
	\textsc{Set} $T$, $B$, $B_e$, $K$, $\Lambdab_1$, $\rho_{1:T}$, $\omega_{1:T}$. \\
	\For{$t=1\!:\!T$}{
        \textsc{Sample} $J^{(t)}_{1:B} \overset{\text{i.i.d.}}{\sim} \mathrm{Unif}([N])$, set $\Xb^{(t)}_i \gets \Xb_{J^{(t)}_i}$ for $i=1\!:\!B$. \\
        \For {$k=1\!:\!K$}{
            \textsc{Compute} $\widehat{\gb}^{(t)}_{k, B}$, $\widehat{\Hb}^{(t)}_{k, B}$ from $\Xb^{(t)}_{1:B}, \Lambdab_t$. \\
			\textsc{Sample} $\Zb^{(k,t)}_{1:B_e} \overset{\text{i.i.d.}}{\sim} \mathcal N(\mub_{k,t},\Sb_{k,t}^{-1})$. \\
            \textsc{Compute} $\widehat{\gammab}^{(t)}_{k, B_e}$, $\widehat{\etab}^{(t)}_{k, B_e}$ from $\Zb^{(k,t)}_{1:B_e}, \Lambdab_t$. \\
            \textsc{Update} $\Sb_{k,t+1}, \mub_{k,t+1}$.
        } 
        \For {$k=1\!:\!(K\!-\!1)$}{
            \textsc{Compute} $\widehat{f}^{(t)}_{k, B}$, $\widehat{\varphi}^{(t)}_{k, B_{e}}$ from $\Xb^{(t)}_{1:B}, \Zb^{(k,t)}_{1:B_e}, \Zb^{(K,t)}_{1:B_e}, \Lambdab_t$.  \\
            \textsc{Update} $v_{k, t+1}$.
        }
    }
	\Return $\Lambdab_{T+1}$.
\end{algorithm}

\paragraph{Derivation details.} The natural-gradient update rules are:
\begin{align*}
    v_{k, t+1} &= v_{k, t} + \rho_t \nabla_{\pi_k} \widehat{\mathcal{L}}_{\omega_t, N}(\Lambdab)|_{\Lambdab = \Lambdab_{t}}, \\
    \Sb_{k, t+1} \mub_{k, t+1} &= \Sb_{k, t} \mub_{k, t} + \rho_t \nabla_{\mb_k^{(1)}} \widehat{\mathcal{L}}_{\omega_t, N}(\Lambdab)|_{\Lambdab = \Lambdab_{t}}, \\
    - \frac{1}{2}\Sb_{k, t+1} &= - \frac{1}{2}\Sb_{k, t} + \rho_t \nabla_{\mb_k^{(2)}} \widehat{\mathcal{L}}_{\omega_t, N}(\Lambdab)|_{\Lambdab = \Lambdab_{t}}.
\end{align*}
We rearrange them to obtain:
\begin{align*}
    \mub_{k,t+1} &= \mub_{k,t} + \rho_t \Sb_{k,t+1}^{-1}  ( \nabla_{\mb^{(1)}_k} \widehat{\mathcal{L}}_{\omega_t, N}(\Lambdab)|_{\Lambdab = \Lambdab_{t}} + 2 (\nabla_{\mb^{(2)}_k} \widehat{\mathcal{L}}_{\omega_t, N}(\Lambdab)|_{\Lambdab = \Lambdab_{t}}) \mub_{k,t}), \\
    \Sb_{k,t+1} &= \Sb_{k,t} - 2 \rho_t \nabla_{\mb^{(2)}_k} \widehat{\mathcal{L}}_{\omega_t, N}(\Lambdab)|_{\Lambdab = \Lambdab_{t}}, \\
    v_{k, t+1} &= v_{k, t} + \rho_t \nabla_{\pi_k} \widehat{\mathcal{L}}_{\omega_t, N}(\Lambdab)|_{\Lambdab = \Lambdab_{t}}.
\end{align*}

We apply the chain rule to express the gradients w.r.t. the expectation parameters as gradients w.r.t. $\mub_k$ and $\Sb_k^{-1}$:
\begin{align*}
\nabla_{\mb_k^{(1)}}\widehat{\mathcal{L}}_{\omega_t, N}(\Lambdab) &= \frac{1}{\pi_k} \left(\nabla_{\mub_k} \widehat{\mathcal{L}}_{\omega_t, N}(\Lambdab) - 2 (\nabla_{\Sb_k^{-1}} \widehat{\mathcal{L}}_{\omega_t, N}(\Lambdab)) \mub_k \right),  \\
\nabla_{\mb_k^{(2)}} \widehat{\mathcal{L}}_{\omega_t, N}(\Lambdab)  &= \frac{1}{\pi_k} \nabla_{\Sb_k^{-1}} \widehat{\mathcal{L}}_{\omega_t, N}(\Lambdab).
\end{align*}
These gradients are
\begin{align*}
    \nabla_{\mub_k} \widehat{\mathcal{L}}_{\omega_t, N}(\Lambdab) &= \nabla_{\mub_k} \left(\frac{1}{N} \sum_{i = 1}^N q_{\Lambdab}(\Xb_i)\right) + \omega_t \nabla_{\mub_k} \mathcal{H}_\mathcal{S}(q_\Lambdab) \\
    &= \frac{\pi_k}{N} \Sb_k \sum_{i = 1}^N (\Xb_i - \mub_k) q_{\lambdab_k}(\Xb_i) + \omega_t \nabla_{\mub_k} \mathcal{H}_\mathcal{S}(q_\Lambdab),
\end{align*}
and
\begin{align*}
    \nabla_{\Sb_k^{-1}} \widehat{\mathcal{L}}_{\omega_t, N}(\Lambdab) &= \nabla_{\Sb_k^{-1}} \left(\frac{1}{N} \sum_{i = 1}^N q_{\Lambdab}(\Xb_i)\right) + \omega_t \nabla_{\Sb_k^{-1}} \mathcal{H}_\mathcal{S}(q_\Lambdab) \\
    &= \frac{\pi_k}{2 N} \Sb_k \sum_{i = 1}^N ((\Xb_i - \mub_k) (\Xb_i - \mub_k)^T \Sb_k - \Ib) q_{\lambdab_k}(\Xb_i) + \omega_t \nabla_{\Sb_k^{-1}} \mathcal{H}_\mathcal{S}(q_\Lambdab).
\end{align*}
Given that $\pi_K = 1 - \sum_{k=1}^K \pi_k$, we also have
\begin{align*}
    \nabla_{\pi_k} \widehat{\mathcal{L}}_{\omega_t, N}(\Lambdab) &= \nabla_{\pi_k} \left(\frac{1}{N} \sum_{i = 1}^N q_{\Lambdab}(\Xb_i)\right) + \omega_t \nabla_{\pi_k} \mathcal{H}_\mathcal{S}(q_\Lambdab) \\
    &= \frac{1}{N}\sum_{i = 1}^N (q_{\lambdab_k}(\Xb_i) - q_{\lambdab_K}(\Xb_i)) + \omega_t \nabla_{\pi_k} \mathcal{H}_\mathcal{S}(q_\Lambdab).
\end{align*}
Combining these expressions, we find the Gaussian mixture GERVE update rules.

\paragraph{Entropy derivatives (for Monte-Carlo estimation).} The entropy derivatives do not admit a closed form, but they admit an integral representation. For the derivatives w.r.t. $\mub_k$ and $\Sb_k^{-1}$, the following expressions can be used:
\begin{align*}
    \nabla_{\Sb_k^{-1}} \mathcal{H}_\mathcal{S}(q_\Lambdab) &= \frac{\pi_k}{2} \Sb_k \mathbb{E}_{q_{\lambdab_k}}[((\Xb - \mub_k) (\Xb - \mub_k)^T \Sb_k - \Ib) \log q_\Lambdab(\Xb) \mathds{1}_\mathcal{S}(\Xb)]
\end{align*}
and
\begin{align*}
    \nabla_{\mub_k} \mathcal{H}_\mathcal{S}(q_\Lambdab) &= \pi_k \Sb_k \mathbb{E}_{q_{\lambdab_k}}[(\Xb - \mub_k) \log q_\Lambdab(\Xb) \mathds{1}_\mathcal{S}(\Xb)].
\end{align*}
Finally, for the derivative w.r.t. $\pi_k$, we use:
\begin{align*}
    \nabla_{\pi_k} \mathcal{H}_\mathcal{S}(q_\Lambdab) &= \mathbb{E}_{q_{\lambdab_k}}[\log q_\Lambdab(\Xb) \mathds{1}_\mathcal{S}(\Xb)] - \mathbb{E}_{q_{\lambdab_K}}[\log q_\Lambdab(\Xb) \mathds{1}_\mathcal{S}(\Xb)].
\end{align*}

\subsection{Computational considerations}
\label{app:complexity}

Recall GERVE's complexity for Gaussian mixtures with $K$ components (Sec.~\ref{sec:practical_playbook}): 
\begin{itemize}
    \item Full covariances: $O(d^3 B K T)$.
    \item Diagonal, isotropic or fixed covariances: $O(d B K T)$.
\end{itemize}
Simplified families sometimes retain most of the algorithm’s qualitative behaviour but improve scalability.

This is comparable to related methods:
\begin{itemize}
    \item \emph{Mean-shift mode-seeking.} A single run of classical mean-shift with fixed, spherical bandwidths and mini-batches costs $O(d B T)$. Running from $K$ initialisations (to target $K$ modes) yields $O(d B K T)$, comparable to fixed-covariance GERVE. However, unlike GERVE, these independent trajectories lack a repulsion mechanism and may fail to explore the full modal structure.
    \item \emph{Mean-shift clustering.} Since mean-shift clustering launches one trajectory per data point, the cost is $O(d N B T)$ for spherical bandwidths. This can be substantially more expensive than GERVE, which only launches $K$ trajectories to find the cluster centroids.
    \item \emph{EM for Gaussian mixtures (GMM-EM).} The EM algorithm~\citep{dempster1977maximum, mclachlan2000finite} with full covariances has cost $O\big((d^2 B + d^3) K T\big)$
    per run, due to matrix inversions and determinant computations. For diagonal covariances, it reduces to $O(d B K T)$, which is comparable to fixed-covariance GERVE. In contrast, GERVE uses unified natural-gradient steps rather than alternating E- and M-steps, and does not require likelihood maximisation, making it applicable in likelihood-free settings.
\end{itemize}

\section{Simulations and benchmark}
\label{app:simulations}

\subsection{Details on the clustering example of Section \ref{sec:simu_clustering}}
\label{app:simulations_cluster}

\paragraph{Sample.} The dataset consists of $N=6000$ points from a 2D Gaussian mixture with three equally weighted components with means on the nodes of an equilateral triangle, at $(0,1)$, $(\cos(\pi/6),-0.5)$, $(-\cos(\pi/6),-0.5)$ and isotropic covariance $\sigma^2\Ib$ with $\sigma^2=0.25$. Figure~\ref{fig:samples_clustering} shows a sample and a kernel density estimate (KDE), revealing $J=3$ regions of high density.

\begin{figure}
    \centering
    \includegraphics[width=0.49\linewidth]{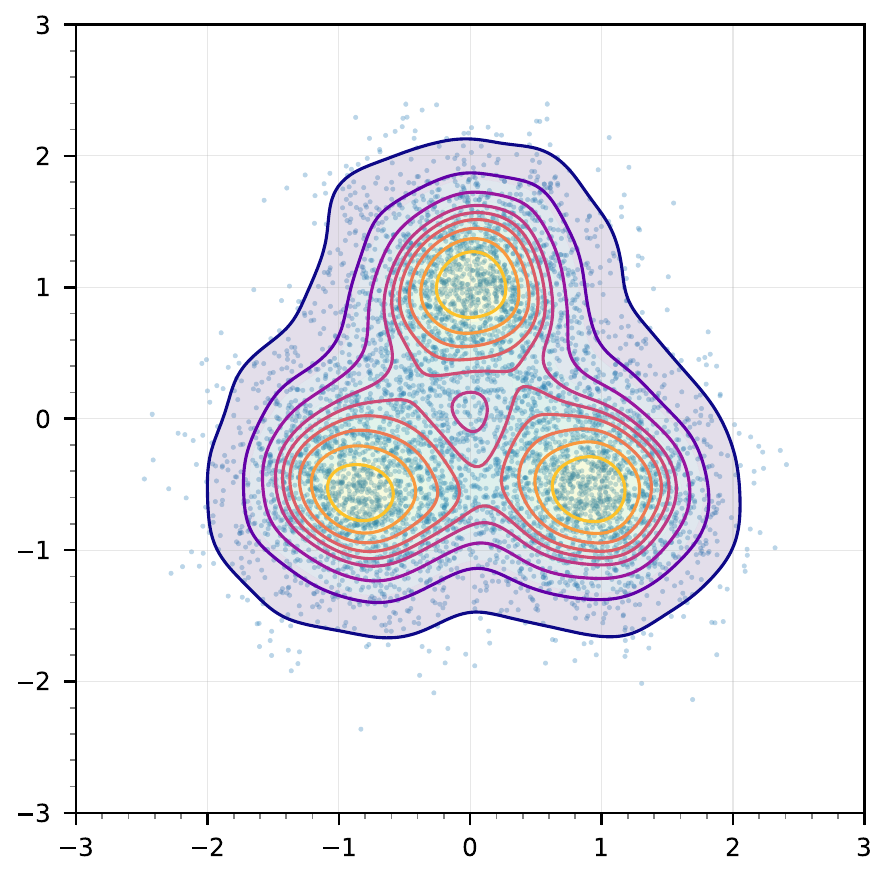}
    \caption{Sample of $N=6000$ points from a three-component Gaussian mixture whose means are located at the nodes of an equilateral triangle. Background shading: Gaussian KDE with bandwidth $h$ selected by Scott’s rule \citep{scott1992multivariate}, revealing three high-density regions.} 
    \label{fig:samples_clustering}
\end{figure}

\paragraph{GERVE hyperparameters.} 
The mixture parameters are initialised with $\mub_{1,1}, \dots, \mub_{K,1} \sim \mathrm{Uniform}([-2,0]^2)$, and for all $k \in [K]$, $\Sigmab_{k,1} = 2 \Ib$, $\pi_{k,1} = 1/K$. Table~\ref{tab:hyper_cluster} gives the other hyperparameters used for GERVE in this clustering task. 

\begin{table}
 \caption{Hyperparameters used for GERVE in the clustering simulations.}
\centering
\begin{tabular}{cc}
\toprule
 Hyperparameters & Values  \\ 
\midrule
  $T$ & $40000$ \\ 
  $B$ & $1000$ \\  
  $K$ & $\{3, 7\}$ \\   
  $\omega_t$ & $50/t^{1.1} + 0.004$ \\ 
  $\rho_t$ & $10^{-4}(50/\omega_t)^{0.7}$ \\ 
\bottomrule
\end{tabular}
\label{tab:hyper_cluster}
\end{table}

\paragraph{Additional figures.} Figure \ref{fig:clusters_k3} shows results for $K=3$ including an additional comparison with $k$-means for $K=7$. 

\begin{figure}
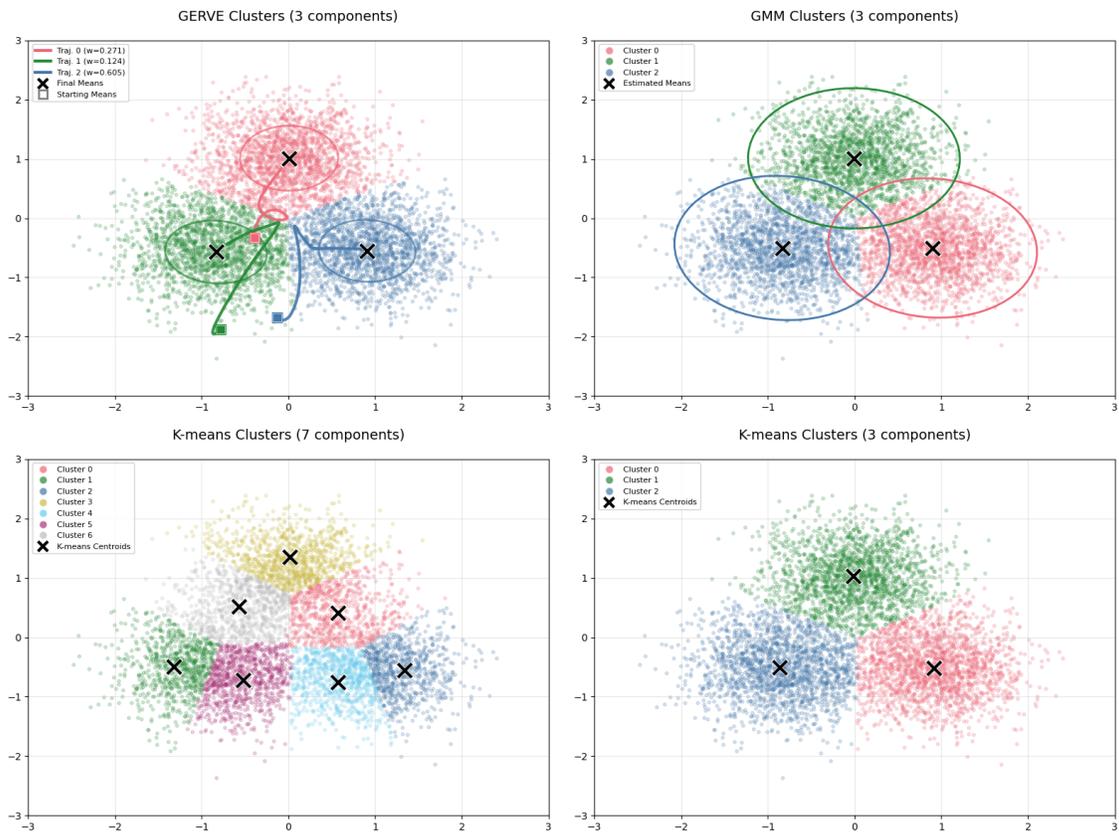

\centering
\includesvg[width=0.49\linewidth]{figures/cluster_gerve_3.svg}
\includesvg[width=0.49\linewidth]{figures/cluster_gmm_3.svg}\\
\includesvg[width=0.49\linewidth]{figures/cluster_km_7.svg} 
\includesvg[width=0.49\linewidth]{figures/cluster_km_3.svg}
\caption{Top row: Clustering with $K=3$ for GERVE (left) and  GMM-EM (right). Ellipses represent the covariance matrices of the Gaussian components. Bottom row: Clustering with $k$-means for $K=7$ (left) and $K=3$ (right).}
\label{fig:clusters_k3}
\end{figure}

\subsection{Mode-estimation}
\label{app:simulations_mode}

\paragraph{Model.} We use the same triangle configuration with means at $(0,1)$, $(\cos(\pi/6),-0.5)$, $(-\cos(\pi/6),-0.5)$ and isotropic covariance $\sigma^2\Ib$, but with smaller covariance $\sigma^2=0.1$. The three means are well separated, so the density global modes approximately coincide with the means: $\ub_1 \approx (0,1)$, $\ub_2 \approx (\cos(\pi/6),-0.5)$, $\ub_3 \approx (-\cos(\pi/6),-0.5)$. For each experiment replicate, we sample $N$ points.

\paragraph{Metrics.} Each method is replicated $n_{\text{rep}}=100$ times and provides $K$ estimated modes $\widehat{\ub}=(\widehat{\ub}_1,\dots,\widehat{\ub}_K)$.
We assess performance in terms of mode estimation using the following metrics:
\begin{itemize}
    \item \emph{Mode recovery} ($\mathsf{MR}_\epsilon$): number of true modes $\{\ub_i, i\in [I]\}$ recovered within a strict tolerance $\epsilon$:
\begin{equation*}
\mathsf{MR}_\epsilon(\widehat{\ub})=\sum_{i=1}^I \mathds{1} \left\{\min_{k\in[K]}\big\|\widehat{\ub}_k-\ub_i\big\|_2<\epsilon\right\}.
\end{equation*}
We set $\epsilon=10^{-2}$, a small value relative to $\sigma=\sqrt{0.1}$.
    \item \emph{Hungarian matching sum} (\textsf{HM}): minimum linear assignment cost between true and estimated modes,
\begin{equation*}
\mathsf{HM}(\widehat{\ub})
=\min_{\tau}
 \sum_{i=1}^{I}\big\|\widehat{\ub}_{\tau(i)}-\ub_i\big\|_2,
\end{equation*}
i.e., the minimum total distance over injective maps ($K \geq I$)  $\tau:\{1,\dots,I\}\hookrightarrow\{1,\dots,K\}$ (one distinct estimate per true mode). This can be computed via the Hungarian algorithm~\citep{kuhn1955hungarian}.
    \item \emph{Nearest-neighbor sum} (\textsf{NN}): aggregate distance from each estimate to the closest true mode,
\begin{equation*}
\mathsf{NN}(\widehat{\ub})=\sum_{k=1}^{K}\min_{i\in[I]}\big\|\widehat{\ub}_k-\ub_i\big\|_2.
\end{equation*}
\end{itemize}

\paragraph{Hyperparameters and grid search.} For GERVE, the initial components have equal weights $1/K$ and covariances $\sigma_1^2 \Ib$, $\sigma_1 > 0$. The annealing schedule is $\omega_t = \omega_1/t^\beta$ and the step sizes are $\rho_t = \rho_1 (\omega_1/\omega_t)^\gamma$. We used $T = 4000$ iterations and mini-batches of size $B = 1000$. The mean-shift algorithm is run with a mini-batch size $B = 1000$ and uses a bandwidth $h_m$ for its Gaussian kernel. The feature significance method uses a bandwidth $h_f$ to identify the points of significant curvature or gradient. For each method, we performed a grid or line search to optimise hyperparameters with respect to each metric and sample size. Table~\ref{tab:hyper_benchmark} gives the values used in the grid search procedure.

For Gaussian mean-shift, we use a mini-batch fixed-covariance Gaussian GERVE (equivalent updates to Gaussian mean-shift with bandwidth $h$) with adaptive step size $\rho_t=B \Big(\sum_{b=1}^{B}q_{\lambdab_t}(\Xb^{(t)}_{b})\Big)^{-1}$, where $\Xb^{(t)}_{1:B}$ is the mini-batch sample. 

\begin{table}
 \caption{Grid of hyperparameter values used for GERVE, mean-shift, and the feature significance method in the mode-finding simulations.}
\centering
\begin{tabular}{cc}
\toprule
 Hyperparameters & Values  \\ 
\midrule
  $\sigma^2_1$ & $\{0.1, 0.2\}$ \\   
  $\omega_1$ & $\{10, 50\}$ \\ 
  $\beta$ & $\{1.1, 1.3, 1.5\}$ \\ 
  $\rho_1$ & $\{0.01, 0.03, 0.1\}$ \\ 
  $\gamma$ & $\{0.2, 0.3, 0.4\}$ \\ 
\midrule
  $h_m$ & $\{0.001, 0.002, 0.005, 0.01, 0.02, 0.05, 0.1, 0.2\}$ \\ 
  $h_f$ & $\{0.0001, 0.0003, 0.001, 0.003, 0.01, 0.03, 0.1, 0.3\}$ \\ 
\bottomrule
\end{tabular}
\label{tab:hyper_benchmark}
\end{table}

\paragraph{Additional simulation results.} Figure~\ref{fig:ksweep_methods_metrics_comparison} shows performance metrics for GERVE and the two baseline methods over a sweep of $K \in \{3, \dots, 7\}$.

\begin{figure}
    \centering
    \includegraphics[width=\linewidth]{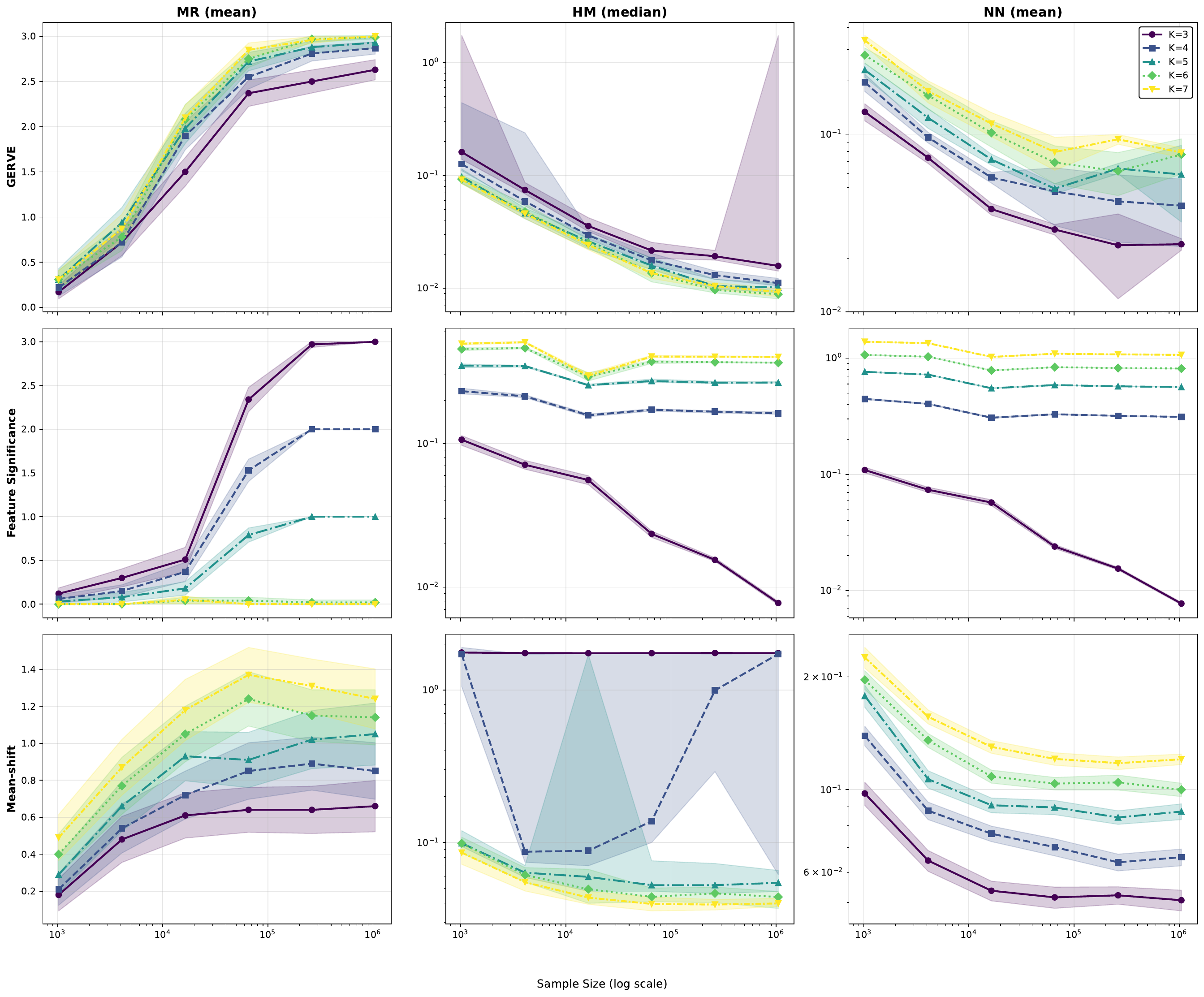}
    \caption{Mode-estimation performance of GERVE (first row), Feature Significance (second row), and Gaussian mean-shift (third row) with respect to sample size, for the triangle mixture ($I=3$ true modes). Curves show, for varying $K\in\{3,4,5,6,7\}$, means ($\mathsf{MR}_\epsilon$, $\textsf{NN}$) or medians ($\textsf{HM}$) over $n_{\text{rep}}=100$ replicates. Bands are $95\%$ confidence intervals: means use $t$-intervals, medians use bootstrap percentiles ($5000$ samples).}
    \label{fig:ksweep_methods_metrics_comparison}
\end{figure}

\section{Clustering performance on UCI datasets}
\label{app:real_datasets}

We assess robustness to misspecification on real data and compare to classic clustering methods. We also provide an ablation study and report runtimes.

\subsection{Benchmark}

\paragraph{Datasets.} The three datasets can be found in the UCI Machine Learning Repository (\textit{archive.ics.uci.edu}).
\begin{itemize}
    \item \textbf{Iris}: $N=150$, $d=4$ (sepal length/width, petal length/width), $J=3$ species (setosa, versicolor, virginica).
    \item \textbf{Wine}: $N=178$, $d=13$ chemical attributes from $J=3$ cultivars.
    \item \textbf{Pendigits}: $N\approx 11000$, $d=16$ (eight pen-tip $(x,y)$ pairs, normalised), $J=10$ classes (digits 0 to 9).
\end{itemize}

\paragraph{Clustering metrics.} We report the Adjusted Rand Index (\textsf{ARI}, pairwise agreement), the Adjusted Mutual Information (\textsf{AMI}, mutual information between estimated and true classes), Purity  (per-cluster majority proportion), and Macro-F1 (classwise F1, averaged).

\paragraph{GERVE default setup.} 
GERVE uses diagonal-covariance mixtures with weights fixed to $k$-means proportions. Means are initialised at $k$-means centroids, covariances at $\sigma^2_1 \Ib$. We include a brief fixed-covariance burn-in of $L_B$ iterations within a total budget of $T$ before learning covariances. We use overcompletion $K = J + 5$ to enable component merging. Other hyperparameters are listed in Table~\ref{tab:hyper_gerve}.

We enforce stability by clipping parameter updates ($\|\mub_{k,t+1}-\mub_{k,t}\|_2 \le D_{\mub}$, $\|\Sigmab_{k,t+1}-\Sigmab_{k,t}\|_F \le D_{\Sigmab}$) and constraining the parameter space via projections ($\sigma^2_{\min} I \preceq \Sigmab_k \preceq \sigma^2_{\max} I$). These constraints have little practical influence on results but improve stability in high dimensions. 

\begin{table}
 \caption{Dataset characteristics and hyperparameter values used for GERVE default setup.}
 \setlength{\tabcolsep}{10pt}
 \centering
 \begin{tabular*}{0.8 \textwidth}{@{\extracolsep{\fill}}l l c c c}
 \toprule
 & Dataset  & Iris & Wine & Pendigits \\
 \midrule
 \multirow{3}{*}{Dataset characteristics} & $N$ & $150$ & $178$ & $10992$ \\ 
  & $d$ & $4$ & $13$ & $16$  \\ 
  & $J$ & $3$ & $3$ & $10$  \\ 
 \midrule
 \multirow{12}{*}{GERVE default setup} & $T$ & $2000$ & $2000$ & $2000$ \\
  & $K$ & $8$ & $8$ & $15$ \\
  & $L_B$ & $100$ & $100$ & $100$ \\ 
  & $\sigma^2_1$ & $1$ & $1$ & $1$ \\   
  & $\omega_1$ & $0.1$ & $10^{-4}$ & $10^{-3}$ \\ 
  & $\omega_t$ & $\omega_1/t^2$ & $\omega_1/t^2$ & $\omega_1/t^2$ \\ 
  & $\rho_1$ & $10^6$ & $10^{12}$ & $10^{11}$ \\ 
  & $\rho_t$ & $\rho_1/t^2$ & $\rho_1/t^2$ & $\rho_1/t^2$ \\ 
  & $\sigma^2_{\min}$ & $0.01$ & $0.1$ & $0.01$ \\  
  & $\sigma^2_{\max}$ & $1$ & $1$ & $1$ \\  
  & $D_\mub$ & $0.1$ & $0.1$ & $0.2$ \\ 
  & $D_\Sigmab$ & $1$ & $1$ & $1$ \\ 
 \bottomrule
 \end{tabular*}
 \label{tab:hyper_gerve}
\end{table}

\paragraph{Baseline methods.}
We compare GERVE to mean-shift (with a flat kernel), GMM-EM, and $k$-means. Our main label-free baselines are implemented using Python's scikit-learn package:
\begin{itemize}
    \item \textbf{Flat mean-shift (MS-Scott):} flat kernel, bandwidth from Scott’s rule \citep{scott1992multivariate}. We set the maximum number of iterations to $300$ (scikit-learn \texttt{MeanShift}'s default).
    \item \textbf{GMM-BIC:} Gaussian mixture model with diagonal covariances, the number of components selected by BIC at each run. Initial mixture parameters are set using $k$-means++. We stop early when the lower bound average gain is below $10^{-3}$ and set the maximum number of iterations to $100$ (scikit-learn \texttt{GaussianMixture}'s default). 
    \item \textbf{$k$-means (KM-Elbow):} number of centroids selected by the elbow method. Initial centroid locations are set using greedy $k$-means++. We stop early when the lower bound average gain is below $10^{-4}$ and set the maximum number of iterations to $300$ (scikit-learn \texttt{KMeans}'s default).
\end{itemize}

For reference, we also report oracle variants that use ground truth and are not available in practice: GERVE-K ($K = J$), MS-K (bandwidth chosen by line search so that the number of clusters is closest to $J$), MS-ARI (bandwidth maximising \textsf{ARI}), GMM-K ($K = J$), GMM-ARI ($K$ maximising \textsf{ARI}), KM-K ($K = J$), KM-ARI ($K$ maximising \textsf{ARI}). 

GMM-BIC selects $K$ by BIC at each run. The -ARI oracle variants are chosen once per dataset by maximising mean \textsf{ARI} across seeds and then held fixed across runs. 

\paragraph{Results.} 
Figure~\ref{fig:baseline_comparison} shows means and 95\% confidence intervals over $n_{\text{rep}}=10$ runs for the four metrics, and the median numbers of effective clusters and their [Q1-Q3] quantiles. GERVE is competitive with label-free baselines while adapting the number of clusters through merging. For reference, comparison against oracle baselines is provided in Figure~\ref{fig:baseline_comparison_all}.

\emph{Iris:} GERVE matches baselines on \textsf{ARI}/\textsf{AMI}/Macro-F1. It merges two close species, yielding two effective clusters versus $J=3$, with a small drop in Purity and a cleaner modal partition. \emph{Wine:} GERVE attains strong scores across metrics. $k$-means edges out the top values on this near-spherical dataset, while GERVE remains robust without tuning $K$ via BIC. \emph{Pendigits:} No method reaches $J=10$ in this overlapping, high-dimensional setting. GERVE produces the effective cluster count closest to 
$J$ and leads on all four metrics, indicating consistent recovery of global structure even with class splits.

We emphasise \textsf{ARI}, \textsf{AMI}, and Macro-F1, since Purity can inflate with larger $K$. In practice, GERVE returns a smaller effective $K$ by merging redundant components. Compared to mean-shift, it scales better in dimension and avoids bandwidth tuning through adaptive covariance learning.

\begin{figure}
\centering
\includegraphics[width=0.95\linewidth]{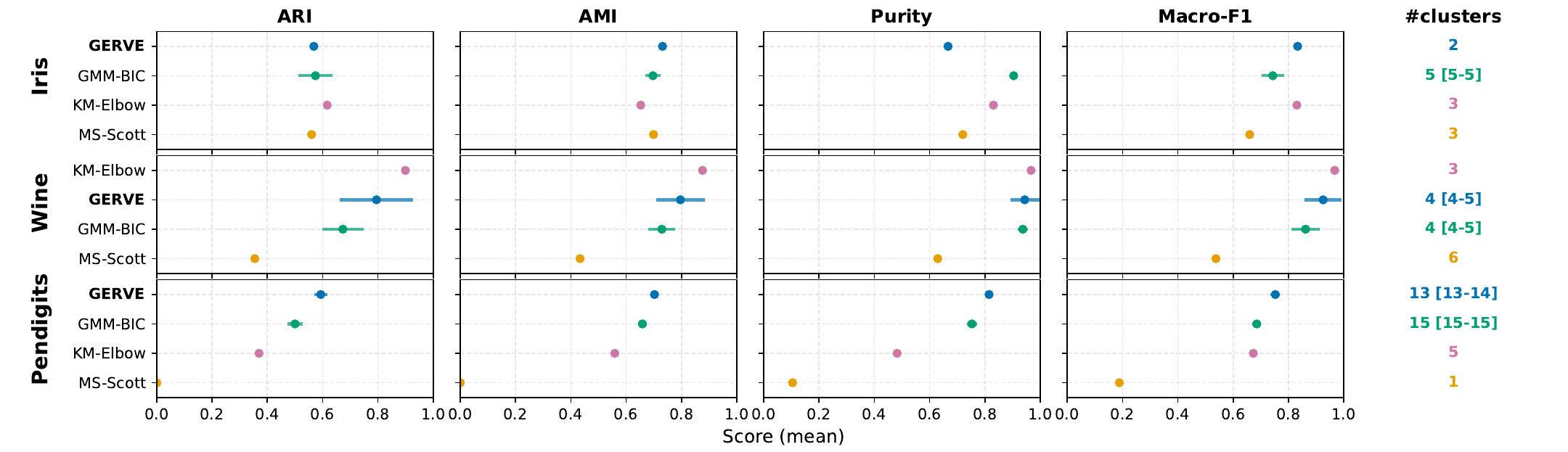}
\caption{Clustering metrics (means $\pm$95\% CI) and number of effective clusters (median and [Q1-Q3] quantiles when applicable, or constant across all replicates) for Iris, Wine, Pendigits over $n_{\text{rep}}=10$. Label-free methods ordered by average rank within each dataset.}
\label{fig:baseline_comparison}
\end{figure}

\begin{figure}
    \centering
    \includegraphics[width=\linewidth]{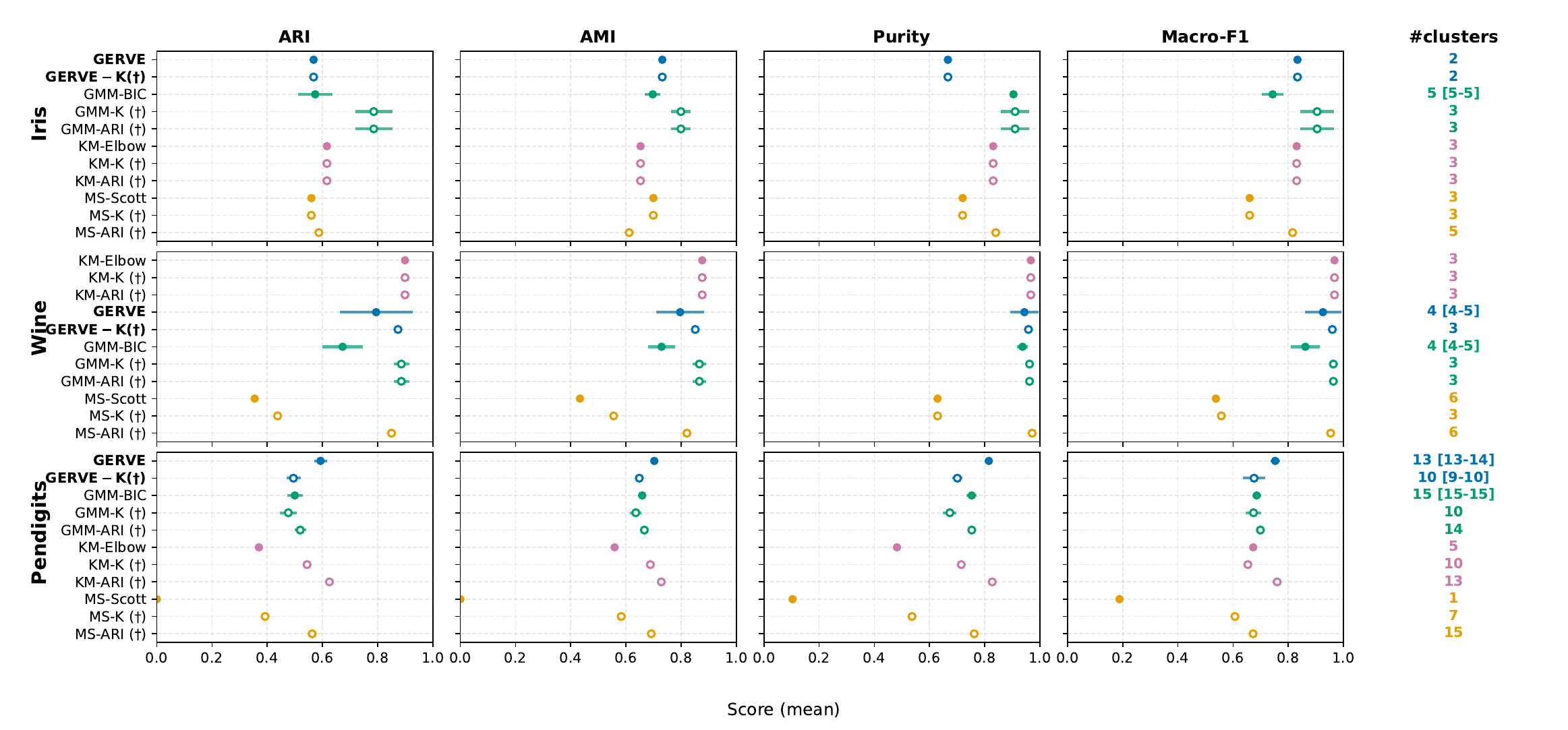}
    \caption{Clustering metrics (means $\pm$95\% CI) and effective clusters (median and [Q1-Q3] quantiles when applicable, or constant across all replicates) for Iris, Wine, Pendigits over $n_{\text{rep}}=10$.  Label-free methods (GERVE, MS-Scott, GMM-BIC, KM-Elbow) are ordered by average rank within each dataset. Oracle variants are marked with the † symbol, and directly placed below the corresponding label-free method.}
    \label{fig:baseline_comparison_all}
\end{figure}

\subsection{Ablation study}

We now turn to an ablation study to examine how different configuration choices (weights, overcompleteness, annealing, covariance structure, burn-in) affect the behaviour of GERVE.

We start with the same default GERVE dataset-specific configuration. For each dataset, we create new configurations by varying one factor at a time relative to the default:
(i) Weights: equal vs.\ proportional to $k$-means group sizes;
(ii) Overcompleteness: $K\in\{J, J{+}5\}$;
(iii) Annealing: $\omega_t \in\{0,\omega_1/t^\beta, \beta\in\{0.5,1,2,3\}\}$;
(iv) Covariance: fixed, diagonal, full;
(v) Burn-in: $L_B\in\{0,50,100,200,500\}$.

Figures~\ref{fig:ablation_components}-\ref{fig:ablation_covariance} report, for each configuration, the mean of each metric (with $95\%$ confidence intervals) and the effective number of clusters (median [Q1-Q3]). We note the main takeaways for each varying factor: 
(i) Weights proportional to $k$-means help on Wine (higher \textsf{ARI}/\textsf{AMI}/Purity/Macro-F1), without significant effect elsewhere; 
(ii) Starting overcomplete slightly increases the effective number of clusters but helps on Pendigits, where class overlap is substantial; 
(iii) Faster annealing improves end-state consolidation under fixed iteration budgets, $\omega_t \equiv 0$ underperforms on overlapping classes, illustrating the importance of the exploration induced by the entropy term; 
(iv) Diagonal covariances are a good speed vs. robustness compromise, as full covariances add cost and variance without systematic gains;
(v) A fixed-covariance short burn-in ($L_B \approx 100$) can help, but longer burn-ins steal iterations from covariance learning.

\begin{figure}
    \centering
    \includegraphics[width=\linewidth]{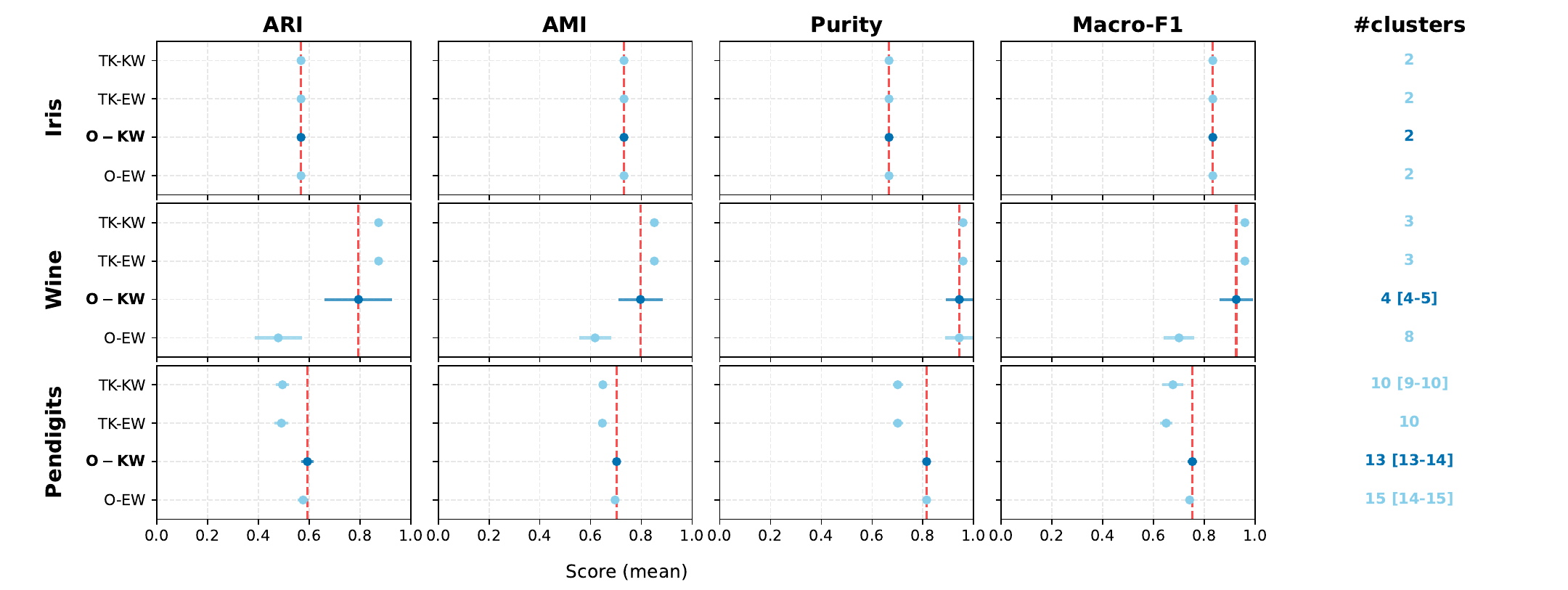}
    \includegraphics[width=\linewidth]{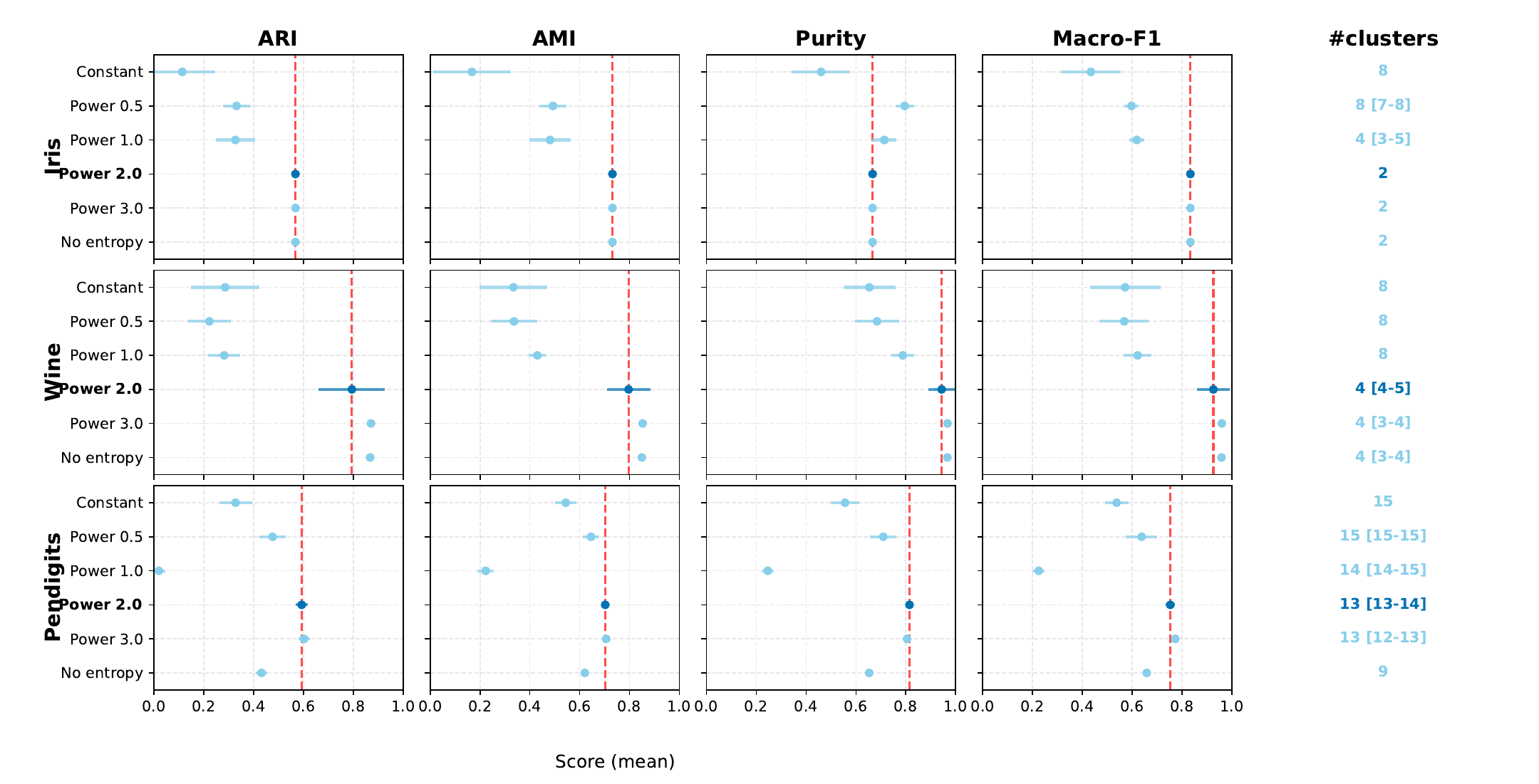}
    \caption{\textbf{Ablation -- Weights and overcompleteness (top) and Annealing (bottom).} For each dataset (rows), the first four columns show \textsf{ARI}/\textsf{AMI}/Purity/Macro-F1 (means ± 95\% $t$-CIs over $n_{\text{rep}}=10$ runs), the rightmost column shows the effective number of clusters (median [Q1--Q3], or constant across all runs). Bold marks the default GERVE configuration. The dashed reference line in each metric panel marks the default’s mean for that metric. 
    Dataset-specific $J$ values are listed in Table~\ref{tab:hyper_gerve}. \\
    Top. Prefixes: \textbf{T}- for $K=J$ (true groups), \textbf{O}- for $K=J{+}5$ (overcomplete). Suffixes: \textbf{-KW} for $k$-means-proportional weights, \textbf{-EW} for equal weights. \\
    Bottom. Schedules: \emph{Constant} ($\omega_t \equiv \omega_1$), \emph{Power-$x$} ($\omega_t=\omega_1 t^{-x}$), \emph{No entropy} ($\omega_t \equiv 0$).}
    \label{fig:ablation_components}
\end{figure}

\begin{figure}
    \centering
    \includegraphics[width=\linewidth]{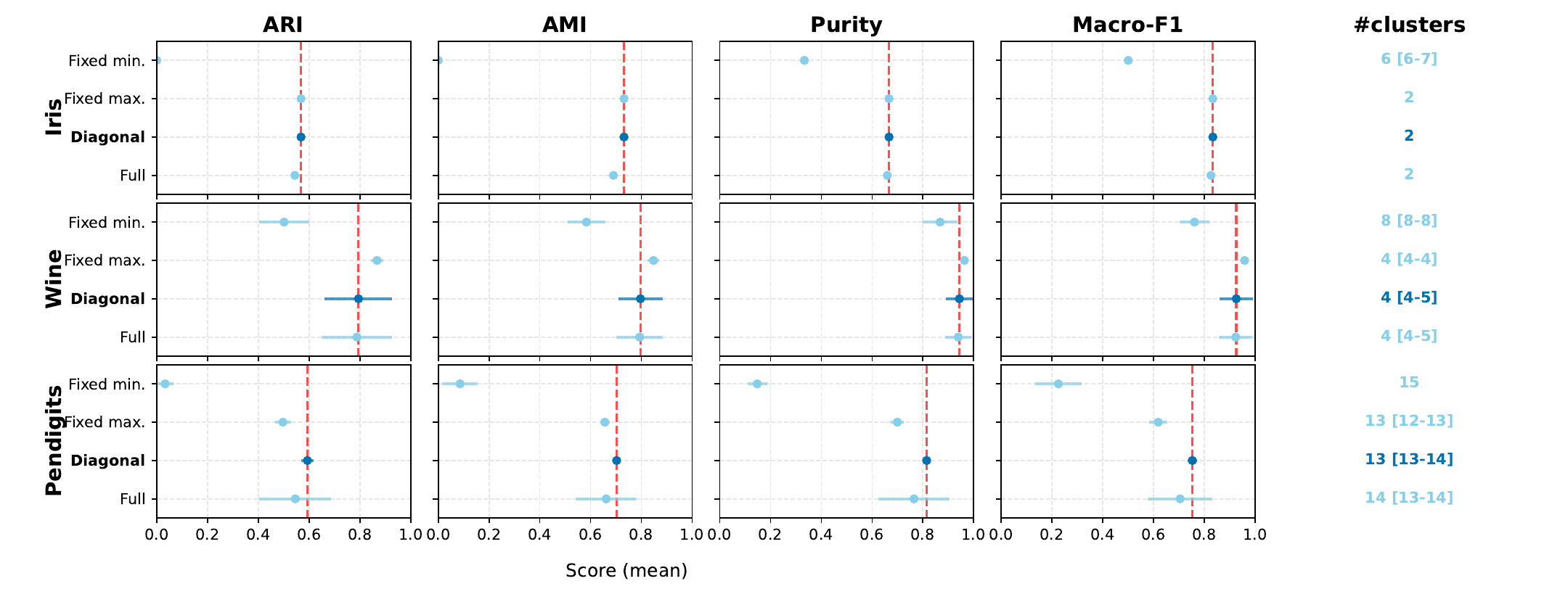}
    \includegraphics[width=\linewidth]{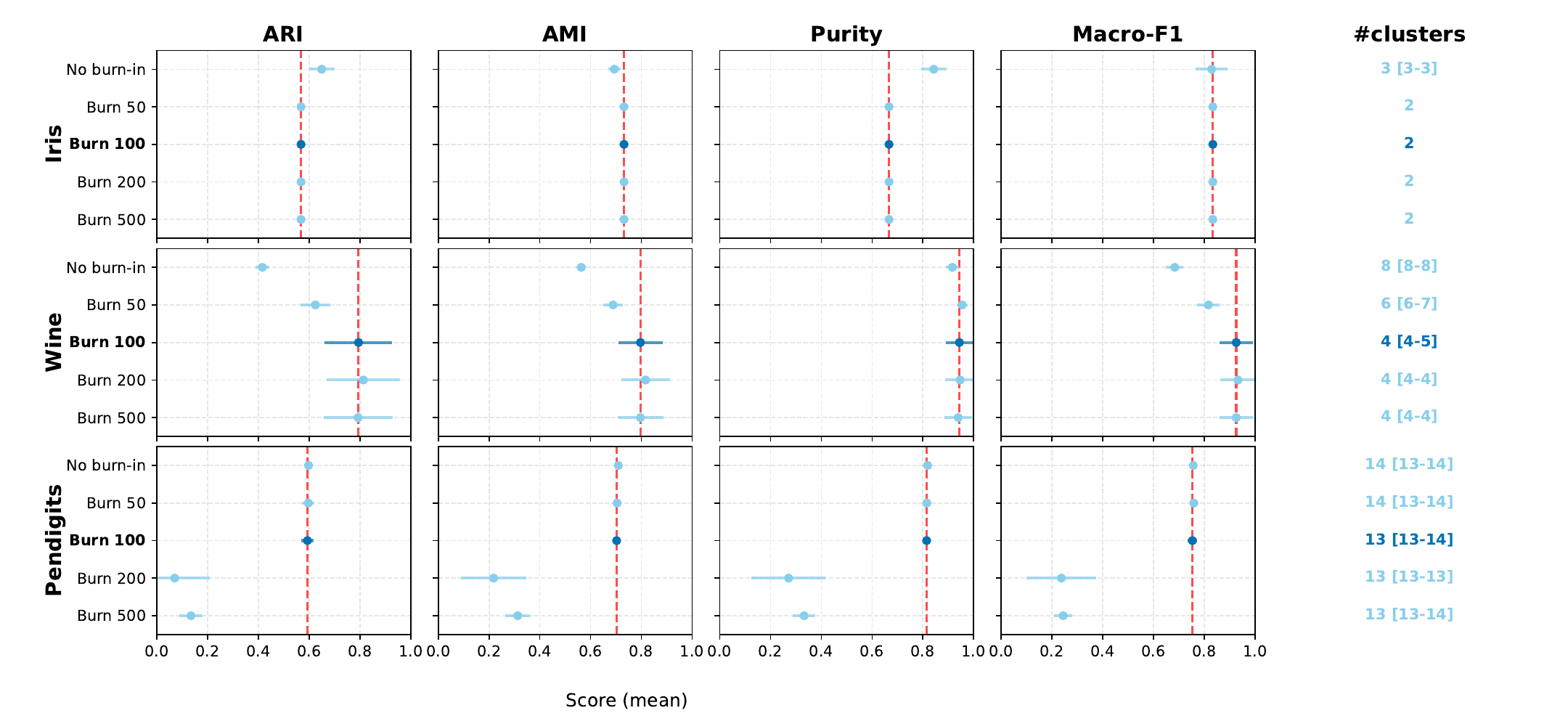}
    \caption{\textbf{Ablation -- Covariance (top) and Burn-in phase (bottom).} 
    For each dataset (rows), the first four columns show \textsf{ARI}/\textsf{AMI}/Purity/Macro-F1 (means ± 95\% $t$-CIs over $n_{\text{rep}}=10$ runs), the rightmost column shows the effective number of clusters (median [Q1--Q3], or constant across all runs). Bold marks the default GERVE configuration. The dashed reference line in each metric panel marks the default’s mean for that metric. \\
    Top. Covariance parameterisations: \emph{Fixed min.} ($\Sigmab_k \equiv \sigma_{\min}^2 \Ib$), \emph{Fixed max.} ($\Sigmab_k \equiv \sigma_{\max}^2 \Ib$), \emph{Diagonal} (learned diagonal $\Sigmab_k$ with bounds $\sigma_{\min}^2 \Ib \preceq \Sigmab_k \preceq \sigma_{\max}^2 \Ib$), and \emph{Full} (learned full $\Sigmab_k$ under the same bounds). Dataset-specific values for $\sigma_{\min}^2,\sigma_{\max}^2$ and the initialisation $\sigma_0^2$ (for learned covariances) are listed in Table~\ref{tab:hyper_gerve}.\\
    Bottom. Burn-in variants: \emph{No burn-in} ($L=0$), \emph{Burn-$x$} (fixed-covariance phase with $L=x$ iterations). The total iteration budget $T$ is the same across variants and listed in Table~\ref{tab:hyper_gerve}.}
    \label{fig:ablation_covariance}
\end{figure}

\subsection{Runtime}

To illustrate how GERVE’s computational cost depends on data dimension, the number of mixture components, and covariance parameterisation, we report mean runtimes across four datasets 
(Triangle mixture, Iris, Wine, Pendigits), under a common configuration. This is not intended as a scalability benchmark, since runtime is implementation-dependent and sensitive to engineering choices.

Table~\ref{tab:runtime} shows average wall-clock runtimes over ten replicates. As expected, cost increases with dimensionality and with the flexibility of the covariance model. Fixed covariances are the fastest to compute but also the least flexible, while full covariances are substantially slower, especially in higher dimensions. The diagonal setting offers a practical compromise: only slightly slower than the fixed covariance setting, yet avoiding the instability and high cost of full covariance updates. Overall, runtimes remain modest on medium-sized datasets, indicating that GERVE is practically usable in realistic clustering scenarios even when covariance learning is included.

\begin{table}[h]
\centering
\caption{Mean wall-clock runtime (in seconds) over 10 replicates for overcomplete GERVE ($K=J+5$) with weight updates, $T=500$, $B=150$, $L=0$, using identical hyperparameters across all datasets. Values are mean $\pm$ standard deviation. All runs on the same machine. This study illustrates cost trends by covariance parameterisation, number of components, and dimensionality.}
\label{tab:runtime}
\begin{tabular}{lccc}
\toprule
Dataset & Fixed & Diagonal & Full \\
\midrule
Triangle ($d=2, J=3$) & 3.5 $\pm$ 0.6 & 3.9 $\pm$ 0.8 & 4.2 $\pm$ 0.8 \\
Iris ($d=4, J=3$)     & 3.5 $\pm$ 0.7 & 3.8 $\pm$ 0.7 & 4.6 $\pm$ 0.9 \\
Wine ($d=13, J=3$)    & 4.2 $\pm$ 0.8 & 4.8 $\pm$ 0.9 & 7.8 $\pm$ 1.2 \\
Pendigits ($d=16, J=10$) & 13.4 $\pm$ 2.1 & 15.2 $\pm$ 2.6 & 24.6 $\pm$ 3.3 \\
\bottomrule
\end{tabular}
\end{table}

We also plot execution time as a function of $K$ on the Pendigits dataset. The scaling is approximately linear in $K$, with diagonal covariances remaining a practical compromise between fixed and full parameterisations.

\begin{figure}
    \centering
    \includegraphics[width=0.5\linewidth]{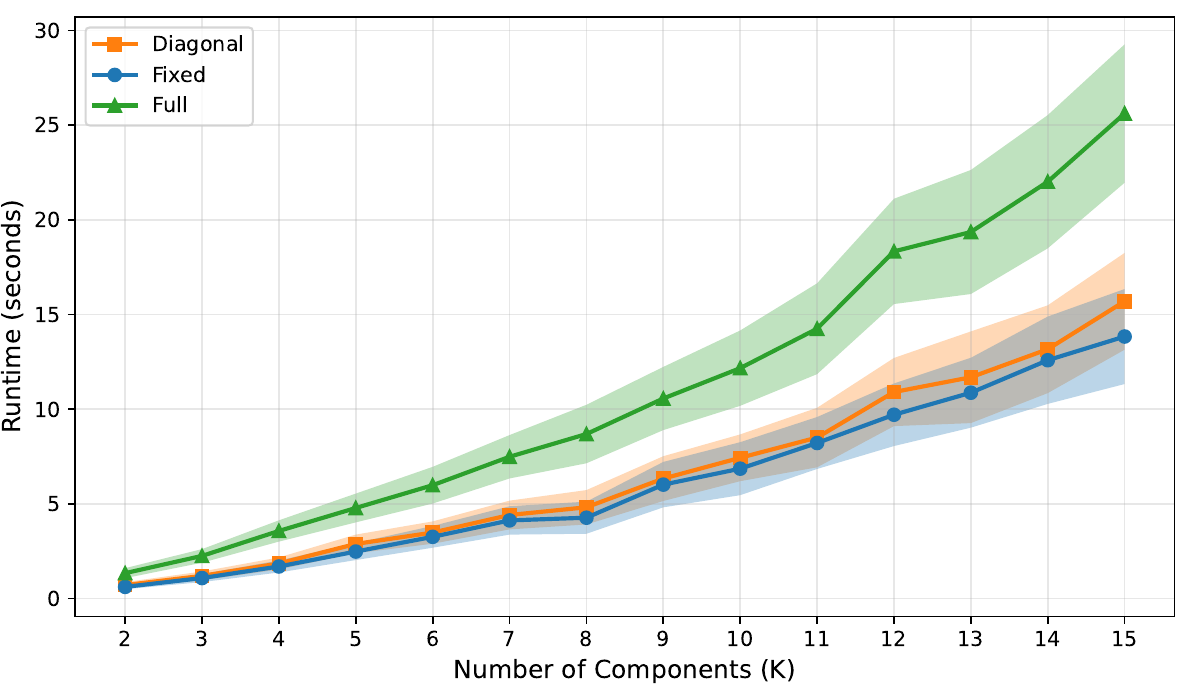}
    \caption{GERVE runtime with respect to the number of components $K$ on the Pendigits dataset, for diagonal, fixed and full covariance matrices. Curves show mean runtime over 10 replicates, shaded bands indicate $\pm 1$ standard deviation. 
    }
    \label{fig:runtime_k}
\end{figure}

\section{Details on the UK collision case study}
\label{app:uk}

\paragraph{Data normalisation and processing.} We apply GERVE on a normalised space : the Greater London window $[-0.54, 0.33] \times [51.28, 51.70]$ (longitude-latitude) into a centred rectangle $[-0.7,0.7] \times [-0.35, 0.35]$ preserving the aspect ratio and with area $\approx 1$. This transformation is applied to the coordinates of the data points. To smoothen the underlying distribution $p$, we generate artificial data points uniformly in the region between the normalised rectangle and the $[-2, 2] \times [-2,2]$ square, with density $1000$ times smaller than the mean density of the data in the normalised rectangle.

\paragraph{GERVE hyperparameters.} 
We target the top $H=10$ hotspots. We fit an overcomplete mixture with $K=2H=20$.
\begin{itemize}
    \item \textit{Temperature.} We fix the stopping temperature at $\omega^\dagger=1$. This choice follows the elbow of the resolved-mode curve represented in Figure~\ref{fig:w0_component_counts_plot} and described in Sec.~\ref{sec:real_datasets}. The same $\omega^\dagger$ is used for the bootstrap refits.
    \item \textit{Step sizes.} The step-size schedule is $\rho_t = 0.2 t^{-0.1}$. 
    \item \textit{Initialisation.} Component means use $k$-means++ centroids on the working coordinates. Initial covariances are $\Sigmab_{k,1}=\sigma_1^2\Ib$ with $\sigma_1^{2}=5\times 10^{-3}$.
    \item \textit{Covariance bounds.} We constrain $\sigma_{\min}^2=1\times 10^{-5}$ and $\sigma_{\max}^2=1\times 10^{-2}$. 
    \item \textit{Domain.} Optimisation is performed on $\mathcal S=[-2.05,2.05]\times[-2.05,2.05]$. Mapping to British National Grid is used only for reporting distances and ellipses.
    \item \textit{Entropy term.} We use $B=100$ Monte Carlo samples per iteration to estimate the entropy contributions.
    \item \textit{Early stopping.} We cap iterations at $T=10{,}000$. We stop early if the following hold at three consecutive checks, evaluated every 10 iterations:
\begin{equation*}
\big\|\mub_{k,t+1}-\mub_{k,t}\big\|_2<10^{-2}
\quad\text{and}\quad
\frac{\big\|\Sb_{k,t+1}-\Sb_{k,t}\big\|_F}{\big\|\Sb_{k,t}\big\|_F}<10^{-1}
\quad\text{for all }k.
\end{equation*}
    \item \textit{Pruning and merging.} A component is pruned if
\begin{equation*}
\mathrm{eig}_{\max}(\Sigmab_k)>\sigma_{\min}^2+\beta\big(\sigma_1^2-\sigma_{\min}^2\big),
\end{equation*}
with \(\beta \approx 0.018\), so that the threshold equals \(10 \sigma_{\min}^2\).
Components means within $0.005$ ($\approx 200$ meters in the British National Grid EPSG:27700) are merged.
    \item \textit{Bootstrap settings.} We use $L=500$ resamples at $\omega^\dagger$. Bootstrap modes are matched to the baseline with a Hungarian assignment and adaptive gates around each baseline mode as described below.
\end{itemize}

\begin{figure}
\centering
\includegraphics[width=0.6\linewidth]{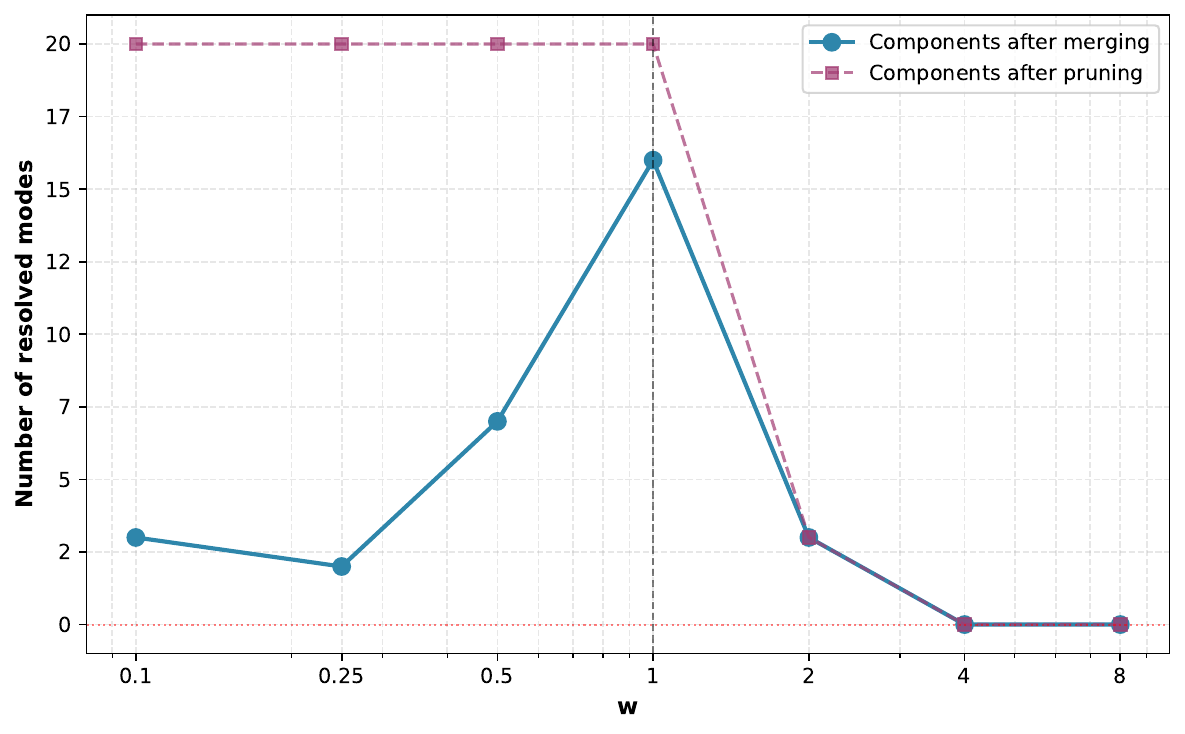}
\caption{Diagnostic for choosing \(\omega\): resolved modes after successive pruning (dashed) and merging (solid) for values of $\omega$ (in log-scale). The elbow at $\omega^\dagger = 1$ maximises resolved modes before decline.}
\label{fig:w0_component_counts_plot}
\end{figure}

\paragraph{Mode matching and per-mode ellipses.} We $z$-score coordinates using the baseline mode centres. For mode $k$, we compute $\mathrm{nn}_k$ as the minimum distance to any other baseline mode. The gate is $\tau_k=\mathrm{clip}(\eta  \mathrm{nn}_k,\tau_{\min},\tau_{\max})$ with defaults $\eta=0.35$, $\tau_{\min}=0.08$, $\tau_{\max}=0.22$. Since $\eta<0.5$, gates stay inside midpoints between modes, which prevents cross-assignments. The clips avoid degenerate gates when modes are extremely close or very isolated. After Hungarian assignment, pairs with distance larger than $\tau_k$ are rejected. Accepted matches form the bootstrap cloud for mode $k$, from which we compute stability $s_k$ and the 95\% chi-square confidence ellipse. 

To obtain confidence regions in meters, we project coordinates to the British National Grid (EPSG:27700). For each resolved mode $k$, let $\{\widehat{\mub}^{(\ell)}\}_{\ell=1}^L$ denote the matched bootstrap centres. We report 95\% confidence ellipses from the empirical covariance $\widehat{V}_k = \operatorname{Var}^\ast(\widehat{\mub}_k^{(\ell)})$: if $\mathrm{eig}_1 \ge \mathrm{eig}_2$ are eigenvalues of $\widehat{V}_k$ and $\eb_1$ is the principal eigenvector, then the semi-axes are $a=\sqrt{\chi^2_{2;1-\alpha} \mathrm{eig}_1}$ and $b=\sqrt{\chi^2_{2;1-\alpha} \mathrm{eig}_2}$, with angle $\mathrm{atan2}(\eb_{1,x}, \eb_{1,y})$ (in degrees) relative to Easting. 

Figure~\ref{fig:matched_ellipses_2024} shows, for each baseline mode, the cloud of matched bootstrap centres, the adaptive gate used for matching, and the resulting confidence ellipse, illustrating mode stability and localisation uncertainty.

\begin{figure}
    \centering
    \includegraphics[width=\linewidth]{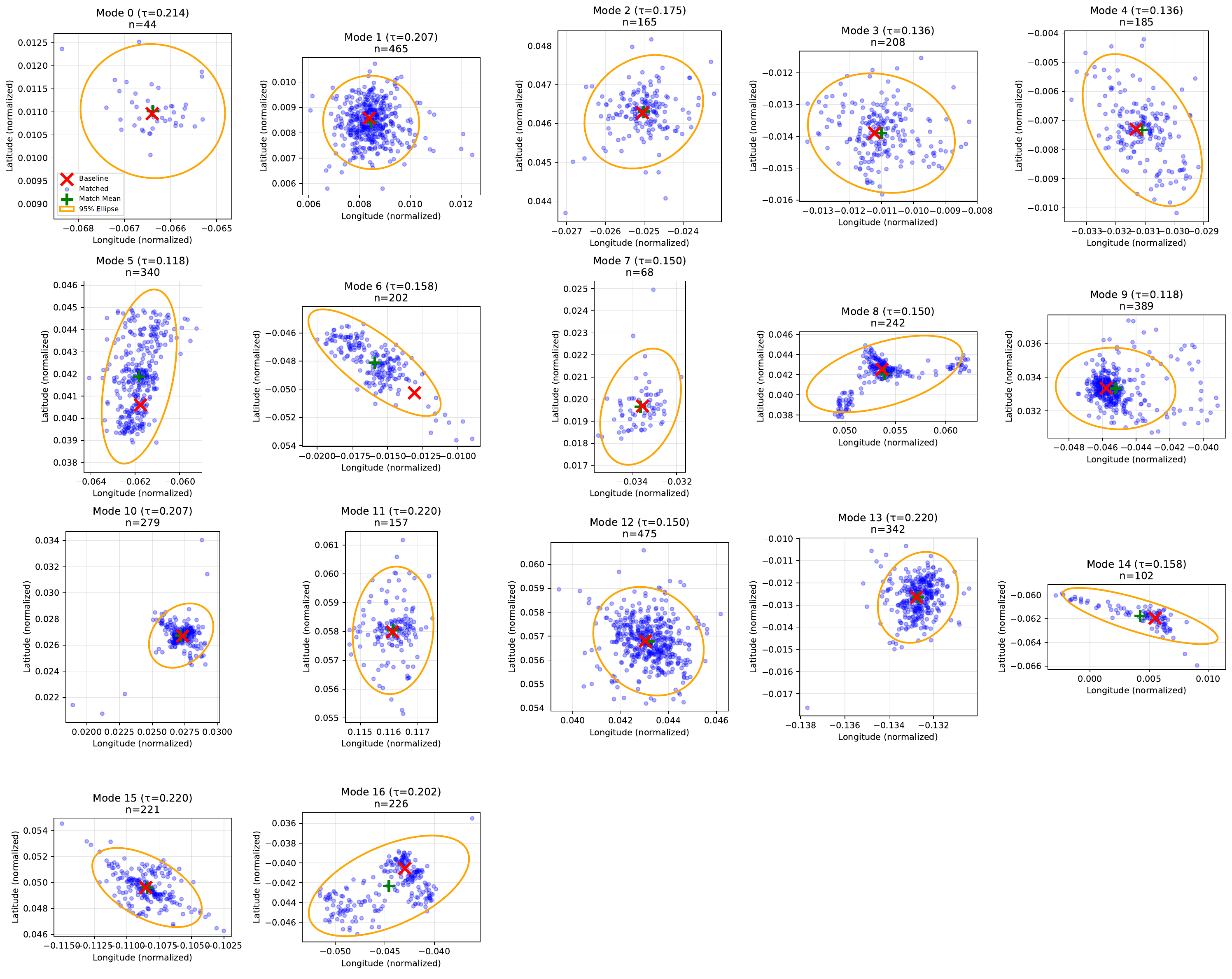}
    \caption{Distribution of the matched modes across the $L = 500$ bootstrap fits for each baseline hotspot from the 2020-2024 data, with the estimated confidence ellipses and means.}
    \label{fig:matched_ellipses_2024}
\end{figure}

\paragraph{2020-2024 hotspots.} Table~\ref{tab:hotspots_2024_full} reports the $17$ hotspots found by GERVE and Figure~\ref{fig:collision_with_ellipses_2024} displays them in the complete search space.

\begin{table}[t]
\centering
\caption{Reported  hotspots for year 2024 with (un-normalised) coordinates, confidence ellipse dimensions $(a,b)$, angles, stability, bootstrap matches, and local counts (radius 200 m). Hotspots are listed in decreasing stability order.}
\label{tab:hotspots_2024_full}
\small
\begin{tabular}{rrrrrrrrr}
\toprule
{ID} & {Name} & {Longitude ($^\circ$)} & {Latitude ($^\circ$)} & {$a$ (m)} & {$b$ (m)} & {Angle ($^\circ$)} & {Stability} & {Count} \\
\midrule
12 & Shoreditch & -0.078909 & 51.524226 & 134 & 121 & -54 & 0.95 & 32 \\
1 & Elephant \& Circus & -0.099874 & 51.495003 & 120 & 86 & 80 & 0.93 & 27 \\
9 & Piccadilly Circus & -0.132294 & 51.510060 & 162 & 150 & 89 & 0.78 & 27 \\
5 & Oxford Circus & -0.142322 & 51.515222 & 265 & 66 & 84 & 0.68 & 31 \\
13 & Gunter Grove & -0.185143 & 51.482279 & 134 & 83 & 72 & 0.68 & 22 \\
10 & London Bridge & -0.088544 & 51.506068 & 156 & 117 & 60 & 0.56 & 21 \\
8 & Aldgate & -0.072412 & 51.515338 & 357 & 207 & 27 & 0.48 & 14 \\
16 & Clapham HS & -0.131914 & 51.464362 & 409 & 255 & 37 & 0.45 & 23 \\
15 & Edgware Road & -0.170483 & 51.519900 & 230 & 144 & -41 & 0.44 & 22 \\
3 & Oval & -0.111630 & 51.481528 & 120 & 105 & 81 & 0.42 & 19 \\
6 & Brixton & -0.114597 & 51.460859 & 301 & 115 & -49 & 0.40 & 38 \\
4 & Vauxhall & -0.123762 & 51.485494 & 177 & 82 & -68 & 0.37 & 27 \\
2 & Holborn & -0.120090 & 51.517891 & 87 & 78 & 54 & 0.33 & 25 \\
11 & Mile End & -0.034823 & 51.524982 & 149 & 59 & 88 & 0.31 & 23 \\
14 & Herne Hill & -0.102411 & 51.452610 & 305 & 91 & -24 & 0.20 & 10 \\
7 & Westminster & -0.125293 & 51.501799 & 177 & 76 & 79 & 0.14 & 17 \\
0 & Victoria & -0.145081 & 51.496576 & 96 & 68 & 84 & 0.09 & 20 \\
\bottomrule
\end{tabular}
\end{table}

\begin{figure}
    \centering
    \includegraphics[width=\linewidth]{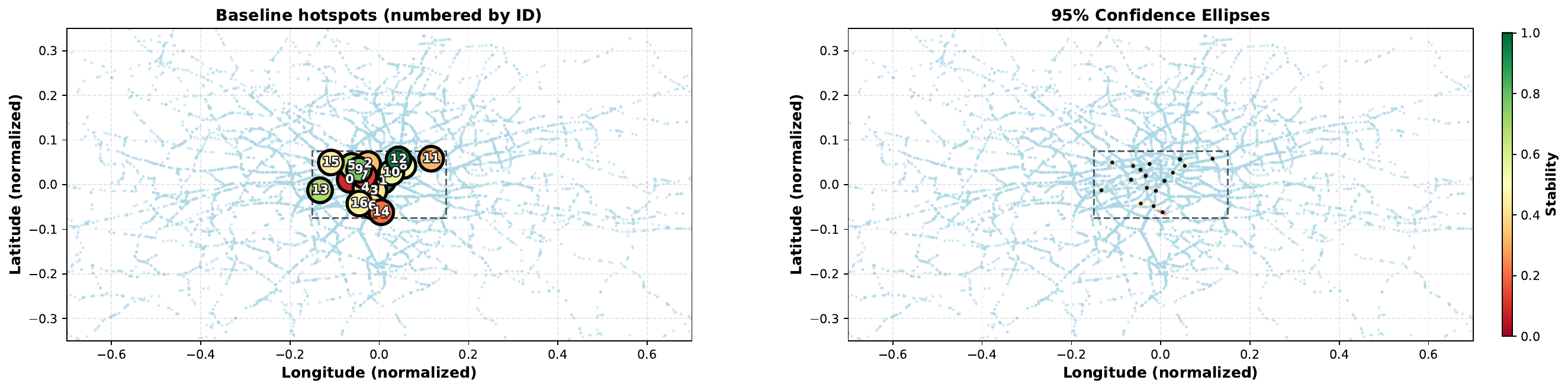}
    \caption{Baseline collision hotspots identified by GERVE for the Greater London Area between 2020 and 2024, with stability scores and 95\% confidence ellipses, in normalised coordinates. The dashed rectangle is the zoomed-in window of Figure~\ref{fig:collision_with_ellipses_2024_zoomed}.} \label{fig:collision_with_ellipses_2024}
\end{figure}

\paragraph{Mean-shift details.} We use scikit-learn’s \texttt{MeanShift} with a flat (uniform) kernel on the same projected coordinates. The reference bandwidth $h_0$ is estimated as the 0.3-quantile of pairwise distances. We then sweep h on a grid from $0.02h_0$ to $0.2h_0$ and report the resulting
centres. Figure~\ref{fig:meanshift_specific_comparison} shows the centres across the sweep. Smaller $h$ recovers
many additional peripheral modes, while larger $h$ merges central structure. In Figure~\ref{fig:meanshift_vs_gerve_combined}, we
overlay the $0.05h_0$ and $0.1h_0$ centres in the zoomed-in window and
compare them with GERVE’s centres and 95\% ellipses. At $0.1h_0$, visual inspection against the road network some mean-shift
centres fall slightly off carriageways, so we select a smaller bandwidth ($0.05 h_0$) as a baseline comparator for GERVE.

\begin{figure}
    \centering
    \includegraphics[width=\linewidth]{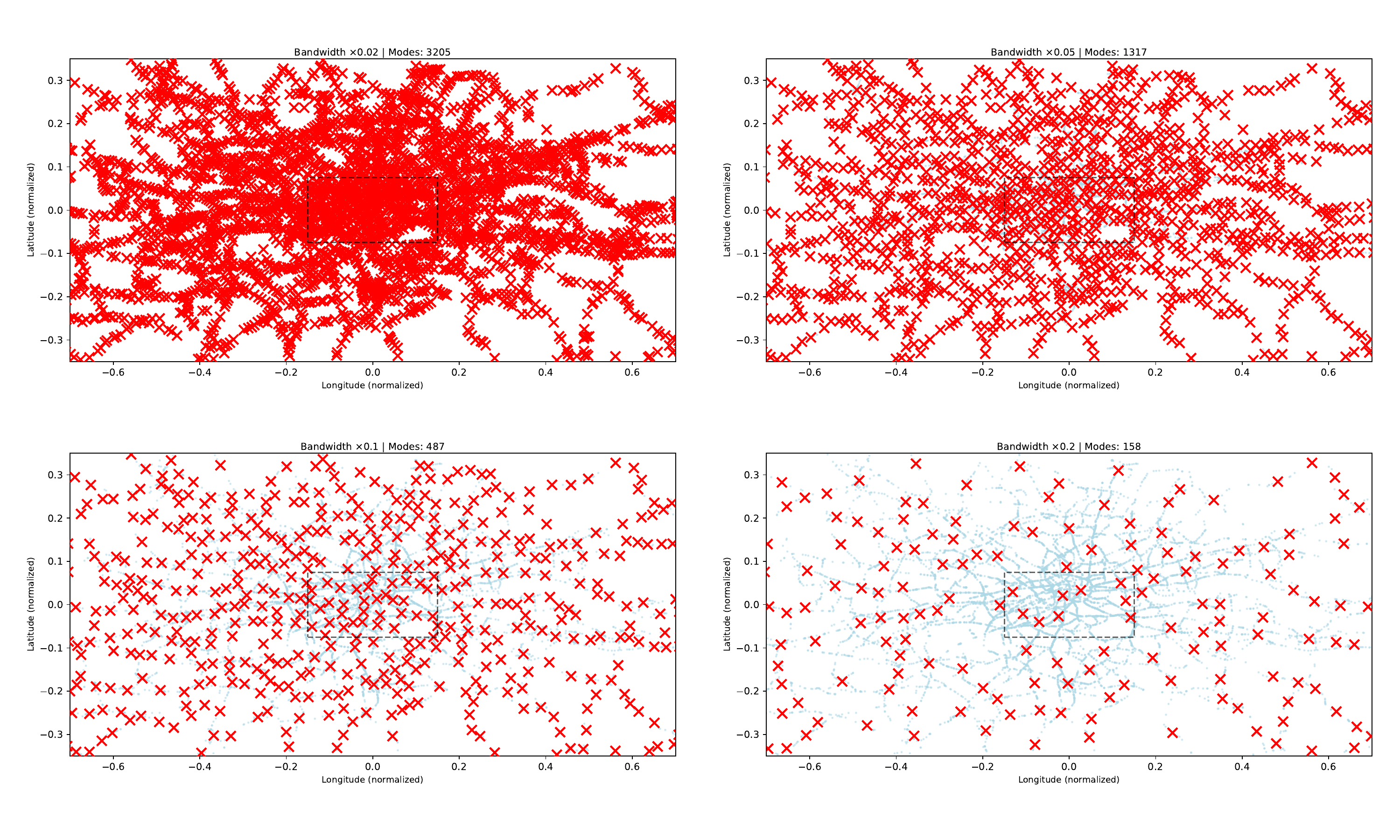}
    \caption{Mean-shift modes across different bandwidths ($h = 0.02 h_0,0.05 h_0, 0.1h_0, 0.2h_0$) in the normalised window. The dashed rectangle is the zoomed-in window.}
    \label{fig:meanshift_specific_comparison}
\end{figure}

\begin{figure}
\centering
\includegraphics[width=0.95\linewidth]{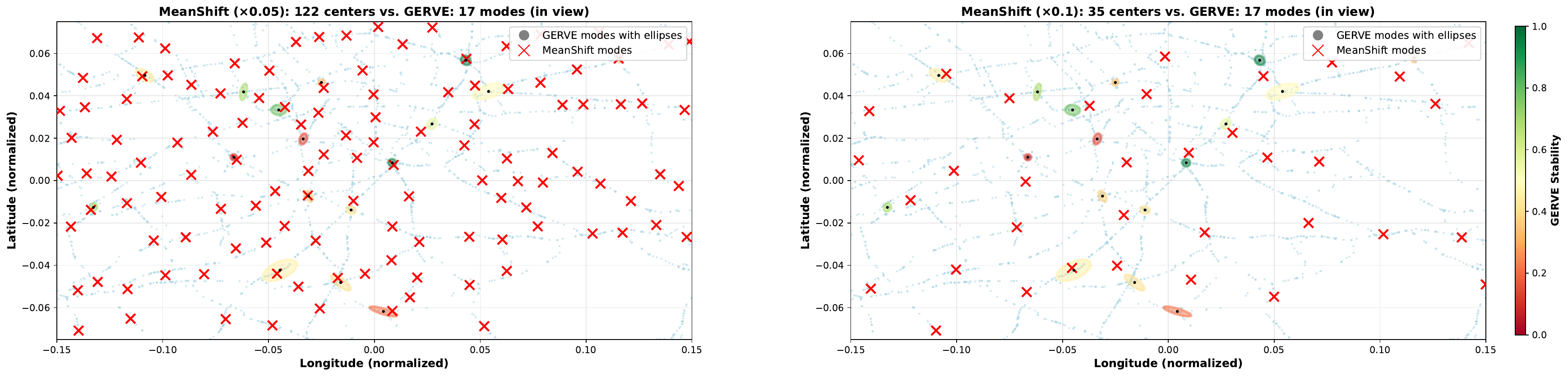}
\caption{Comparison of mean-shift ($h = 0.05 h_0, 0.1h_0$) and GERVE modes, with confidence ellipses for GERVE, in the zoomed-in window.}
\label{fig:meanshift_vs_gerve_combined}
\end{figure}

\paragraph{2014-2019 hotspots and comparison to 2020-2024.} We fit GERVE to collision data from 2014 to 2019 with identical $\omega^\dagger$ and $K$. Confidence ellipses are obtained after $L = 500$ bootstrap fits. Table~\ref{tab:hotspots_2019_full} reports the $16$ modes found by GERVE and Figure~\ref{fig:collision_with_ellipses_2019} displays them with their $95\%$ confidence intervals. Figure~\ref{fig:hotspot_comparison_2019_vs_2024} overlays the 2014-2019 hotspots on a map of the 2020-2024 ones in the zoomed-in window.

\begin{table}
\centering
\caption{Reported hotspots for year 2019 with (un-normalised) coordinates, confidence ellipse dimensions $(a,b)$, angles, stability, bootstrap matches, and local counts (radius 200 m). Hotspots are listed in decreasing stability order.}
\label{tab:hotspots_2019_full}
\small
\begin{tabular}{rrrrrrrrr}
\toprule
{ID} & {Name} & {Longitude ($^\circ$)} & {Latitude ($^\circ$)} & {$a$ (m)} & {$b$ (m)} & {Angle ($^\circ$)} & {Stability} & {Count} \\
\midrule
14 & Shoreditch & -0.078437 & 51.524026 & 159 & 134 & -78 & 0.96 & 35 \\
5 & Elephant \& Circus & -0.100192 & 51.494709 & 123 & 112 & -86 & 0.96 & 27 \\
0 & Ludgate Circus & -0.104340 & 51.513547 & 177 & 110 & -80 & 0.85 & 20 \\
4 & Charing Cross & -0.126966 & 51.507739 & 168 & 114 & -4 & 0.75 & 24 \\
1 & Vauxhall & -0.123471 & 51.486071 & 111 & 102 & -66 & 0.74 & 23 \\
15 & Sheperd's Bush & -0.223191 & 51.503715 & 179 & 133 & -3 & 0.70 & 25 \\
9 & Oxford Circus & -0.141671 & 51.515031 & 197 & 96 & 56 & 0.62 & 26 \\
11 & Oval & -0.111782 & 51.481310 & 98 & 72 & -79 & 0.57 & 23 \\
2 & Aldgate & -0.071210 & 51.515511 & 279 & 100 & 14 & 0.54 & 15 \\
12 & Aldwych & -0.118401 & 51.511326 & 145 & 108 & 15 & 0.53 & 23 \\
6 & St Giles Circus & -0.130695 & 51.516011 & 151 & 109 & -29 & 0.47 & 15 \\
3 & King's Cross & -0.122322 & 51.530607 & 252 & 155 & 24 & 0.44 & 22 \\
7 & Camden Town & -0.140953 & 51.540307 & 197 & 90 & 59 & 0.44 & 19 \\
8 & Edgware & -0.168222 & 51.518691 & 138 & 137 & 73 & 0.25 & 22 \\
13 & London Bridge & -0.088592 & 51.506237 & 318 & 85 & 70 & 0.23 & 14 \\
10 & Olympia & -0.206178 & 51.496948 & 156 & 72 & 7 & 0.07 & 14 \\
\bottomrule
\end{tabular}
\end{table}

\begin{figure}
    \centering
    \includegraphics[width=0.95\linewidth]{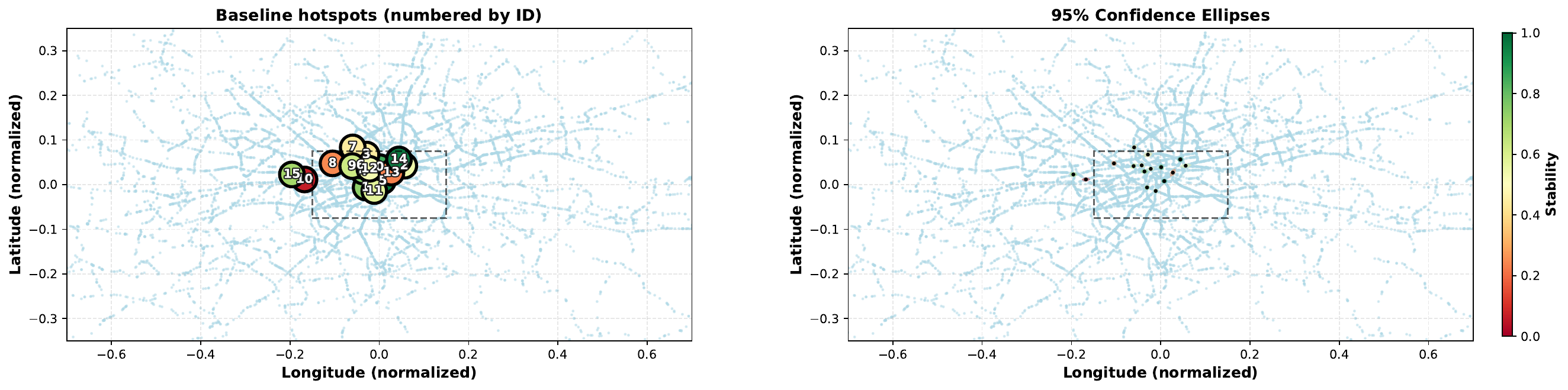}\\
    \includegraphics[width=0.95\linewidth]{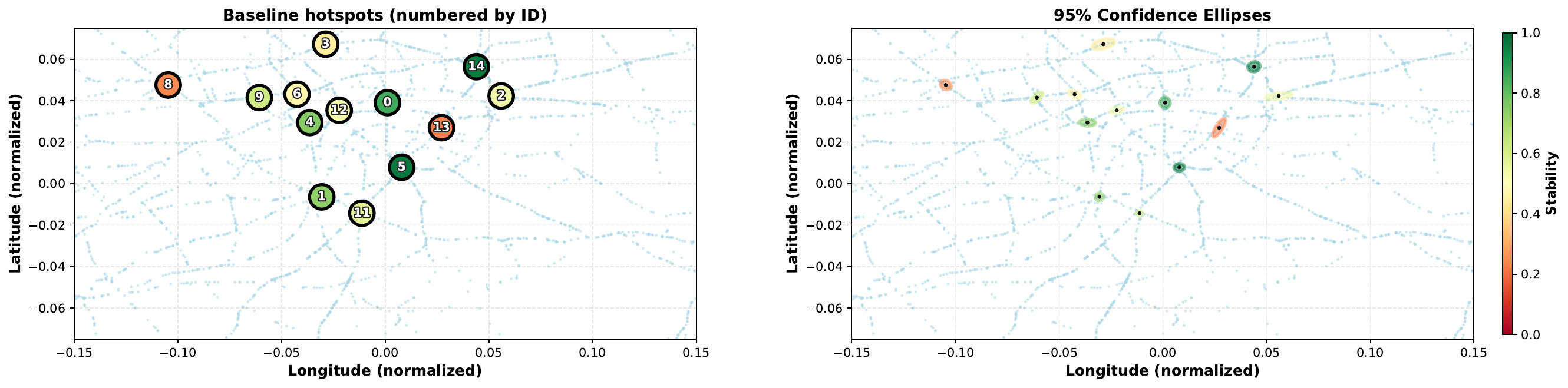}
    \caption{Baseline collision hotspots identified by GERVE for the Greater London Area between 2014 and 2019, with stability scores and 95\% confidence ellipses, in normalised coordinates. Top: Shows the normalised window containing all data points. The dashed rectangle is the zoomed-in window. Bottom: Zoomed-in version for comparison to Figure~\ref{fig:collision_with_ellipses_2024_zoomed}.}
    \label{fig:collision_with_ellipses_2019}
\end{figure}

\begin{figure}
\centering
\includegraphics[width=0.8\linewidth]{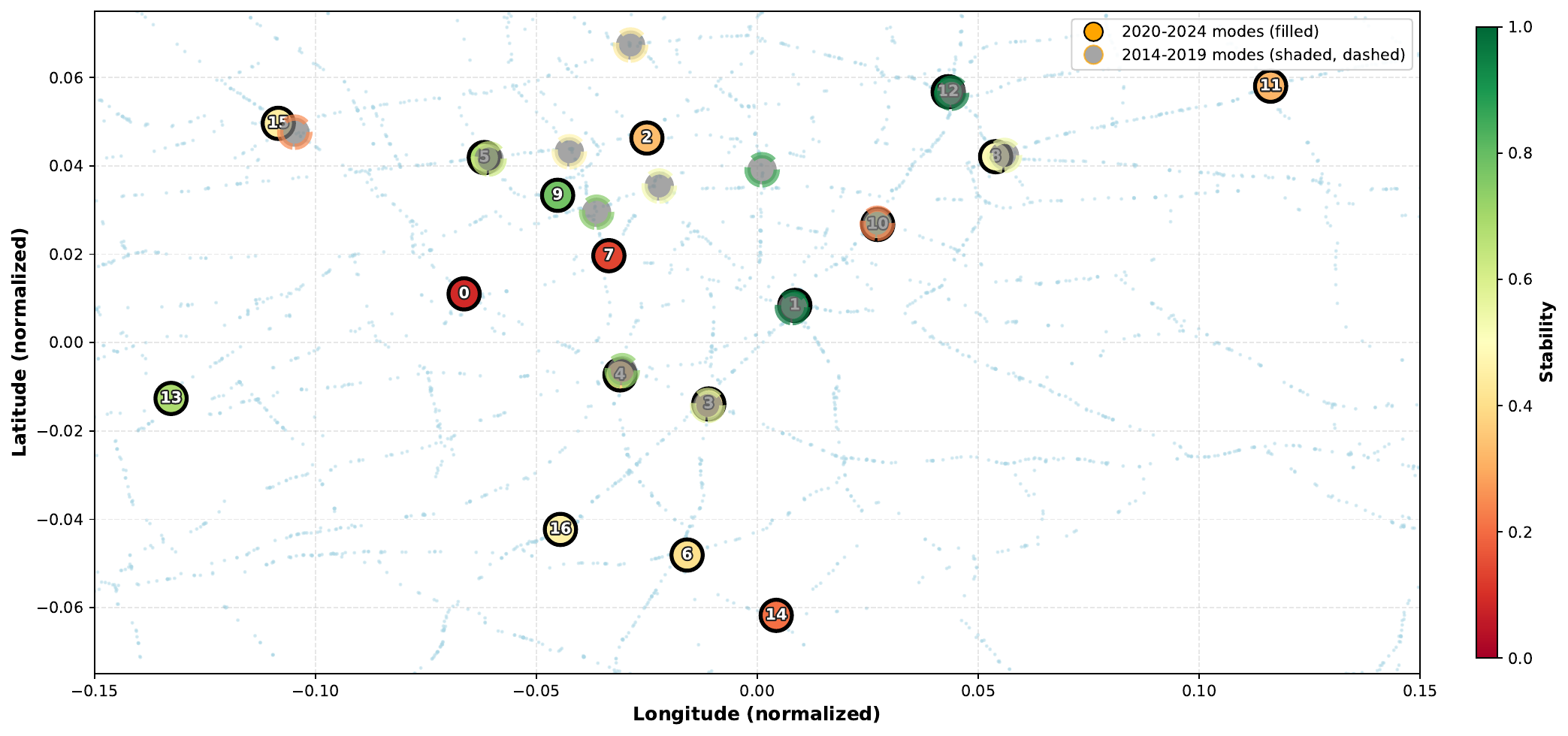}
\caption{Comparison of 2014-2019 hotspots and 2020-2024 hotspots in the zoomed-in window.}
\label{fig:hotspot_comparison_2019_vs_2024}
\end{figure}

\end{appendix}

\newpage

\end{document}